\newtheorem{theorem}{Theorem}
\newtheorem*{theorem*}{Theorem}
\newtheorem{proposition}[theorem]{Proposition}
\newtheorem{lemma}{Lemma}
\newtheorem{claim}{Claim}
\newtheorem{example}{Example}
\newtheorem{definition}{Definition}
\newcommand{\s}[1]{#1}
\newcommand{\mpb}{{\textup{MPB}}\xspace}
\newcommand{\child}[1]{\texttt{child}(#1)}
\newcommand{\parent}[1]{\texttt{parent}(#1)}
\newcommand{\efx}{{\textup{EFX}}\xspace}
\newcommand{\pefx}{{\textup{pEFX}}\xspace}
\newcommand{\efone}{{\textup{EF1}}\xspace}
\newcommand{\eftwo}{{\textup{EF2}}\xspace}
\newcommand{\pefk}{{\textup{pEF}$k$}\xspace}
\newcommand{\po}{{\textup{PO}}\xspace}
\newcommand{\fpo}{{\textup{fPO}}\xspace}
\newcommand{\x}{\mathbf{x}}
\newcommand{\y}{\mathbf{y}}
\newcommand{\z}{\mathbf{z}}
\newcommand{\p}{\mathbf{p}}
\newcommand{\q}{\mathbf{q}}
\newcommand{\N}{\mathbb N}
\newcommand{\Z}{\mathbb Z}
\newcommand{\R}{\mathbb R}
\newcommand{\poly}[1]{\mathsf{poly}(#1)}
\renewcommand{\b}{\mathbf{b}}
\renewcommand{\v}{\mathbf{v}}
\renewcommand{\r}{\mathbf{r}}
\renewcommand{\P}{\mathcal{P}}
\newcommand{\bbeta}{\boldsymbol{\beta}}
\title{Constant-Factor EFX Exists for Chores\thanks{Work supported by NSF Grants CCF-1942321 and CCF-2334461.}}
\author{Jugal Garg\footnote{University of Illinois at Urbana-Champaign, USA} \\
\texttt{\small jugal@illinois.edu} 
\and
Aniket Murhekar\footnote{University of Illinois at Urbana-Champaign, USA}\\
\texttt{\small aniket2@illinois.edu}
\and
John Qin\footnote{University of Illinois at Urbana-Champaign, USA}\\
\texttt{\small johnqin2@illinois.edu}
}
\date{}
\begin{document}
\renewcommand{\arraystretch}{1.2}
\maketitle
\thispagestyle{empty}

\begin{abstract}
We study the problem of \emph{fair} allocation of chores among agents with additive preferences. In the discrete setting, envy-freeness up to any chore (EFX) has emerged as a compelling fairness criterion. However, establishing its (non-)existence or achieving a meaningful approximation remains a major open question in fair division. The current best guarantee is the existence of $O(n^2)$-EFX allocations, where $n$ denotes the number of agents, obtained through a sophisticated algorithm~\cite{zhou22efx}. In this paper, we show the existence of $4$-EFX allocations, providing the first constant-factor approximation of EFX. 

We further investigate the existence of allocations that are both fair and \emph{efficient}, using Pareto optimality (PO) as our efficiency criterion. For the special case of bivalued instances, we establish the existence of allocations that are both $3$-EFX and PO, thereby improving upon the current best factor of $O(n)$-EFX without any efficiency guarantees. For general additive instances, the existence of allocations that are $\alpha$-EF$k$ and PO has remained open for any constant values of $\alpha$ and $k$, where EF$k$ denotes envy-freeness up to $k$ chores. We provide the first positive result in this direction by showing the existence of allocations that are $2$-EF$2$ and PO. 

Our results are obtained via a novel economic framework called \textit{earning restricted (ER) competitive equilibrium} for fractional allocations, which imposes limits on the earnings of agents from each chore. We show the existence of ER equilibria by carefully formulating a linear complementarity problem (LCP) that captures all ER equilibria, and then prove that the classic complementary pivot algorithm applied to this LCP terminates at an ER equilibrium. By carefully setting earning limits and leveraging the properties of ER equilibria, we design algorithms that involve rounding the fractional solutions and then performing swaps and merges of bundles to meet the desired fairness and efficiency criteria. We expect that the concept of ER equilibrium will play a crucial role in deriving further results on related problems. 
\end{abstract}

\newpage
\setcounter{tocdepth}{2}
\tableofcontents
\thispagestyle{empty}

\newpage
\setcounter{page}{1}
\section{Introduction}
Allocation problems frequently arise in various contexts such as task assignment, partnership dissolution, and the division of inheritance. The fair allocation of indivisible items has been extensively studied across multiple disciplines, including computer science, economics, social choice theory, and multi-agent systems. In this paper, we consider the fundamental problem of \textit{fairly} allocating a set of \textit{indivisible chores} --- items that impose a cost on agents who receive them --- to agents. Formally, given a set $M$ of $m$ indivisible chores and $n$ agents, our objective is to achieve an allocation, represented by a partition $\x = (\x_1, \dots, \x_n)$ of chores among agents, that satisfies a notion of \emph{fairness}. Here, $\x_i \subseteq M$ denotes the bundle of chores assigned to agent $i$. We assume that each agent $i$ has an additive disutility function, defined as $d_i(\x_i) = \sum_{j\in\x_i} d_{ij}$, where $d_{ij} > 0$ represents the \textit{disutility} agent $i$ incurs from receiving chore $j$.

Among various fairness criteria, \textit{envy-freeness} (\s{EF})~\cite{foleyEF} is the most natural concept, requiring that every agent (weakly) prefers their allocated items over those assigned to others. However, for indivisible items, an \s{EF} allocation may not always exist (e.g., if a single task must be allocated between two agents), prompting the need for relaxed notions of \s{EF} in discrete settings. A prominent and compelling relaxation is envy-freeness up to any item (\efx). 

\paragraph{EFX for chores.} An allocation of chores is said to be \efx if no agent $i$ envies another agent $h$ \textit{after} the counter-factual removal of \textit{any} single chore from $i$'s bundle. Thus, an allocation $\x$ is \efx if, for any pair of agents $(i, h)$, we have $d_i(\x_i \setminus \{j\}) \le d_i(\x_h)$ for every chore $j\in \x_i$. 

~\newline
This makes \efx the closest discrete analog of \s{EF}. While \efx allocations may not always exist for super-additive preferences~\cite{efxnonexistence2024}, the question of their existence for more than two agents with additive preferences remains a major open question in fair division.

The existence of \efx allocations has been established only for very specific instances. For example, they are known to exist when the number of chores does not exceed twice the number of agents~\cite{mahara2023efxmatching}, or when there are two types of chores\footnote{The instance only has multiple copies of two chores}~\cite{aziz2022twotypes} (see \cref{sec:related-work} for an expanded discussion). Instead of imposing such restrictions on the instances, another popular approach is to explore the existence of approximately-\efx allocations for \textit{all} instances. In an $\alpha$-EFX allocation of chores, the disutility of each agent after the removal of any chore from her bundle is at most $\alpha$ times her disutility for the chores assigned to any other agent, for some factor $\alpha\ge 1$. That is, in an $\alpha$-\efx allocation $\x$, for any pair of agents $(i, h)$, we have $d_i(\x_i \setminus \{j\}) \le \alpha\cdot d_i(\x_h)$ for every chore $j\in \x_i$. Currently, the best-known approximation is the existence of $O(n^2)$-\efx allocations, achieved through a sophisticated algorithm~\cite{zhou22efx}.
Our first main result improves this approximation factor to $4$.

\begin{restatable}{theorem}{thmEFXMain}\label{thm:efx-main}
Any chore allocation instance admits a $4$-\efx allocation.   
\end{restatable}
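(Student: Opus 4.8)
The plan is to obtain the allocation by rounding an \emph{earning restricted competitive equilibrium} of the associated fractional chore market, following the framework announced in the abstract. Normalize so that every agent has the same earning requirement, say $1$. A competitive equilibrium of the fractional market is a pair $(\x,\p)$ — a fractional assignment $\x=(x_{ij})$ together with prices $\p=(p_j)_{j\in M}$ — such that every chore is fully performed, every agent earns exactly $1$, and each agent performs only chores on its minimum-pain-per-buck (\mpb) set, i.e. chores $j$ minimizing $d_{ij}/p_j$. The new ingredient is an \emph{earning cap} $r_j$ attached to each chore, limiting the total money channeled through $j$ (equivalently, limiting $p_j$); I would choose each $r_j$ to be a fixed absolute constant. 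To prove that an equilibrium respecting all caps exists, I would encode the system — ``pay only \mpb chores'', ``a chore's cap is slack or its price is market-determined'', and market clearing — as a linear complementarity problem (LCP), run the classical complementary pivot (Lemke) algorithm from a canonical starting configuration, and show that no secondary ray of the run is feasible, so the algorithm must terminate at an ER equilibrium.

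Next I would extract structure and round. As usual the bipartite ``who performs which chore'' support graph of $(\x,\p)$ can be taken acyclic: a cycle in the fractional support is cancelled by pushing work around it until an edge vanishes, and this can be done while preserving \mpb membership and all caps and without changing prices. Consequently every fractionally-performed chore has agent-degree at least $2$, and a counting argument on the resulting forest yields strictly fewer than $n$ fractionally-performed chores; moreover a Hall-type condition holds, so there is a matching sending each fractionally-performed chore to a distinct agent that performs it. Define the integral allocation $\X$ by giving each agent all chores it performs integrally together with the at most one fractional chore matched to it. Writing $\alpha_i:=\min_j d_{ij}/p_j$, we have $d_i(S)\ge \alpha_i\,p(S)$ for all $S\subseteq M$, with equality when $S$ consists of chores $i$ performs at equilibrium; hence $d_i(\X_i)=\alpha_i\,p(\X_i)$, whose integral part has price at most $i$'s earnings $=1$ and whose single inherited fractional chore has price at most $r_j=O(1)$, so $d_i(\X_i)=O(\alpha_i)$. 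Since $d_i(\X_h)\ge \alpha_i\,p(\X_h)$ for every other agent $h$, we get $d_i(\X_i\setminus\{j\})\le d_i(\X_i)=O(\alpha_i)$ for all $j\in\X_i$, and the \efx inequality for the ordered pair $(i,h)$ holds with a constant factor as soon as $p(\X_h)$ is bounded below by an absolute constant.

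What remains is that rounding may leave an agent $h$ with a bundle of tiny (or zero) price, making the bound above useless for pairs in which $h$ is envied. I would repair this by a local-search post-processing that performs swaps and merges of bundles: while some agent $h$'s bundle is ``too cheap'' and $h$ is strongly envied, transfer a chore from an envier to $h$, or where appropriate merge two adjacent bundles and re-split them, and argue that an appropriate potential (e.g. a count of assigned chores, or a product of bundle prices) forces termination; at termination every bundle has price above the threshold, so the allocation is $4$-\efx. The numerical constant $4$ is pinned down by balancing the magnitude of the earning caps $r_j$ chosen in the first step against the price threshold used in the swap rule.

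The crux is the existence step. Standard market-equilibrium tools — fixed-point/continuity arguments, or the classical Eaves-style LCP for linear exchange markets — do not accommodate per-chore earning caps, which decouple a chore's price from every agent's budget and thereby break the boundedness arguments that normally exclude secondary rays of the complementary pivot method. Formulating the LCP so that Lemke's algorithm provably ends at a genuine ER equilibrium rather than escaping along a ray is the main technical hurdle; Stages establishing the forest structure, the rounding, and the swap/merge repair are careful but essentially combinatorial.
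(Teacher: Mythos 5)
Your high-level architecture matches the paper: establish ER equilibria via an LCP and Lemke's algorithm (Section 8), acyclify the support graph, round, and repair with swaps. But two core steps are underspecified in a way that, as written, would not yield any constant factor.

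First, the \emph{rounding} you describe — match each shared chore to one of the agents performing it, giving each agent its integral chores plus at most one matched chore — does not by itself produce a lower bound of the form $\p(\X_i) \ge \Omega(1)$. An agent could draw most of its earning requirement from several shared chores, lose all of them in the matching, and end up with a bundle of essentially zero price. The earning cap alone does not prevent this; what prevents it in the paper is the far more careful Phase~1--4 rounding in Algorithm~1, which does a BFS over trees, greedily assigns low-paying chores so each agent's earning stays near~$1$, prunes shared high-paying chores down to two incident agents, and only then matches, purposely skipping the agent with the highest current earning. That procedure is what makes Lemma~\ref{lem:earning-lb} ($\p(\x_i) \ge \min\{\beta,1-\beta\}$) true; your matching-only scheme does not.

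Second, your ``local-search repair'' — transfer a chore to a cheap bundle, or merge and re-split, with ``an appropriate potential'' forcing termination — glosses over exactly the difficulty that makes the result hard. The paper documents (Section~\ref{sec:efx-overview}) that after an $(i,\ell)$ chore swap, a later swap $(h,k)$ can cause $i$ to \emph{re-develop} $O(1)$-\efx\ envy unless swaps are performed in a carefully chosen order: specifically, in increasing order of the dual payments $\q(\z_i)$ of the high-paying chore to be transferred, where $(\z,\q)$ comes from a fresh minimum-cost matching of the high-paying chores (Lemma~\ref{lem:overpaying-fpo}, Lemma~\ref{lem:invariants-const}). There is no obvious potential function; termination comes from the monotonicity $q_t \ge q_{t-1}$ proved in Lemma~\ref{lem:invariants-const}(iv). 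You also omit entirely the handling of agents who end up with two or more high-paying chores; the paper's fix — re-allocate \emph{all} high-paying chores via a fresh exact \efx\ allocation (Algorithm~\ref{alg:efx-small}), which requires that $|H| \le 2n$ and a structured ordering of agents — is precisely the ``merge and re-split'' you gesture at, but without it agents in $N_H^2$ have no bounded envy guarantee. Finally, your argument silently assumes $m\ge 2n$ (needed for the cap $\beta=1/2$ to be feasible); the $m<2n$ regime requires the separate picking-sequence-plus-swaps algorithm of Section~\ref{sec:efx-small}. In short: the existence step you flag as the crux is indeed handled as you suggest, but the rounding and repair steps you treat as ``careful but essentially combinatorial'' are where the constant~4 actually lives, and your sketch of them would not close.
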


While a fair allocation ensures equitable distribution, it can compromise on overall efficiency. Ideally, we aim for allocations that are both fair and economically efficient. The standard criterion for economic efficiency is Pareto-optimality (\po): an allocation is \po if no re-allocation can make at least one agent better off without making any other agent worse off. Thus, we seek allocations that are simultaneously \po and satisfy some relaxation of envy-freeness. Given that establishing the existence of (approximate-)EFX allocations is already a challenging task without considering efficiency guarantees, and since verifying \po is known to be coNP-hard~\cite{keijzer}, we focus on structured instances or weaker envy-freeness relaxations to attain allocations that are both fair and efficient.

We consider \textit{bivalued} instances, where the disutility of any chore for any agent is one of two given positive numbers $\{a,b\}$. These instances model soft and hard preferences and have been widely studied in discrete fair division (e.g.,  \cite{amanatidis2020mnwefx,GM23TCS,ebadian2021bivaluedchores,Garg_Murhekar_Qin_2022,GMQ23chores,zhou22efx,aziz2019bivaluedpo}). For bivalued instances, it has been established that \efx and \po allocations exists for $n=3$ agents~\cite{GMQ23chores}. Additionally, for $n\ge 4$ agents, $(n-1)$-\efx allocations are known to exist, albeit without any efficiency guarantees~\cite{zhou22efx}. While \cref{thm:efx-main} already establishes the existence of $4$-EFX allocations for these instances, we improve this result to show the existence of $3$-EFX allocations, and importantly, we do so while also achieving \po.

\begin{restatable}{theorem}{thmefxpomain}\label{thm:efxpo-main}
Any chore allocation instance where agents have bivalued disutilities admits a $3$-\efx and \po allocation.
\end{restatable}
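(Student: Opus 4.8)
The plan is to construct the allocation by rounding an earning-restricted (ER) competitive equilibrium whose earning limits are tuned to the bivalued structure, and then to repair the rounded allocation with local swaps and merges that preserve efficiency. I will use the standard characterization that an integral allocation supported by a positive price vector $\p$ under which every agent holds only chores on her minimum-pain-per-buck ($\mathrm{MPB}$) set is fractionally \po, hence \po; so if every step maintains an $\mathrm{MPB}$-supporting price vector, \po comes for free and only the approximate-\efx guarantee has to be tracked. Moreover, in an $\mathrm{MPB}$ allocation one has $d_i(\x_i\setminus\{j\}) = \mathrm{MPB}_i\cdot(p(\x_i)-p_j)$ and $d_i(\x_h)\ge \mathrm{MPB}_i\cdot p(\x_h)$, so the \efx condition for $i$ toward $h$ reduces to the price inequality $p(\x_i)-\min_{j\in\x_i}p_j\le 3\,p(\x_h)$, with the degenerate case $\x_h=\emptyset$ requiring $|\x_i|\le 1$.

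\textbf{Fractional solution.} Normalize the disutilities to $\{a,b\}$ with $a<b$ and fix per-chore earning limits $c_j$, depending on $\p$ and on $a,b$, so that in any ER equilibrium a single chore contributes only a small constant fraction of an agent's balanced earning; the precise choice of $c_j$ is what ultimately pins the factor to $3$, and it must be arranged so that this fraction is independent of $a,b$, since the target bound is. Invoking the ER-equilibrium existence result --- the paper's central technical tool, obtained by formulating the LCP and running the complementary pivot algorithm --- yields a fractional allocation $\x^f$ and prices $\p$ with (i) every fractionally held chore on its owner's $\mathrm{MPB}$ set, (ii) equal balanced earnings $p(\x^f_i)=E$ for all $i$, and (iii) no agent earning more than $c_j$ from chore $j$. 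From (i)--(ii), $\x^f$ is already fractionally envy-free: $d_i(\x^f_h)\ge \mathrm{MPB}_i\cdot p(\x^f_h)=\mathrm{MPB}_i\cdot E=d_i(\x^f_i)$.

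\textbf{Rounding.} Exploiting the structure of the equilibrium returned by the pivot algorithm --- near-acyclicity of the bipartite support graph of $\x^f$, and, in the bivalued case, the fact that prices and $\mathrm{MPB}$ ratios take only a bounded number of distinct values --- round $\x^f$ to an integral allocation $\x$ by assigning each chore to one of its fractional owners along an $\mathrm{MPB}$ edge. Then $\x$ is still supported by $\p$ with the $\mathrm{MPB}$ property, so $\x$ is \po, and the earning limits bound how far $p(\x_i)$ drifts from $E$ (each agent absorbs only a controlled amount of extra price mass). Via the reduction above, this already gives approximate-\efx for every agent whose bundle is neither too heavy nor envies a bundle that became too light, ruling out large imbalances except at a bounded set of places.

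\textbf{Repair (the crux).} For the remaining \efx violations, apply a sequence of swaps (move an offending chore from an overloaded agent to an underloaded one) and merges (reassign an entire small bundle), maintaining the invariant that the current allocation is $\mathrm{MPB}$-supported by some price vector and updating the prices whenever a moved chore is not yet on the recipient's $\mathrm{MPB}$ set; the bivalued assumption keeps this price bookkeeping finite and each operation strictly decreases a lexicographic imbalance potential, which gives termination. The main obstacle is showing that the terminal allocation is \emph{exactly} $3$-\efx rather than merely $O(1)$-\efx: this requires a tight accounting that plays the cap slack from the rounding step against the savings from removing a chore, split into the case where the envied bundle $\x_h$ has at least two chores --- where $p(\x_i)-\min_{j\in\x_i}p_j\le 3\,p(\x_h)$ follows directly from the earning-limit bounds --- and the case $|\x_h|\le 1$, where \efx only compares $d_i(\x_i\setminus\{j\})$ against a single disutility and one must argue that the swaps leave no agent holding two or more chores while some agent holds none. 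Since every operation preserved an $\mathrm{MPB}$-supporting price vector, the final allocation is \po, completing the proof.
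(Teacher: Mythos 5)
Your sketch tracks the paper's high-level strategy (ER equilibrium, MPB-supported rounding, local repairs), but it has several gaps that would keep it from compiling into a proof.

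First, the earning limits are specified circularly: you say to fix $c_j$ "depending on $\p$," but $\p$ is an \emph{output} of the equilibrium, not something you can condition the input on. The paper uses a uniform limit of $\beta = \tfrac{1}{2}$ for all chores, chosen before solving for the equilibrium.

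Second, and more importantly, the repair phase as you describe it undermines the earning bounds you need. You propose "updating the prices whenever a moved chore is not yet on the recipient's $\mathrm{MPB}$ set." If prices change during the swap sequence, the bounds $\p(\x_i) \ge \tfrac{1}{2}$ and $\p_{-1}(\x_i) \le 1$ obtained from rounding no longer hold, and your ``tight accounting'' has nothing to account against. The paper's resolution is the opposite: it proves (its Lemmas~\ref{lem:bivalued-payments}, \ref{lem:mpb-ratios}, \ref{lem:efx-envy-mpb-edges}) that in the bivalued setting the only agents a high-chore agent can $3$-\efx-envy are $L$-chore agents whose entire bundle is \emph{already} on the envier's MPB set, and conversely the $H$-chore being moved is already on the $L$-agent's MPB set; so swaps preserve MPB without touching $\p$ at all. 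That structural lemma is the load-bearing use of bivaluedness, and your proposal never establishes it --- indeed, you never exploit that prices concentrate on $\{\rho,\rho k\}$ and MPB ratios on $\{\tfrac{1}{\rho},\tfrac{1}{\rho k}\}$, which is what makes the whole thing work.

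Third, your termination argument (a lexicographic potential) is different from the paper's (Phase~1 swaps strictly shrink $N_H^2$; Phase~2 swaps strictly shrink the set of non-$3$-\efx agents in $N_H^1$), but the real issue is that you give no argument that a swap cannot \emph{create} new $3$-\efx-envy elsewhere, which is the heart of the termination-with-correctness claim (paper's Lemmas~\ref{lem:phase1-inv}, \ref{lem:phase2-inv}, \ref{lem:nl-envy}). Finally, the ER equilibrium with $\beta = \tfrac{1}{2}$ and $e_i = 1$ only exists when $m \ge 2n$; the case $m \le 2n$ needs a separate algorithm (the paper starts from a balanced fPO allocation via \cref{alg:balanced} and runs the same swap phases to get an \emph{exact} \efx and \po allocation), which your proposal does not address. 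Your case split on $|\x_h| \ge 2$ versus $|\x_h| \le 1$ does not substitute for this, since the size-$m$ threshold governs whether the equilibrium you want to round even exists.
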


For general additive instances, we explore the concept of envy-freeness up to $k$ chores (EF$k$), which requires that for any two agents $i$ and $\ell$, the disutility of $i$ after the removal of \emph{some} $k$ chores assigned to her is at most her disutility for the chores assigned to $\ell$. It is important to note that while any arbitrary \po allocation is trivially \s{EF}$m$ for $m$ chores, it may not satisfy $\alpha$-\s{EF}$k$ for any $\alpha\ge 1$ with a constant $k$.
Moreover, although the existence of an \efone and \po allocation has been established for special instances, such as $n=3$ agents \cite{GMQ23chores}, bivalued disutilities~\cite{Garg_Murhekar_Qin_2022,ebadian2021bivaluedchores}, two types of chores \cite{aziz2022twotypes}, and three types of agents~\cite{GMQ24wef1po}, it remains unclear whether allocations that are $\alpha$-\s{EF}$k$ and PO exist for any $\alpha \geq 1$ and constant $k$ across all instances.
Our next result provides the first positive result in this direction by showing the existence of $2$-EF$2$ and \po allocations for all instances.

\begin{restatable}{theorem}{thmEFTwoMain}\label{thm:ef2-main}
Any chore allocation instance admits a $2$-\eftwo and \po allocation. More precisely, for every agent the allocation is either $2$-\efone or \eftwo. 
\end{restatable}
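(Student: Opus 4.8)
The plan is to instantiate the earning-restricted (ER) competitive equilibrium framework announced in the abstract with a carefully chosen earning limit, compute such an equilibrium for the fractional relaxation, and then round it to an integral allocation that is \po and satisfies the claimed per-agent envy bound. Concretely, I would set the earning cap on each chore $j$ to be some fraction (say $1/2$) of a natural per-chore budget, so that in any ER equilibrium $(\x, \p)$ no single chore can ``fill up'' an agent's earning too much; this is the quantitative lever that ultimately forces the EF$2$-type bound. By the existence theorem for ER equilibria (obtained via the complementary-pivot/LCP argument sketched in the abstract, which I will assume), such an equilibrium exists, it is fractionally Pareto-optimal (\fpo) because it is supported by a single common price vector and each agent only receives chores on which she is minimum-price-per-disutility, and its fractional allocation has a bipartite support graph (agents--chores) with few cycles that we can break by standard arguments to get an almost-integral allocation.

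The core of the argument is the rounding and repair phase. First I would turn the fractional ER equilibrium into one whose support graph is a forest, using the usual cycle-cancelling along the agent--chore bipartite graph; this preserves \fpo and the earning/spending structure. In a forest, all but at most $n-1$ chores are integrally assigned, so each agent owns an ``integral part'' plus at most a few fractionally shared chores hanging off her in the tree. I would then root each tree and assign each fractional chore integrally to one of its two endpoints (e.g., the child, or whichever endpoint the tree structure dictates), so that each agent receives at most a bounded number --- ideally one or two --- of ``extra'' chores beyond what her ER budget paid for. The earning restriction is what makes this work: because each chore contributes at most half of the relevant budget to any agent, an agent's total ER earning is close to her budget, and any other agent's bundle, valued at that agent's prices, has earning at least the full budget up to the contribution of at most one or two unshared chores; translating earnings back to disutilities via the min-price-per-disutility (\mpb) condition yields $d_i(\x_i \setminus S) \le d_i(\x_h)$ for a set $S$ of size at most two, and the factor-$2$ slack absorbs the loss from the $1/2$ earning cap in the $2$-EF$1$ case. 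Handling the two cases --- an agent is $2$-EF$1$ if she ends up with at most one problematic extra chore, and EF$2$ if she has two but they are ``small'' relative to her own spending --- cleanly is exactly the ``either $2$-EF$1$ or EF$2$'' dichotomy in the statement; I would prove it by a case analysis on how many fractional chores in the forest are incident to agent $i$ and on whether the last chore added to $i$'s bundle is charged against the earning cap or against her own budget surplus.

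The main obstacle I expect is the repair step: after rounding, an agent may receive one extra chore (pushing her earning above budget) and this can create envy that a naive bound only controls up to a factor growing with the number of such chores. Keeping the number of ``charged'' extra chores down to one per agent --- so that the bound stays $2$-EF$1$ rather than $O(k)$-EF$k$ --- requires that the forest structure be exploited tightly: each internal chore-node of the tree has one parent agent who absorbs it and the rest of its incident agents are unaffected, so a charging scheme that routes each fractional chore to a distinct agent (a matching-type argument on the forest) is needed, together with the observation that a leaf agent gets no extra chore at all. The second delicate point is ensuring \po (not just \fpo) survives the integral rounding; since the ER equilibrium is \fpo and \fpo implies \po for the resulting integral allocation (an integral \fpo allocation is \po), and the swap/merge operations I perform only exchange chores along tight edges of the common price vector, \po is preserved --- but I would need to verify that no merge operation introduces a price-improving trade, which is where the bivalued-flavored calculations in the earlier theorems' proofs would be adapted. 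Setting the earning cap correctly (it must be small enough to force the envy bound after one extra chore, yet large enough that an ER equilibrium with those caps still exists and still spends each agent's full budget) is the final calibration I would pin down last, once the rounding analysis reveals exactly how much slack each step consumes.
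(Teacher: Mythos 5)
Your overall approach is the same as the paper's: set up an ER equilibrium with uniform earning requirements $e_i = 1$ and a uniform chore earning cap $\beta = \tfrac{1}{2}$, cycle-cancel the support graph to a forest, round the remaining $\le n-1$ shared chores along the tree structure, and use the earning cap to derive a lower bound $\p(\x_i)\ge\tfrac12$ together with an upper bound $\p_{-1}(\x_i)\le 1$ or $\p_{-2}(\x_i)\le\tfrac12$, which yields the $2$-\efone / \eftwo dichotomy via the MPB condition. The high-level calibration you describe (half-budget caps forcing the envy bound after one extra chore, with the factor-$2$ slack absorbing the loss) is exactly what the paper's \cref{lem:earning-ub} and \cref{lem:earning-lb} make precise.

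There is, however, one genuine gap you do not address: the feasible-earning condition for existence of an ER equilibrium, $\sum_i e_i \le \sum_j c_j$, with your parameter choice $e_i=1$, $c_j=\tfrac12$ becomes $n\le m/2$, i.e.\ $m \ge 2n$. When $m < 2n$ no such ER equilibrium exists, so the entire rounding pipeline you describe has nothing to run on, and the theorem as stated (any instance) is not covered. The paper handles this case by a completely separate, ER-free argument: a polynomial-time algorithm that produces a \emph{balanced} \fpo allocation (every two bundles differ in size by at most one chore), which is automatically \eftwo when $m\le 2n$. You would need either to add such a balanced-allocation argument or to explain why some variant of the ER framework still applies when there are fewer than $2n$ chores; as written, your proposal quietly assumes an ER equilibrium exists for all instances, which is false for the chosen caps. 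A secondary, more cosmetic issue is that you describe the \eftwo case as arising when the agent's two extra chores are ``small'' relative to her own spending, whereas the actual mechanism is the opposite: the \eftwo case occurs precisely when the agent has two \emph{high-paying} chores (each paying more than $\tfrac12$), and what the rounding guarantees is that the \emph{rest} of her bundle pays at most $\tfrac12$, so removing those two big chores leaves her at or below every other agent's earning.
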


We note that the (non-)existence of an \efone and \po allocation of chores is another major open question in fair division. Indeed, an \efone and \po allocation may not be achievable in general, and our result of $2$-\eftwo and PO could represent the strongest attainable guarantee for general instances. 

\paragraph{Earning-restricted equilibrium.} To establish our results, we introduce a novel concept of \textit{earning-restricted} competitive equilibrium. In a (unrestricted) competitive equilibrium (CE) for chores, each agent $i$ aims to earn an amount $e_i > 0$ by performing chores in exchange for payment. Each chore $j$ pays an amount $p_j > 0$ for the completion of the chore; thus, if agent $i$ performs a fraction $x_{ij}\in[0,1]$ of chore $j$, she earns $p_j\cdot x_{ij}$ from $j$. An allocation $\x = \{x_{ij}\}_{i\in[n], j\in[m]}$ and a set of chore payments $\p = \{p_j\}_{j\in[m]}$ constitute a CE if all chores are fully allocated and each agent meets her earning requirement $e_i$, while only performing chores that yield a minimum disutility per unit of payment.

We define the concept of earning restriction (ER) by imposing a limit $c_j > 0$ on the collective earning that agents can derive from each chore $j$. Consequently, agents earn $\min\{p_j,c_j\}$ in total from chore $j$, and only a $\min\{p_j,c_j\}/p_j$ fraction of chore $j$ can be allocated. This model has a natural economic interpretation: an enterprise intends to execute a project $j$ that costs $p_j$, but due to financial constraints or contractual agreements, it can only allocate up to $c_j$ towards the completion of the project. 

Since earning-restricted equilibria generalize the concept of unrestricted CE, a natural approach is to try adapting existing CE existence proofs and algorithms~\cite{chaudhury22cechores,boodaghians22cechores,chaudhury2024competitiveequilibriumchoresdual} for ER equilibria. However, the earning restriction imposes strong constraints, making it challenging to directly apply the existing techniques. In fact, it is not immediately clear if an ER equilibrium exists at all. Indeed, if agents demand more aggregate earning than can be disbursed by performing all of the chores, an ER equilibrium cannot exist. Our next main result establishes existence under a natural condition: 

\begin{restatable}{theorem}{thmerexistence}\label{thm:er-existence}
An earning-restricted competitive equilibrium exists if and only if $\sum_i e_i \le \sum_j c_j$.
\end{restatable}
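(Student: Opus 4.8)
The plan is to reduce the existence of an earning-restricted (ER) equilibrium to solving a carefully crafted linear complementarity problem (LCP), and then to show that Lemke's classic complementary pivot algorithm, when launched on this LCP from a suitable covering vector, terminates at a solution corresponding to an ER equilibrium rather than at a secondary ray. The necessity direction is immediate: in any ER equilibrium the total amount paid out across all chores is at most $\sum_j c_j$ (since chore $j$ disburses $\min\{p_j,c_j\}\le c_j$), while the total amount earned by the agents is exactly $\sum_i e_i$; hence $\sum_i e_i \le \sum_j c_j$ is forced.

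For sufficiency, I would first normalize and guess the active structure of an equilibrium. The natural variables are the chore payments $p_j$ (or rather the fractional "flow" $f_{ij}=x_{ij}p_j$ that agent $i$ earns from chore $j$), together with the bang-per-buck ratios $\beta_i = \min_j d_{ij}/p_j$, which govern which chores agent $i$ is willing to perform. The equilibrium conditions are: (i) each agent $i$ earns exactly $e_i$, i.e.\ $\sum_j f_{ij}=e_i$; (ii) each chore $j$ disburses $\min\{p_j,c_j\}$ in total, i.e.\ $\sum_i f_{ij}=\min\{p_j,c_j\}$ and the fraction $\sum_i f_{ij}/p_j \le 1$ is fully allocated; (iii) complementary slackness linking $f_{ij}>0$ to $d_{ij}/p_j = \beta_i$ (agents only work on minimum-disutility-per-dollar chores), and linking whether $p_j \le c_j$ or $p_j \ge c_j$ to whether chore $j$ is fully or partially allocated. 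The key modeling step is to encode the two "$\min$" nonlinearities (in the bang-per-buck and in the earning cap $\min\{p_j,c_j\}$) via complementarity pairs: for the cap, introduce a slack so that either $p_j \le c_j$ with the chore possibly not fully used, or $p_j \ge c_j$ with the chore fully allocated; for bang-per-buck, the standard trick of pairing $x_{ij}$ with the "surplus disutility" $d_{ij} - \beta_i p_j \ge 0$. I would then verify that every solution of this LCP yields (after reading off $x_{ij}=f_{ij}/p_j$ and scaling) a genuine ER equilibrium, and conversely.

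The substantive part — and the main obstacle — is proving that the complementary pivot (Lemke) algorithm run on this LCP actually terminates at a true equilibrium and does not exit along a secondary unbounded ray. This requires (a) choosing a covering/augmentation vector $\vect{d}$ so that the algorithm has a unique starting edge and the initial ray is well-defined, (b) arguing the LCP is nondegenerate after a perturbation (so the pivot path is a well-defined sequence of edges), and (c) the crux: showing that on any secondary ray the complementarity and feasibility constraints would force $\sum_i e_i > \sum_j c_j$, contradicting the hypothesis. Intuitively, along a secondary ray some payments $p_j$ blow up while agents' earning demands stay fixed; tracking the flow-conservation constraints along the ray should reveal that the total deliverable earning collapses below $\sum_i e_i$, which is impossible. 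I expect this ray-infeasibility analysis to mirror the structure of known CE-for-chores LCP arguments \cite{chaudhury2024competitiveequilibriumchoresdual}, but with the earning caps $c_j$ entering the ledger — the inequality $\sum_i e_i \le \sum_j c_j$ is precisely what rules out the bad rays. Finally, since the LCP has rational data, the terminating vertex is rational, giving an ER equilibrium with rational payments and allocations.
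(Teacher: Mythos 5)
Your high-level plan matches the paper: reduce existence to an LCP, run Lemke's complementary pivot scheme, handle necessity by a simple counting argument. That much is right. But the details where the actual difficulty lives are either missing or wrong.

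First, the paper does not formulate the LCP directly in the payments $p_j$ and bang-per-buck ratios. A naive formulation in $(p_j, r_i)$ does not yield a usable starting vertex for Lemke's scheme (and solutions with $\p=0$ would satisfy the LCP without being equilibria). The paper instead computes an a priori upper bound $P$ on equilibrium payments (\cref{lem:payment-ub}) and a matching bound $R$ on the inverse MPB ratios, then substitutes $p_j \mapsto P - p_j$ and $r_i \mapsto R - r_i$. This inverted change of variables is what makes the primary ray start from $\p = \q = \r = \bbeta = \boldsymbol{0}$ and forces the relevant quantities to stay bounded. Your proposal doesn't address how the pivot path is initialized or why the LCP is well-posed at all.

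Second — and this is the central gap — your account of the secondary-ray analysis is not how the argument goes. You suppose that "on any secondary ray the feasibility constraints would force $\sum_i e_i > \sum_j c_j$," i.e.\ that the LCP polyhedron has no secondary rays under the hypothesis. The paper explicitly notes the opposite: the polyhedron \emph{does} contain secondary rays. What has to be shown is that Lemke's path, started from the primary ray, never reaches one. The feasibility condition $\sum_i e_i \le \sum_j c_j$ is not used to rule out rays directly; it is used in \cref{lem:conv3} to show that the pivot path stays at \emph{good} vertices, meaning $p_j + \beta_j < P$ for all $j$ (the summation/sign argument in the second bullet of \cref{lem:conv3} is precisely where the feasible-earning inequality enters). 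The actual ray argument (\cref{lem:conv4}) is then: any ray direction $\v'$ emanating from a good vertex with $z' \neq 0$ must, by feasibility along the ray, have all its other coordinates zero, and complementarity then forces the base vertex to also be the origin — so the "secondary" ray is in fact the primary ray, a contradiction. Your blow-up intuition doesn't apply because, after the $P-p_j$ substitution, the LCP payment variables are bounded above and cannot diverge.

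Third, the paper stresses that the LCP also captures non-equilibrium solutions (bad vertices where some $p_j + \beta_j = P$ or $r_i = R$), so one must argue that the terminating vertex is a good one. This requires the invariance argument of Lemmas~\ref{lem:conv1}--\ref{lem:conv3} plus the nondegeneracy check (\cref{lem:non-degenerate}), none of which appears in your sketch. Filling in these three points would close the gap; as it stands the sketch correctly identifies the tool but not the reasons it works.
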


In our fair division results, we specifically set each agent's earning requirement $e_i = 1$ and use a uniform earning limit $c_j$ across all chores. However, Theorem~\ref{thm:er-existence} establishes the existence of ER competitive equilibrium for all instances with arbitrary agent requirements $e_i > 0$ and chore limits $c_j > 0$, provided that the aforementioned condition is met. One of the key steps in our approach involves rounding a fractional ER equilibrium to obtain an integral, Pareto-optimal solution. We show that the fractional ER equilibrium reveals more information about agent preferences compared to  an unrestricted competitive equilibrium for the same instance, making it particularly valuable for computing fair allocations.

All our existence results are accompanied by polynomial-time algorithms that require an ER equilibrium as input. Consequently, if ER equilibria can be computed (even approximately) in polynomial time, our existence results (Theorems~\ref{thm:efx-main}, \ref{thm:efxpo-main}, and \ref{thm:ef2-main}) would directly translate into polynomial-time algorithms. Towards this, we show that an ER equilibrium can indeed be computed in polynomial time when the number of agents is constant, which implies polynomial-time algorithms for all our results in this case. 

\begin{restatable}{theorem}{thmConstER}\label{thm:const-er}
An earning-restricted equilibrium can be computed in polynomial time when the number of agents is constant.
\end{restatable}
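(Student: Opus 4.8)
The plan is to exploit that, when the number of agents $n$ is constant, an earning‑restricted equilibrium is pinned down (up to how fractionally split chores are divided) by a vector in only $n$‑dimensional space, so that the relevant combinatorial structure varies over a hyperplane arrangement of polynomial size, on each cell of which finding an equilibrium becomes a linear program. Concretely, given an equilibrium $(\x,\p)$, let $\balpha=(\alpha_1,\dots,\alpha_n)$ be the vector of \mpb ratios, $\alpha_i:=\min_j d_{ij}/p_j$. A chore $j$ can be performed only by agents $i$ minimizing $d_{ij}/\alpha_i$, in which case $p_j=d_{ij}/\alpha_i$; equivalently $p_j=\min_i d_{ij}/\alpha_i$, and whether $j$ is \emph{saturated} (i.e.\ $p_j>c_j$, so only a $c_j/p_j$ fraction of $j$ is allocated) is therefore also determined by $\balpha$. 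Thus the ``type'' of an equilibrium is the pair $(\sigma,T)$, where $\sigma(j)\subseteq[n]$ is the set of performers of $j$ (the minimizers of $d_{ij}/\alpha_i$) and $T\subseteq M$ is the set of saturated chores.

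Passing to coordinates $\gamma_i:=1/\alpha_i>0$, the set $\sigma(j)$ changes only across a hyperplane $d_{ij}\gamma_i=d_{i'j}\gamma_{i'}$ (for some $i\ne i'$), and the saturation status of $j$ changes only across $d_{ij}\gamma_i=c_j$ — in total $H=O(n^2m)$ hyperplanes in $\R^n_{>0}$. For constant $n$ this arrangement has $O(m^n)$ faces, and all of them, each together with a representative point in its relative interior, can be listed in polynomial time by standard arrangement enumeration (e.g., by incremental construction); evaluating the $H$ linear forms at such a point reads off the type $(\sigma,T)$ realized on that face. So it suffices to show that, for a fixed type, deciding whether an equilibrium of that type exists (and producing one) is a polynomial‑size LP.

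This is where the substitution pays off: with $\gamma_i=1/\alpha_i$ and $z_{ij}:=p_jx_{ij}$ (the earnings of $i$ from $j$), the equilibrium conditions for a fixed type become linear. Fixing for each $j$ a representative $i_0=i_0(j)\in\sigma(j)$, we seek $\gamma\in\R^n_{>0}$ and $z\ge 0$ supported on $\{(i,j):i\in\sigma(j)\}$ satisfying $\sum_j z_{ij}=e_i$ for all $i$; $\sum_i z_{ij}=d_{i_0 j}\gamma_{i_0}$ for $j\notin T$; $\sum_i z_{ij}=c_j$ for $j\in T$; together with the weak inequalities cutting out the closure of the type's cell ($d_{ij}\gamma_i=d_{i_0 j}\gamma_{i_0}$ for $i\in\sigma(j)$; $d_{ij}\gamma_i\ge d_{i_0 j}\gamma_{i_0}$ for $i\notin\sigma(j)$; $d_{i_0 j}\gamma_{i_0}\le c_j$ if $j\notin T$, and $\ge c_j$ if $j\in T$). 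These encode, respectively, the earning requirement $\sum_j p_jx_{ij}=e_i$, market clearing $\sum_i x_{ij}=\min\{1,c_j/p_j\}$, that the performers of $j$ are exactly the \mpb‑tight agents on $j$ (and others are not cheaper), and that prices are consistent with the saturation pattern. Maximizing $\min_i\gamma_i$ over this polyhedron is a polynomial‑size LP; if the optimum is positive, then setting $\alpha_i=1/\gamma_i$, $p_j=d_{i_0 j}\gamma_{i_0}$, $x_{ij}=z_{ij}/p_j$ yields an ER equilibrium, as one verifies directly. Conversely, every ER equilibrium, via $\gamma=1/\balpha$ and $z_{ij}=p_jx_{ij}$, is a feasible point with $\min_i\gamma_i>0$ of the LP for the (unique) type whose cell contains $\gamma$ in its relative interior. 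Hence, when $\sum_i e_i\le\sum_j c_j$ — precisely when an equilibrium exists, by \Cref{thm:er-existence} — some enumerated type's LP has a positive optimum and the algorithm returns an equilibrium; otherwise it reports none (checkable at the outset). The total running time is $O(m^n)$ times a polynomial.

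The parts requiring care are (a) verifying that feasible LP points correspond to genuine equilibria despite the relaxation to weak inequalities — in particular, a performer set $\sigma(j)$ that is a strict superset of the agents actually assigned a positive share of $j$ in the solution, or a chore with $p_j=c_j$, must not create a spurious ``equilibrium'' — and (b) the bookkeeping of degenerate types, e.g., agents all of whose performable chores are saturated, which must still be shown to fall under some enumerated type; both are handled by checking the \mpb and market‑clearing conditions literally on the recovered $(\x,\p)$. We emphasize that this argument is independent of the linear‑complementarity / complementary‑pivot machinery behind \Cref{thm:er-existence}: it uses that theorem only as a black box guaranteeing that the search over types succeeds whenever the feasibility condition holds.
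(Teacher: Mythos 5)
Your argument is correct and takes a genuinely different route from the paper's proof. The paper enumerates a \emph{rich family of consumption graphs} (following \cite{branzei2019choresceei}), of size $O(m^{n^2})$; within each graph, MPB conditions along paths in each connected component express every payment as a fixed multiple $\mu_j$ of a single representative payment $p_{j_C}$, and the piecewise-linear constraint $q_j=\min\{c_j,\mu_j p_{j_C}\}$ is linearized by iterating over $O(m)$ ``segments'' per component, giving up to $(m+1)^n$ sub-programs per graph plus a cross-component MPB check. You instead parameterize directly by the vector $\gamma$ of reciprocal MPB ratios, noting that the prices $p_j=\min_i d_{ij}\gamma_i$ and the saturation pattern are pointwise functions of $\gamma$, so the combinatorial type $(\sigma,T)$ is constant on each face of an arrangement of $O(n^2m)$ hyperplanes in $\R^n$, which for constant $n$ has $O(m^n)$ faces enumerable in polynomial time. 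This collapses the paper's two-level search (graphs $\times$ segment configurations) into a single pass over arrangement faces, removes the $\min$ because the saturation pattern is decided by which side of each hyperplane $d_{ij}\gamma_i=c_j$ the face lies on, and dispenses with the rich-family machinery entirely. Both approaches end by solving polynomially many polynomial-size LPs and invoke \Cref{thm:er-existence} only as a black box guaranteeing that some type's LP succeeds. Two small remarks. First, $\max\min_i\gamma_i$ can be unbounded (for instance if $T=M$ the cell constraints on $\gamma$ become homogeneous); this is harmless, since unboundedness itself certifies a feasible point with $\gamma>0$, which is all that is needed. Second, the worries you flag are not actual threats: the LP by construction forces the support of $z$ into $\{(i,j):i\in\sigma(j)\}$, and this is exactly the one-directional implication required by the MPB condition of \Cref{def:er-main}, so a support that is a strict subset of $\sigma(\cdot)$ is fine; the boundary case $p_j=c_j$ satisfies the chore condition via its first branch; and an agent lying in no $\sigma(j)$ makes that type's LP infeasible (since $e_i>0$), so ``empty-MPB'' degeneracies never produce a candidate. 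A literal re-verification of the recovered $(\x,\p)$ is therefore cheap insurance but not logically necessary.
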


\paragraph{Additional results.} We expect that the concept of earning-restricted equilibrium and the techniques developed to derive our main results will find broader applications. Building on these techniques, we derive the following additional results:
\begin{itemize}[leftmargin=*]
\item Existence of an allocation that is $(n-1)$-\efone and \po for all chore instances (\cref{thm:ef1-main}).
\item A polynomial time algorithm for computing a \po allocation that is \textit{balanced}, i.e., the number of chores assigned to each agent differs by at most one (\cref{thm:balanced}).  
\item A polynomial time algorithm for computing an \efx allocation when the number of chores is at most twice the number of agents (\cref{thm:efx-small}). While this result is already known \cite{mahara21efx}, we present a new algorithm that is faster, arguably simpler, and provides specific properties that are crucial for obtaining \cref{thm:efx-main}.
\item A polynomial time algorithm for computing an \efx and \po allocation for bivalued instances where the number of chores is at most twice the number of agents (\cref{thm:bivalued-small}).
\end{itemize}

\subsection{Related Work}\label{sec:related-work}
We discuss other related literature that is most relevant to \efone/\efx and \po allocations. For additional related work, we refer the reader to excellent surveys \cite{aziz2022survey,amanatidis2023survey,liu2024mixed}. 

The existence and polynomial time computation of EFX allocations of chores is known for the following special cases: (i) two agents, (ii) instances where agents have identical preference order (IDO) over the chores \cite{boli2022wpropx}, (iii) two types of chores \cite{aziz2022twotypes}, (iv) the number of chores is at most twice the number of agents \cite{mahara2023efxmatching}, (v) all but one agent have IDO disutility functions \cite{mahara2023efxmatching}, and (vi) there are three agents with 2-ary disutilities \cite{mahara2023efxmatching}. For $n=3$ agents with bivalued disutility functions, \cite{GMQ23chores} proved an \efx and \po allocation can be computed in polynomial time. Recently, \cite{afshinmehr2024approximateefxexacttefx} showed the existence of $2$-\efx allocations for $n=3$ agents. The existence and polynomial time computation of EF1 and PO allocations of chores is also known for (i) two agents \cite{aziz2019chores} and (ii) three types of agents \cite{GMQ24wef1po}. 

\paragraph{Results on goods.} Goods are items that provide non-negative utilities to agents receiving them. For goods, \efx allocations are known to exist for two agents (via cut-or-choose), identically ordered (IDO) instances \cite{plaut2018efx}, three agents \cite{chaudhury2020efx}, two types of agents~\cite{mahara21efx}, two types of goods~\cite{gorantla23efxmultiset} and three types of agents~\cite{v2024efxexiststypesagents}. However, the existence of EFX allocations is open for $n\ge 4$ agents. In terms of approximation, the best known result is the existence of $0.618$-\efx allocations \cite{amanatidis2020efxmms}. Recent works \cite{barman2023parameterizedguaranteesenvyfreeallocations, amanatidis2024pushingfrontierapproximateefx} improve the approximation guarantee beyond $0.618$ in certain special cases. Another relaxation for achieving EFX allocations is charity, where some goods are left unallocated; see e.g.,~\cite{caragiannis2019efxmnw,efxcharity2021,berger2022efx,akrami23efxsimpler}.

The existence of EF1 and PO allocations is known for goods; \cite{caragiannis16nsw-ef1} showed that an allocation with maximum Nash welfare (product of agent utilities) satisfies both EF1 and PO. However, computing such an allocation is NP-hard \cite{keijzer}, even approximately \cite{lee2015nsw-apx,garg2017nswhardness}. Bypassing this hardness, \cite{Barman18FFEA} gave a pseudo-polynomial time algorithm for computing an EF1 and PO allocation. Polynomial time computation is open in full generality but is known for binary instances \cite{barman2018binarynsw} and for a constant number of agents \cite{GM24jair}. A balanced PO allocation is known to exist and can be computed in polynomial time \cite{conitzer2017publicdecision}. For bivalued instances, an EFX and PO allocation can be computed in polynomial time \cite{GM23TCS}.

The concept analogous to ER equilibrium for goods is spending-restricted (SR) equilibrium \cite{cole2015nswapprox}, which imposes a limit on the amount agents can collectively spend on a good. There are crucial differences between SR and ER equilibria, both in terms of computation and applications. First, similar to CE for goods, SR equilibria can be computed efficiently: they are captured by a convex program formulation, and there are polynomial-time flow-based algorithms for its computation~\cite{coledevanur2017SR,cole2015nswapprox}. Second, SR equilibria have mainly been used to develop approximation algorithms for maximum Nash welfare \cite{cole2015nswapprox,coledevanur2017SR} and to the best of our knowledge have not found applications for computing envy-based fair allocations of goods. Furthermore, while all applications of SR equilibria utilize the same spending limit of $\beta = 1$, our results leverages ER equilibria with earning limits set at $\beta=\frac{1}{2}$, introducing additional challenges.

\paragraph{Share-based fairness notions.} Apart from envy-based fairness, several works study share-based fairness, where a fair allocation gives every agent some amount of (dis-)utility irrespective of other agents' bundles. Examples of such fairness notions include the proportional share (e.g., \cite{barman2019prop1po,aziz20propchores,boli2022wpropx}), maximin share (e.g., \cite{huang2023reduction, FeigeST21, ghodsi2018fair, huang21mmschores,AkramiG24,barman2020approximation}), any-price share (e.g.,  \cite{babaioff21aps,babaioff22ecshare}), and quantile shares \cite{babichenko24quantileshare}.

\paragraph{\normalfont \textbf{Organization of the remainder of the paper.}}
\cref{sec:overview} provides technical overview of the main results. \cref{sec:prelim} defines the problem formally and introduces the earning-restricted model. \cref{sec:rounding} presents our results on the existence of allocations that are (i) $2$-\eftwo and \po and (ii) $(n-1)$-\efone and \po (deferred to \cref{app:efone}). \cref{sec:efx} presents our algorithms proving the existence $4$-\efx allocations. The case of $m\le 2n$ is discussed in \cref{sec:efx-small}, while the general case is discussed in  \cref{sec:efx-overview} and \cref{sec:efx-analysis}. \cref{sec:bivalued} presents the existence of $3$-\efx and \po allocations for bivalued instances, with the case of $m\le 2n$ deferred to \cref{app:bivalued-small}. Finally, we show the existence of ER equilibria in \cref{sec:er}, with \cref{sec:er-const} showing polynomial time computation for constantly many agents. \cref{app:examples} contains illustrative examples. 

\section{Technical Overview}\label{sec:overview}
\subsection{ER Equilibria for Fair and Efficient Allocation}
Competitive equilibrium (CE) is a well-established solution concept for efficient allocation. A fractional allocation $\x = \{x_{ij}\}_{i\in[n], j\in [m]}$ and a set of chore payments $\p \in \R_{> 0}^m$ together constitute a CE if (i) every agent $i$ receives a bundle that minimizes her disutility among all bundles and payments that meet her earning requirement, and (ii) all chores are allocated. For additive disutilities, the first condition is equivalent to requiring that $(\x,\p)$ satisfies the \textit{minimum pain-per-buck (MPB) condition}:
\[
x_{ij} > 0 \Rightarrow \frac{d_{ij}}{p_j} = \min_{c\in[m]} \frac{d_{ic}}{p_c}. 
\]

The First Welfare theorem \cite{mas1995microeconomic} guarantees that for any $(\x, \p)$ satisfying the above MPB condition, the allocation $\x$ is Pareto-optimal. Moreover, the MPB condition also allows one to express the disutility $d_i(\x_i)$ of an agent in the CE $(\x,\p)$ in terms of the money $\p(\x_i) = \sum_j x_{ij}\cdot p_j$ she earns in the CE as $d_i(\x_i) = \alpha_i \cdot \p(\x_i)$, where $\alpha_i = \min_{c\in[m]} d_{ic}/p_c$ is the MPB ratio of $i$. In other words, the earning of an agent serves as a proxy for her disutility in the allocation. With this, roughly speaking, it suffices to balance agents' earnings to obtain a fair allocation, while the allocation being a CE ensures that it is \po. Indeed, a competitive equilibrium $(\y,\p)$ with equal earnings (CEEE) obtained by setting $e_i = 1$ for all agents is envy-free and \po. Since we are interested in finding allocations of indivisible chores, a natural approach would be to \textit{round} the fractional CEEE solution $\y$ to obtain a rounded allocation $\x$ which retains its fairness and efficiency properties. The rounding preserves the MPB condition, ensuring that $(\x,\p)$ is a CE and implying $\x$ is \po. However, the following example illustrates that no rounding of a CE can give any approximation to envy-freeness.

\begin{example}\label{ex:er-eq} 
\normalfont
Consider an instance with three agents $\{a_1,a_2,a_3\}$ and four chores $\{j_1, j_2, j_3, j_4\}$, with disutilities given in the following table.
\begin{center}
\begin{tabular}{ |c||c|c|c|c| } 
\hline
& $j_1$ & $j_2$ & $j_3$ & $j_4$\\\hhline{|=#=|=|=|=|}
$a_1$ & 2 & 1 & 2 & 2 \\\hline
$a_2$ & 4 & 1 & 1 & 2 \\\hline
$a_3$ & 9 & 2 & 1 & 1 \\\hline
\end{tabular}
\end{center}
Suppose each agent has an earning requirement of $e_i = 1$. Consider the allocation $\x = (\x_1, \x_2, \x_3)$ given by $\x_1 = \{\frac{1}{2}j_1\}$, $\x_2 = \{\frac{1}{2}j_1\}$ and $\x_3 = \{j_2, j_3, j_4\}$, and payments $\p = (2, 0.5, 0.25, 0.25)$. It can be checked that $(\x,\p)$ is a CE, and is illustrated in \cref{fig:er-example}. However note that any rounding of $(\x,\p)$ leaves some agent $a_i$ for $i\in\{1,2\}$ with no chores, which causes $a_3$ to have multiplicatively unbounded envy towards $a_i$. A generalization of the above example shows that no rounding of a CE with equal agent earnings can guarantee $\alpha$-\s{EF}$k$, for any $\alpha,k\ge 1$ (see \cref{ex:ce-rounding}). 
\end{example}

\begin{figure}
\centering 
\begin{tikzpicture}[
    agent/.style={circle, draw, fill=blue!70, text=white, minimum size=0.7cm, font=\small},
    job/.style={rectangle, draw, fill=red!60, text=white, minimum size=1cm, font=\large},
    job2/.style={rectangle, draw, fill=green!50, text=black, minimum size=0.7cm, font=\large},
    arrow/.style={-Stealth, line width=1pt, blue!80},
    edge/.style={thick}
]

\node[agent] (a1) at (0, 3) {$a_1$};
\node[agent] (a2) at (0, 1.5) {$a_2$};
\node[agent] (a3) at (0, 0) {$a_3$};
\node[job] (j1) at (3, 3) {$j_1$};
\node[job2] (j2) at (3, 1.5) {$j_2$};
\node[job2] (j3) at (3, 0) {$j_3$};
\node[job2] (j4) at (3, -1.5) {$j_4$};

\draw[edge] (a1) -- node[above] {1} (j1);
\draw[edge] (a2) -- node[above] {1} (j1);
\draw[edge] (a3) -- node[above] {0.5} (j2);
\draw[edge] (a3) -- node[above] {0.25} (j3);
\draw[edge] (a3) -- node[above] {0.25} (j4);

\node at (4, 3) {2};
\node at (4, 1.5) {0.5};
\node at (4, 0) {0.25};
\node at (4, -1.5) {0.25};

\node[agent] (a1_r) at (8, 3) {$a_1$};
\node[agent] (a2_r) at (8, 1.5) {$a_2$};
\node[agent] (a3_r) at (8, 0) {$a_3$};
\node[job] (j1_r) at (11, 3) {$j_1$};
\node[job2] (j2_r) at (11, 1.5) {$j_2$};
\node[job2] (j3_r) at (11, 0) {$j_3$};
\node[job2] (j4_r) at (11, -1.5) {$j_4$};

\draw[edge] (a1_r) -- node[above] {1} (j1_r);
\draw[edge] (a2_r) -- node[above] {2/3} (j2_r);
\draw[edge] (a2_r) -- node[above] {1/3} (j3_r);
\draw[edge] (a3_r) -- node[above] {1/3} (j3_r);
\draw[edge] (a3_r) -- node[above] {2/3} (j4_r);

\node at (12, 3) {4/3};
\node at (12, 1.5) {2/3};
\node at (12, 0) {2/3};
\node at (12, -1.5) {2/3};

\node[align=center, font=\small] at (4, 4) {\underline{$p_j$}};
\node[align=center, font=\small] at (1.45, 4) {\underline{$q_{ij}$}};
\node[align=center, font=\small] at (12, 4) {\underline{$p_j$}};
\node[align=center, font=\small] at (9.45, 4) {\underline{$q_{ij}$}};

\node[align=center] at (1.45, -2.5) {\underline{CE with $e_i=1$}};
\node[align=center] at (9.85, -2.5) {\underline{ER with $e_i=1$, $c_j=1$}};

\end{tikzpicture}
\caption{Illustrating the difference between an unrestricted CE and an ER equilibrium of the instance from \cref{ex:er-eq}. The chore payments $p_j$ are indicated to the right of the chore, and chore specific earnings $q_{ij}$ are indicated above the edges between agent $i$ and chore $j$. Note that in the unrestricted CE, chore $j_1$ pays out $1$ each to agents $a_1$ and $a_2$, whereas in the ER equilibrium the earning restriction of $1$ forces agent $a_2$ to do chores $j_2$ and $j_3$ in the ER equilibrium.}
\label{fig:er-example}
\end{figure}

\paragraph{ER equilibria for fair and efficient allocation.} 
The example highlights the main issue with rounding an unrestricted CE: the presence of high paying chores ($j_1$) which results in agents ($a_1$ and $a_2$) meeting their earning requirement by only doing such chores. An earning-restricted CE circumvents this issue by placing a limit $c_j$ on the amount that chore $j$ can disburse to the agents. Such an earning restriction on a lucrative chore forces agents to seek out less lucrative chores in the equilibrium. In doing so, the ER equilibrium reveals more information about agent preferences than an unrestricted equilibrium. For a concrete example, we present an ER equilibrium for the instance of \cref{ex:er-eq} with limits $c_j = 1$ for all chores in \cref{fig:er-example}. We see that in the ER equilibrium, $a_2$ is allocated her favorite chores $j_2$ and $j_3$, while she is allocated her least favorite chore $j_1$ in the unrestricted CE. Moreover, it can be checked that any rounding of the ER equilibrium results in an EFX and PO allocation for the instance! This highlights the utility of ER equilibria for fair and efficient chore allocation. In \cref{thm:er-existence}, we establish the existence of ER equilibria under the \textit{feasible earning} condition $\sum_i e_i \le \sum_j c_j$. Next, we discuss applications of ER equilibria for fair and efficient chore allocation. 

\paragraph{\cref{alg:rounding}: 2-EF2 and PO allocation.}
We use ER equilibria to obtain a 2-\eftwo and PO allocation for all chore allocation instances (\cref{thm:ef2-main}). We uniformly set agent earning requirements to $e_i = 1$, and chore earning limits to $\beta = \frac{1}{2}$. We assume $m \ge 2n$ to ensure that the feasible earning condition is satisfied and an ER equilibrium exists. Our polynomial time algorithm --- \cref{alg:rounding} --- rounds an ER equilibrium $(\z,\p)$ of such an instance to an integral allocation $(\x,\p)$ with certain bounds on the earnings of the agents. Recall that in the fractional solution $(\z,\p)$ the earning of every agent equals $1$. 

\cref{alg:rounding} partitions the set of chores into low paying chores $L = \{j: p_j \le \frac{1}{2}\}$, and high paying chores $H = \{j: p_j > \frac{1}{2}\}$, and allocates them separately to ensure that in the resulting integral allocation $(\x,\p)$:
\begin{itemize}[leftmargin=*]
\item The earning of every agent is at least $\frac{1}{2}$. Thus, no agent loses too much earning due to the rounding. \cref{alg:rounding} achieves this by ensuring that every agent loses chores which pay at most $\frac{1}{2}$ in total, or receives a chore from $H$ which pays at least $\frac{1}{2}$.
\item The earning of every agent is at most $1$ after the removal of her two highest paying chores. Thus, no agent receives too much earning due to the rounding. \cref{alg:rounding} achieves this by ensuring that every agent gets at most two chores from $H$.
\end{itemize}
By establishing the above bounds on the earnings of agents, we conclude that $(\x,\p)$ is a $2$-\eftwo and \po allocation. Moreover, we remark that our rounding algorithm is tight: \cref{ex:ef2-tight} shows an instance where no rounding of an ER equilibrium with $\beta=\frac{1}{2}$ can return a $(2-\delta)$-\eftwo and \po allocation, for any $\delta > 0$. 

\paragraph{\cref{alg:balanced}: Balanced PO allocation.} To address the case of $m\le 2n$ left out by the above approach, we design \cref{alg:balanced}: a polynomial time algorithm that gives a balanced \po allocation for any number of chores, i.e., every agent gets roughly $\frac{m}{n}$ chores. When $m\le 2n$, a balanced allocation is \eftwo, since every agent gets at most two chores. Similarly, when $m\le n$, a balanced allocation is \efone, since every agent gets at most one chore. 

\cref{alg:balanced} starts with an imbalanced allocation and repeatedly tries to transfer chores from the agent with the highest number of chores to an agent with the least number of chores, until the allocation becomes balanced. \cref{alg:balanced} performs such transfers while ensuring that the allocation is PO. Doing so requires carefully raising the payments for a subset of chores so that the MPB condition can be maintained before and after transfers. To show termination in polynomial time, we prove that there can be at most $n$ payment raises between two transfer steps, and at most $m$ transfer steps overall before the allocation is balanced.

\paragraph{Algorithms for approximately-EF1 and PO allocations.} Having shown the existence of 2-\eftwo and \po allocations, an important next question is investigating the existence of $\alpha$-EF1 and \po allocations for $\alpha\ge 1$. Following the ideas developed in \cref{sec:errounding} which round a fractional solution of an ER equilibrium, the natural approach towards obtaining an approximate-\efone guarantee would be to ensure that every agent gets at most one high paying chore in the rounded solution. Clearly, this requires the number of high paying chores to be at most $n$, which cannot be guaranteed for any earning limit $\beta < 1$. However, \cref{alg:rounding} cannot ensure good lower bounds on the agent earnings when $\beta = 1$. 

To fix this, we design \cref{alg:efone-rounding} by modifying the rounding procedure of \cref{alg:rounding}. For $m\ge n$, an ER equilibrium with $\beta=1$ exists. Given such an equilibrium $(\y,\p)$, \cref{alg:efone-rounding} defines $L$ to be the set of chores with payment at most $\frac{\beta}{2} = \frac{1}{2}$, and $H$ to be the set of chores with payment exceeding $\frac{1}{2}$. We prove that by using the same rounding procedure but with $L$ and $H$ defined this way, we obtain an integral MPB allocation $(\x,\p)$ where $\min_{j\in \x_i} \p(\x_i\setminus \{j\}) \le 1$ and $\p(\x_i) \ge \frac{1}{2(n-1)}$ for all agents $i\in N$. This implies that $\x$ is $2(n-1)$-\efone and \po. Finally, for $m\le n$ note that \cref{alg:balanced} returns an EF1 and \po allocation in polynomial time.

We improve upon our result by designing another algorithm in \cref{app:rebalancing} which returns an $(n-1)$-\efone and \po allocation. Essentially, our improved algorithm \textit{unrolls} \cref{alg:efone-rounding} and carefully identifies the events which caused the earning of an agent to drop below $\frac{1}{n-1}$. We argue that this must have happened due to sub-optimal rounding choices, and that they can be identified and corrected in polynomial time. Our algorithm thus returns an allocation in which every agent earns at least $\frac{1}{n-1}$, and at most $1$ up to the removal of one chore, and is therefore $(n-1)$-\efone and \po. Once again, we remark that our algorithm is tight: \cref{ex:ef1-tight} shows an instance where no rounding of the ER equilibrium with $\beta = 1$ is $(n-1-\delta)$-\efone, for any $\delta > 0$.

\subsection{Existence of 4-\efx Allocations}
Our main result showing the existence of 4-\efx allocations relies on two algorithms: \cref{alg:efx-small} which computes an \efx allocation for instances with $m \le 2n$ in polynomial time, and \cref{alg:const-efx} which computes a 4-\efx allocation for instances with $m\ge 2n$. We discuss \cref{alg:efx-small} later and focus on instances with $m \ge 2n$.

\subsubsection{\cref{alg:const-efx}: 4-EFX Allocation} \cref{alg:const-efx} is our most technically involved algorithm and relies on several novel ideas to obtain the existence of constant approximation of EFX. 

First, given an ER equilibrium of such an instance with $e_i=1$ and $\beta=\frac{1}{2}$, we compute a $2$-\eftwo and \po allocation $(\x,\p)$ using \cref{alg:rounding}. As before, we partition the chores into two sets based on their payments: the set $L$ of low paying chores with payment at most $\frac{1}{2}$, and the set $H$ of high paying chores with payment greater than $\frac{1}{2}$. We then partition the bundle of each agent $i$ as $\x_i = S_i \cup H_i$, where $S_i \subseteq L$ contains low paying chores and $H_i\subseteq H$ contains high paying chores. Let us partition the set of agents as $N = N_H \sqcup N_0$, where agents in $N_H$ receive one or two high paying chores while agents in $N_0$ receive none. From the analysis of \cref{alg:rounding}, we obtain the following lower and upper bounds on the earning of agents:
\begin{itemize}
\item $\p(\x_i)\ge \frac{1}{2}$ for all agents $i\in N$,
\item $\p(S_i) \le 1$ for all agents in $N_H$,
\item $\p(S_i) \le 2$ for all agents in $N_0$\footnote{The earning of any agent is at most $1$ up to the removal of their two highest paying chores. For agents in $N_0$, these chores are not in $H$, and their payment is at most $\frac{1}{2}$. Then, for such an agent $i$, $\p(S_i) \le 1 + 2\cdot\frac{1}{2} = 2$.}.
\end{itemize}

Thus, we see that agents in $N_0$ are already $4$-\efx! This indicates that the allocation $\x$ may not be $O(1)$-\efx due to agents in $N_H$. Recall that such agents have either one or two high paying chores from $H$. For simplicity, let us assume for the moment that all agents in $N_H$ are assigned a single high paying chore. We introduce the idea of \textit{`chore swaps'} to address the \efx-envy of such agents.

\begin{figure}
\centering 
\begin{tikzpicture}[
    node_style/.style={circle, draw, fill=blue!70, text=white, minimum size=0.7cm, font=\large},
    job_style/.style={rectangle, draw, fill=red!60, text=white, minimum size=1cm, font=\large},
    bar/.style={rectangle, draw, fill=green!50, minimum width=0.4cm, minimum height=0.8cm},
    bar2/.style={rectangle, draw, fill=yellow!50, minimum width=0.4cm, minimum height=0.8cm},
    edge/.style={thick},
    brace/.style={decorate, decoration={brace, amplitude=5pt}},
]

\node[node_style] (i) at (3, 3) {$i$};
\node[job_style] (j) at (4.5, 3) {$j_i$}; 
\node[above] at (4.5, 3.5) {$p_{j_i} > \frac{1}{2}$};
\node[node_style] (l) at (3, 1) {$\ell$};
\node[job_style] (j_l) at (4.5, 1) {$j_\ell$};

\foreach \x in {0,0.5,1,1.5,2} {
    \node[bar] at (\x, 3) {};
}

\foreach \x in {0.5,1,1.5,2} {
    \node[bar2] at (\x, 1) {};
}

\draw[decorate, decoration={brace}] (-0.25, 3.5) -- (2.25, 3.5) node[midway, above] {$\p(S_i)\le 1$};
\draw[decorate, decoration={brace,mirror}] (0.2, 0.4) -- (5.2, 0.4) node[midway, below] {$\p(S_i) + p_{j_i} > 3 \cdot d_i(\x_\ell)$};

\draw[edge] (j) to (i);
\draw[edge] (l) -- (2.2,1);
\draw[edge] (i) -- (2.2,3);
\draw[edge] (l) -- (j_l);

\node[node_style] (i_r) at (12, 3) {$i$};
\node[job_style] (j_r) at (14, 1) {$j_i$};
\node[node_style] (l_r) at (12, 1) {$\ell$};
\node[job_style] at (15.5, 3) {$j_\ell$};
\foreach \x in {9,9.5,10,10.5,11} {
    \node[bar] at (\x, 3) {};
}

\foreach \x in {13,13.5,14,14.5} {
    \node[bar2] at (\x, 3) {};
}

\draw[decorate, decoration={brace}] (8.75, 3.5) -- (11.25, 3.5) node[midway, above] {$\p(S_i)\le 1$};
\draw[decorate, decoration={brace}] (12.7, 3.6) -- (16.1, 3.6) node[midway, above] {$d_i(\x_\ell) < \frac{\p(S_i) + p_{j_i}}{3}$};

\draw[edge] (i_r) -- (11.2,3);
\draw[edge] (j_r) -- (l_r);
\draw[edge] (i_r) -- (12.8,3);
\node[below] at (14, 0.5) {$p_{j_i} > \frac{1}{2}$};

\node[align=center] at (3, -0.8) {\underline{Allocation before $(i,\ell)$ swap}};
\node[align=center] at (12, -0.8) {\underline{Allocation after $(i,\ell)$ swap}};

\draw[->, thick] (6.1,1.7) -- (8.4,1.7) node[midway, above] {$(i,\ell)$ swap};

\end{tikzpicture}
\caption{Illustrating an $(i,\ell)$ chore swap. In the allocation before the swap, the agent $i\in N_H$ with a single high paying chore $j_i$ envies agent $\ell$ the most. The swap transfers the entire bundle of agent $\ell$ to agent $i$, and transfers the single chore $j_i$ to agent $\ell$. Using the bounds on payments and disutilities one can argue that after the $(i,\ell)$ swap, the agents are $3$-\efx.}
\label{fig:choreswap}
\end{figure}

\paragraph{Chore swaps.} Consider an agent $i$ with $H_i = \{j_i\}$, who is not $3$-\efx in  allocation $\x$, and let $\ell$ be the agent who $i$ envies the most, i.e. $\ell = \arg\min\{h\in N: d_i(\x_h)\}$. An $(i,\ell)$ swap on the allocation $\x$ results in an allocation $\x'$ obtained by transferring all the chores of $\ell$ to $i$ and transferring the chore $j_i$ from $i$ to $\ell$. That is, $\x'_i = \x_i \cup \x_\ell \setminus \{j_i\}$, $\x'_\ell = \{j_i\}$, and $\x'_h = \x_h$ for all $h\neq \{i,\ell\}$. See \cref{fig:choreswap} for an illustration of a chore swap.

We prove that an $(i,\ell)$ chore swap locally resolves the $3$-\efx-envy of agent $i$; that is, agent $i$ is 3-\efx after the swap. Let $\alpha_i$ be the MPB ratio of agent $i$. Now observe that agent $i$ is  $3$-\efx towards all agents $h\neq \ell$ after the swap:
\begin{align*}
d_i(\x'_i) &= d_i(S_i) + d_i(\x_\ell) \\
&= \alpha_i\p(S_i) + d_i(\x_\ell) \tag{using the MPB condition} \\ 
&\le \alpha_i + d_i(\x_\ell) \tag{since $\p(S_i) \le 1$} \\
&\le 2\alpha_i\p(\x_\ell) + d_i(\x_\ell) \tag{since $\p(\x_\ell) \ge \frac{1}{2}$} \\
&\le 3\cdot d_i(\x_\ell) \tag{using the MPB condition}\\
&\le 3\cdot d_i(\x_h) \tag{by choice of agent $\ell$} \\
&= 3\cdot d_i(\x'_h).
\end{align*}

Similarly, the fact that $i$ is $3$-\efx envious of the bundle $\x_\ell$ establishes a lower bound on the disutility of $j_i$ for $i$, which we can use to prove that $i$ will not $3$-\efx envy $\x'_\ell = \{j_i\}$ after the swap. 
\begin{align*}
d_i(\x'_i) &= d_i(S_i) + d_i(\x_\ell) \\
&< d_i(S_i) + \frac{d_i(\x_i)}{3} \tag{since $i$ is not $3$-\efx towards $\ell$} \\
&= \frac{4}{3} d_i(S_i) + \frac{1}{3} d_i(j_i) \tag{since $\x_i = S_i \cup \{j_i\}$}\\
&= \frac{4}{3} \alpha_i \p(S_i) + \frac{1}{3} \alpha_i p_{j_i} \tag{using the MPB condition}\\
&< \frac{8}{3} \alpha_i p_{j_i} + \frac{1}{3} \alpha_i p_{j_i} \tag{using $\p(S_i) \le 1 < 2\cdot p_{j_i}$, since $i\in N_H$ and $j_i\in H$} \\
&= 3 \alpha_i p_{j_i} \\
&\le 3\cdot d_i(\x'_\ell). \tag{using the MPB condition}
\end{align*}
In conclusion, after an $(i,\ell)$ chore swap, agent $i$ is $3$-\efx, i.e., the $3$-\efx envy of agent $i$ is temporarily resolved. See \cref{fig:choreswap} for an illustration of the above arguments. Moreover, agent $\ell$ is \efx after the swap since she has a single chore. The above idea suggests repeatedly performing chore swaps until the allocation is $3$-\efx. 

However, an agent $i$ who underwent a swap may re-develop $O(1)$-\efx-envy subsequently in the run of the algorithm! Consider an $(i,\ell)$ swap performed between an agent $i\in N_H$ who was not $3$-\efx and the agent $\ell$ who $i$ envied the most. This resulted in an allocation $\x'$ in which the high paying chore $j_i\in H$ of $i$ was transferred to $\ell$. Now consider a subsequent swap $(h, k)$ between different agents $h\in N_H$ and $k\in N$, after which the high paying chore $j_h \in H$ of agent $h$ is (the only chore) assigned to $k$. Roughly speaking, since $i$ was $3$-\efx after the $(i,\ell)$ swap, $i$ does not $3$-\efx envy $k$'s bundle before the swap. Hence, $i$ will not envy $h$ after the $(h,k)$ swap. However, it could happen that $i$ develops $O(1)$-\efx envy towards $k$ after the $(h,k)$ swap if $d_i(j_h) < O(1)\cdot d_i(\x'_i)$.

\paragraph{Ordering the chore swaps.} To ensure this does not happen, \textit{our algorithm performs chore swaps in a carefully chosen order}. Recall that we argued that after the swap $i$ is $3$-\efx towards $\ell$. Thus, $d_i(\x'_i) \le 3\cdot d_i(j_i)$. If we had that $d_i(j_i) \le d_i(j_h)$, then we could show that $i$ remains $3$-\efx towards $k$ in the allocation $\x''$ after the $(h,k)$ swap as follows:
\[
d_i(\x''_i) = d_i(\x'_i) \le 3 d_i(j_i) \le 3 d_i(j_h) = 3 d_i(\x''_k).
\]
This observation suggests that for \textit{any} two agents $i,h\in N_H$ such that $i$ undergoes a swap before $h$, we should aim to have $d_i(j_i) \le d_i(j_h)$. To meet this strong condition comparing the disutilities of the high paying chores, we prove that it is \textit{sufficient to perform swaps in order of the payments of the high paying chores}. That is, at each time step $t$, among all the agents in $N_H$ who are not $3$-\efx, we pick the agent $i$ with the high paying chore with the minimum payment $p_{j_i}$ and perform an $(i, \ell)$ swap. An involved analysis shows that this design choice ensures \cref{alg:const-efx} does not cause an agent in $N_H$ to re-develop $3$-\efx envy. Clearly, this procedure terminates in at most $n$ steps with a $3$-\efx allocation.

\begin{figure}
\centering 
\begin{tikzpicture}[
    node_style/.style={circle, draw, fill=blue!70, text=white, minimum size=0.7cm, font=\large},
    job_style/.style={rectangle, draw, fill=red!60, text=white, minimum size=0.8cm, font=\large},
    bar/.style={rectangle, draw, fill=green!50, minimum width=0.4cm, minimum height=0.5cm},
    brace/.style={decorate, decoration={brace, amplitude=5pt}},
]

\node[node_style] (i_1) at (2.5, 3) {$i_1$};
\node[node_style] (i_2) at (2.5, 2) {$i_2$};
\node[node_style] (i_3) at (2.5, 1) {$i_3$};
\node[node_style] (i_4) at (2.5, 0) {$i_4$};

\node[job_style] (j_1) at (3.75, 3) {$j_1$};
\node[job_style] (j_2) at (4.75, 3) {$j_2$};

\node[job_style] (j_3) at (3.75, 2) {$j_3$};

\node[job_style] (j_4) at (3.75, 1) {$j_4$};
\node[job_style] (j_5) at (4.75, 1) {$j_5$};

\foreach \x in {0.5,1,1.5} {
    \node[bar] at (\x, 3) {};
}

\foreach \x in {1,1.5} {
    \node[bar] at (\x, 2) {};
}

\foreach \x in {0.5,1,1.5} {
    \node[bar] at (\x, 1) {};
}

\foreach \x in {0,0.5,1,1.5} {
    \node[bar] at (\x, 0) {};
}

\draw (i_1) -- (j_1);
\draw (i_2) -- (j_3);
\draw (i_3) -- (j_4);

\draw (i_1) -- (1.7,3);
\draw (i_2) -- (1.7,2);
\draw (i_3) -- (1.7,1);
\draw (i_4) -- (1.7,0);

\draw[decorate, decoration={brace}] (-0.2, 3.6) -- (1.8, 3.6) node[midway, above] {$L$};
\draw[decorate, decoration={brace}] (3.2, 3.6) -- (5.3, 3.6) node[midway, above] {$H$};

\node[node_style] (i_1_r) at (12.5, 3) {$i_1$};
\node[node_style] (i_2_r) at (12.5, 2) {$i_2$};
\node[node_style] (i_3_r) at (12.5, 1) {$i_3$};
\node[node_style] (i_4_r) at (12.5, 0) {$i_4$};

\node[job_style] (j_1_r) at (13.75, 3) {$j_5$};

\node[job_style] (j_2_r) at (13.75, 2) {$j_1$};
\node[job_style] (j_3_r) at (14.75, 2) {$j_3$};

\node[job_style] (j_4_r) at (13.75, 1) {$j_4$};
\node[job_style] (j_5_r) at (13.75, 0) {$j_2$};

\foreach \x in {10.5,11,11.5} {
    \node[bar] at (\x, 3) {};
}

\foreach \x in {11,11.5} {
    \node[bar] at (\x, 2) {};
}

\foreach \x in {10.5,11,11.5} {
    \node[bar] at (\x, 1) {};
}

\foreach \x in {10,10.5,11,11.5} {
    \node[bar] at (\x, 0) {};
}

\draw[decorate, decoration={brace}] (9.8, 3.6) -- (11.8, 3.6) node[midway, above] {$L$};
\draw[decorate, decoration={brace}] (13.3, 3.6) -- (15.3, 3.6) node[midway, above] {$\z'$: EFX};

\draw (i_1_r) -- (j_1_r);
\draw (i_2_r) -- (j_2_r);
\draw (i_3_r) -- (j_4_r);
\draw (i_4_r) -- (j_5_r);

\draw (i_1_r) -- (11.7,3);
\draw (i_2_r) -- (11.7,2);
\draw (i_3_r) -- (11.7,1);
\draw (i_4_r) -- (11.7,0);

\draw[->, thick] (5.5,1.75) -- (9.7,1.75) node[midway, above] {EFX re-allocation of $H$};

\end{tikzpicture}
\caption{Illustrating the re-allocation of chores in $H$.}
\label{fig:realloc}
\end{figure}

\paragraph{Handling agents with two high paying chores.} In the above discussion, we ignored agents in $N_H$ with two high paying chores. One may try to perform two chore swaps for each such agent. However it is not clear what the right order of swaps should be, and it turns out that \efx-envy can re-develop in subsequent swaps. Therefore, as one expects, the presence of two high paying chores in an agent's bundle seems to make the problem harder than if the agent had only one high paying chore.

However, we take advantage of the following crucial insight. Observe that $\p(S_i) \le 2$ for any agent $i\in N$, and $p_j > \frac{1}{2}$ for any high paying chore $j\in H$. This means that for any agent $i$, the chores in $S_i$ have cumulatively less payment than \textit{any} single high paying chore, up to a constant factor. Thus we should `balance' out the envy created among the agents due to an imbalanced allocation of the high paying chores. Note that the total number of high paying chores is at most $2n$. For $m\le 2n$, we can compute an exact EFX allocation using our \cref{alg:efx-small}. Thus we compute an \efx \textit{re-allocation} $\z'$ of the high paying chores $H$. We then add back the chores from $L$ to obtain the allocation $\x'$ given by $\x'_i = S_i \cup \z'_i$ for all $i$. This re-allocation is illustrated in \cref{fig:realloc}.

We now prove a surprising property of the allocation $\x'$: the agents who have two or more\footnote{We say `two or more' since an agent may receive more than two chores in the EFX re-allocation $\z'$.} high paying chores are actually $6$-\efx! To see why, consider an agent $i$ with $|\z'_i| \ge 2$ and any agent $h\in N$. Let $j = \arg\min_{j'\in \z'_i} d_{ij'}$. Then we have:
\begin{align*}
d_i(\x'_i) &= d_i(S_i) + d_i(\z'_i) \\
&= d_i(S_i) + d_{ij} + d_i(\z'_i\setminus \{j\}) \\
&\le d_i(S_i) + 2 \cdot d_i(\z'_i\setminus \{j\}) \tag{using $|\z'_i| \ge 2$ and choice of $j$} \\
&\le d_i(S_i) + 2 \cdot d_i(\z'_h) \tag{since $\z'$ is EFX} \\
&= \alpha_i \cdot \p(S_i) + 2 \cdot d_i(\z'_h) \tag{using the MPB condition} \\
&\le 2\alpha_i + 2\cdot d_i(\z'_h) \tag{since $\p(S_i) \le 2$} \\
&< 4\alpha_i\cdot \p(\z'_h) + 2\cdot d_i(\z'_h) \tag{since $\z'$ is \efx, $|\z'_h|\ge 1$, and $\p(\z'_h) > \frac{1}{2}$ since $\z'_h \subseteq H$} \\
&\le 6\cdot d_i(\z'_h) \tag{using the MPB condition} \\
&\le 6\cdot d_i(\x'_h).
\end{align*}

The \efx re-allocation of the $H$ chores thus leaves us to tackle the agents in with exactly one high paying chore. For these agents, we can try to use chore swaps as described earlier. Unfortunately, this does not work as it is: due to the re-allocation, we cannot use the payments $\p$ to determine the order of swaps. We show, however, that a new order of swaps can be determined that ensures that $O(1)$-\efx envy does not re-develop due to swaps. Moreover, we show that such swaps do not cause agents with two or more high paying chores to become $6$-\efx envious. Ultimately, \cref{alg:const-efx} terminates after $n$ swaps with a $6$-\efx allocation.

\paragraph{Improving the constant of approximation.} We use two ideas to tighten the approximation factor and obtain the existence of $4$-\efx allocations. First, we prove the tighter bound of $\p(S_i) \le \frac{1}{2}$ in the event that $i$ has two high paying chores. This is a consequence of a finer analysis of our ER rounding algorithm (\cref{alg:rounding}). Second, we compute a \textit{specific} EFX re-allocation $\z'$ of chores in $H$ so an agent $i$ with $\p(S_i) > 1$ has at most one chore in $\z'$. We show that our \cref{alg:efx-small} can be used to return allocations with such properties.

We provide an overview of \cref{alg:const-efx} in \cref{sec:efx-overview} and its analysis in \cref{sec:efx-analysis}. We expect that the ideas developed in obtaining this result will have wider applications. Below, we use the idea of chore swaps to obtain improved results for two structured classes: when $m\le 2n$, and for bivalued instances.

\subsubsection{Algorithms for Structured Instances}

\paragraph{\cref{alg:efx-small}: EFX allocation for $m\le 2n$.} \cref{alg:efx-small} fixes an order of the agents, say agent $1$ to agent $n$, and first allocates chores following a picking sequence. Following the sequence, each agent picks their least-disutility chore among the remaining chores in their turn. When  $m\le n$, the picking sequence is simply $1, \dots, n$. In this case, the resulting allocation is EFX since each agent gets at most one chore. When $m = n+r$ for $r\in [n]$, the picking sequence is $r, r-1, \dots, 1$, followed by $1, 2, \dots, n$.  The resulting allocation may not be EFX as the agents in $[r]$ get two chores. To fix their EFX-envy, we perform \textit{chore swaps} between the bundles of an agent $i\in [r]$ and the agent $\ell$ who $i$ most envies. In such a swap, $i$ receives the entire bundle of $\ell$, and $\ell$ receives the higher-disutility chore of $i$. We argue that after such a swap, agents $i$ and $\ell$ are both EFX. We carefully argue that each agent in $[r]$ undergoes a swap at most once, and becomes EFX after the swap. Thus, \cref{alg:efx-small} terminates with an EFX allocation after at most $r\le n$ swaps.

We note that although the existence of EFX allocations for $m\le 2n$ is known from prior work \cite{mahara2023efxmatching}, our algorithm is faster and arguably simpler as it does not repeatedly compute matchings. More importantly, the EFX allocation returned by our algorithm has certain special properties that are useful in \cref{alg:const-efx} for computing a $4$-\efx allocation in the general case.

\paragraph{\cref{alg:bivalued}: 3-EFX and PO for bivalued instances.} We next design \cref{alg:bivalued}, which returns a $3$-\efx and \po allocation for bivalued instances given an ER equilibrium with $\beta=\frac{1}{2}$; for this we assume $m > 2n$. \cref{alg:bivalued} uses the ideas of chore swaps used in \cref{alg:const-efx}, but the additional structure offered by the bivalued nature of the instance  allows us to improve the approximation guarantee to $3$-\efx while also maintaining \po.

\paragraph{\cref{alg:bivalued-small}: EFX and PO for bivalued instances with $m\le 2n$.} For bivalued instances with $m\le 2n$, we design \cref{alg:bivalued-small} which computes an \efx and \po allocation in polynomial time. \cref{alg:bivalued-small} begins with a balanced allocation computed using \cref{alg:balanced}, and then essentially runs \cref{alg:bivalued}. Since the number of chores is limited, a careful analysis shows that the guarantee of the resulting allocation can be improved to \efx and \po.

\subsection{Existence of Earning Restricted Equilibria}
Proving the existence of an ER equilibrium turns out to be quite challenging. Existing flow-based algorithms designed for computing a competitive equilibrium (CE) \cite{orlin2010fisher} or a spending-restricted equilibrium in the context of goods \cite{cole2015nswapprox,coledevanur2017SR} do not apply straightforwardly to chores. Although computing a (unrestricted) CE for chores is more difficult than for goods, there are several combinatorial algorithms \cite{chaudhury22cechores,boodaghians22cechores,chaudhury2024competitiveequilibriumchoresdual} that can compute an approximate CE. However, these algorithms also do not seem to extend to establish the existence and computation of ER equilibria for chores.

Our approach employs Lemke's complementary pivoting method on a polyhedron~\cite{Lemke65}, which is similar in spirit to the simplex algorithm for linear programming. This is a powerful approach that has been used earlier for computing CE in goods (e.g., \cite{GMSV}) and chores (e.g., \cite{ChaudhuryGMM21}). This process involves two key steps. First, we need to formulate a linear complementarity program (LCP) that captures ER equilibria. Second, we must ensure that the algorithm terminates at an ER equilibrium, which typically involves proving the absence of \emph{secondary rays} (a specific type of unbounded edges) in the LCP polyhedron; further details can be found in Section \ref{sec:lemke-prelim}.

It is important to note that both steps must work \emph{simultaneously}. Our LCP formulation captures ER equilibria, but it also captures some non-equilibrium solutions, adding complexity to our task. The most significant challenge lies in this second step. The main issue with Lemke's scheme is that it does not always guarantee termination at a solution; this occurs when the algorithm's path leads to a secondary ray.

Typically, to establish the convergence of a complementary pivot algorithm to a solution, one must prove that there are no secondary rays in the LCP polyhedron. However, our LCP formulation does contain secondary rays. This necessitates that we show that the algorithm never reaches a secondary ray, to ensure its termination. Additionally, we need to show that the final output of the algorithm is indeed an equilibrium, rather than a non-equilibrium solution to the LCP. This complicates the analysis of our algorithm.

Nevertheless, the LCP-based approach has several noteworthy features. It not only shows the (constructive) existence of an ER competitive equilibrium but also yields new structural results as simple corollaries. For example, it shows that a rational-valued equilibrium exists if all input parameters are rational, and it establishes that the problem belongs to the complexity class PPAD. Furthermore, even if computing an ER equilibrium turns out to be computationally intractable, this LCP-based method still provides a practical algorithm that performs fast in practice.

\paragraph{Polynomial time algorithm for constant $n$.} We present a polynomial time algorithm for computing an ER equilibrium when the number of agents $n$ is constant. Our algorithm effectively explores the space of all competitive allocations in $O(m^{n^2})$ time. Given that an ER competitive equilibrium is guaranteed to exist due to Theorem~\ref{thm:er-existence}, this ensures that an equilibrium will be found in polynomial time for constant $n$. Further details on this algorithm are provided in  \cref{sec:er-const}.

\subsection{Discussion and Future Directions}
In this paper, we established the existence of $4$-\efx allocations of indivisible chores, improving the previous existence result of $O(n^2)$-\efx allocations. We also proved the existence of allocations that are fair and efficient, namely (i) $2$-\eftwo and \po allocations, (ii) $(n-1)$-\efone and \po allocations, and (iii) $3$-\efx and \po allocations for bivalued instances. We introduced the framework of earning restricted (ER) competitive equilibria as a tool for obtaining informative fractional allocations with desirable fairness and efficiency properties. Our algorithms relied on rounding the ER equilibrium to a suitable integral allocation, and several techniques of splitting, swapping, and merging bundles to obtain our guarantees.

\medskip
\noindent We conclude with some concrete directions for future investigation that arise from our work.
\begin{enumerate}[leftmargin=*]
\item Perhaps the most important direction is investigating the computational complexity of computing an ER equilibrium. Our existence results (Theorems~\ref{thm:efx-main}, \ref{thm:efxpo-main}, \ref{thm:ef2-main}) are accompanied by polynomial-time algorithms that require an ER equilibrium as input. Consequently, if ER equilibria can be computed (even approximately) in polynomial time, our existence results would immediately translate into polynomial-time algorithms (with potentially a small loss in the guarantees). Indeed, we show in \cref{thm:const-er} that an ER equilibrium can be computed in polynomial time when the number of agents is constant, but the question is open in its full generality.
\item We believe that the idea of chore swaps can be used as a unifying framework to obtain both improved results and/or simpler algorithms for structured instances. For example, the existence of 2-\efx allocations for $n=3$ agents was recently shown by \cite{afshinmehr2024approximateefxexacttefx} through extensive case-analysis. We conjecture that this result can be obtained through a simpler algorithm which utilizes our ideas of ER equilibrium and chore swaps.
\item Lastly, our techniques suggest that advances on the problem of computing an \efone and \po allocation, even approximately, could drive progress for computing an $\alpha$-EFX allocation for $\alpha < 4$. In this direction, an important question is the existence of \efone and \po for $n=4$ agents.
\end{enumerate}

\section{Preliminaries}\label{sec:prelim}

\paragraph{Problem instance.} An instance $(N, M, D)$ of the chore allocation problem consists of a set $N = [n]$ of $n$ agents, a set $M = [m]$ of $m$ indivisible chores, and a list $D = \{d_i\}_{i \in N}$, where $d_i : 2^M \rightarrow \R_{\ge 0}$ is agent $i$'s \textit{disutility} function over the chores. 
Let $d_{ij} > 0$ denote the disutility of chore $j$ for agent $i$. We assume that the disutility functions are additive; thus for every $i \in N$ and $S \subseteq M$, $d_i(S) = \sum_{j \in S} d_{ij}$. An instance is said to be \textit{bivalued} if there exist $a,b\in\R_{> 0}$ such that $d_{ij} \in \{a,b\}$ for all $i\in N, j\in M$.

\paragraph{Allocation.} An \textit{integral allocation} $\x = (\x_1, \x_2, \ldots, \x_n)$ is an $n$-partition of the chores; here $\x_i \subseteq M$ is the set of chores assigned to agent $i$, who receives disutility $d_i(\x_i)$. In a \textit{fractional allocation} $\x \in [0, 1]^{n \times m}$, chores are divisible and $x_{ij} \in [0, 1]$ denotes the fraction of chore $j$ assigned to agent $i$, who receives disutility $d_i(\x_i) = \sum_{j \in M} d_{ij} \cdot x_{ij}$. We assume allocations are integral unless specified.

\paragraph{Fairness and efficiency notions.}
An allocation $\x$ is said to be:
\begin{enumerate}
\item $\alpha$-Envy-free up to $k$ chores ($\alpha$-$\textup{EF}k$) if for all $i,h\in N$, there exists $S\subseteq \x_i$ with $|S|\le k$ such that $d_i(\x_i \setminus S) \leq \alpha\cdot d_i(\x_h)$. An allocation is simply denoted by EF$k$ if it is $1$-EF$k$.
\item $\alpha$-Envy-free up to any chore ($\alpha$-\efx) if for all $i,h\in N$ and $j\in \x_i$, $d_i(\x_i \setminus \{j\}) \leq \alpha\cdot d_i(\x_h)$. An allocation is simply denoted by EFX if it is $1$-EFX.
\item Pareto optimal (\po) if there is no allocation $\y$ that dominates $\x$. An allocation $\y$ dominates allocation $\x$ if for all $i \in N$, $d_i(\y_i) \leq d_i(\x_i)$, and there exists $h \in N$ such that $d_h(\y_h) < d_h(\x_h)$. 
\item Fractionally Pareto-optimal (\fpo) if there is no fractional allocation that dominates $\x$. An \fpo allocation is clearly \po, but not vice-versa.
\end{enumerate}

\subsection{Competitive Equilibrium} An instance $(N,M,D,e)$ of a Fisher model for chores consists of a set $N$ of agents, set $M$ of chores, list $D=\{d_i\}_{i\in N}$ specifying the disutility functions of the agents, as well as an \textit{earning requirement} $e_i > 0$ for each agent $i \in N$. We associate payments $\p = (p_1, \ldots, p_m) \in \R_{> 0}^m$ with the chores, i.e. chore $j$ pays $p_j$. Each agent $i$ aims to earn at least $e_i$ by performing chores in exchange for payment. In a (fractional) allocation $\x$ with payments $\p$, the \textit{earning} of agent $i$ is $\p(\x_i) = \sum_{j \in M} p_j \cdot x_{ij}$. 

An allocation $(\x,\p)$ is said to be a competitive equilibrium if all chores are allocated and all agents earn their earning requirement subject to performing chores of least possible disutility. For additive disutilities, the latter condition can be expressed in terms of disutility-to-payment ratios as follows.

\begin{definition}\label{def:mpb}(MPB allocation) \normalfont For each agent $i$, the \textit{pain-per-buck} ratio $\alpha_{ij}$ of chore $j$ is defined as $\alpha_{ij} = d_{ij} / p_j$, and the \textit{minimum-pain-per-buck} (MPB) ratio of agent $i$ is then given by $\alpha_i = \min_{j \in M} \alpha_{ij}$. Let $\mpb_i = \{j \in M \mid d_{ij} / p_j = \alpha_i\}$ denote the set of chores which are MPB for agent $i$ for payments $\p$. An allocation $(\x,\p)$ is called an \textit{MPB allocation} if for all $i\in N$ and $j\in M$, $x_{ij} > 0$ implies $j\in\mpb_i$, i.e., agents are only assigned chores which are MPB for them.
\end{definition}

With the above definition, competitive equilibria for agents with additive disutilities can be defined as follows.

\begin{definition}\label{def:ce}(Competitive equilibrium) \normalfont
We say that $(\x, \p)$ is a \textit{competitive equilibrium} (CE) for the instance $(N,M,D,e)$ if (i) for all $j \in M$, $\sum_{i \in N} x_{ij} = 1$, i.e., all chores are completely allocated, (ii) for all $i \in N$, $\p(\x_i) = e_i$, i.e., each agent receives her earning requirement, and (iii) $(\x,\p)$ is an MPB allocation.
\end{definition}

The First Welfare Theorem \cite{mas1995microeconomic} shows that for a competitive equilibrium $(\x, \p)$ of instance $(N,M,D,e)$, the allocation $\x$ is \fpo. Using this fact, we can argue:

\begin{restatable}{proposition}{thmfpo}\label{thm:fpo}
Let $(\x,\p)$ be an MPB allocation. Then $\x$ is \fpo.
\end{restatable}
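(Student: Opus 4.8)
The plan is to show the contrapositive: if $\x$ is \emph{not} \fpo, then $(\x,\p)$ cannot be an MPB allocation. So suppose there is a fractional allocation $\y$ that dominates $\x$, i.e.\ $d_i(\y_i) \le d_i(\x_i)$ for all $i$, with strict inequality for some agent. The key observation is that for an MPB allocation, the disutility of any agent can be read off from her earnings: since $x_{ij} > 0 \Rightarrow d_{ij} = \alpha_i p_j$, summing gives $d_i(\x_i) = \alpha_i \, \p(\x_i)$. More generally, for \emph{any} bundle (fractional or not), additivity together with $d_{ij} \ge \alpha_i p_j$ for all $j$ (the definition of the MPB ratio as a minimum) yields the pointwise bound $d_i(\y_i) \ge \alpha_i \, \p(\y_i)$.

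The main step is a money-counting (aggregate accounting) argument. First I would sum the domination inequalities weighted by the reciprocals of the MPB ratios: using $d_i(\x_i) = \alpha_i\,\p(\x_i)$ and $d_i(\y_i)\ge \alpha_i\,\p(\y_i)$, the relation $d_i(\y_i)\le d_i(\x_i)$ gives $\p(\y_i) \le \p(\x_i)$ for every $i$, and the agent with $d_h(\y_h) < d_h(\x_h)$ gives $\p(\y_h) < \p(\x_h)$ (here I use $\alpha_h > 0$ and $p_j > 0$). Summing over all agents,
\[
\sum_{i\in N} \p(\y_i) \;<\; \sum_{i\in N} \p(\x_i).
\]
Now I would evaluate both sides by switching the order of summation: $\sum_i \p(\x_i) = \sum_j p_j \sum_i x_{ij} = \sum_j p_j$, since $\x$ allocates every chore fully. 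For $\y$, a priori $\y$ need not allocate every chore fully, so $\sum_i \p(\y_i) = \sum_j p_j \sum_i y_{ij} \le \sum_j p_j$; but we need the reverse. The cleanest fix is to note that domination of $\x$ by $\y$ only constrains the disutilities, so without loss of generality we may assume $\y$ is a \emph{complete} allocation — any chore left unallocated by $\y$ can be handed to an arbitrary agent, which only increases disutilities, so if a dominating $\y$ exists then a dominating complete $\y$ exists. Then $\sum_i \p(\y_i) = \sum_j p_j = \sum_i \p(\x_i)$, contradicting the strict inequality above.

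The main obstacle — really the only subtlety — is the handling of incomplete fractional allocations $\y$ in the definition of domination: one must be careful that $\sum_i \p(\y_i)$ equals, rather than merely lower-bounds, $\sum_j p_j$. The reduction to complete $\y$ resolves this cleanly (assigning leftover chore mass to any agent weakly increases that agent's disutility and leaves others unchanged, preserving domination). An alternative is to invoke the First Welfare Theorem applied to the competitive-equilibrium instance $(N,M,D,e)$ with $e_i := \p(\x_i)$, for which $(\x,\p)$ is by construction a CE, and whose conclusion is exactly that $\x$ is \fpo; but I would prefer the self-contained money-counting argument since it makes the role of the MPB condition transparent and does not require repackaging the allocation as an equilibrium. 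All remaining steps are routine sums and sign checks.
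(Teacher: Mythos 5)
Your proof is correct in substance and takes a genuinely different route from the paper's. The paper's argument is a one-line reduction: set $e_i := \p(\x_i)$, observe that $(\x,\p)$ is then a competitive equilibrium of the Fisher instance $(N,M,D,e)$, and invoke the First Welfare Theorem as a black box. You instead give a self-contained money-counting proof — effectively reproving the First Welfare Theorem for this setting. Both are valid; the paper's is shorter and leans on a cited classical result (which you in fact mention as an alternative), while yours is more transparent about where the MPB condition is used (the two inequalities $d_i(\x_i)=\alpha_i\p(\x_i)$ and $d_i(\y_i)\ge\alpha_i\p(\y_i)$) and does not require repackaging $\x$ as an equilibrium.

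One small caveat about the step you flag as the "only subtlety." Your proposed fix — hand leftover chore mass from an incomplete $\y$ to an arbitrary agent, claiming this "preserves domination" — is not correct as written: increasing an agent's disutility can break the constraint $d_i(\y_i)\le d_i(\x_i)$, so a dominating incomplete $\y$ does not automatically yield a dominating complete one. Fortunately the issue you are patching is not actually there: in chore allocation, Pareto domination is understood to quantify over \emph{complete} (fully allocating) fractional allocations, since otherwise the empty allocation would trivially dominate everything and \fpo would be vacuous. Once that is made explicit, $\sum_i\p(\y_i)=\sum_j p_j=\sum_i\p(\x_i)$ holds by definition and your strict inequality gives the contradiction directly. (Note the paper's proof has the same implicit hypothesis: for $(\x,\p)$ to be a CE one needs $\sum_i x_{ij}=1$ for all $j$, so the proposition is really about complete MPB allocations.) With that clarification, your argument is fully correct.
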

\begin{proof}
We create an associated Fisher market instance $I = (N,M,D,e)$ by defining $e_i = \p(\x_i)$ for each $i\in N$. It is easy to see that $(\x,\p)$ is a CE for $I$. By the First Welfare Theorem, $\x$ is \fpo.
\end{proof}

The above proposition shows that MPB allocations are useful in ensuring efficiency. We now discuss how such allocations can be utilized for fairness. For an MPB allocation $(\x, \p)$ where $\x$ is integral, we let $\p_{-k}(\x_i) := \min_{S \subseteq \x_i, |S| \leq k} \p(\x_i \setminus S)$ denote the payment agent $i$ receives from $\x_i$ excluding her $k$ highest paying chores. Likewise, we let $\p_{-X}(\x_i) := \max_{j\in \x_i} \p(\x_i\setminus \{j\})$ denote the payment $i$ receives from $\x_i$ excluding her lowest paying chore. 

\begin{definition}[Payment \s{EF}$k$ and Payment \efx]\label{def:pef1}
\normalfont An allocation $(\x, \p)$ is said to be $\alpha$-\textit{payment envy-free up to $k$ chores} ($\alpha$-\pefk) if for all $i, h \in N$ we have $\p_{-k}(\x_i) \leq \alpha\cdot\p(\x_h)$. Agent $i$ \textit{$\alpha$-\pefk-envies} $h$ if $\p_{-k}(\x_i) > \alpha\cdot\p(\x_h)$. 

An allocation $(\x, \p)$ is said to be $\alpha$-\textit{payment envy-free up any chore} ($\alpha$-\pefx) if for all $i, h \in N$ we have $\p_{-X}(\x_i) \leq \alpha\cdot\p(\x_h)$. Agent $i$ \textit{$\alpha$-\pefx-envies} $h$ if $\p_{-X}(\x_i) > \alpha\cdot\p(\x_h)$. 
\end{definition}

We derive a sufficient condition for computing an $\alpha$-\s{EF}$k$/$\alpha$-\efx and \po allocation.
\begin{lemma}\label{lem:pEF1impliesEF1}
Let $(\x,\p)$ be an MPB allocation where $\x$ is integral. 
\begin{itemize}
\item[(i)] If $(\x,\p)$ is $\alpha$-\pefk, then $\x$ is $\alpha$-\textup{EF}$k$ and \fpo.
\item[(ii)] If $(\x,\p)$ is $\alpha$-\pefx, then $\x$ is $\alpha$-\efx and \fpo.
\end{itemize}
\end{lemma}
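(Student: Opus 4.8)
The plan is to prove both parts by the same argument: convert the payment-based envy bound into a disutility-based envy bound using the MPB condition, and invoke \cref{thm:fpo} for the efficiency claim. The key observation is that for an MPB allocation $(\x,\p)$ with MPB ratio $\alpha_i$ for agent $i$, every chore $j$ that agent $i$ actually holds satisfies $d_{ij} = \alpha_i \cdot p_j$, so by additivity $d_i(\x_i) = \alpha_i \cdot \p(\x_i)$, and more generally $d_i(S) = \alpha_i \cdot \p(S)$ for any $S \subseteq \x_i$. On the other hand, for an arbitrary bundle $\x_h$ (which agent $i$ does not hold), we only have the inequality $d_{ij} \ge \alpha_i \cdot p_j$ for every chore $j$, hence $d_i(\x_h) \ge \alpha_i \cdot \p(\x_h)$. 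This asymmetry — equality on the agent's own bundle, inequality on others' bundles — is exactly what makes the reduction work in the favorable direction.

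For part (i): fix agents $i, h \in N$. Since $\x$ is integral and $(\x,\p)$ is $\alpha$-\pefk, there is a set $S \subseteq \x_i$ with $|S| \le k$ achieving $\p(\x_i \setminus S) = \p_{-k}(\x_i) \le \alpha \cdot \p(\x_h)$. Using $S \subseteq \x_i$ and the MPB equality on $i$'s own bundle, $d_i(\x_i \setminus S) = \alpha_i \cdot \p(\x_i \setminus S) \le \alpha_i \cdot \alpha \cdot \p(\x_h) = \alpha \cdot (\alpha_i \cdot \p(\x_h)) \le \alpha \cdot d_i(\x_h)$, where the last step uses $\alpha_i \cdot \p(\x_h) \le d_i(\x_h)$. (One should note the edge case $\alpha_i = 0$ cannot occur since all $d_{ij} > 0$ and all $p_j > 0$, so $\alpha_i > 0$; this also means multiplying/dividing by $\alpha_i$ is harmless.) Since this $S$ has $|S| \le k$ and witnesses $d_i(\x_i \setminus S) \le \alpha \cdot d_i(\x_h)$, the allocation $\x$ is $\alpha$-\textup{EF}$k$. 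For part (ii), the argument is identical with $S = \{j\}$ ranging over single chores: for the chore $j \in \x_i$ achieving $\p_{-X}(\x_i) = \p(\x_i \setminus \{j\})$, we get $d_i(\x_i \setminus \{j\}) = \alpha_i \cdot \p(\x_i \setminus \{j\}) \le \alpha \cdot d_i(\x_h)$; but we actually need this for \emph{every} $j \in \x_i$, so instead we use that $\p(\x_i \setminus \{j\}) \le \p_{-X}(\x_i) \le \alpha\cdot\p(\x_h)$ holds for all $j \in \x_i$ (as $\p_{-X}$ is the maximum over such $j$), and then $d_i(\x_i \setminus \{j\}) = \alpha_i\p(\x_i\setminus\{j\}) \le \alpha_i\alpha\p(\x_h) \le \alpha\, d_i(\x_h)$ for every $j \in \x_i$, giving $\alpha$-\efx.

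The \fpo conclusion in both parts is immediate from \cref{thm:fpo}, since $(\x,\p)$ is assumed to be an MPB allocation. There is essentially no obstacle here — the lemma is a routine translation between the payment world and the disutility world. The only thing requiring a moment of care is getting the direction of the two uses of the MPB condition right (equality on one's own bundle, inequality on others'), and handling the quantifier structure in the \efx case correctly (the bound must hold for all $j \in \x_i$, which follows because $\p_{-X}$ already maximizes over $j$). I would present the proof compactly in a single displayed chain of (in)equalities for each part, with the justifications ("MPB on $\x_i$", "$\alpha$-\pefk", "MPB on $\x_h$") annotated via \tag.
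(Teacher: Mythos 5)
Your proof is correct and follows essentially the same argument as the paper's: use the MPB equality $d_i(S) = \alpha_i \p(S)$ on subsets of $\x_i$ and the MPB inequality $d_i(\x_h) \ge \alpha_i \p(\x_h)$ on other bundles to translate the \pefk (resp.\ \pefx) bound into the EF$k$ (resp.\ EFX) bound, then invoke \cref{thm:fpo} for Pareto-optimality. The paper phrases the chain via the equalities $\min_{|S|\le k} d_i(\x_i\setminus S) = \alpha_i\p_{-k}(\x_i)$ and $\max_{j\in\x_i} d_i(\x_i\setminus\{j\}) = \alpha_i\p_{-X}(\x_i)$ directly, while you explicate the witness $S$ and the universal quantifier over $j$, but these are presentational differences only.
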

\begin{proof}
Since $(\x,\p)$ is an MPB allocation,  \cref{thm:fpo} shows $\x$ is \fpo. Let $\alpha_i$ be the MPB ratio of agent $i$ in $(\x,\p)$. Consider any pair of agents $i,h\in N$. 
\begin{itemize}
\item[(i)] If $(\x,\p)$ is $\alpha$-\pefk, then:
\begin{align*}
\min_{S\subseteq\x_i, |S|\le k} d_i(\x_i\setminus S) &= \alpha_i\cdot\p_{-k}(\x_i) \tag{since $(\x,\p)$ is on MPB} \\
&\le \alpha_i\cdot \alpha \cdot \p(\x_h) \tag{since $(\x,\p)$ is $\alpha$-pEF$k$ (\cref{def:pef1})} \\
&\le \alpha \cdot d_i(\x_h). \tag{since $(\x,\p)$ is on MPB}
\end{align*}
Thus, $\x$ is $\alpha$-\s{EF}$k$. 
\item[(ii)] If $(\x,\p)$ is $\alpha$-\pefx, then:
\begin{align*}
\max_{j\in\x_i} d_i(\x_i\setminus \{j\}) &= \alpha_i\cdot\p_{-X}(\x_i) \tag{since $(\x,\p)$ is on MPB} \\
&\le \alpha_i\cdot \alpha \cdot \p(\x_h) \tag{since $(\x,\p)$ is $\alpha$-\pefx (\cref{def:pef1})} \\
&\le \alpha \cdot d_i(\x_h). \tag{since $(\x,\p)$ is on MPB}
\end{align*}
Thus, $\x$ is $\alpha$-\efx. \qedhere
\end{itemize}
\end{proof}

\subsection{Earning-Restricted Equilibrium} 
We introduce the concept of \textit{earning-restricted (ER) competitive equilibrium} for chores. An instance $(N,M,D,e,c)$ of the ER competitive equilibrium problem consists of a set $N = [n]$ of $n$ agents, a set $M = [m]$ of $m$ chores, a list $D = \{d_i\}_{i\in N}$ of additive agent disutility functions, a list $e = \{e_i\}_{i\in N}$ of agent earning requirements, and a list $c = \{c_j\}_{j\in M}$ of chore earning-restrictions. As before, each agent $i\in N$ aims to earn at least $e_i > 0$ by performing chores in exchange for payment from the chores. However, the money that agents can collectively earn from a chore $j\in M$ is capped, and this cap is specified by the earning limit $c_j\ge 0$.

Thus, an equilibrium $(\x,\p)$ of an ER instance consists of a partial fractional allocation $\x$ and a set of chore payments $\p$ such that each agent $i$ earns her earning requirement $e_i$ while performing chores of least possible disutility, with the restriction that the earning from each chore $j$ is at most $c_j$. Once a chore has paid $c_j$ to the agents, the rest of the chore is not assigned. Define the \textit{earning vector} $\q\in\R^{n\times m}$ associated with $(\x,\p)$ given by $q_{ij} := p_j x_{ij}$ which denotes the amount agent $i$ earns from chore $j$. Let $q_j = \sum_i q_{ij} = \sum_i p_j x_{ij}$ denote the total earning from chore $j$. We now formally define an ER equilibrium $(\x,\p)$.

\begin{definition}[Earning-restricted equilibrium]\label{def:er-main}
Let $\q$ be the earning vector associated with an allocation $(\x,\p)$. Then $(\x,\p)$ is an earning-restricted equilibrium of an ER instance $(N,M,D,e,c)$ if 
\begin{itemize}
\item[(i)](Agents) $(\x,\p)$ is an MPB allocation, i.e., for $i\in N, j\in M$, $x_{ij} > 0$ implies $j\in \mpb_i$. Moreover, for each $i\in N$, $\sum_j q_{ij} = e_i$.
\item[(ii)] (Chores) For each $j\in M$, either $\sum_i x_{ij} = 1$ and $q_j = p_j \le c_j$, or $\sum_i x_{ij} < 1$ and $q_j = c_j < p_j$. In other words, for each $j$, $q_j = \min\{p_j, c_j\}$.
\end{itemize}
\end{definition}

In the above definition, the first condition expresses that agents are assigned MPB chores and each agent earns their earning requirement. The second condition expresses that for each chore $j$, if the payment $p_j$ of the chore is at most the earning limit $c_j$, then the chore is fully assigned and pays out $q_j = p_j$ to the agents. On the other hand, if the payment $p_j$ exceeds the earning limit $c_j$, then the chore will pay out $q_j = c_j$ to the agents, and only a $c_j/p_j$ fraction of the chore will be assigned. For notational convenience, we often use both $(\x,\p)$ and $(\x,\p,\q)$ to denote an ER equilibrium. 

Clearly, an ER equilibrium can exist only if $\sum_i e_i \le \sum_j c_j$, i.e., the chores must collectively pay enough so all agents can earn their earning requirements. In \cref{sec:er}, we prove that this condition is in fact sufficient for existence.

\thmerexistence*

\section{Existence of 2-EF2 and PO Allocations}\label{sec:rounding}
In this section, we prove the existence of $2$-\eftwo and \fpo allocations for all chore allocation instances.

\thmEFTwoMain*

We prove \cref{thm:ef2-main} through two algorithms: \cref{alg:rounding} which returns a $2$EF$2$ and fPO allocation for instances with $m \ge 2n$, and \cref{alg:balanced} which returns a EF$2$ and fPO allocation for instances with $m\le 2n$.

\paragraph{\cref{alg:rounding}: 2-\eftwo and \po for $m\ge 2n$.} The main idea is to use ER equilibria to compute a fair and efficient allocation. Given a chore allocation instance, we uniformly set agent earning requirements $e_i = 1$ and impose a uniform earning limit of $\beta\in[\frac{1}{2},1)$ on all chores. Since $m\ge 2n$ and $\beta\ge \frac{1}{2}$, we have $m\cdot\beta\ge n$. Thus, the feasible earning condition is satisfied and an ER equilibrium $(\z,\p)$ exists by \cref{thm:er-existence}. We design a polynomial time algorithm \cref{alg:rounding} which carefully rounds the fractional ER equilibrium allocation $\z$ to an integral allocation $\x$ that is approximately-EF$k$ and \fpo. With different choices of $\beta$, the rounded integral allocation satisfies different fairness guarantees. In particular, setting $\beta=\frac{1}{2}$ gives a $2$-\eftwo and \fpo allocation. We present \cref{alg:rounding} and its analysis in \cref{sec:errounding}.

\paragraph{\cref{alg:balanced}: \eftwo and \po for $m\le 2n$.}
To handle the case of $m \le 2n$, we design a polynomial time algorithm (\cref{alg:balanced}), which computes an \eftwo and \fpo allocation. Specifically, for any number of chores, \cref{alg:balanced} produces an \fpo allocation in which the number of chores in agent bundles differ by at most one, i.e., is balanced. Thus, for $m\le 2n$, each agent gets at most two chores and hence the allocation is \eftwo. The algorithm starts with an imbalanced allocation and transfers chores from agents with a higher number of chores to agents with a lower number chores until the allocation is balanced, while preserving \fpo. We present \cref{alg:balanced} and its analysis in \cref{sec:balanced}.

\begin{algorithm}[!t]
\caption{Earning Restricted Rounding}\label{alg:rounding}
\textbf{Input:} Instance $(N,M,D)$ with $m\beta\ge n$; ER equilibrium $(\y,\p)$ with earning limit $\beta\in [\frac{1}{2}, 1)$ \\
\textbf{Output:} An integral allocation $\x$
\begin{algorithmic}[1]
\State $(\z,\p) \gets \texttt{MakeAcyclic}(\y,\p)$
\State Let $G = (N, M, E)$ be the payment graph associated with $(\z,\p)$
\State Root each tree of $G$ at some agent and orient edges
\State $\x_i \gets \emptyset$ for all $i\in N$ \Comment{Initialize empty allocation}
\State $L = \{j \in M: p_j \le \beta\}$, $H = \{j \in M: p_j > \beta\}$ \Comment{Low, High paying chores}
\Statex \textit{--- Phase 1: Round leaf chores ---}
\For{all leaf chores $j$} 
\State $\x_i \gets \x_i \cup \{j\}$ for $i= \parent{j}$; delete $j$ from $G$
\EndFor
\Statex \textit{--- Phase 2: Allocate $L$ ---}
\For{every tree $T$ of $G$}
\For{every agent $i$ of $T$ in BFS order}
\If{$\p(\x_i) > 1$}
\For{every $j\in\child{i}\cap H$}
\State Assign $j$ to agent $h\in\child{j}$ earning most from $j$ among $\child{j}$; delete $j$
\EndFor
\EndIf
\While{$\exists j\in\child{i}\cap L$ s.t. $\p(\x_i\cup \{j\}) \le 1$}
\State $\x_i \gets \x_i \cup \{j\}$; delete $j$ from $G$
\EndWhile
\For{every $j\in\child{i}\cap L$}
\State Assign $j$ to arbitrary agent $h\in\child{j}$; delete $j$ from $G$
\EndFor
\EndFor
\EndFor
\Statex \textit{--- Phase 3: Pruning trees ---}
\For{chore $j\in V(G) \cap M$}
\If{agent $i\in \child{j}$ does not earn the most from $j$ among agents in $\child{j}$}
\State Delete edge $(j, i)$ from $G$
\EndIf
\EndFor
\Statex \textit{--- Phase 4: Matching to allocate $H$ ---}
\For{every tree $T = (N(T) \cup M(T), E(T))$ of $G$}
\State $h\gets \arg\max_{i\in N(T)} \p(\x_i)$
\State Compute a matching $\sigma$ of $i\in N(T)\setminus\{h\}$ to $M(T)$
\For{$i\in N(T)\setminus \{h\}$}
\State $\x_i \gets \x_i \cup \{\sigma(i)\}$
\EndFor
\EndFor
\State \Return $\x$
\end{algorithmic}
\end{algorithm}

\subsection{Earning-Restricted Rounding}\label{sec:errounding}
We now describe \cref{alg:rounding}, which rounds a fractional ER equilibrium $(\y,\p)$ of an instance with uniform chore earning limit $\beta$ to an approximately-\s{EF}$k$ and \fpo allocation. Our algorithm modifies the allocation by manipulating its payment graph, as defined below.

\begin{definition}[Payment graph] The payment graph $G = (N,M,E)$ associated with an allocation $(\x,\p)$ is a weighted bipartite graph with vertex set $V(G) = N\sqcup M$, and edge set $E(G) = \{(i,j) : i\in N, j\in M, x_{ij}>0\}$. The weight of edge $(i,j)$ is $p_j\cdot x_{ij}$, which is the earning of agent $i$ from chore $j$.
\end{definition}

\cref{alg:rounding} first transforms the given equilibrium into one whose payment graph is acyclic, i.e., is a collection of trees. This is due to the following lemma:

\begin{restatable}{lemma}{lemacyclic}\label{lem:acyclic}
There is a polynomial time algorithm \emph{$\texttt{MakeAcyclic}$} which takes as input an ER equilibrium of instance $I$ and returns an another ER equilibrium of $I$ whose payment graph is acyclic.
\end{restatable}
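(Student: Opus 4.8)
The plan is to eliminate cycles from the payment graph one at a time: whenever the payment graph $G$ of the current ER equilibrium contains a cycle, I re-route the fractional allocation around that cycle while keeping the payments $\p$ fixed, thereby zeroing out at least one edge per step and preserving every condition of Definition~\ref{def:er-main}.

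\textbf{Cycle cancellation.} Suppose $(\x,\p,\q)$ is an ER equilibrium whose payment graph $G$ has a cycle. Since $G$ is bipartite, the cycle reads $i_1 - j_1 - i_2 - j_2 - \cdots - i_k - j_k - i_1$ (with the $i$-indices taken mod $k$), and $x_{i_t j_t} > 0$, $x_{i_{t+1} j_t} > 0$ for all $t$. For a parameter $\delta$, I perturb the earning vector by adding $+\delta$ to each entry $q_{i_t j_t}$ and $-\delta$ to each entry $q_{i_{t+1} j_t}$, leaving all other entries unchanged, and set $x_{ij}(\delta) := q_{ij}(\delta)/p_j$. Each agent on the cycle and each chore on the cycle is incident to exactly one $+\delta$ edge and one $-\delta$ edge, so the agent row sums $\sum_j q_{ij}(\delta) = e_i$ and the chore column sums $q_j(\delta) = \min\{p_j,c_j\}$ are all preserved; consequently $\sum_i x_{ij}(\delta) = q_j/p_j$ is unchanged as well, so clause (ii) of Definition~\ref{def:er-main}, including its case split on whether a chore is fully or partially allocated, continues to hold verbatim.

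\textbf{Choosing $\delta$ and iterating.} The only remaining constraint is non-negativity of the perturbed entries. Every edge of $G$ has $q_{ij} = p_j x_{ij} > 0$, so $\delta^\star := \min\{q_{ij} : (i,j)\text{ is a }-\delta\text{ edge of the cycle}\}$ is strictly positive; at $\delta = \delta^\star$ all perturbed entries are $\ge 0$, each $x_{ij}(\delta^\star) \le q_j/p_j \le 1$ since $q_j = \min\{p_j,c_j\} \le p_j$, and at least one edge becomes $0$ and is deleted from $G$. No new edge is created, and each edge whose value was increased was already MPB for its agent (it had $x_{ij} > 0$ originally), so the MPB condition of Definition~\ref{def:mpb} is maintained. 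Hence $(\x(\delta^\star),\p)$ is again an ER equilibrium of the same instance $I$, with strictly fewer payment-graph edges. \texttt{MakeAcyclic} repeats this step --- finding a cycle by depth-first search and cancelling it --- until the payment graph is acyclic; since $|E(G)| \le nm$ and each iteration removes at least one edge, the algorithm halts after at most $nm$ iterations, each running in polynomial time.

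\textbf{Main obstacle.} There is no deep difficulty here: this is the standard cycle-cancellation / flow-rotation argument. The one point that needs genuine care is confirming that \emph{every} clause of the ER-equilibrium definition survives the perturbation --- in particular clause (ii)'s dichotomy between fully and partially assigned chores --- and that no individual fraction leaves $[0,1]$; both follow from the fact that the cancellation leaves every $q_j$ and every $p_j$ intact while keeping $0 \le q_{ij} \le q_j \le p_j$.
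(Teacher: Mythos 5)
Your proof is correct and follows essentially the same cycle-cancellation argument as the paper: rotate earnings around a cycle in the payment graph by the minimum weight on the decreasing edges, observe that agent row sums, chore column sums, and the support (hence MPB) are all preserved, and iterate until acyclic. The paper picks a minimum-weight edge on the cycle and cancels it directly, while you parametrize by $\delta$ and take the minimum over the $-\delta$ edges; these are the same step, and your extra checks (e.g.\ $x_{ij}(\delta^\star)\le 1$, the case split in clause (ii) of Definition~\ref{def:er-main}) are correct if slightly more explicit than the paper's.
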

\begin{proof}
The algorithm \texttt{MakeAcyclic} begins with the payment graph $G = (N\sqcup M, E)$ of the ER equilibrium $(\y,\p)$. If $G$ is acyclic, it returns $(\y,\p)$. Otherwise, suppose an intermediate allocation $(\z,\p)$ has a cycle $C = (i_1, j_1, i_2, j_2, \dots, i_k, j_k, i_1)$, where $i_\ell \in N$ are agents and $j_\ell\in M$ are chores, and $C$ contains the edges $(i_\ell, j_\ell)\in E$ and $(j_\ell,i_{\ell+1})\in E$ for $1\le \ell \le k$, with the notation that $i_{k+1} = i_1$. The earning of an agent $i$ from chore $j$ is $q_{ij} = p_j\cdot y_{ij}$. Without loss of generality, assume $(i_1,j_1)$ is the edge with minimum $q_{ij}$ among the edges $(i,j)$ in $C$. Let $s = q_{i_1j_1}$. 

Now consider the allocation $(\z,\p,\q')$, where for all $\ell\in[k]$, $q'_{i_\ell j_\ell} = q_{i_\ell j_\ell} - s$, and $q'_{i_\ell j_{\ell-1}} = q_{i_\ell j_{\ell-1}} + s$, and $q'_{ij} = q_{ij}$ for all $(i,j)\notin C$. This has the effect of circulating agent earnings around the cycle $C$ and the edge $(i_1,j_1)$ is no longer present in the payment graph of $(\z,\p)$. \texttt{MakeAcyclic} updates the allocation to $(\z,\p)$ and continues deleting cycles until the payment graph becomes acyclic. Since each step strictly decreases the number of edges in the payment graph and cycles can be found efficiently, \texttt{MakeAcyclic} terminates in polynomial time. 

We prove using induction that the resulting allocation is an ER equilibrium. The initial allocation $(\y,\p)$ is an ER equilibrium and suppose the claim holds at some iteration with an updated allocation $(\y,\p)$. Let $(\z,\p,\q')$ be the next allocation. Notice for each agent $i$, $\sum_j q'_{ij} = \sum_j q_{ij} = e_i$. Next for each chore, we have $\sum_i q'_{ij} = \sum_i q_{ij} = \min\{p_j, c_j\}$. Lastly if $z_{ij} > 0$ then $y_{ij} > 0$ as well. Thus the conditions of \cref{def:er-main} is satisfied, implying that $(\z,\p,\q')$ is an ER equilibrium. 
\end{proof}

Given the ER equilibrium $(\z,\p)$ whose payment graph $G$ is acyclic, \cref{alg:rounding} roots each tree of $G$ at some agent and orients its edges. For a node $v\in V(G)$, let $\child{v}$ denote the children nodes of $v$ and $\parent{v}$ denote the parent node of $v$. Note that the root nodes of trees in $G$ are agents and the leaf nodes are chores. We let $\x$ denote the integral allocation of chores to agents made by \cref{alg:rounding}, which is initially empty. We classify chores into two sets: $L = \{j\in M: p_j\le \beta\}$ comprising of low paying chores, and $H = \{j\in M: p_j > \beta\}$ comprising of high paying chores. \cref{alg:rounding} proceeds in four phases.

Phase 1 rounds every leaf chore $j$ to their parent agent $\parent{j}$ and then deletes $j$ from $G$. After this, all chores in $G$ have edges to at least two agents, i.e., are \textit{shared} chores. Note that there can be at most $(n-1)$ shared chores, since $G$ is acyclic.

Phase 2 assigns chores in $L$. In each tree $T$ of $G$, we visit agents in breadth-first order starting from the root. At agent $i$, we first check if $\p(\x_i) > 1$. Note that this can happen only if $i$ received the $\parent{i}$ chore. If so, we assign every chore $j\in \child{i}$ to a child agent of $j$. A chore $j\in \child{i}\cap L$ is assigned to an arbitrary child of $j$, while $j\in \child{i}\cap H$ is assigned to an agent who earns the most from $j$ among children of $j$. After this, such an agent $i$ is not assigned any further chores in the algorithm. Otherwise, if $\p(\x_i) \le 1$ when visiting $i$, we iteratively assign the child chores of $i$ in $L$ as long as $\p(\x_i)\le 1$. Any remaining child chore $j\in L$ is assigned to an arbitrary child agent of $j$. Thus at the end of phase 2, all chores in $L$ have been allocated, and the graph $G$ is a collection of `Phase 2 trees' whose vertices are agents and chores from $H$.

Phase 3 prunes Phase 2 trees by deleting certain edges. For every shared chore $j\in H$, we delete the edge $(j,i)$ for $i\in\child{j}$ if $i$ does not earn the most from $j$ among the child agents of $j$. As a result, we obtain `Phase 3' trees in which each chore $j\in H$ is adjacent to exactly two agents. 

Phase 4 assigns the remaining shared chores in $H$. Due to the pruning phase, each Phase 3 tree $T = (N(T)\cup M(T), E(T))$ with $|N(T)| = r$ agents has exactly $|M(T)| = r-1$ shared chores from $H$. We identify an agent $h\in N(T)$ with the highest earning $\p(\x_h)$, and then assign the $(r-1)$ chores of $M(T)$ to the $(r-1)$ agents of $N(T)\setminus\{h\}$ via a matching. Such a matching is possible because in a Phase 3 tree, each shared chore is adjacent to exactly two agents. Thus during Phase 4, every agent gets at most one chore from $H$, and all chores are allocated.

\begin{restatable}{lemma}{lemRoundRuntime}\label{lem:rounding-runtime}
Given an ER equilibrium $(\y,\p)$ of an instance $(N,M,D)$, \cref{alg:rounding} returns an integral allocation $\x$ in $\poly{n,m}$ time.
\end{restatable}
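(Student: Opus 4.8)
The plan is to account for the running time of \cref{alg:rounding} operation by operation and observe that only polynomially many such operations occur. The key structural fact I would use is that immediately after the call to \texttt{MakeAcyclic}, the payment graph $G$ is a forest on the vertex set $N \sqcup M$, hence has at most $n+m$ vertices and at most $n+m-1$ edges; moreover, every subsequent step of \cref{alg:rounding} only deletes vertices or edges from $G$, so $G$ remains a forest of size $O(n+m)$ throughout the execution. By \cref{lem:acyclic}, \texttt{MakeAcyclic} runs in $\poly{n,m}$ time. Building the payment graph, rooting each tree of $G$ at an agent, orienting edges away from the roots, recording the $\child{\cdot}$ and $\parent{\cdot}$ relations, initializing $\x$, and partitioning $M$ into $L$ and $H$ are all standard graph traversals/scans costing $O(nm)$ time in total.

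I would then proceed phase by phase. Phase 1 scans the leaf chores, assigning each to its parent and deleting it; maintaining the running earnings $\p(\x_i)$ incrementally, this costs $O(1)$ per leaf, so $O(n+m)$ overall. Phase 2 visits every agent of every tree exactly once in BFS order; at an agent $i$ it inspects the child chores of $i$, and summed over all agents the total number of (agent, child-chore) inspections is $O(|E(G)|) = O(n+m)$. For each such chore the algorithm performs a constant number of assignments, deletions, earning comparisons, and ``which child of $j$ earns most from $j$'' queries; the latter costs $O(\deg_G(j))$, and the while-loop that greedily adds low chores to $i$ runs at most $|\child{i}|$ times. Hence Phase 2 costs $\poly{n,m}$ (even if one recomputes $\p(\x_i)$ from scratch, each recomputation is $O(m)$, keeping the total polynomial). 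Phase 3 scans each remaining chore $j$ and its at most $\deg_G(j)$ child agents, deleting an edge when a child is not a top earner from $j$; this is again $O(n+m)$.

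Finally, Phase 4 handles each Phase-3 tree $T=(N(T)\cup M(T),E(T))$: it finds the maximum-earning agent $h\in N(T)$ in $O(|N(T)|)$ time and then needs a matching of $N(T)\setminus\{h\}$ to $M(T)$. Here I would invoke the structural property — established when describing Phase 3 — that $T$ is a tree with $|M(T)| = |N(T)|-1$ in which every chore is adjacent to exactly two agents: rooting $T$ at $h$, each chore $j\in M(T)$ has a unique parent agent and, since $\deg_T(j)=2$, a unique child agent, and each agent other than $h$ has a unique parent chore, so matching every $i\ne h$ to $\parent{i}$ yields the required perfect matching, computable in $O(|T|)$ time; alternatively, any polynomial-time bipartite matching routine suffices. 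Since each chore is deleted from $G$ precisely at the moment it is assigned to exactly one agent, the returned $\x$ is a genuine integral allocation. Summing the costs of \texttt{MakeAcyclic} and of the four phases gives a $\poly{n,m}$ bound on the total running time. The only step needing a real argument rather than a routine count is the existence and efficient computability of the Phase-4 matching, and this is immediate from the degree-two structure of Phase-3 trees noted above.
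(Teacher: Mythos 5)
Your proposal is correct and follows essentially the same route as the paper's (terse) proof: \texttt{MakeAcyclic} is polynomial by \cref{lem:acyclic}, and each of the four phases is a polynomial-time graph operation (BFS, scans, and a matching in a tree). You merely spell out the operation counts and make explicit the parent-chore construction of the Phase-4 matching that the paper leaves implicit.
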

\begin{proof}
\cref{lem:acyclic} shows that the procedure \texttt{MakeAcyclic} results in an allocation with an acyclic payment graph in polynomial time. In the following phases, \cref{alg:rounding} assigns all chores to agents. Each phase takes polynomial time since they involve polynomial time operations such as BFS in the payment graph or computing a matching in a tree.
\end{proof}

We now analyze the properties of the allocation $\x$ returned by \cref{alg:rounding}. We first show:
\begin{restatable}{lemma}{lemRoundingfPO}\label{lem:rounding-fpo}
The allocation $\x$ returned by \cref{alg:rounding} is \fpo.
\end{restatable}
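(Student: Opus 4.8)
\textbf{Proof proposal for \cref{lem:rounding-fpo}.}

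The plan is to show that the algorithm never assigns a chore to an agent for whom it is not on MPB, so that the final integral allocation $(\x,\p)$ is an MPB allocation, and then invoke \cref{thm:fpo} to conclude that $\x$ is \fpo. This reduces the whole statement to an invariant about the payment graph: every edge present in $G$ at every point during the run, and in particular every edge $(i,j)$ along which \cref{alg:rounding} eventually assigns $j$ to $i$, satisfies $j \in \mpb_i$ for the fixed payment vector $\p$.

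First I would record the base case: the input $(\y,\p)$ is an ER equilibrium, hence an MPB allocation, so every edge of its payment graph satisfies the MPB condition; and by \cref{lem:acyclic}, $\texttt{MakeAcyclic}$ returns an ER equilibrium $(\z,\p)$ with acyclic payment graph, so the same property holds of $G$ at the start of Phase 1 (note the payments $\p$ are never changed by the algorithm, only edges are deleted and chores assigned). Next I would walk through the four phases and observe that each chore assignment the algorithm makes is along an edge of the \emph{current} graph $G$: in Phase 1 chore $j$ is assigned to $\parent{j}$, an edge of $G$; in Phase 2 a child chore $j$ of $i$ is either assigned to $i$ (the edge $(i,j)$) or to some child agent $h\in\child{j}$ (the edge $(j,h)$); in Phase 3 only edges are deleted, no assignments; in Phase 4, the matching $\sigma$ is computed within the tree $T$, so each $\sigma(i)$ is matched to $i$ along an edge of $T\subseteq G$. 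Since $G$ only ever loses edges (deletions in $\texttt{MakeAcyclic}$, deletions of assigned chores, pruning in Phase 3), every edge of the current $G$ is an edge of the original payment graph of $(\z,\p)$, hence satisfies the MPB condition. Therefore every assignment $j\mapsto i$ made by the algorithm has $j\in\mpb_i$, so $(\x,\p)$ is an MPB allocation, and \cref{thm:fpo} gives that $\x$ is \fpo.

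I do not expect a serious obstacle here; the only thing to be careful about is to confirm that the algorithm truly performs all assignments along graph edges and never "invents" an assignment, and that $\x$ is a complete allocation (every chore is assigned exactly once) — the latter follows by tracking that each chore is deleted from $G$ exactly when it is assigned, and that all chores are gone by the end of Phase 4 (leaves in Phase 1, $L$-chores in Phase 2, and the $|M(T)| = |N(T)|-1$ high chores of each Phase 3 tree via the matching in Phase 4). The mild subtlety worth a sentence is that Phase 4 requires the matching $\sigma$ to exist, which is exactly the content of the pruning in Phase 3 (each surviving high chore has degree exactly $2$, and a tree with $r$ agents and $r-1$ such chores admits a perfect matching of the chores into $N(T)\setminus\{h\}$); but this is a structural fact about the graph, not about MPB, and is needed only to know $\x$ is well-defined, not for \fpo per se.
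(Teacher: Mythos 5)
Your proof is correct and takes essentially the same route as the paper: the paper simply observes that $\x_i \subseteq Z_i := \{j : z_{ij} > 0\}$ throughout \cref{alg:rounding}, so $(\x,\p)$ inherits the MPB property from the ER equilibrium $(\z,\p)$, and then invokes \cref{thm:fpo}. Your phase-by-phase walk-through is a more explicit verification of this same containment, but adds no new idea.
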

\begin{proof}
Since $(\z,\p)$ is an ER equilibrium, $(\z,\p)$ is an MPB allocation. Let $Z_i = \{j : z_{ij} > 0\}$. Note that throughout \cref{alg:rounding}, $\x_i \subseteq Z_i$. Hence, $(\x,\p)$ is also an MPB allocation. Consequently, \cref{thm:fpo} implies that $\x$ is \fpo.
\end{proof}

To analyze fairness properties of $\x$, we first prove upper bounds on agent earnings. Essentially, the following lemma states that the earning up to one chore of each agent is at most 1, except when the agent has two chores from $H$; in the latter case the agent earns at most $1-\beta$ from other chores. 
\begin{restatable}{lemma}{lemEarningUB}\label{lem:earning-ub}
Let $(\x,\p)$ be the allocation returned by \cref{alg:rounding} with earning restriction $\beta \in [\frac{1}{2},1)$. Then for each $i\in N$, either $\p_{-1}(\x_i) \le 1$, or $|\x_i\cap H| = 2$ and $\p_{-2}(\x_i) \le 1-\beta$.
\end{restatable}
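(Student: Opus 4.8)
The plan is to trace through the four phases of \cref{alg:rounding} and track how the earning $\p(\x_i)$ of each agent $i$ evolves, focusing on the events that add a chore to $\x_i$. There are exactly three such events: (a) in Phase 1, agent $i$ may receive its parent chore as a leaf chore; (b) in Phase 2, agent $i$ greedily absorbs child chores from $L$, and at most one child chore of $i$ from $L$ could be assigned to $i$ as a ``leftover'' at the end of its processing --- actually the greedy \textbf{while} loop is the only way $i$ gains $L$-chores as a parent, and $i$ may also gain a single $L$-chore as a \emph{child} of its parent in an earlier iteration; (c) in Phase 4, agent $i$ receives at most one chore from $H$ via the matching $\sigma$. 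So I would first establish the structural fact that each agent receives at most one chore via the Phase-1 parent-chore event, at most one chore via Phase 4, and a controlled amount via the Phase-2 greedy step. The key quantitative claim from the greedy rule is that whenever the \textbf{while} loop in Phase 2 terminates for agent $i$ (having absorbed only $L$-chores, each of payment $\le\beta$), we have $\p(\x_i)\le 1$ at that moment --- because the loop stops precisely when adding the next $L$-chore would exceed $1$, and it started from $\p(\x_i)\le 1$ (the \textbf{else}-branch condition).

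Next I would do the case analysis on how agent $i$ acquired chores from $H$. \emph{Case 1: $|\x_i\cap H|\le 1$.} Remove the (at most one) $H$-chore; what remains in $\x_i$ are $L$-chores, acquired either as the parent chore in Phase 1 (at most one $L$-chore, payment $\le\beta\le 1$) or via Phase 2. I need to argue the remaining $L$-bundle pays at most $1$. The subtle point is whether $i$ can simultaneously have received its parent chore \emph{and} absorbed $L$-chores in the \textbf{while} loop: if $i$ received its parent chore in Phase 1 and that pushed $\p(\x_i)>1$, then the \textbf{if}-branch of Phase 2 fires and $i$ is assigned \emph{no} further chores, so $\x_i$ is just that single parent chore plus possibly one $H$-chore — here $\p_{-1}(\x_i)\le\beta\le 1$. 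Otherwise $\p(\x_i)\le 1$ when Phase 2 visits $i$, the \textbf{while} loop runs and leaves $\p(\x_i)\le 1$, and any later additions to $\x_i$ are at most one $H$-chore in Phase 4 and at most one $L$-chore received as a child of an already-processed parent — I must check BFS order ensures such a child-$L$-chore can be added only once and the bound $\p_{-1}(\x_i)\le 1$ survives (removing the one chore of largest payment). \emph{Case 2: $|\x_i\cap H|=2$.} An agent can receive at most one $H$-chore in Phase 4, so to have two it must have received an $H$-chore in Phase 1 (its parent chore $j\in H$) or Phase 2 as a child-assignment, plus possibly the Phase-4 one — I would pin down exactly which combination is possible and show that in this scenario the \textbf{if}-branch $\p(\x_i)>1$ must have triggered (the parent $H$-chore has payment $>\beta\ge\frac12$, and combined with prior earnings pushes over $1$), hence $i$ absorbs \emph{no} $L$-chores as a parent. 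Then the only non-$H$ chores in $\x_i$ are at most one $L$-chore obtained as a child before $i$ is processed, giving $\p(\x_i\setminus(\x_i\cap H)) = \p_{-2}(\x_i)\le\beta$; but we want $\le 1-\beta$. This requires the finer observation alluded to in the overview — that when $i$ ends up with two $H$-chores, the residual $L$-earning is actually bounded by $1-\beta$, presumably because in the ER equilibrium $i$'s total fractional earning was exactly $1$, one of the two $H$-chores was (partly) shared and $i$'s share of it was used to ``pay for'' the $L$-portion, so the leftover is at most $1-\beta$.

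The main obstacle, I expect, is precisely this last point: ruling out that an agent with two $H$-chores also carries more than $1-\beta$ worth of $L$-chores. The crude bound from the algorithm's branching is only $\le\beta$, and tightening to $1-\beta$ seems to need a charging argument back to the fractional ER equilibrium $(\z,\p)$ — using that $\sum_j q_{ij}=e_i=1$, that chores in $H$ are the ones whose payment exceeds $\beta$, and tracking which fractions of which chores agent $i$ held before rounding versus after. A secondary obstacle is bookkeeping the BFS order carefully enough to be sure no agent accumulates two \emph{separate} leftover $L$-chores (one as parent-leftover, one as child), since that could break even the $\p_{-1}(\x_i)\le 1$ bound; I would handle this by an invariant maintained over the BFS traversal asserting that every not-yet-processed agent has $\p(\x_i)\le\beta$ (it has at most one $L$-chore, received as a child of an already-processed parent), which is restored after each agent is processed.
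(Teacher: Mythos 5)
Your overall plan — trace which phases can add a chore to $\x_i$, then do a case analysis on $|\x_i\cap H|$ and charge back to the fractional ER equilibrium — is the right shape, but there are two genuine gaps, one of which you explicitly acknowledge and one of which you do not.

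First, the mechanics of Phase 1 are misread in a way that propagates through the case analysis. You repeatedly describe Phase 1 as giving agent $i$ ``its parent chore as a leaf chore'' and claim each agent receives ``at most one chore via the Phase-1 parent-chore event.'' In fact Phase 1 rounds every leaf chore to its parent \emph{agent}: agent $i$ receives the set $\x^1_i$ of all its leaf \emph{child} chores, which can be arbitrarily many, and its parent chore $\parent{i}$ can only reach $i$ during Phase 2 (when $\parent{\parent{i}}$ is processed, before $i$ in BFS order — so it cannot arrive as a ``later addition'' either). This matters concretely: in your Case 2 you conclude ``the only non-$H$ chores in $\x_i$ are at most one $L$-chore,'' which ignores all of $\x^1_i\cap L$; and in your Case 1 you claim ``$\x_i$ is just that single parent chore plus possibly one $H$-chore'' with $\p_{-1}(\x_i)\le\beta$, which is false when $i$ has several leaf $L$-chores plus a leaf $H$-chore.

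Second, and more importantly, the bound $\p_{-2}(\x_i)\le 1-\beta$ in the two-$H$-chore case is not derived — you correctly flag this as ``the main obstacle'' but leave it as a sketch. The paper closes this gap cleanly via the earning cap: every chore in $\x^1_i\cap H$ is a leaf chore, so $i$ is the sole earner and by the ER condition $q_{ij}=\min\{p_j,c_j\}=\beta$ for each such $j$; since $\sum_j q_{ij}=e_i=1$, the leaf $L$-chores contribute $\p(\x^1_i\setminus H)=\sum_{j\in\x^1_i\cap L}q_{ij}\le 1-\beta\cdot|\x^1_i\cap H|$. With $\beta\ge\frac12$ this forces $|\x^1_i\cap H|\le 2$, gives $\p(\x^1_i\setminus H)\le 1-\beta$ when $|\x^1_i\cap H|=1$, and forces $\x^1_i\setminus H=\emptyset$ when $|\x^1_i\cap H|=2$ (only possible at $\beta=\frac12$). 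The paper also structures the case split differently from yours: it branches on whether $\p(\hat{\x}_i)\le 1$ when Phase 2 visits $i$ (rather than on $|\x_i\cap H|$), which immediately settles the $\p_{-1}(\x_i)\le 1$ case via the while-loop invariant plus one Phase-4 chore, and leaves only the $\p(\hat{\x}_i)>1$ branch, where $\x_i=\hat{\x}_i=\x^1_i\cup\{\parent{i}\}$ (possibly without $\parent{i}$) and the leaf-chore charging above does the rest. Adopting that branch structure would also eliminate your worry about an agent ``accumulating two separate leftover $L$-chores,'' since once $\p(\hat{\x}_i)\le 1$ the bound $\p_{-1}(\x_i)\le 1$ is immediate.
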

\begin{proof}
Let $\x^t$ denote the allocation after Phase $t$, for $t\in[4]$; note that $\x^4 = \x$. Consider an agent $i\in N$. Let $\hat{\x}_i$ be the allocation when \cref{alg:rounding} visits $i$ in Phase 2. Suppose $\p(\hat{\x}_i) \le 1$. Then we have $\p(\x^2_i) \le 1$ at the end of Phase 2 after $i$ is assigned a subset of $\child{i}\cap L$. Subsequently, $i$ could be assigned one more chore in Phase 4. Hence we have $\p_{-1}(\x_i) \le 1$ in this case. 

On the contrary, suppose $\p(\hat{\x}_i) > 1$. Then \cref{alg:rounding} will not allocate any chore to $i$ in Phase 4, and hence $\x_i = \x^2_i = \hat{\x}_i$. Note that either $\hat{\x}_i = \x^1_i$ or $\hat{\x}_i = \x^1_i \cup \{j\}$, where $j = \parent{i}$. That is, $\hat{\x}_i$ includes the chores $\x^1_i$ allocated to $i$ in Phase 1, and may include $i$'s parent chore $j$. Recall that Phase 1 rounds leaf chores to their parent agents, hence $\x^1_i$ comprises of the leaf chores that are also child chores of $i$. Due to the earning restriction of $\beta$, agent $i$ earns exactly $\beta$ from any chore in $\x^1_i\cap H$. Since $\beta \ge \frac{1}{2}$ and $e_i = 1$, we have $|\x^1_i\cap H|\le 2$. We consider three scenarios:

\paragraph{Case 1: $|\x^1_i \cap H| = 0$.} In this case, we have $\p(\x^1_i)\le 1$. Hence $\p_{-1}(\hat{\x}_i) \le \p(\hat{\x}_i\setminus\{j\}) \le  \p(\x^1_i) \le 1$.

\paragraph{Case 2: $|\x^1_i \cap H| = 1$.} Let $\x^1_i\cap H = \{j_1\}$. Then $\p(\x^1_i\setminus\{j_1\}) \le 1-\beta$, since the earning of $i$ from the $j_1$ is exactly $\beta$. We have three possibilities depending on the payment of $i$'s parent chore $j$.
\begin{itemize}
\item If $j\notin H$, then $p_j\le \beta$. Then observe that:
\begin{align*}
\p_{-1}(\hat{\x}_i) &= \p(\hat{\x}_i\setminus\{j_1\}) \\
&\le \p((\x^1_i\setminus\{j_1\}) \cup \{j\}) \\
&\le (1-\beta) + \beta = 1.
\end{align*}
\item If $j\in H$ and $j\notin\hat{\x}_i$, then $\p_{-1}(\hat{\x}_i) = \p(\x^1_i\setminus\{j_1\}) \le 1-\beta < 1$.
\item If $j\in H$ and $j\in\hat{\x}_i$, then $|\x_i \cap H| = 2$, and $\p_{-2}(\hat{\x}_i\setminus\{j,j_1\}) \le \p(\x^1_i\setminus\{j_1\}) \le 1-\beta$.  
\end{itemize}

\paragraph{Case 3: $|\x^1_i \cap H| = 2$.} In this case, $\p(\x_i^1) \ge 2\beta$, since $i$ earns exactly $\beta$ from each chore in $\x_i^1\cap H$. However since $\beta\ge \frac{1}{2}$ and $e_i=1$, this case can only arise if $\beta = \frac{1}{2}$, in which case $i$ can only be earning from the two chores in $\x^1_i$. Thus $i$ has no parent chore, and $\x_i = \x_i^1$. Hence, $|\x_i \cap H| = 2$ and $\p_{-2}(\x_i) = 0$.

This proves the lemma. 
\end{proof}

We next establish lower bounds on agent earnings. These bounds are derived by investigating the allocation computed by the matching phase (Phase 4). We say agent $i$ \textit{loses} a chore $j$ if $i$ is earning from $j$ in the fractional solution $\z$ but not in the integral allocation $\x$, i.e., $z_{ij} > 0$ but $j\notin \x_i$.

\begin{restatable}{lemma}{lemEarningLB}\label{lem:earning-lb}
Let $(\x,\p)$ be the allocation returned by \cref{alg:rounding}. Then for each agent $i\in N$, $\p(\x_i) \ge \min\{\beta, 1-\beta\}$.
\end{restatable}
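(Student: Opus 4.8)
The plan is to classify every agent $i$ according to why the rounding cannot cost her too much earning, and to reduce the whole statement to a single non-trivial case. Write $\x^t_i$ for the bundle of $i$ at the end of Phase $t$, so $\x_i = \x^4_i$. If $i$ is \emph{finalized} in Phase 2 (i.e. $\p(\x_i)>1$ when \cref{alg:rounding} visits $i$), then $i$ receives nothing afterwards, so $\p(\x_i)>1\ge\min\{\beta,1-\beta\}$. If $i$ ends up holding some chore $j\in H$ (this happens for agents matched in Phase 4 other than the designated high earner of their tree, and for agents who received an $H$-chore in Phases 1--2), then $\p(\x_i)\ge p_j>\beta\ge\min\{\beta,1-\beta\}$. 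The only remaining case is the high earner $h=\arg\max_{i\in N(T)}\p(\x_i)$ of a Phase 3 tree $T$ when $h$ holds no chore from $H$; here $h$ gets nothing in Phases 3--4, so $\p(\x_h)=\p(\x^2_h)$, and since $\beta\ge\tfrac12$ it suffices to prove $\p(\x^2_h)\ge 1-\beta$.

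Call an agent $i$ \emph{deficient} if $\p(\x^2_i)<1-\beta$. I would first record three facts about a deficient agent $i$: (a) $i$ holds no $H$-chore (otherwise $\p(\x^2_i)\ge p_j>\beta\ge 1-\beta$); (b) $i$ is not finalized; and (c) when $i$ is visited in Phase 2, the greedy loop assigning child chores of $L$ exhausts \emph{all} of $i$'s child chores in $L$ --- for if some child $j\in L$ were left unassigned we would have $\p(\x_i)+p_j>1$, hence $\p(\x^2_i)\ge\p(\x_i)>1-p_j\ge 1-\beta$. Thus a deficient $i$ retains every child chore in $L$ (and every leaf child), each at full payment $p_j\ge q_{ij}$, so writing $X_i:=1-\sum_{j\in\x^2_i}q_{ij}$ for the total fractional earning $i$ loses through rounding, $X_i$ is entirely the earning $i$ gets in the equilibrium from her non-leaf $H$-children together with her parent chore $\parent{i}$ when $\parent{i}\notin\x^2_i$; and $\sum_{j\in\x^2_i}q_{ij}\le\p(\x^2_i)<1-\beta$ yields $X_i>\beta$.

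The crux is then a counting argument showing not all agents of $T$ can be deficient. Root $T$ at the unique vertex $\rho(T)$ whose parent in the rooted payment forest is not in $T$; this $\rho(T)$ is an agent, since a chore's parent edge is never pruned in Phase 3 and its parent agent was not finalized (else the chore would have been allocated away), so a chore never becomes a local root. For every agent $i\in N(T)\setminus\{\rho(T)\}$, its parent chore lies in $M(T)$ (all surviving chores are from $H$, and all $i$'s non-leaf $H$-children lie in $M(T)$ as well); hence in $\sum_{i\in N(T)}X_i$ every term is the earning of some $N(T)$-agent from a chore in $M(T)$, except for the single term $q_{\rho(T),\parent{\rho(T)}}$ when $\parent{\rho(T)}\notin M(T)$. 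Each $H$-chore $j\in M(T)$ has total equilibrium earning $q_j=\min\{p_j,\beta\}=\beta$ and, after pruning, exactly two $N(T)$-neighbours, so it contributes at most $\beta$ to $\sum X_i$; a tree with $r$ agents has $r-1$ chores, so the "internal" part is at most $(|N(T)|-1)\beta$. For the external term: if $\parent{\rho(T)}\in L$ then $q_{\rho(T),\parent{\rho(T)}}\le p_{\parent{\rho(T)}}\le\beta$; if $\parent{\rho(T)}\in H$ it left the graph because $\rho(T)$ was not the maximum earner among its children (otherwise $\rho(T)$ would hold an $H$-chore), so with $\ge 2$ children $q_{\rho(T),\parent{\rho(T)}}\le\beta/2$. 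Therefore $\sum_{i\in N(T)}X_i\le(|N(T)|-1)\beta+\beta=|N(T)|\beta$, whereas if every agent of $N(T)$ were deficient we would have $\sum_{i\in N(T)}X_i>|N(T)|\beta$, a contradiction. So some agent of $N(T)$ is not deficient, and as $h$ has the largest end-of-Phase-2 earning in $T$, $\p(\x_h)=\p(\x^2_h)\ge 1-\beta\ge\min\{\beta,1-\beta\}$, finishing the proof.

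\textbf{Main obstacle.} The delicate step is the counting argument: setting up exactly which chores' earnings are "internal" to $M(T)$ versus attributable to the local root's parent, checking that $T$ is a rooted tree with an \emph{agent} root and that each surviving $H$-chore has precisely two $N(T)$-neighbours of total earning $\le\beta$, and verifying in all sub-cases ($\parent{\rho(T)}$ low-paying, previously allocated to a finalized agent, or pruned) that the lone external term is at most $\beta$. The accompanying structural claims --- deficient agents retain all their $L$-children, and pruning leaves each $H$-chore adjacent to two agents with a matching in Phase 4 --- are routine once the phases of \cref{alg:rounding} are unpacked.
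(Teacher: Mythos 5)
Your proof is correct and takes essentially the same approach as the paper's. Both arguments reduce to showing $\p(\x^2_h) \geq 1-\beta$ for the unmatched high earner $h$ of a Phase~3 tree $T$ via an averaging/counting argument over $N(T)$: each chore $j\in M(T)$ pays at most $\beta$ to agents in $N(T)$, the local root's lost parent chore costs at most $\beta$ (with the $\beta/2$ refinement when it is in $H$), and these bounds together force some agent's end-of-Phase-2 earning to be at least $1-\beta$. The only difference is organizational — you merge the paper's two-case split (Case (i): an agent lost a child chore; Case (ii): none did) into a single contradiction argument by observing that a ``deficient'' agent cannot lose a child chore; the underlying quantities and inequalities are the same.
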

\begin{proof}
Let $(\z,\p)$ be the ER equilibrium whose payment graph is acyclic, which is computed before Phase 1 begins. Let $\x^t$ denote the allocation after Phase $t$ of \cref{alg:rounding}, for $t\in[4]$. Note that $\x^2 = \x^3$ since Phase 3 does not assign any chores and only deletes edges in $G$. Also note $\x^4 = \x$.

Let $T = (N(T) \cup M(T), E(T))$ be a Phase 3 tree rooted at agent $i_0$. Since $T$ is a Phase 3 tree, $T$ has exactly $|N(T)|-1$ chores, all of which belong to $H$. Phase 4 identifies the agent $h\in\arg\max_{i\in N(T)} \p(\x^3_i)$, and assigns a chore $\sigma(i) \in H$ to every agent $i\in N(T)\setminus\{h\}$ by computing a matching of $M(T)$ to $N(T)\setminus \{h\}$. Since $p_j > \beta$ for $j\in H$, we have $\p(\x_i)\ge p_{\sigma(i)} > \beta$ for all $i\in N(T)\setminus \{h\}$. Hence we only need to prove lower bounds on the earning $\p(\x_h)$ of the agent $h$. Note that $\x_h = \x^3_h = \x^2_h$, since $h$ is not allocated any chores in Phase 3 or 4. By choice of $h$, we also have that $\p(\x_h) \ge \p(\x^3_i) = \p(\x^2_i)$ for all $i\in N(T)$. We now analyze three scenarios.

\begin{itemize}[]
\item[(i)] Some agent $i\in N(T)$ lost a child chore $j\in \child{i}$. Suppose $i$ lost $j$ in Phase 2. If $j\in H$, then it must be that $\p(\x^2_i) > 1$. If $j\in L$, then it must be that $\p(\x^2_i) \ge 1-\beta$; otherwise we could have assigned $j$ to $i$ in Phase 2. In either case, we have $\p(\x^2_i) \ge 1-\beta$, and hence $\p(\x_h) \ge \p(\x^2_i) \ge 1-\beta$ by choice of $h$. Note that $i$ cannot lose $j\in \child{i}$ in Phase 3 since Phase 3 only deletes edges from a chore to some of its child agents. Thus, $\p(\x_h) \ge 1-\beta$ in this case.

\item[(ii)] No agent in $N(T)$ lost a child chore. In this case, no agent in $N(T)\setminus \{i_0\}$ has lost any chore they were earning from in $(\z,\p)$; the root agent $i_0$ could have potentially lost its parent chore $j_0 = \parent{i_0}$. We evaluate the amount of earning $i_0$ loses due to losing $j_0$. Suppose $j_0 \in H$. Then $i_0$ must have lost $j_0$ in either Phase 2 or 3 to some agent $i'\in\child{j_0}$ since $i_0$ was not earning the most from $j_0$ among agents in $\child{i_0}$. Due to the earning limit, agents can earn at most $\beta$ from $j_0$. Hence the earning from $i_0$ from $j_0$ is at most $\frac{\beta}{2}$. On the other hand, if $j_0\in L$, then $i_0$ earns at most $p_{j_0} \le \beta$ from $j_0$. In either case, we find that $i_0$ has only lost $\beta$ in earning. Hence the total earning of agents in $N(T)$ is at least $|N(T)|-\beta$, while that from the chores in $M(T)$ is at most $\beta\cdot(|N(T)|-1)$. Hence there is at least one agent $i\in N(T)$ whose earning $\p(\x^2_i)$ satisfies:
\[ \p(\x^2_i) \ge \frac{|N(T)| - \beta - \beta\cdot(|N(T)|-1)}{|N(T)|} = 1-\beta.\]
Since $\p(\x_h)\ge \p(\x^2_i)$ by choice of $h$, this implies $\p(\x_h) \ge 1-\beta$.\qedhere
\end{itemize}

To conclude, we established that $\p(\x_i)\ge \beta$ for the agents $i$ that are matched to a chore in Phase 4, or $\p(\x_h) \ge 1-\beta$ for the agents $h$ that are not matched. Thus, for all $i\in N$, $\p(\x_i)\ge \min\{\beta,1-\beta\}$.
\end{proof}

Note that the maximum value of the lower bound on agent earnings is given by \cref{lem:earning-lb} is obtained at $\beta = \frac{1}{2}$. We now prove the main theorem of this section.

\begin{restatable}{theorem}{thm3EFTwo}\label{thm:3EF2}
Given an ER equilibrium of an instance $(N,M,D)$ where $m \ge 2n$, \cref{alg:rounding} returns a $2$-\eftwo and \fpo allocation in polynomial time. More precisely, for every agent the allocation is either $2$-\efone or \eftwo.
\end{restatable}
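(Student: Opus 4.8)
The plan is to run \cref{alg:rounding} with the uniform earning limit $\beta = \tfrac12$, combine the per-agent earning bounds of \cref{lem:earning-ub} and \cref{lem:earning-lb}, read them off as payment-envy-freeness guarantees, and finally convert payments to disutilities via the MPB condition exactly as in \cref{lem:pEF1impliesEF1}. The easy parts are essentially free: since $m\ge 2n$ and $\beta=\tfrac12$ we have $m\beta\ge n$, so the input ER equilibrium exists by \cref{thm:er-existence} and is admissible as input to \cref{alg:rounding}; \cref{lem:rounding-runtime} then gives the $\poly{n,m}$ running time, and \cref{lem:rounding-fpo} gives that the output $(\x,\p)$ is \fpo. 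It remains to prove the fairness claim.

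With $\beta=\tfrac12$, \cref{lem:earning-lb} yields $\p(\x_h)\ge \min\{\beta,1-\beta\}=\tfrac12$ for every agent $h$ (and $\beta=\tfrac12$ is precisely the value maximizing this lower bound). For the upper bound, \cref{lem:earning-ub} partitions the agents into two classes: call $i$ a \emph{Type~A} agent if $\p_{-1}(\x_i)\le 1$, and a \emph{Type~B} agent if instead $|\x_i\cap H|=2$ and $\p_{-2}(\x_i)\le 1-\beta=\tfrac12$; every agent is of at least one of these two types.

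First I would turn these into payment statements: for a Type~A agent $i$ and any other agent $h$, $\p_{-1}(\x_i)\le 1 = 2\cdot\tfrac12 \le 2\,\p(\x_h)$, i.e.\ $i$ is $2$-pEF$1$ towards $h$; for a Type~B agent $i$ and any $h$, $\p_{-2}(\x_i)\le \tfrac12 \le \p(\x_h)$, i.e.\ $i$ is $1$-pEF$2$ towards $h$. Then I would apply, pair by pair, the computation underlying \cref{lem:pEF1impliesEF1}: since $(\x,\p)$ is an MPB allocation, $d_i(\x_i\setminus S)=\alpha_i\,\p(\x_i\setminus S)$ for every $S\subseteq\x_i$ (with $\alpha_i$ the MPB ratio of $i$), hence $\min_{S\subseteq\x_i,\,|S|\le k} d_i(\x_i\setminus S)=\alpha_i\,\p_{-k}(\x_i)$, while $d_i(\x_h)\ge \alpha_i\,\p(\x_h)$ because $d_{ij}\ge \alpha_i p_j$ for every chore $j$. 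Combining, a Type~A agent $i$ satisfies $\min_{|S|\le 1} d_i(\x_i\setminus S)\le 2\,d_i(\x_h)$ for all $h$, i.e.\ is $2$-\efone, and a Type~B agent $i$ satisfies $\min_{|S|\le 2} d_i(\x_i\setminus S)\le d_i(\x_h)$ for all $h$, i.e.\ is \eftwo. Thus for every agent the allocation is either $2$-\efone or \eftwo; since a removal set of size $\le 1$ is also a removal set of size $\le 2$, every agent is in particular $2$-\eftwo, and together with \fpo this gives the theorem.

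I do not expect a genuine obstacle here — the statement is a bookkeeping synthesis of the four preceding lemmas once $\beta=\tfrac12$ is fixed — but two points need care. First, \cref{lem:pEF1impliesEF1} is phrased for a single fixed $k$ and factor $\alpha$ applied to the whole allocation, whereas here the relevant guarantee ($\p_{-1}$ versus $\p_{-2}$, factor $2$ versus factor $1$) depends on the agent's type, so the clean move is to re-derive its per-pair inequality separately for Type~A and Type~B agents rather than invoking the lemma verbatim. Second, one must check that the bound $d_i(\x_h)\ge \alpha_i\,\p(\x_h)$ holds with no side conditions; this is immediate from $d_{ij}/p_j\ge \alpha_i$ for all $j$, but it is exactly the step that licenses converting a payment bound on the envied bundle $\x_h$ into a disutility bound, and hence where the "payments as a proxy for disutility" principle is actually cashed in.
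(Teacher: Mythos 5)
Your proposal is correct and takes essentially the same route as the paper's own proof: fix $\beta=\tfrac12$, apply \cref{lem:earning-lb} and \cref{lem:earning-ub} to get the per-agent earning bounds, read them as $2$-pEF$1$ (resp.\ pEF$2$) guarantees, and convert to disutility bounds via the MPB condition as in \cref{lem:pEF1impliesEF1}, with \cref{lem:rounding-fpo} and \cref{lem:rounding-runtime} supplying efficiency and running time. Your caution about \cref{lem:pEF1impliesEF1} being stated for a single uniform $(\alpha,k)$ is reasonable but not a real obstacle, and the paper itself applies it per agent in exactly the way you describe.
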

\begin{proof}
Let $(\x,\p)$ be the allocation returned by running \cref{alg:rounding} with an ER equilibrium of the instance with $\beta = \frac{1}{2}$. \cref{lem:earning-lb} then implies that for every agent $i\in N$, $\p(\x_i) \ge \frac{1}{2}$.

\cref{lem:earning-ub} implies that for every agent $h\in N$, either $\p_{-1}(\x_h) \le 1$, or $\p_{-2}(\x_h) \le 1 - \beta = \frac{1}{2}$. Thus for any agent $h$:
\begin{itemize}
\item If $\p_{-1}(\x_h) \le 1$, then $\x$ is $2$-\efone for agent $h$, as $\p_{-1}(\x_h) \le 1 \le 2\cdot \p(\x_i)$.
\item If $\p_{-2}(\x_h) \le \frac{1}{2}$, then $\x$ is \eftwo for agent $h$, as $\p_{-2}(\x_h) \le \frac{1}{2} \le \p(\x_i)$.
\end{itemize}
Using \cref{lem:pEF1impliesEF1}, this shows that every agent $h$ is either $2$-\efone or \eftwo towards any another agent $i$. Overall, the allocation is $2$-\eftwo. \cref{lem:rounding-fpo} implies $\x$ is \fpo and \cref{lem:rounding-runtime} shows \cref{alg:rounding} runs in polynomial time. 
\end{proof}

We remark that our rounding algorithm is tight: \cref{ex:ef2-tight} shows an instance where no rounding of an ER equilibrium with $\beta=\frac{1}{2}$ can return a $(2-\delta)$-\eftwo and \po allocation, for any $\delta > 0$.

\subsection{Algorithm for Balanced Chore Allocation}\label{sec:balanced}
In this section, we show the existence of an \eftwo and \fpo allocation when $m\le 2n$. For this, we design \cref{alg:balanced}, which computes a \textit{balanced} \fpo allocation for any given instance. Formally, an allocation $\x$ is balanced iff $||\x_i| - |\x_h|| \le 1$ for all agents $i,h\in N$, i.e., the sizes of agent bundles differ by at most one. In particular, when $m \leq 2n$, every agent in a balanced allocation has at most two chores, so the allocation returned by \cref{alg:balanced} is \eftwo and \fpo. Similarly, when $m \leq n$, \cref{alg:balanced} returns an \efone and \fpo allocation since every agent has at most one chore.

Our algorithm relies on a useful structure associated with an MPB allocation called the MPB graph.
\begin{definition}[MPB graph]
The augmented MPB graph $G = (N,M,E)$ associated with an integral MPB allocation $(\x,\p)$ is a directed bipartite graph with vertex set $V(G) = N\sqcup M$, and edge set $E(G) = \{(i,j) : i\in N, j\in M, j\in\x_i\} \cup \{(j,i) : i\in N, j\in M, j\in \mpb_i\setminus \x_i\}$. 
\end{definition}

\begin{algorithm}[t]
\caption{Balanced \po allocation}\label{alg:balanced}
\textbf{Input:} Chore allocation instance $(N,M,D)$ \\
\textbf{Output:} An balanced \po allocation $\x$
\begin{algorithmic}[1]
\State For some agent $h\in N$, set $\x_h \gets M$, and $\x_i \gets \emptyset$ for $i \neq h$
\State For each $j \in M$, set $p_j \gets d_{hj}$
\State $\ell \gets \arg\min_{i \in N} \lvert \x_i \rvert$ \Comment{Agent with fewest number of chores}
\While{$\lvert \x_h \rvert > \lvert \x_\ell \rvert + 1$}
    \State $C_h \gets$ Vertices reachable from $h$ in the MPB graph of $(\x,\p)$ 
    \If{$\ell \in C_h$} \Comment{Transfer chores along a path in the MPB graph}
        \State $P \gets (h = i_0, j_1, i_1, j_2, \dots, i_{k-1}, j_k, i_k = \ell)$ \Comment{Path from $h$ to $\ell$}
        \For{$1 \leq r \leq k$}
            \State $\x_{i_{r-1}} \gets \x_{i_{r-1}} \setminus \{j_r\}$, $\x_{i_r} \gets \x_{i_r} \cup \{j_r\}$ \Comment{Chore transfers along path $P$}
        \EndFor
        \State $\ell \gets \arg\min_{i \in N} \lvert \x_i \rvert$
    \Else \Comment{Raise payments of chores in $C_h$}
        \State $\gamma \gets \min_{i \in N \setminus C_h, j \in C_h \cap M} \frac{\alpha_i}{d_{ij} / p_j}$
        \For{$j \in C_h$}
            \State $p_j \gets \gamma \cdot p_j$
        \EndFor
    \EndIf
\EndWhile
\State \Return $\x$
\end{algorithmic}
\end{algorithm}

Algorithm~\ref{alg:balanced} first allocates the entire set of chores $M$ to an arbitrary agent $h$, with the payment of each chore $j$ set as $p_j=d_{hj}$ to ensure the initial allocation is MPB. We make progress towards achieving a balanced allocation by reducing the number of chores assigned to $h$, the agent with the most chores, and increasing the number of chores assigned to $\ell$, the agent with the fewest chores. Let $C_h$ denote the set of vertices in the MPB graph reachable from $h$. If $\ell\in C_h$, i.e., $\ell$ is reachable from $h$ in the MPB graph via a path $P = (h = i_0, j_1, i_1, j_2, \dots, j_{k-1}, i_{k-1}, j_k, i_k=\ell)$, then chore $j_r$ is transferred from $i_{r-1}$ to $i_r$, for all $r\in[k]$. The result of such transfers is that $h$ has one fewer chore, $\ell$ has one more chore, and all other agents maintain the same number of chores. Observe that since $j_r\in \mpb_{i_r}$ for all $r\in[k]$, the allocation after the transfers is also MPB. If $\ell\notin C_h$, then we uniformly raise the payments of all chores in $C_h$, until a chore $j\in C_h$ becomes MPB for an agent $i\notin C_h$. Such a payment raise respects the MPB condition for all agents, hence the resulting allocation remains MPB. We repeat payment raises until $\ell$ is reachable from $h$. After this, we transfer a chore from $h$ to $\ell$ along a path in the MPB graph, and repeat this process until the allocation is balanced. \cref{alg:balanced} runs in polynomial time, as there are at most $m$ transfers, and there are at most $n$ payment raises between two transfers.

\begin{restatable}{theorem}{thmBalanced}\label{thm:balanced}
For any instance $(N,M,D)$, \cref{alg:balanced} returns a balanced \fpo allocation $\x$ in polynomial time. In particular, $\x$ is \eftwo and \fpo when $m\le 2n$, and \efone and \fpo when $m\le n$.
\end{restatable}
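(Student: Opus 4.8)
The plan is to run the while-loop of \cref{alg:balanced} while maintaining two invariants — that $(\x,\p)$ is always an MPB allocation, and that the profile of bundle-sizes evolves in a very controlled way — and then to read off \fpo-ness, the polynomial running time, and balancedness directly from these invariants. I would organize the argument as: (i) MPB is preserved by every operation, hence $\x$ is always \fpo; (ii) the loop terminates in $O(mn)$ polynomial-time iterations; (iii) upon termination the allocation is balanced; and (iv) the \efone/\eftwo refinements follow trivially from balancedness.

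For (i): initially $p_j = d_{hj}$ makes every chore MPB for $h$ and the empty bundles are vacuously MPB, so $(\x,\p)$ starts on MPB. A transfer moves each $j_r$ along an edge $j_r\to i_r$ of the MPB graph, i.e.\ into the bundle of an agent for whom it is MPB, so MPB is preserved. For a payment raise the key observation — and a mildly delicate point — is that in an \emph{integral} allocation, if a chore $j\in\x_i$ is reachable from $h$ then so is $i$ itself (the last edge of any path to $j$ must be $i'\to j$ with $j\in\x_{i'}$, forcing $i'=i$); hence $\x_i\cap C_h=\emptyset$ for every $i\notin C_h$, whereas $\x_i\subseteq C_h$ for every $i\in C_h$. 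Scaling the payments of $C_h\cap M$ up by a common factor $\gamma=\min_{i\notin C_h,\,j\in C_h\cap M}(d_{ij}/p_j)/\alpha_i>1$ therefore only divides the pain-per-buck of the chores in $C_h$ by $\gamma$: this keeps $\x_i$ disutility-minimal for $i\in C_h$ (all of $i$'s MPB chores lie in $C_h$ and scale together), and for $i\notin C_h$ it only alters ratios of chores $i$ is not performing, with $\gamma$ chosen maximal so that at least one chore of $C_h$ newly ties agent $i$'s MPB ratio. Thus MPB is preserved and \cref{thm:fpo} gives \fpo at every step.

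For (ii): each payment raise enlarges $C_h\cap N$ by at least one agent and leaves $\x$ unchanged, so at most $n$ raises occur before $\ell\in C_h$ (a transfer then follows) or $C_h=N$; hence at most $n$ raises between two transfers. A transfer decreases $|\x_h|$ by exactly one, increases $|\x_\ell|$ by exactly one, and leaves $|\x_i|$ unchanged for every other $i$ (each internal vertex $i_r$ of the path loses $j_{r+1}$ and gains $j_r$). Since $|\x_h|\ge 0$ always, there are at most $m$ transfers, so the loop runs $O(mn)$ iterations, each performing polynomial-time graph operations (reachability, extracting a path, a minimum over agent/chore pairs).

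Step (iii) is what I expect to be the main obstacle. The point is that on bundle-sizes the transfer dynamics are simply ``decrement $h$, increment the current global minimizer $\ell$.'' Whenever the loop body executes, the condition $|\x_h|>|\x_\ell|+1$ forces $\ell\ne h$, so $\ell$ is a minimizer among the $n-1$ agents other than $h$; repeatedly feeding such a minimizer keeps those $n-1$ sizes within $1$ of one another throughout. For the root itself: just before the final transfer the loop condition gives $|\x_h|\ge \min_{i\ne h}|\x_i|+2$, so afterwards $|\x_h|\ge \min_{i\ne h}|\x_i|+1\ge \min_{i\ne h}|\x_i|$ (that minimum rises by at most one), while the loop then exits with $|\x_h|\le |\x_\ell|+1=\min_{i\ne h}|\x_i|+1$. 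Hence at termination every bundle-size lies in $\{\mu,\mu+1\}$ for $\mu=\min_{i\ne h}|\x_i|$, i.e.\ $\x$ is balanced; the degenerate case $m\le 1$ (no transfers) is balanced outright. Together with \fpo this proves the first sentence of the theorem. Finally, for (iv): if $m\le 2n$ then balancedness forces $|\x_i|\le 2$ for all $i$, so taking $S=\x_i$ (with $|S|\le 2$) gives $d_i(\x_i\setminus S)=0\le d_i(\x_h)$, so $\x$ is \eftwo; and if $m\le n$ then $|\x_i|\le 1$, so $\x$ is \efone by the same argument. Combined with \fpo this yields both refinements.
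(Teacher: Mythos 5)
Your proof is correct and takes essentially the same route as the paper's: MPB (hence \fpo) is preserved at every step, termination in $O(mn)$ iterations, and the invariant that the non-$h$ bundle-sizes stay within one of each other is exactly the paper's invariant, with your final-transfer analysis of $|\x_h|$ playing the role of the paper's observation that $h$ is the unique maximum while the loop runs. One small imprecision worth fixing: the parenthetical that ``all of $i$'s MPB chores lie in $C_h$'' is not generally true for $i\in C_h$ — an unowned $j\in\mpb_i$ is joined to $i$ by the edge $(j,i)$, which cannot be traversed from $h$, so $j$ may lie outside $C_h$; the argument still goes through because the part you actually use is $\x_i\subseteq C_h$ (so $i$'s bundle ratios scale uniformly by $\gamma^{-1}$), and such an outside $j$ simply ceases to be MPB for $i$ after the raise.
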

\begin{proof}
Let $(\x,\p)$ be an allocation in the run of \cref{alg:balanced} before termination. We claim that the allocation excluding $\x_h$ is always a balanced allocation, i.e., for $i_1, i_2 \in N \setminus \{h\}$, $|\lvert \x_{i_1} \rvert - \lvert \x_{i_2} \rvert| \leq 1$. We prove this by induction. Indeed, the initial allocation excluding $h$ is trivially balanced. A chore transfer step from $h$ to $\ell$ along a path $P$ only changes the number of chores assigned to $h$ and $\ell$ and no other agents in $P$. Since $\ell$ has the fewest chores and gains only one additional chore, the allocation excluding $h$ remains balanced. Since the payment raise step does not change the allocation, the claim holds by induction. We next show that:
\begin{claim}
While the allocation $\x$ is not balanced, $h$ is the unique agent with the highest number of chores.    
\end{claim}
\begin{proof}
Since \cref{alg:balanced} has not terminated, $|\x_h| > |\x_\ell| + 1$. Since the allocation excluding $h$ is balanced, we have for any $i \in N \setminus \{h\}$ that $\lvert \x_i \rvert - \lvert \x_\ell \rvert \leq 1$, implying that $|\x_i| \le |\x_\ell| +1$. This shows $|\x_i| < |\x_h|$ for any $i\in N\setminus\{h\}$, thus proving the claim.    
\end{proof} 

We can now complete the proof for the termination of Algorithm~\ref{alg:balanced}. In any chore transfer step, agent $h$ loses exactly one chore, so $\lvert \x_h \rvert$ decreases by one. Additionally, $\lvert \x_h \rvert$ does not change in any payment raise iteration. Algorithm~\ref{alg:balanced} therefore terminates after at most $m$ chore transfer steps. We claim that there can be at most $n$ payment raise iterations between chore transfer iterations. A payment raise step results in an agent $i\notin C_h$ getting added to $C_h$, and does not remove any agents from $C_h$. Thus, after at most $n$ payment raise steps it must be that $\ell$ is reachable from $h$, and \cref{alg:balanced} performs a chore transfer step. Thus, Algorithm~\ref{alg:balanced} terminates after at most $mn + m$ iterations. On termination with $\x$, it must be that $|\x_h|-|\x_\ell|\le 1$, implying that the final allocation is balanced. In particular, if $m\le 2n$, each agent has at most two chores and $\x$ is \eftwo. Likewise, if $m\le n$, each agent has at most one chore and $\x$ is \efone.

Since the initial allocation is MPB, and every transfer step and payment raise preserves the MPB condition, the resulting allocation is MPB as well. Thus, \cref{thm:fpo} shows that the final allocation is \fpo.
\end{proof}

\subsection{Algorithms for Computing Approximately-EF1 and PO Allocations}\label{sec:efone}
We now turn to the existence of approximately-\efone and \po allocations. The main result of this section is:
\begin{restatable}{theorem}{thmEFOneMain}\label{thm:ef1-main}
Any chore allocation instance with $n$ agents admits an $(n-1)$-\efone and \fpo allocation.
\end{restatable}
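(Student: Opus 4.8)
The plan is to split on the size relationship between $m$ and $n$, mirroring the two-algorithm structure used for \cref{thm:ef2-main}. When $m \le n$, I would simply appeal to \cref{thm:balanced}: \cref{alg:balanced} returns an \efone and \fpo allocation in this regime, which is in particular $(n-1)$-\efone and \fpo. The substantive case is $m \ge n$. There I would set each agent's earning requirement to $e_i = 1$ and impose the uniform earning limit $\beta = 1$ on every chore; since $\sum_i e_i = n \le m = \sum_j c_j$, \cref{thm:er-existence} yields an ER equilibrium $(\y,\p)$, which I would first make acyclic via \cref{lem:acyclic}.

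Next I would round $(\y,\p)$ using a modification of \cref{alg:rounding}, namely \cref{alg:efone-rounding}, which runs the same four-phase procedure (Phase 1 rounds leaf chores to their parents; Phase 2 greedily packs low chores into agents up to earning $1$; Phase 3 prunes the payment forest; Phase 4 matches the leftover high chores) but lowers the low/high threshold from $\beta$ to $\frac{\beta}{2} = \frac{1}{2}$, setting $L = \{j : p_j \le \frac{1}{2}\}$ and $H = \{j : p_j > \frac{1}{2}\}$. I would then establish two bounds on the resulting integral MPB allocation $(\x,\p)$, following the templates of \cref{lem:earning-ub} and \cref{lem:earning-lb}: an upper bound $\p_{-1}(\x_i) \le 1$ for every agent $i$ (an agent can retain at most one ``overfull'' high chore, whose removal brings the rest of the bundle below earning $1$), and a lower bound $\p(\x_i) \ge \frac{1}{2(n-1)}$ (in a Phase-4 tree on $r$ agents, $r-1$ of them get matched to a chore of $H$ and so earn more than $\frac{1}{2}$, while the single leftover agent --- chosen as the one of maximum earning --- can be shown to retain at least $\frac{1}{2(n-1)}$ by bounding how much ER-earning rounding can strip from that agent). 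Feeding these into \cref{lem:pEF1impliesEF1} shows $\x$ is $2(n-1)$-\efone and \fpo.

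To sharpen the factor from $2(n-1)$ to $n-1$, the idea is to boost the earning lower bound to $\frac{1}{n-1}$ while keeping $\p_{-1}(\x_i) \le 1$, so that $\p_{-1}(\x_i) \le 1 = (n-1)\cdot\frac{1}{n-1} \le (n-1)\cdot\p(\x_h)$ for every pair $i,h$, which by \cref{lem:pEF1impliesEF1} gives $(n-1)$-\efone and \fpo. I would do this by \emph{unrolling} the run of \cref{alg:efone-rounding}: for any agent whose final earning drops below $\frac{1}{n-1}$, I would trace back the rounding decisions (a Phase-1 leaf rounding, a Phase-2 packing step, or a Phase-4 matching edge) that diverted earning away from that agent, argue that at least one of them was \emph{suboptimal} --- there was an alternative assignment of the offending chore that would have left the agent with more earning without violating the MPB condition --- and carry out that reassignment. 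Each repair moves a single chore and preserves MPB, hence \fpo by \cref{thm:fpo}, and with a careful choice of which chore to move it also preserves $\p_{-1}(\x_i) \le 1$.

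The hard part will be the termination and non-interference analysis of this correction loop, since a repair that rescues one underpaid agent could in principle push another agent below $\frac{1}{n-1}$. Here I would need a monovariant/potential argument in the spirit of the termination proof of \cref{alg:balanced} (at most $m$ transfer-type steps, with at most $n$ auxiliary steps between consecutive transfers) --- for instance, showing that the set of agents earning at least $\frac{1}{n-1}$ is monotone non-decreasing, or that some lexicographic functional of the earning profile strictly improves at each repair --- all while maintaining the invariants that the allocation stays MPB and that every agent's payment up to one chore stays at most $1$. Once the loop terminates, every agent earns at least $\frac{1}{n-1}$ and at most $1$ after removing one chore, which by \cref{lem:pEF1impliesEF1} is precisely $(n-1)$-\efone and \fpo; combined with the $m \le n$ case this establishes the theorem.
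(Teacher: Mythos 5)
Your proposal matches the paper's own route essentially line for line: the $m \le n$ case via \cref{alg:balanced}, the $m \ge n$ case via an ER equilibrium with $\beta = 1$ rounded by \cref{alg:efone-rounding} with low/high threshold $\frac{1}{2}$ to obtain $\p_{-1}(\x_i)\le 1$ and $\p(\x_i)\ge \frac{1}{2(n-1)}$ (hence $2(n-1)$-\efone and \fpo), and then an ``unrolling'' correction pass to raise every agent's earning to $\frac{1}{n-1}$ while preserving $\p_{-1}(\x_i)\le 1$ and the MPB condition. For the termination argument you flagged as the hard part, the paper's actual monovariant in \cref{app:rebalancing} is that each recursive repair strictly enlarges the problematic Phase-3 tree $T_1$ (since $N(T'')\supseteq N(T_1)\cup\{i_0\}$), so at most $n$ repairs suffice --- one of the potential-function candidates you anticipated.
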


First, we observe from \cref{thm:balanced} that if $m \le n$, an \efone and \fpo allocation can be computed in polynomial time using \cref{alg:balanced}. Hence we assume $m \ge n$ in the remainder of the section.  

\paragraph{\cref{alg:efone-rounding}: $2(n-1)$-EF1 and PO for $m\ge n$.} Following the ideas developed in \cref{sec:errounding} which rounds a fractional solution of an ER equilibrium, the natural approach towards obtaining an approximate-\efone guarantee is to ensure that every agent gets at most one high paying chore in the rounded solution. Clearly, this requires the number of high paying chores to be at most $n$, which cannot be guaranteed for earning limit $\beta < 1$. However, \cref{lem:earning-lb} does not show good lower bounds on the agent earnings when $\beta = 1$. 

To fix this, we design \cref{alg:efone-rounding} by modifying the rounding procedure of \cref{alg:rounding}. Since $m\ge n$, an ER equilibrium with $\beta=1$ exists. Given such an equilibrium $(\y,\p)$, \cref{alg:efone-rounding} defines $L$ to be the set of chores with payment at most $\frac{\beta}{2} = \frac{1}{2}$, and $H$ to be the set of chores with payment exceeding $\frac{1}{2}$. We prove that by using the same rounding procedure but with $L$ and $H$ defined this way, we obtain an integral MPB allocation $(\x,\p)$ where $\p_{-1}(\x_i) \le 1$ and $\p(\x_i) \ge \frac{1}{2(n-1)}$ for all agents $i\in N$. Using \cref{lem:pEF1impliesEF1}, this implies that $\x$ is $2(n-1)$-\efone and \fpo.

\begin{restatable}{theorem}{thmTwonEFone}\label{thm:2nEF1}
Given an ER equilibrium for an instance $(N,M,D)$ where $m\ge n$, \cref{alg:efone-rounding} returns a $2(n-1)$-\efone and \fpo allocation in polynomial time.
\end{restatable}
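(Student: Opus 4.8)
The plan is to establish that the integral allocation $(\x,\p)$ returned by \cref{alg:efone-rounding} is an MPB allocation for which $\p_{-1}(\x_i)\le 1$ and $\p(\x_i)\ge\frac{1}{2(n-1)}$ for every agent $i\in N$. Given these two bounds, for any pair $i,h$ we get $\p_{-1}(\x_i)\le 1\le 2(n-1)\cdot\p(\x_h)$, so $(\x,\p)$ satisfies the hypothesis of \cref{lem:pEF1impliesEF1}(i) with $\alpha=2(n-1)$ and $k=1$, yielding that $\x$ is $2(n-1)$-\efone and \fpo. Since \cref{alg:efone-rounding} runs the same four phases as \cref{alg:rounding} (only the thresholds defining $L$ and $H$ are changed to $\tfrac12$), its polynomial running time follows exactly as in \cref{lem:rounding-runtime}. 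The input is an ER equilibrium with $\beta=1$, which exists because $\sum_i e_i=n\le m=\sum_j c_j$ when $m\ge n$ (\cref{thm:er-existence}); after applying \texttt{MakeAcyclic} (\cref{lem:acyclic}) we may assume its payment graph is acyclic. So the whole content lies in the two earning bounds.

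The \fpo claim is immediate and independent of how $L$ and $H$ are chosen: exactly as in the proof of \cref{lem:rounding-fpo}, every chore ever placed in $\x_i$ lies in the support $Z_i=\{j:z_{ij}>0\}$ of the (acyclic) ER equilibrium, hence $(\x,\p)$ is an MPB allocation and \cref{thm:fpo} makes $\x$ \fpo.

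For the upper bound $\p_{-1}(\x_i)\le 1$ I would adapt the proof of \cref{lem:earning-ub}, using crucially that $\p_{-1}$ is a \emph{minimum} over single-chore deletions. Writing $\hat{\x}_i$ for the bundle of $i$ when Phase~2 visits $i$: if $\p(\hat{\x}_i)\le 1$, then $i$ only accepts $L$-children while staying at earning $\le 1$, so $\p(\x^2_i)\le 1$, and Phase~4 later adds at most one further chore to $i$, so deleting that chore witnesses $\p_{-1}(\x_i)\le\p(\x^2_i)\le 1$. If instead $\p(\hat{\x}_i)>1$, then (just as in \cref{lem:earning-ub}) $i$ pushes all its children away in Phase~2 and becomes isolated, so $\x_i=\hat{\x}_i$, which equals $\x^1_i$ (the leaf children of $i$, each of which earns $i$ exactly $\min\{p_j,1\}$) together with possibly $i$'s parent chore $j_0$; the leaf children either have payments summing to at most $e_i=1$, in which case deleting $j_0$ gives $\p_{-1}(\x_i)\le\p(\x^1_i)\le 1$, or consist of a single chore of payment $>1$, in which case $i$ has no parent chore and deleting that chore empties $\x_i$. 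Either way $\p_{-1}(\x_i)\le 1$, and the redefinition of $L,H$ affects only which chores are labelled low or high, not this accounting.

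The lower bound $\p(\x_i)\ge\frac{1}{2(n-1)}$ is the heart of the argument, and the place where redefining $L=\{j:p_j\le\tfrac12\}$, $H=\{j:p_j>\tfrac12\}$ matters. Every agent matched to a chore of $H$ in Phase~4 earns more than $\tfrac12$, so the only agent that can be poor is the unmatched agent $h$ of some Phase-3 tree $T$, which is by construction the maximum earner of $N(T)$. I would follow the case split of \cref{lem:earning-lb}: if some agent of $T$ lost a child chore, then (as there) that agent---hence also $h$---earns more than $\tfrac12$; otherwise only the root $i_0$ of $T$ loses a chore, namely its parent chore $j_0$, and since $i_0$ is not the top earner of $j_0$ it can have leaked at most $q_{j_0}/2\le\tfrac12$ of its earning outside $T$. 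The fractional earning of the agents of $T$ is $|N(T)|$; at most $\tfrac12$ of it leaks out, and at most $|N(T)|-1$ of it is carried on the chores of $M(T)$ (each of which pays out $\min\{p_j,1\}\le 1$), so at least $\tfrac12$ of it already sits on chores assigned inside $T$ in Phases~1--2 (and an agent's integral payment from such a chore is at least its fractional earning from it), giving $\sum_{i\in N(T)}\p(\x^2_i)\ge\tfrac12$; a careful charging of this sum among the agents of $T$ then forces the maximum earner $h$ to collect at least $\frac{1}{2(n-1)}$. The main obstacle, compared with \cref{lem:earning-lb}, is precisely this last step: with $\beta=1$ a single chore of $M(T)$ can absorb a full unit of earning rather than only $\beta<1$, so the crude ``every high chore leaks at most $\beta$'' bound of \cref{lem:earning-lb} is vacuous here; one is forced to argue globally over the whole Phase-3 tree through its maximum earner, to control precisely how little of $T$'s fractional earning escapes it, and to account for the root of $T$ separately so that the retained earning is spread over only $n-1$ agents.
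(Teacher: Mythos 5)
Your proposal follows the same strategy as the paper's proof: establish that $(\x,\p)$ is MPB (hence $\x$ is \fpo via \cref{thm:fpo}), prove the two earning bounds $\p_{-1}(\x_i)\le 1$ and $\p(\x_i)\ge \frac{1}{2(n-1)}$, and conclude $2(n-1)$-\efone via \cref{lem:pEF1impliesEF1}. The case analysis you sketch for the upper bound (based on whether the agent exceeds earning $1$ when visited in Phase~2, and whether some leaf child alone pays more than $1$) and for the lower bound (Phase-3 tree $T$ whose unmatched top-earner $h$ must be bounded, with at most $\frac12$ earning leaking out of $T$ when only the root's parent chore is lost, and $|N(T)|\le n-1$ in the leaking case) is exactly the structure of \cref{lem:earning-ub-1} and \cref{lem:earning-lb-1}.

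One small imprecision: you justify the $\frac12$ leak by ``$i_0$ is not the top earner of $j_0$, so it leaked at most $q_{j_0}/2$.'' That reasoning is valid when $j_0\in H$ (the edge $(j_0,i_0)$ was pruned because $i_0$ wasn't the maximum earner), but when $j_0\in L$ the edge can be deleted by $j_0$'s parent agent for a different reason and $i_0$ may well be $j_0$'s top earner; there the correct bound is simply $q_{j_0}\le p_{j_0}\le\frac12$. Both cases land on the same $\le\frac12$ leak, so the conclusion stands, but the two sub-cases warrant separate justifications as in the paper. The rest of the argument — that retained earning $\ge\frac12$ is distributed over at most $|N(T)|\le n-1$ agents, so the maximum earner $h$ gets at least $\frac{1}{2(n-1)}$, and that the non-leaking cases give weaker bounds that still dominate $\frac{1}{2(n-1)}$ — is correctly identified and matches the paper.
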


The pseudocode of \cref{alg:efone-rounding} and its analysis is presented in \cref{app:efone}. In \cref{ex:ef1-tight}, we show a lower bound on our approach with $\beta = 1$ by presenting an instance for which no rounding of the ER equilibrium is $(n-1-\delta)$-\efone, for any $\delta > 0$. This shows that the lower and upper bounds from this approach have a gap of factor $2$, and leaves open the question of whether there is a better algorithm for rounding the ER equilibrium to achieve an approximate-EF1 guarantee.

\paragraph{Improved algorithm guaranteeing $(n-1)$-\efone and \po.}
We show that this gap in the approximation factor can be closed designing an improved algorithm which returns an $(n-1)$-\efone and \po allocation given an ER equilibrium. At a high level, our algorithm first considers the $2$-\eftwo and \fpo allocation $(\z,\p)$ returned by \cref{alg:efone-rounding}. We obtained this fairness guarantee by showing that $\p_{-1}(\z_h) \le 1$ and $\p(\z_i) \ge \frac{1}{2(n-1)}$ for all agents $i, h \in N$. Improving the lower bound to $\p(\z_i) \ge \frac{1}{n-1}$ for all $h\in N$ would imply that $\z$ is $(n-1)$-\efone and \po. Our improved algorithm aims to construct such an allocation in the event that $\z$ is not already $(n-1)$-\efone. To do so, we \textit{unroll} \cref{alg:efone-rounding} and carefully identify the events which caused the earning $\p(\z_i)$ to go below $\frac{1}{n-1}$ in \cref{alg:efone-rounding} for some agent $i$. This must have happened due to an suboptimal choice of rounding, and our algorithm corrects this. We prove that such suboptimal choices can be identified and fixed in polynomial time so that in the final allocation $(\z,\p)$, every agent $i$ satisfies $\p(\z_i) \ge \frac{1}{n-1}$, as well as $\p_{-1}(\z_i)\le 1$. This proves that the allocation returned by our improved algorithm is $(n-1)$-\efone and \fpo. We present and discuss our improved algorithm in detail in \cref{app:rebalancing}.
\begin{restatable}{theorem}{thmRebalancing}\label{thm:rebalancing}
Given an ER equilibrium of an instance with $m\ge n$, an $(n-1)$-\efone and \fpo allocation can be found in polynomial time.
\end{restatable}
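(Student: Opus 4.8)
The plan is to run \cref{alg:efone-rounding} with earning limit $\beta=1$ to obtain the allocation $(\z,\p)$, which by \cref{thm:2nEF1} is an integral MPB allocation with $\p_{-1}(\z_i)\le 1$ and $\p(\z_i)\ge \frac{1}{2(n-1)}$ for every agent $i$, and then to post-process it without changing the payments $\p$. By \cref{lem:pEF1impliesEF1}(i), it suffices to turn $(\z,\p)$ into another integral MPB allocation $(\z',\p)$ with $\p_{-1}(\z'_i)\le 1$ for all $i$ and the \emph{stronger} lower bound $\p(\z'_i)\ge \frac{1}{n-1}$ for all $i$: then $\p_{-1}(\z'_i)\le 1\le (n-1)\cdot\p(\z'_h)$ for all $i,h$, so $\z'$ is $(n-1)$-\efone, and it is \fpo by \cref{thm:fpo} since payments are unchanged and $\z'$ remains MPB. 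If $(\z,\p)$ already meets the stronger bound we are done; otherwise let $P\neq\emptyset$ be the set of \emph{poor} agents, those earning strictly less than $\frac{1}{n-1}$ (the case $n=2$ is standard~\cite{aziz2019chores}, so assume $n\ge 3$).

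First I would localize the poor agents. From the analysis behind \cref{thm:2nEF1}, every agent assigned a chore in the matching phase (Phase~4) keeps that chore, whose payment exceeds $\frac{\beta}{2}=\frac12>\frac{1}{n-1}$, so such an agent is never poor; hence a poor agent is the unique \emph{unmatched} agent $h(T)$ of some Phase~3 tree $T$, and it retains exactly the bundle $\z^2_{h(T)}$ it held after Phase~2 (in particular $\p(\z^2_{h(T)})\le 1$). Thus the repair decomposes over the Phase~3 trees. Here I would prove the structural lemma that, since after Phase~3's pruning each chore of a Phase~3 tree $T$ is adjacent to exactly two agents, contracting every such chore to an edge between its two agents turns $T$ into a \emph{tree} $T^{\ast}$ on the vertex set $N(T)$; a perfect matching of the $|N(T)|-1$ chores to $|N(T)|-1$ agents of $T$ is then in bijection with a choice of root of $T^{\ast}$ (root $T^{\ast}$ at the unmatched agent and hand every other agent its parent edge). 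So Phase~4's only real degree of freedom is \emph{which} agent of $N(T)$ to leave unmatched, and \cref{alg:efone-rounding} makes the greedy choice $h(T)=\arg\max_{i\in N(T)}\p(\z^2_i)$.

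Now I would \emph{unroll} the run of \cref{alg:efone-rounding} inside each Phase~3 tree $T$ with $h(T)\in P$. The prototypical correction is a re-rooting swap: if the poor agent $h$ has a neighbor $v$ in $T^{\ast}$ via a chore $j\in H$ with $\p(\z^2_v)\ge\frac{1}{n-1}$, re-root $T^{\ast}$ at $v$; this keeps every other agent's role intact and merely moves chore $j$ from $v$ to $h$, so $h$ now earns $\p(\z^2_h)+p_j>\frac12$ and $v$ still earns $\p(\z^2_v)\ge\frac{1}{n-1}$, with $\p_{-1}$ unaffected on both. The hard case is a tree $T$ whose \emph{entire} mass sits in the unassigned high chores $M(T)$, so that every retained bundle $\z^2_i$ is small; then re-rooting alone cannot help and I would trace back through Phases~1--2 to the event that drained $N(T)$ --- a leaf chore rounded to a sibling rather than toward $N(T)$, a low chore of $L$ greedily given to an "arbitrary" child, or the pruning of a high chore $j$ toward its most-earning child rather than toward a descendant of the agent we wish to leave unmatched --- and re-make that locally valid but suboptimal choice so that some agent of $N(T)$ ends up with retained earning at least $\frac{1}{n-1}$, re-matching the rest of $T$ around it. A counting argument over $T$ (the agents of $N(T)$ collectively retain nearly $|N(T)|$ units of earning, the leakage during Phase~2 reassignments is bounded, and the high chores of $M(T)$ each pay more than $\frac12$) is what must guarantee that such a repair always exists; all chores moved already lie on the MPB sets of agents of $N(T)$ and no payment changes, so MPB is preserved automatically.

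The main obstacle I anticipate is precisely the bookkeeping in that last step: proving that the corrected rounding simultaneously (a)~lifts every poor agent to earning at least $\frac{1}{n-1}$, (b)~never drives a previously non-poor agent below $\frac{1}{n-1}$, which needs a monovariant --- e.g.\ that each repair strictly increases the number of agents with earning at least $\frac{1}{n-1}$, handled tree by tree --- and (c)~never violates $\p_{-1}(\cdot)\le 1$ anywhere (the agents being pumped up have retained earning well below $1$, and matched agents still hold exactly one chore on top of a bundle of total payment at most $1$), together with the claim that the suboptimal events can be located and fixed in $\poly{n,m}$ time so the whole process runs in polynomial time. The tightness instance in \cref{ex:ef1-tight}, where no rounding of a $\beta=1$ ER equilibrium beats $(n-1)$-\efone, is reassuring that $\frac{1}{n-1}$ is exactly the bound these corrections should --- and can only just --- reach.
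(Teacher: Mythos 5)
Your proposal follows the same high-level route as the paper's Appendix A.1 --- obtain the $2(n-1)$-\efone allocation from \cref{alg:efone-rounding}, localize possible failures to the unmatched agent $h(T)$ of a Phase~3 tree, and repair by unrolling earlier rounding choices --- but two of your concrete steps do not survive scrutiny. First, the ``prototypical correction'' (re-rooting swap inside $T$) is vacuous: Phase~4 picks $h(T)=\arg\max_{i\in N(T)}\p(\z^2_i)$, so if $h(T)$ is poor then \emph{every} agent of $N(T)$ has retained earning strictly below $\frac{1}{n-1}$, and no neighbor $v$ with $\p(\z^2_v)\ge\frac{1}{n-1}$ can exist. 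Every problematic tree lands in your ``hard case,'' and the problem must therefore be fixed by bringing earning into $T$ from outside (by reassigning a shared chore that $T$'s root lost), not by re-choosing which agent of $T$ stays unmatched.

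Second, and more fundamentally, the proposal is missing the quantitative argument that lets the repair terminate without creating new poor agents. Using \cref{lem:earning-lb-1}, a problematic tree $T_1$ must be \emph{large} ($|N(T_1)|>\frac{n-1}{2}$), have lost no child chore, and have root $i_1$ that lost its parent chore $j_1$; the paper then splits on $j_1\in L$ versus $j_1\in H$. In the $H$ case, reassigning $j_1$ to $i_1$ detaches a subtree $T_2$ at the sibling $i_2$, and the crux is that $T_2$ cannot become problematic: writing $s_1,s_2$ for the earnings of $i_1,i_2$ from $j_1$, problematic-ness forces $s_1>1-\frac{|N(T_1)|}{n-1}$, the earning limit gives $s_1+s_2\le 1$, hence $s_2<\frac{|N(T_1)|}{n-1}$, and then $|N(T_1)|+|N(T_2)|\le n-1$ (since $i_0=\parent{j_1}\notin T_1\cup T_2$) yields $\frac{1-s_2}{|N(T_2)|}>\frac{1}{n-1}$. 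You explicitly flag (a)--(c) as ``bookkeeping'' you don't have, but this inequality chain is exactly the load-bearing step, and it gives the termination monovariant too: one does not count non-poor agents, one recurses on the strictly growing tree $T''\supseteq T_1\cup\{i_0\}$ until it either stops being problematic or its root has no grandparent, giving at most $n$ recursive calls. Without the $s_1,s_2$ accounting (and the parallel analysis for $j_1\in L$, where the paper re-runs Phase~2 rooted at $i_1$ and settles a possible tie between two size-$n/2$ trees), there is no argument that the unrolling converges to a valid $(n-1)$-\pefk{} allocation rather than trading one poor agent for another.
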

\cref{thm:balanced} and \cref{thm:rebalancing} together prove \cref{thm:ef1-main} on the existence of $(n-1)$-\efone and \po allocations for all chore allocation instances with $n$ agents.

\section{Existence of 4-EFX Allocations}\label{sec:efx}
In this section, we prove the main result of our paper showing the existence of approximately-\efx allocations of chores for all instances.

\thmEFXMain*

We prove \cref{thm:efx-main} through two algorithms: \cref{alg:efx-small} which returns an exact EFX allocation for instances with $m \le 2n$, and \cref{alg:const-efx} which returns a $4$-EFX for instances with $m\ge 2n$.

\paragraph{\cref{alg:efx-small}: \efx for $m\le 2n$.} We algorithmically show that when $m\le 2n$, an EFX allocation exists and can be computed in polynomial time. The existance of EFX allocation for this case is previously known via an $O(n^3)$-time algorithm that uses matching based techniques \cite{mahara2023efxmatching}. Our algorithm, \cref{alg:efx-small}, is faster (runs in $O(n^2)$ time) and arguably simpler than that of \cite{mahara2023efxmatching}. More importantly, our algorithm returns an EFX allocation with certain special properties and introduces the idea of \textit{chore swaps}, both of which are important of our algorithm computing a $4$-\efx allocation in the general case. We present and analyze \cref{alg:efx-small} in \cref{sec:efx-small}. Formally, we prove that:
\begin{restatable}{theorem}{thmEFXSmall}\label{thm:efx-small}
For a chore allocation instance with $n$ agents ordered $1$ through $n$, and $m\le 2n$ chores, \cref{alg:efx-small} returns in $O(n^2)$ time an allocation $\x$ s.t. 
\begin{itemize}
\item[(i)] $\x$ is \efx.
\item[(ii)] If $m > n$, then $|\x_i| = 1$ for all $i>m-n$. That is, the last $n$ agents in the order receive a single chore each.
\end{itemize}
\end{restatable}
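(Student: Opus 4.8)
The plan is to analyze the picking-sequence-plus-swaps algorithm described in the overview and verify the two claimed properties. First I would handle the easy case $m \le n$: the picking sequence is simply $1, 2, \dots, n$, each agent picks at most one chore, so every agent's bundle has size at most one. Then $d_i(\x_i \setminus \{j\}) = 0 \le d_i(\x_h)$ trivially for every $j \in \x_i$, so $\x$ is EFX, and property (ii) is vacuous since $m \le n$. The substance is in the case $n < m \le 2n$, where we write $m = n + r$ with $r \in [n]$. Here the picking sequence is $r, r-1, \dots, 1$ followed by $1, 2, \dots, n$, so after the sequence agents $1, \dots, r$ hold exactly two chores each and agents $r+1, \dots, n$ hold exactly one chore each. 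Note this already establishes the structural part of (ii) \emph{before} the swaps — provided I can argue that no swap ever increases the bundle size of an agent in $\{r+1, \dots, n\}$ beyond one.

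Next I would set up the swap phase as an invariant-maintenance argument. Maintain the invariant that at all times: (a) the allocation is a partition of $M$; (b) every agent with two or more chores is one of the ``big'' agents, i.e.\ lies in $[r]$ or has just received a merged bundle via a swap; and (c) each agent in $[r]$ that has not yet undergone a swap still holds the two specific chores it picked, each of which is its MPB-style least-disutility choice at pick time. For an agent $i \in [r]$ that is not yet EFX, let $\ell = \arg\min_h d_i(\x_h)$ be the most-envied agent; perform the $(i,\ell)$ swap: $i$ receives $\x_i \cup \x_\ell$ minus its higher-disutility chore $j$, and $\ell$ receives only $\{j\}$. I would then prove the two local EFX claims: (1) after the swap $\ell$ has a single chore, hence is EFX; (2) after the swap $i$ is EFX towards every other agent. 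For (2), the key inequalities are that, writing $\x_i = \{a, j\}$ with $d_{ia} \le d_{ij}$, the new bundle of $i$ is $\{a\} \cup \x_\ell \setminus \{j\}$ — wait, more carefully $\x'_i = \{a, \text{chores of } \ell\}$; removing any single chore from $\x'_i$ must be dominated by $d_i(\x'_h)$ for all $h$. Because $i$ was \emph{not} EFX before, $d_i(a) = d_i(\x_i \setminus \{j\}) > d_i(\x_\ell) = \min_h d_i(\x_h)$, which both lower-bounds $d_{ij} \ge d_{ia} > d_i(\x_\ell)$ and lets me bound $d_i(\x'_i \setminus \{\text{any chore}\})$: removing $a$ leaves $d_i(\x_\ell) \le d_i(\x_h)$ for all $h$; removing a chore of $\ell$ leaves at most $d_{ia} + d_i(\x_\ell) \le d_{ij} + d_i(\x_\ell)$, and I need this $\le d_i(\x'_h) = d_i(\x_h)$ for $h \ne \ell$, which follows since before the swap $i$ most-envied $\ell$ so $d_i(\x_h) \ge d_i(\x_\ell)$ — this last step needs the extra structural fact that $a$ and $j$ were $i$'s \emph{minimum} picks, so $d_{ia}$ is at most $d_i$ of any chore in any other agent's original bundle, which combined with non-decrease of other bundles gives the bound. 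I would make this precise using the picking-sequence property.

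Finally I would argue termination and the ``each big agent swaps at most once'' claim, which simultaneously gives polynomial runtime and preserves property (ii). The idea: once $i \in [r]$ undergoes an $(i,\ell)$ swap it is EFX, and I must show it \emph{stays} EFX under all subsequent swaps. A later swap $(h,k)$ only transfers a bundle's chores around; the only way $i$ could become EFX-envious again is if some agent's bundle shrinks in $i$'s estimation, but the agent $k$ who receives a singleton $\{j_h\}$ in a later swap — I claim $d_i(j_h) \ge d_i(\x'_i \setminus \{\text{any chore}\})$ because $j_h$ was a chore $h$ picked, hence large in $i$'s eyes relative to $i$'s minimum-pick chore $a$; here the ordering of swaps (by the relevant chore parameter, as in the overview) is what makes this go through. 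Since each of the $r \le n$ agents in $[r]$ swaps at most once, there are at most $n$ swaps, each costing $O(n)$ work, giving the $O(n^2)$ bound; and since every swap's recipient of a singleton is an agent who had size one or is among $[r]$ receiving exactly one chore, the agents $r+1, \dots, n$ that never swap retain their single chore, so $|\x_i| = 1$ for all $i > m - n = r$. The main obstacle I expect is precisely the ``no re-development of EFX-envy'' step — pinning down the right order of swaps and the right comparison between $d_i$ of a newly-singletonned chore and $i$'s post-swap bundle — and this is where I would invest the bulk of the careful case analysis.
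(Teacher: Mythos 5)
Your high-level plan matches the paper's proof: a three-phase picking sequence followed by $(i,\ell)$ chore swaps, the observation that $\ell$ becomes trivially EFX with a singleton bundle, and the invariant that each agent in $[r]$ swaps at most once, with the picking order protecting previously-swapped agents from re-developing envy. However, there are two technical errors in your swap analysis that need correcting. First, you assert that $i$ not being EFX implies $d_{ia} = d_i(\x_i\setminus\{j\}) > d_i(\x_\ell)$. This is false: since $d_{ia} \le d_{ij}$, the binding removal for the EFX condition is removing $a$, so not-EFX gives $d_{ij} > d_i(\x_\ell)$, not $d_{ia} > d_i(\x_\ell)$. In fact the opposite typically holds — e.g., if $\x_\ell$ is a singleton Phase-2 chore, the picking order gives $d_{ia} = d_i(e_i) \le d_i(\x_\ell)$. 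The correct structure is: (a) $d_{ij} > d_i(\x_\ell)$ handles the new envy toward $\x'_\ell=\{j\}$, and (b) the picking-order fact $d_{ia} \le d_{ic}$ for every $c \in \x_\ell$ shows that removing any $c$ from $\x'_i = \{a\}\cup\x_\ell$ leaves at most $d_i(\x_\ell)$, so the post-swap max-removal value is exactly $d_i(\x_\ell)$; both pieces are needed, and neither is the inequality you wrote. Second, you invoke "the ordering of swaps by the relevant chore parameter, as in the overview," but that payment-based ordering belongs to the general $4$-EFX algorithm (\cref{alg:const-efx}); here the ordering is simply by agent index, and the fact you actually need is that when $i$ swaps before $h$ (hence $i<h$), the Phase-2 picking order gives $d_i(j_i) \le d_i(j_h)$, which combined with the post-swap bound $\max_c d_i(\x'_i\setminus\{c\}) < d_i(j_i)$ protects $i$ from envying the later singleton $\{j_h\}$. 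Also, to justify $d_{ia}\le d_{ic}$ for all $c\in\x_\ell$ you need the invariant (which the paper proves) that $\x_\ell$ never contains a Phase-1 chore $e_{i'}$ with $i' > i$ — your phrase "non-decrease of other bundles" does not capture this, since bundles both grow and shrink through swaps. With these fixes your argument lines up with the paper's \cref{lem:efx-small-invariants}.
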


\paragraph{\cref{alg:const-efx}: 4-\efx for $m\ge 2n$.} With the above result, it only remains to establish \cref{thm:const-efx} for instances with $m\ge 2n$. For this setting, we design a polynomial time algorithm \cref{alg:const-efx}, which computes a $4$-\efx allocation for a chore allocation instance with $m\ge 2n$, given its ER equilibrium with earning limit $\beta=\frac{1}{2}$. \cref{alg:const-efx} uses \cref{alg:efx-small} as a subroutine and crucially relies on the properties outlined in \cref{thm:efx-small}. We provide a detailed overview of \cref{alg:const-efx} in \cref{sec:efx-overview} and then present its analysis in \cref{sec:efx-analysis}. 

\begin{restatable}{theorem}{thmConstEFX}\label{thm:const-efx}
Given an ER equilibrium of a chore allocation instance with $m\ge 2n$ and earning limit $\beta=\frac{1}{2}$, \cref{alg:const-efx} returns a $4$-\efx allocation in polynomial time.
\end{restatable}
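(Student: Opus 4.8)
The plan is to build the $4$-\efx allocation in three stages, starting from the output of \cref{alg:rounding}. Run \cref{alg:rounding} on the given ER equilibrium with $e_i=1$, $\beta=\tfrac12$ to obtain an \fpo allocation $(\x,\p)$ that is $2$-\eftwo; by \cref{lem:earning-lb} every agent earns $\p(\x_i)\ge\tfrac12$, and by (a refined reading of) \cref{lem:earning-ub} we have $\p(S_i)\le 1$ for every agent holding at most one high chore and $\p(S_i)\le\tfrac12$ for every agent holding two, where $S_i=\x_i\cap L$ and $H=\{j:p_j>\tfrac12\}$. Split $N=N_H\sqcup N_0$ according to whether $\x_i\cap H=\emptyset$; since $(\x,\p)$ is on MPB we have $d_i(\x_i)=\alpha_i\p(\x_i)$, so any agent with $\p(S_i)>1$ lies in $N_0$, and the bound $\p(S_i)\le 2$ there together with $\p(\x_h)\ge\tfrac12$ shows every agent of $N_0$ is already $4$-\efx in $\x$. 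The obstruction is thus confined to $N_H$, and especially to agents holding two high chores, where neither a direct bound nor a single chore swap suffices.

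Second, I would re-allocate the high chores. Because $q_j=\min\{p_j,\tfrac12\}$ in the ER equilibrium and $\sum_j q_j=\sum_i e_i=n$, we get $|H|\le 2n$, so \cref{alg:efx-small} produces an \efx allocation $\z'$ of the sub-instance on $H$. Exploiting property (ii) of \cref{thm:efx-small}, I would fix the internal agent order of \cref{alg:efx-small} so that (i) every agent with $\p(S_i)>1$ receives at most one chore in $\z'$, and (ii) when $|H|<n$ (so $\z'$ is automatically \efx) the agents left without a high chore in $\z'$ all have $\p(S_i)\ge\tfrac12$; both follow from a counting argument using $\sum_i\p(S_i)=\sum_{j\in L}p_j=n-|H|/2$ and $|N_0|\ge n-|H|$. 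Set $\x'_i=S_i\cup\z'_i$. The computation sketched in \cref{sec:overview}, run now with $\p(S_i)\le 1$ in place of $\le 2$, shows every agent with $|\z'_i|\ge 2$ is $4$-\efx, and the agents with $\z'_i=\emptyset$ are $4$-\efx by (ii); so after this stage the only possibly-envious agents hold exactly one high chore.

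Third, I would eliminate the remaining envy by \emph{chore swaps}. While some agent $i$ with $\z'_i=\{j_i\}$ is not $3$-\efx, let $\ell=\arg\min_h d_i(\x'_h)$ and set $\x'_i\leftarrow S_i\cup\x'_\ell$, $\x'_\ell\leftarrow\{j_i\}$; using $\p(S_i)\le 1$, $\p(\x'_\ell)\ge\tfrac12$ and the MPB identities exactly as in \cref{sec:overview}, agent $i$ becomes $3$-\efx and agent $\ell$, now holding a single chore, is \efx. The catch is that after the re-allocation the payments $\p$ no longer track the disutilities of the high chores, so the payment order used by \cref{alg:const-efx} is unavailable; I would instead process the one-high-chore agents in a disutility-based order designed to enforce the key inequality $d_i(j_i)\le d_i(j_h)$ whenever $i$ is processed before $h$. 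Given this, a later $(h,k)$ swap deposits $\{j_h\}$ on $k$ and keeps $i$ satisfied, since $d_i(\x'_i)\le 3d_i(j_i)\le 3d_i(j_h)=3d_i(\x'_k)$; a careful argument further shows that such swaps keep the two-high-chore agents $4$-\efx and that every agent undergoes at most one swap, so after at most $n$ swaps the allocation is $4$-\efx, and it remains \fpo throughout since each bundle stays within an MPB set. All steps are polynomial.

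The main obstacle is the third stage: identifying the correct processing order for the swaps once the high chores have been re-shuffled, and proving it prevents $\efx$-envy from re-emerging — the MPB structure that made the payment order work for \cref{alg:const-efx} is gone and must be replaced by a purely combinatorial ordering argument showing that each swapped agent hands off a chore of minimal relevant disutility. A secondary, bookkeeping-heavy difficulty is choosing the \emph{specific} \efx re-allocation $\z'$ (using the slack in \cref{thm:efx-small}) while simultaneously preserving every invariant consumed downstream — the per-agent earning lower bounds and non-emptiness of bundles, the refined bound $\p(S_i)\le\tfrac12$ for two-high-chore agents, the bound $\p(S_i)\le 1$ for agents entering the swap phase, and the inequalities the swap analysis relies on — through both the re-allocation and the swaps.
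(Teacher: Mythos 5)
Your high-level plan follows the paper's proof structure closely: round the ER equilibrium to a $2$-\eftwo starting allocation, re-allocate the high chores via \cref{alg:efx-small} with a chosen agent order, then repair the remaining agents (exactly one high chore each) by chore swaps. The use of $|H|\le 2n$, the decomposition into $N_H^2$/$N_H^1$/$N_L$, the $4$-\efx argument for $N_H^2$ agents from the bound $\p(S_i)\le 1$, and the earning lower bound $\ge\frac12$ after re-allocation are all what the paper does.

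However, there is a genuine gap, and you correctly flag it as ``the main obstacle'' but do not close it: the swap order. You observe that what is needed is an ordering of the $N_H^1$ agents so that whenever $i$ is processed before $h$, $d_i(j_i)\le d_i(j_h)$, and you propose a vague ``disutility-based order designed to enforce'' this. But the property must hold \emph{simultaneously} for every $i$ and every later-processed $h$ — a naive order by $d_i(j_i)$ only compares $d_i(j_i)$ with $d_h(j_h)$, not with $d_i(j_h)$, and in general no order compatible with the given matching $i\mapsto j_i$ exists. The paper's resolution — and the step your sketch lacks — is to \emph{discard} the assignment of $H'=\cup_{i\in N_H^1}\z'_i$ to $N_H^1$ produced by \cref{alg:efx-small} and instead solve a minimum-cost matching LP to obtain a new perfect matching $\z$ of $H'$ to $N_H^1$ together with dual payments $\q$ making $(\z,\q)$ an MPB allocation (\cref{lem:overpaying-fpo}). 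One then orders swaps by increasing $\q(\z_i)$; the MPB identities $d_h(\z_h)=\alpha_h\q(\z_h)$ and $d_h(\z_i)\ge\alpha_h\q(\z_i)$ immediately give $d_h(\z_h)\le d_h(\z_i)$ whenever $\q(\z_h)\le\q(\z_i)$, which is precisely the comparability you need. Re-introducing an MPB structure by re-matching (rather than using a combinatorial order) is the essential missing idea, and it also requires showing (as in \cref{lem:invariants-nh2}) that the re-matched bundles are still supersets of some $\z'_k$, so the $N_H^2$ argument survives.

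Two smaller inaccuracies that you should correct but that do not sink the bound: your stated estimate ``$\p(S_i)\le 1$ for every agent holding at most one high chore'' is false for agents holding \emph{zero} high chores — for $i\in N_0$ the paper only gives $\p(S_i)\le\frac32$ (\cref{lem:initial-alloc}(iii)). Consequently, in the swap analysis the bound $\p(S_i)\le\frac32$ must be used for $N_H^1$ agents that originate in $N_0$, and the resulting guarantee after a swap is $4$-\efx, not $3$-\efx as you write; this is why the swap trigger in \cref{alg:const-efx} is ``not $4$-\efx.''
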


\noindent Thus, Theorems~\ref{thm:efx-small} and \ref{thm:const-efx} together prove \cref{thm:efx-main}, and are proved in the next two sections.

\begin{algorithm}[!t]
\caption{Computes an EFX allocation for instances with $m\le 2n$}\label{alg:efx-small}
\textbf{Input:} Instance $(N,M,D)$ with $m \le 2n$\\
\textbf{Output:} An integral allocation $\x$   
\begin{algorithmic}[1]
\State $r \gets \max\{0, m-n\}$
\State $M' \gets M$, $\x_i \gets \emptyset$ for all $i\in [n]$
\Statex \textit{--- Phase 1: Agents in $N_2$ pick chores in order from $r$ to $1$ ---}
\For{$i=r$ down to $1$}
\State $e_i \gets \arg\min_{j\in M'} d_i(j)$
\State $\x_i \gets \x_i \cup \{e_i\}$, $M' \gets M' \setminus \{e_i\}$
\EndFor
\Statex \textit{--- Phase 2: Agents pick chores in order from $1$ to $n$  ---}
\For{$i=1$ to $n$}
\State $j_i \gets \arg\min_{j\in M'} d_i(j)$
\State $\x_i \gets \x_i \cup \{j_i\}$, $M' \gets M' \setminus \{j_i\}$
\EndFor
\Statex \textit{--- Phase 3: Chore swaps ---}
\While{$\x$ is not EFX}
\State $i \gets \arg\min\{i'\in N: i' \text{ is not EFX}\}$
\State $\ell \gets \arg\min\{d_i(\x_h): h\in N\}$
\State Perform $(i, \ell)$ swap: $\x_i \gets \x_i \cup \x_\ell \setminus \{j_i\}$, $\x_\ell \gets \{j_i\}$
\EndWhile
\State \Return $\x$
\end{algorithmic}
\end{algorithm}

\subsection{EFX for $m\le 2n$}\label{sec:efx-small}
This section proves \cref{thm:efx-small} by describing and analyzing \cref{alg:efx-small}, which computes an EFX allocation for instances with $m\le 2n$. \cref{alg:efx-small} iteratively allocates chores to agents in three phases. We initialize $M'$ to $M$ and update this set as chores are allocated. For simplicity, we first assume $m>n$, letting $r = m-n$.

In Phase 1, proceeding in the order $r, r-1, \dots, 1$, each agent $i$ iteratively picks their least disutility chore $e_i$ among the set of remaining items $M'$. Let $L = \{e_1,\dots, e_r\}$. Then in Phase 2, proceeding in the order $1$ to $n$, each agent $i$ picks their least disutility chore $j_i$ in $M'$. Let $H = \{j_1,\dots, j_n\}$. Let $\x^0$ be the allocation at the end of Phase 2. Let $N_2 = [r]$ be the set of agents with two chores in $\x^0$, and let $N_1 = [n]\setminus [r]$ be the set of agents with one chore in $\x^0$. Clearly, agents in $N_1$ are EFX, and hence only agents in $N_2$ may be EFX-envious.  

Starting with $\x^0$, Phase 3 of \cref{alg:efx-small} performs \textit{chore swaps} between an agent $i\in N_2$ who is not EFX, and the agent $\ell$ who $i$ envies the most. We refer to such a swap as an $(i, \ell)$-swap. In an $(i, \ell)$ swap in an allocation $\x$, the bundle $\x_\ell$ is transferred to $i$, and the higher disutility chore $j_i$ is transferred from $i$ to $\ell$. When there are multiple envious agents $i$, we break ties following the agent ordering. We argue that after an $(i, \ell)$ swap, the agent $i$ becomes EFX and remains EFX throughout the subsequent execution of the algorithm. Thus, every agent $i\in N_2$ undergoes an $(i, \ell)$ swap at most once, and these swaps happen in the order of agents $1$ through $r$. This implies that \cref{alg:efx-small} terminates in at most $r$ steps with an EFX allocation.

Finally, we note that $r=0$ when $m\le n$. Thus, \cref{alg:efx-small} skips Phase 1 and only executes Phase 2 and returns an allocation in which each agent gets a single chore, and hence is EFX.

\paragraph{Analysis of \cref{alg:efx-small}.} Since it is clear that \cref{alg:efx-small} returns an EFX allocation when $m\le n$, we assume $m > n$ in the following analysis. Purely for the purpose of analysis, we implement Phase 3 as follows:

\begin{algorithm}
\begin{algorithmic}[1]
\setcounter{ALG@line}{8}
\For{$i=1$ to $r$}
\If{$i$ is not EFX}
\State $\ell \gets \arg\min\{d_i(\x_h): h\in N\}$
\State Perform $(i, \ell)$ swap: $\x_i \gets \x_i \cup \x_\ell \setminus \{j_i\}$, $\x_\ell \gets \{j_i\}$
\EndIf
\EndFor
\end{algorithmic}        
\end{algorithm}

\cref{lem:efx-small-invariants} below refers to the above implementation of Phase 3 of \cref{alg:efx-small}. Let $\x^i$ denote the allocation after iteration $i$ of Phase 3.

\begin{restatable}{lemma}{lemEFXSmallInvariants}\label{lem:efx-small-invariants}
For each $i\in [r]$, 
\begin{itemize}
\item[(i)] Before iteration $i$, agents $N \setminus [i-1]$ do not participate in any swap.
\item[(ii)] In iteration $i$, if agent $i$ participates in an $(i, \ell)$ swap, then $i$ is \efx after the swap. Moreover, $\max_{j\in \x^i_i} d_i(\x^i_i\setminus \{j\}) < d_i(j_i)$ immediately after the swap.
\item[(iii)] After iteration $i$, agents in $N_2\cap [i]$ are \efx. Agents in $N_1$ have a single chore, and are \efx.
\end{itemize}
\end{restatable}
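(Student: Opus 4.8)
The plan is to prove parts (i), (ii), (iii) simultaneously by induction on $i\in[r]$, carrying along one extra bookkeeping invariant, call it (iv): for every $h\in[i]$, $\max_{j\in\x^i_h}d_h(\x^i_h\setminus\{j\})\le d_h(j_h)$ (the left-hand side being $0$ when $|\x^i_h|\le 1$). Before starting I would record two elementary facts about the picking sequence. Since agent $i$ picks $e_i$ in Phase~1 as a minimum-disutility chore of $M\setminus\{e_{i+1},\dots,e_r\}$, we have $d_i(e_i)\le d_i(j)$ for every chore $j\notin\{e_{i+1},\dots,e_r\}$; in particular $d_i(e_i)\le d_i(j_i)$, so in the starting bundle $\{e_i,j_i\}$ the chore $e_i$ is the smaller one and the binding \efx constraint for agent $i$ is $d_i(j_i)\le d_i(\x_h)$. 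Since in Phase~2 agent $h$ picks $j_h$ strictly before agent $i$ picks $j_i$ whenever $h<i$, and $j_i$ is never an $e$-chore, we get $d_h(j_h)\le d_h(j_i)$ for all $h<i$. Finally, a chore $e_k$ can leave agent $k$'s bundle only when agent $k$ is the \emph{target} of a swap (as the envious party agent $k$ keeps $e_k$), a fact that interacts crucially with part (i).

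Now the inductive step at index $i$, assuming the statement for all smaller indices. Part (i) for index $i$ follows since each earlier iteration $h<i$ either did nothing or, by the already-proved analysis of iteration $h$, involved only agent $h$ and a target in $[h-1]$, so iterations $1,\dots,i-1$ touch only agents in $[i-1]$. Hence the starting bundle of agent $i$ in iteration $i$ is still $\x^0_i=\{e_i,j_i\}$, and for every $k\ge i$ the bundle $\x^{i-1}_k$ still equals $\x^0_k$ and still contains $e_k$ (agents $k\ge i$ have not been targets yet). Using the picking inequalities, $d_i(\x^{i-1}_h)\ge d_i(j_i)$ for every $h\ge i+1$ and $d_i(\x^{i-1}_i)=d_i(e_i)+d_i(j_i)>d_i(j_i)$; so if agent $i$ is not \efx at the start of iteration $i$, the most-envied agent $\ell$ must lie in $[i-1]$, with $d_i(\x^{i-1}_\ell)<d_i(j_i)$ — which also yields (i) for index $i+1$. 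Suppose the $(i,\ell)$ swap occurs, so $\x^i_i=\{e_i\}\cup\x^{i-1}_\ell$ and $\x^i_\ell=\{j_i\}$. The key observation: $\x^{i-1}_\ell$ contains no $e_k$ with $k\ge i$ (those chores are still held by agents $k\ge i\neq\ell$), so every chore of $\x^{i-1}_\ell$ has disutility $\ge d_i(e_i)$ for agent $i$. Hence $e_i$ is a minimum-disutility chore of $\x^i_i$, and $\max_{j\in\x^i_i}d_i(\x^i_i\setminus\{j\})=d_i(\x^i_i)-d_i(e_i)=d_i(\x^{i-1}_\ell)<d_i(j_i)$, which is the ``moreover'' claim. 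Since $d_i(\x^{i-1}_\ell)=\min_h d_i(\x^{i-1}_h)$ and the only changed bundles are agent $i$'s and agent $\ell$'s new singleton $\{j_i\}$ (of disutility $d_i(j_i)>d_i(\x^{i-1}_\ell)$), agent $i$ is \efx in $\x^i$, proving (ii).

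For invariant (iv): for agent $i$ it is exactly the ``moreover'' bound (or equality $d_i(j_i)=d_i(j_i)$ if no swap), for the target $\ell$ the new bundle is a singleton, and for every other $h\in[i-1]$ the bundle is unchanged. For part (iii): agents of $N_1$ lie outside $[r]\supseteq[i]$, so by (i) they never swap and keep a single chore, hence are \efx; agent $i$ is \efx by (ii) (or was already). For $h\in[i-1]\setminus\{\ell\}$ the bundle, hence $\mu_h:=\max_{j\in\x^i_h}d_h(\x^i_h\setminus\{j\})$, is unchanged: towards agents other than $i,\ell$ the \efx inequality is inherited from the IH; towards $i$, $d_h(\x^i_i)=d_h(e_i)+d_h(\x^{i-1}_\ell)\ge d_h(\x^{i-1}_\ell)\ge\mu_h$ (the last step because $h$ was \efx towards $\ell$ before); towards $\ell$, $d_h(\x^i_\ell)=d_h(j_i)\ge d_h(j_h)\ge\mu_h$, using the Phase~2 inequality and invariant (iv) from the IH. If $h=\ell$, its singleton bundle is \efx. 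This closes the induction, and $i=1$ is the (essentially vacuous) base case since agent $1$ \efx-envies nobody.

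The two genuinely delicate points — and the reason (i)–(iii) must be proved together — are: (a) the ``moreover'' clause of (ii), which would fail if the target $\ell$ held a ``late'' chore $e_k$ with $k\ge i$, and ruling this out is precisely a consequence of part (i); and (b) showing in (iii) that an \emph{uninvolved} agent $h\in[i-1]$ stays \efx toward the target $\ell$ even though $\ell$'s bundle has just collapsed to the single chore $\{j_i\}$ — this needs the auxiliary invariant (iv) to bound $h$'s \efx-up-to-one-chore disutility by $d_h(j_h)$, together with $d_h(j_h)\le d_h(j_i)$ from the Phase~2 picking order. Everything else is bookkeeping about which chores can have moved, all controlled by part (i).
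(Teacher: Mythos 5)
Your proof is correct and follows essentially the same inductive strategy as the paper: part (i) localizes the target $\ell$ to $[i-1]$, the picking-order inequalities give the ``moreover'' clause of (ii), and the bound $\max_{j\in\x^i_h}d_h(\x^i_h\setminus\{j\})\le d_h(j_h)$ together with $d_h(j_h)\le d_h(j_i)$ handles uninvolved agents' envy toward the target in (iii). The only stylistic difference is that you package that last bound as an explicit auxiliary invariant (iv), whereas the paper re-derives it on the fly using $\x^{i+1}_h=\x^h_h$; the substance is identical.
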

\begin{proof}
We prove the invariants inductively, beginning with $i=1$. Consider the allocation $\x^0$ before Phase 3 begins. We show that the invariants hold for $i=1$ as follows.
\begin{itemize}
\item[(i)] Invariant (i) holds trivially, since no agent has participated in any swap before iteration $1$.
\item[(ii)] Note that $\x^0_1 = \{e_1, j_1\}$ and $j_h \in \x^0_h$ for any $h\ge 2$ in the allocation $\x^0$ before iteration $1$. By the order in which agent $1$ picks chores, we have $d_1(e_1) \le d_1(j_1) \le d_1(j_h)$ for any $h\ge 2$. Hence, $\max_{j\in \x^0_1} d_1(\x^0_1\setminus\{j\}) = d_1(j_1) \le d_1(j_h) \le d_1(\x^0_h)$. This shows that agent $1$ is EFX in the allocation $\x^0$, and hence no swap takes place in iteration $1$. Invariant (ii) thus holds vacuously.
\item[(iii)] As argued above, agent $1$ is EFX in $\x^0$, and since no swap takes place in iteration $1$, we have $\x^1 = \x^0$. In $\x^1$, agent $1$ is EFX, and agents in $N_1$ have a single chore, and hence are EFX. Thus, invariant (iii) holds.
\end{itemize}

Assume that invariants (i)-(iii) hold for some $i\in [r-1]$. We will prove that the invariants hold for $i+1$ as well. 
\begin{itemize}
\item[(i)] Let $i$ and $\ell$ be the agents participating in an $(i, \ell)$ swap in iteration $i$. By invariant (i) of the inductive hypothesis, agents $\{i, i+1, \dots, n\}$ have not undergone a swap before iteration $i$. Hence the allocation before iteration $i$ satisfies $\x^{i-1}_h = \x^0$ for any $h\in N_1 \cup N_2\setminus [i-1]$. By the order in which agent $i$ picks chores, we have:
\[
\max_{j\in \x^{i-1}_i} d_i(\x^{i-1}_1\setminus\{j\}) = d_i(j_i) \le d_i(j_h) \le d_i(\x_h^{i-1}),
\]
which shows that agent $i$ does not EFX-envy any agent in $\{i+1, \dots, n\}$. Since $i$ is EFX-envious before iteration $i$, we must have $\ell \in [i-1]$. Thus, after iteration $i$, only agents in $[i]$ have participated in swaps, establishing invariant (i).
\item[(ii)] Suppose agent $(i+1)$ participates in an $(i+1, k)$ swap with agent $k$ in iteration $(i+1)$ resulting in the allocation $\x^{i+1}$. Before iteration $(i+1)$, we know from invariant (i) that agents in $N\setminus [i]$ have not participated in any swap. Thus, $\x^i_{i+1} = \{e_{i+1}, j_{i+1}\}$. Using the fact that agent $(i+1)$ is not EFX in the allocation $\x^i$, and the choice of agent $k$, we have $d_{i+1}(j_{i+1}) > d_{i+1}(\x^i_k)$. Once again, using the order in which agent $(i+1)$ picked chores, we see that $k\in [i]$. 

Next, we claim that $d_{i+1}(e_{i+1}) \le d_{i+1}(j)$ for any $j\in \x^i_k$. To see this, note that by the order in which agent $(i+1)$ picks chores, the only chores that have disutility less than $d_{i+1}(e_{i+1})$ for agent $(i+1)$ could be the chores $e_{i+2}, \dots, e_r$. However, since agents $(i+1), \dots, r$ have not undergone any swap step, these chores cannot belong to $\x^i_k$. The claim thus holds.

We now prove invariant (ii). After iteration $(i+1)$, we have $\x^{i+1}_{i+1} = \{e_{i+1}\} \cup \x^i_k$, and $\x^{i+1}_k = \{j_{i+1}\}$. Observe that:
\[
\max_{j\in \x^{i+1}_{i+1}} d_{i+1}(\x^{i+1}_{i+1}\setminus\{j\}) = d_{i+1}(\x^i_k) < d_{i+1}(j_{i+1}),
\]
where we use the claim that $d_{i+1}(e_{i+1}) \le d_{i+1}(j)$ for any $j\in \x^i_k$ in the first equality, and the fact that agent $(i+1)$ EFX-envies agent $k$ in the allocation $\x^i$ in the second inequality. This proves the second claim of invariant (ii) and also shows that agent $(i+1)$ does EFX-envy agent $k$ immediately after the swap. Consider some other agent $h\notin \{i+1, k\}$. By the choice of agent $k$, we have $d_{i+1}(\x^i_k) \le d_{i+1}(\x^i_h)$. Thus, $\max_{j\in \x^{i+1}_{i+1}} d_{i+1}(\x^{i+1}_{i+1}\setminus\{j\}) = d_{i+1}(\x^i_k) \le d_{i+1}(\x^i_h)$. This shows that agent $(i+1)$ is EFX after the swap in iteration $(i+1)$.
\item[(iii)] Consider the allocation $\x^{i+1}$ after iteration $(i+1)$. Invariant (ii) shows that agent $(i+1)$ is EFX in $\x^{i+1}$. Moreover, agent $k$ is EFX in $\x^{i+1}$ since she has a single chore. Since we argued above that $k\in [i]$, the agents in $N_1$ continue to have a single chore and are EFX. 

Thus, it only remains to be shown that an agent $h\in [i]\setminus \{k\}$ who was EFX in the allocation $\x^i$ remains EFX in the allocation $\x^{i+1}$ after the $(i+1, k)$ swap. First note that $h$ is EFX towards $\x^{i+1}_{h'}$ for any $h'\notin \{i+1, k\}$, since $\x^i_{h'} = \x^{i+1}_{h'}$ and $h$ is EFX in $\x^i$.

Next, observe that agent $h$ is EFX towards the bundle $\x^i_k$. Since $\x^i_k \subset \x^{i+1}_{i+1}$, agent $h$ is EFX towards $\x^{i+1}_{i+1}$ as well. 

Finally, we show that $h$ is EFX towards the bundle $\x^{i+1}_k = \{j_{i+1}\}$. If agent $h$ underwent a swap of the form $(i', h)$ during iteration $i'\in[h+1, i+1]$, then agent $h$ has a single chore and will be EFX in $\x^{i+1}$. Hence we assume agent $h$ did not undergo any swap during iterations $[h+1,i+1]$, and hence $\x^{i+1}_h = \x^h_h$. Now, observe that:
\[ 
\max_{j\in \x^h_h} d_h(\x^h_h\setminus \{j\}) \le d_h(j_h).
\]
This is true, because if $h$ does not undergo a swap in iteration $h$, we have $\x^h_h = \{e_h, j_h\}$, and $d_h(e_h)\le d_h(j_h)$. If $h$ does undergo a swap in iteration $h$, invariant (ii) implies the same observation. Now observe that:
\begin{align*}
\max_{j\in \x^{i+1}_h} d_h(\x^{i+1}_h\setminus \{j\}) &= \max_{j\in \x^h_h} d_h(\x^h_h\setminus \{j\}) \tag{since $\x^{i+1}_h = \x^h_h$} \\
&\le d_h(j_h) \tag{observed above}\\
&\le d_h(j_{i+1}) \tag{by the order in which $h$ picked chores}\\
&= d_h(\x^{i+1}_k), \tag{since $\x^{i+1}_k = \{j_{i+1}\}$}     
\end{align*}
which shows that $h$ is EFX towards the bundle $\x^{i+1}_k$. In conclusion, invariant (iii) holds.
\end{itemize}
By induction, the invariants hold for all $i\in [r]$.
\end{proof}

With \cref{lem:efx-small-invariants} in hand, \cref{thm:efx-small} follows immediately.
\begin{proof}[Proof of \cref{thm:efx-small}]
For $m\le n$, \cref{thm:efx-small} only executes Phase 2 which assigns a single chore to each agent, thus returning an EFX. For $m>n$, Invariant (iii) of \cref{lem:efx-small-invariants} for $i=r$ implies that in the allocation returned by \cref{alg:efx-small}, agents in $N_2$ are EFX. Moreover, agents in $N_1 = [n]\setminus [r]$ have a single chore and hence are EFX. This proves both properties (1) and (2) claimed by \cref{thm:efx-small}.

Finally, note that \cref{alg:efx-small} runs in $O(n^2)$-time: Phases 1 and 2 involve $m \le 2n$ steps of identifying an agent's favorite chore ($O(m)$ time each), and Phase 3 involves at most $r \le n$ swap steps.
\end{proof}

\subsection{Computing a 4-EFX Allocation: Algorithm Overview}\label{sec:efx-overview}
We now prove the existence of $4$-EFX allocations for chore allocation instances with $m \ge 2n$. We design a polynomial time algorithm, \cref{alg:const-efx}, which returns a $4$-EFX allocation for an instance when given an ER equilibrium of the instance with earning limit $\beta=\frac{1}{2}$ as input.

\Cref{alg:const-efx} first runs \Cref{alg:rounding} on the given ER equilibrium $(\y,\p)$ with earning limit $\beta=\frac{1}{2}$ to obtain a $2$-\eftwo and \fpo allocation $(\x,\p)$. As in \cref{sec:rounding}, we classify chores based on their payments as $L = \{j\in M: p_j \le \frac{1}{2}\}$ and $H = \{j\in M: p_j > \frac{1}{2}\}$. Thus $L$ is the set of low paying chores, and $H$ is the set of high paying chores, whose payment exceeds the earning limit $\beta = \frac{1}{2}$. 

We partition the bundle of each agent $i$ as $\x_i = S_i \cup H_i$, where $S_i \subseteq L$ and $H_i \subseteq H$. Let $N_H$ denote the set of agents who are assigned high paying chores, and let $N_0 = N\setminus N_H$. The following lemma records properties of the allocation $\x$.

\begin{restatable}{lemma}{lemInitialAlloc}\label{lem:initial-alloc}
The allocation $(\x,\p)$ returned by \cref{alg:rounding} satisfies:
\begin{itemize}
\item[(i)] For any $i\in N$: $\p(\x_i) = \p(S_i \cup H_i) \ge \frac{1}{2}$.
\item[(ii)] For $i\in N_H$: $\p(S_i) \le 1$.
\item[(iii)] For $i\in N_0$: $\p(\x_i) = \p(S_i) \le \frac{3}{2}$.
\end{itemize}
\end{restatable}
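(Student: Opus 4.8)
The plan is to obtain all three items as direct consequences of the earning bounds already established for \cref{alg:rounding} --- namely \cref{lem:earning-lb} and \cref{lem:earning-ub} --- specialized to the earning limit $\beta=\tfrac12$, together with one structural observation. Since $S_i\subseteq L$ and $H_i\subseteq H$, every chore in $S_i$ has payment at most $\tfrac12$ while every chore in $H_i$ has payment strictly greater than $\tfrac12$; hence the $|H_i|$ highest-paying chores of $\x_i$ are exactly the chores of $H_i$. In particular, when $\x_i\neq\emptyset$ we have $\p_{-1}(\x_i)=\p(\x_i)-\max_{j\in\x_i}p_j$, and $\p_{-2}(\x_i)=\p(S_i)$ whenever $|H_i|=2$. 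Item~(i) is then immediate: \cref{lem:earning-lb} with $\beta=\tfrac12$ gives $\p(\x_i)\ge\min\{\beta,1-\beta\}=\tfrac12$ for every $i\in N$.

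For item~(iii), an agent $i\in N_0$ has $\x_i=S_i\subseteq L$, so $\x_i\cap H=\emptyset$. The second alternative in \cref{lem:earning-ub} requires $|\x_i\cap H|=2$, so we must be in its first alternative, i.e.\ $\p_{-1}(\x_i)\le 1$. Combining this with $\p(\x_i)=\p_{-1}(\x_i)+\max_{j\in\x_i}p_j$ and $\max_{j\in\x_i}p_j\le\tfrac12$ (as $\x_i\subseteq L$) yields $\p(\x_i)\le 1+\tfrac12=\tfrac32$ (and $\p(\x_i)=0\le\tfrac32$ if $\x_i=\emptyset$).

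Item~(ii) needs one short preliminary step: no agent holds three or more chores from $H$. Indeed, if $|\x_i\cap H|\ge 3$, then for every $j^\ast\in\x_i$ the set $\x_i\setminus\{j^\ast\}$ still contains at least two chores of payment $>\tfrac12$, so $\p_{-1}(\x_i)>1$; this rules out the first alternative of \cref{lem:earning-ub}, whereas the second alternative asserts $|\x_i\cap H|=2$, a contradiction. Hence $i\in N_H$ satisfies $|H_i|\in\{1,2\}$. If $|H_i|=1$, then $\p_{-1}(\x_i)=\p(S_i)$, and since $|\x_i\cap H|=1\neq 2$ the second alternative of \cref{lem:earning-ub} is impossible, so the first gives $\p(S_i)=\p_{-1}(\x_i)\le 1$. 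If $|H_i|=2$, then $\p_{-2}(\x_i)=\p(S_i)$; the first alternative of \cref{lem:earning-ub} would give $\p(S_i)=\p_{-2}(\x_i)\le\p_{-1}(\x_i)\le 1$, and the second alternative gives $\p(S_i)=\p_{-2}(\x_i)\le 1-\beta=\tfrac12$, so $\p(S_i)\le 1$ in either case.

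I do not anticipate a genuine obstacle: the whole argument is bookkeeping layered on \cref{lem:earning-lb} and \cref{lem:earning-ub}. The only points demanding care are correctly matching the $\p_{-1}/\p_{-2}$ operators to the $H$-chores via the payment gap at $\tfrac12$, and the short argument that rules out three or more $H$-chores in a single bundle.
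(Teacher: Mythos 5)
Your proof is correct and takes essentially the same approach as the paper: the paper's own proof is a two-sentence citation of Lemmas~\ref{lem:earning-ub} and~\ref{lem:earning-lb} followed by ``the lemma then follows from the definitions,'' and your argument is exactly the careful unpacking of that sentence (the payment gap at $\tfrac12$ identifies the top chores with $H_i$, the bound $|H_i|\le 2$ is forced by the dichotomy in \cref{lem:earning-ub}, and the two cases $|H_i|\in\{1,2\}$ each give $\p(S_i)\le 1$). No gap.
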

\begin{proof}
For $\beta=\frac{1}{2}$, Lemmas~\ref{lem:earning-ub} and \ref{lem:earning-lb} imply that for all agents $i\in N$, $(\x,\p)$ satisfies $\p(\x_i) \ge \frac{1}{2}$, and either $\p_{-1}(\x_i) \le 1$, or $|\x_i \cap H| = 2$ and $\p_{-2}(\x_i)\le \frac{1}{2}$. The lemma then follows from the definitions of the partition of agents $N = N_H \sqcup N_0$ and the chores bundles $\x_i = S_i \sqcup H_i$. 
\end{proof}

\begin{algorithm}[!t]
\caption{Computes a $4$-EFX allocation}\label{alg:const-efx}
\textbf{Input:} Instance $(N,M,D)$ with $m\ge 2n$ and its ER equilibrium $(\y,\p)$ with $\beta = \frac{1}{2}$\\ 
\textbf{Output:} An integral allocation $\x$   
\begin{algorithmic}[1] 
\Statex \textit{--- Phase 1: Compute a $2$-\eftwo and \po allocation ---}
\State $(\x,\p) \gets$ \cref{alg:rounding}$(\y,\p)$
\State $L = \{j \in M: p_j \le \frac{1}{2}\}$, $H = \{j \in M: p_j > \frac{1}{2}\}$ \Comment{Low, High paying chores}
\State Partition each $\x_i = S_i \cup H_i$, where $S_i \subseteq L$ and $H_i \subseteq H$
\State $N_H \gets \{i\in N: H_i \neq \emptyset\}$, and $N_0 \gets \{i\in N: H_i = \emptyset\}$. 
\Statex \textit{--- Phase 2: Re-allocate $H$ ---}
\State Re-order agents s.t. agents in $N_H$ are ordered before agents in $N_0$
\State $\z' \gets $ \efx allocation of $H$ to $N$ using \cref{alg:efx-small}, with agents ordered as above
\State Partition agents into $N_L, N_H^1, N_H^2$ using \cref{def:classification-efx}
\State $H'\gets \cup_{i\in N_H^1} \z'_i$
\State $(\z,\q) \gets$ Min cost matching of $H'$ to $N_H^1$, and associated dual variables
\State For each $i\in N_H^1$, $\x_i \gets S_i \cup \z_i$
\State For each $i\notin N_H^1$, $\x_i \gets S_i \cup \z'_i$
\Statex \textit{--- Phase 3: Perform $(i,\ell)$ swaps for $i\in N_H^1$ ---}
\While{$\exists i\in N_H^1$ not $4$-\efx}
\State $i\gets\arg\min\{\q(\z_h) : h\in N_H^1 \text{ not $4$-\efx}\}$
\State $\ell \gets \arg\min\{d_i(\x_h) : h\in N\}$
\State Perform $(i,\ell)$ swap: $\x_i \gets \x_i \cup \x_\ell \setminus \z_i$, $\x_\ell \gets \z_i$
\EndWhile
\State \Return $\x$
\end{algorithmic}
\end{algorithm}

Notice that the above lemma implies that agents in $N_0$ are $3$-EFX. Thus, $\x$ may not be $O(1)$-\efx only because of agents in $N_H$, who are assigned one or more high paying chores. Therefore, our algorithm must address the $O(1)$-\efx envy of agents in $N_H$.

\paragraph{Chore swaps.} For simplicity, let us assume for the moment that all agents in $N_H$ are assigned a single high paying chore. Consider an agent $i$ with $H_i = \{j_i\}$, who is not $4$-\efx in $\x$. To fix the $4$-\efx envy that $i$ has towards other agents, we re-introduce the idea of a \textit{`chore swap'}:

\begin{definition}\label{def:swap}
Consider an allocation $\x$ in which an agent $i\in N_H$ is not $4$-\efx. Let $j_i$ be the high paying chore in $\x_i$. Let $\ell$ be the agent who $i$ envies the most, i.e. $\ell = \arg\min\{h\in N: d_i(\x_h)\}$. An $(i,\ell)$ swap on the allocation $\x$ results in an allocation $\x'$ obtained by transferring all the chores of $\ell$ to $i$, and transferring the chore $j_i$ from $i$ to $\ell$. That is, $\x'_i = \x_i \cup \x_\ell \setminus \{j_i\}$, $\x_\ell = \{j_i\}$, and $\x'_h = \x_h$ for all $h\neq \{i,\ell\}$. 
\end{definition}

Note the similarity to the definition of chore swaps involved in \cref{alg:efx-small}. Similar to the analysis in \cref{lem:efx-small-invariants}, we claim that immediately after the $(i,\ell)$ swap, $i$ is $4$-\efx towards all agents. To see this, let us scale the disutility function of each agent so that every agent has MPB ratio 1. This allows us to measure payments and disutilities on the same scale. Let $\x'$ be the allocation resulting from an $(i,\ell)$ swap on $\x$. Note that $\x'_i = \x_i \cup \x_\ell \setminus \{j_i\} = S_i \cup \x_\ell$, and $\x'_\ell = \{j_i\}$, while $\x'_h = \x_h$ for all $h\neq \{i, \ell\}$. Since $\p(S_i) \le 1$ and $\p(\x_\ell) \ge \frac{1}{2}$, we have that $d_i(\x'_i) \le 3\cdot d_i(\x_\ell)$. By the choice of $\ell$, for any $h\neq\{i,\ell\}$, $d_i(\x_\ell) \le d_i(\x_h) = d_i(\x'_h)$. Thus $d_i(\x'_i) \le 3\cdot d_i(\x'_h)$, showing that $i$ does not $3$-\efx envy (and hence $4$-\efx envy) agent $h$ after the swap. Similarly, the fact that $i$ is $4$-\efx envious of the bundle $\x_\ell$ establishes a lower bound on the disutility of $j_i$ for $i$, which we can use to prove that $i$ will not $4$-\efx envy $\x'_\ell = \{j_i\}$ after the swap. Moreover, agent $\ell$ is \efx after the swap since she has a single chore.

In conclusion, after an $(i,\ell)$ chore swap, both agents $i$ and $\ell$ are $4$-\efx, i.e., the $O(1)$-\efx envy of agent $i$ is temporarily resolved. The above idea suggests repeatedly performing chore swaps until the allocation is $O(1)$-\efx. However, two things remain unclear: (i) how to address agents in $N_H$ with two high paying chores, and  (ii) whether an agent $i$ who underwent a swap develops \efx-envy subsequently in the run of the algorithm. Our algorithm addresses both these issues by separately treating high paying chores and the agents to whom they are assigned, and using clever design choices. 

\paragraph{Re-allocating high paying chores.} Observe that $\p(S_i) \le O(1)\cdot p_j$ for any agent $i$ and high paying chore $j\in H$. This means that for any agent $i$, the chores in $S_i$ have cumulatively less payment than \textit{any} single high paying chore, up to a constant factor. Thus we should `balance' out the envy created among the agents due to an imbalanced allocation of the high paying chores. To do this, we compute an \efx allocation $\z'$ of the high paying chores $H$ using \cref{alg:efx-small}. This is possible since there are at most $2n$ high paying chores, i.e., $|H| \le 2n$ as each agent has at most two high paying chores in the rounded allocation. In our invocation of \cref{alg:efx-small}, \textit{we order the agents in $N_H$ to appear before the agents in $N_0$}. 

We then classify the agents based on the EFX allocation $\z'$ as follows:
\begin{definition}\label{def:classification-efx}(Classification of Agents in \cref{alg:const-efx})
Agents are classified as:
\begin{itemize}
\item $N_L = \{i\in N: |\z'_i| = 0\}$, i.e. agents with no high paying chores.
\item $N_H^1 = \{i\in N: |\z'_i| = 1\}$, i.e. agents with a single high paying chore.
\item $N_H^2 = \{i\in N: |\z'_i| \ge 2\}$, i.e. agents with at least two high paying chores.
\end{itemize}
\end{definition}
Note that agents in $N_H^2$ can have more than two high paying chores since we re-allocated the high paying chores via the \efx allocation $\z'$. However, since $\z'$ is \efx, it cannot be that both $N_H^2 \neq \emptyset$ and $N_L \neq \emptyset$. 

More importantly, by ordering the agents in $N_H$ before agents in $N_0$, we can leverage \cref{thm:efx-small} to show that $N_H^2 \subseteq N_H$. That is, if an agent $i$ obtains two or more high paying chores after re-allocating the high paying chores, then $i$ must have had a high paying chore to begin with. Recall from \cref{lem:initial-alloc} that the earning of such agents is at most $1$ from chores in $L$, i.e., that $\p(S_i)\le 1$ for each such agent $i\in N_H$. This property is useful to bound the total EFX-envy of agents in $N_H^2$.

Having re-allocated the high paying chores $H$, we add back the chores from $L$ to obtain the allocation $\x'$ given by $\x'_i = S_i \cup \z'_i$ for all agents $i$. We note that each agent agent $i\in N_H^2$ is actually $O(1)$-\efx in $\x'$: since $\p(S_i) \le O(1)\cdot \p(\z'_i)$, we have that $d_i(\x'_i) \le O(1)\cdot d_i(\z'_i)$. Since $\z'$ is \efx and $i$ has at least two high paying chores, we have that $d_i(\z'_i) \le 2\cdot d_i(\z'_k)$ for all $k$. Thus $d_i(\x'_i) \le O(1)\cdot d_i(\x'_k)$. This proves a surprising property of the allocation $\x'$: the agents in $N_H^2$ who have two or more high paying chores are actually $O(1)$-\efx! The \efx re-allocation of the $H$ chores thus leaves us to tackle the agents in $N_H^1$ with exactly one high paying chore. For these agents, we use chore swaps as described earlier. 

\paragraph{Performing swaps involving $N_H^1$ agents.}
Consider an $(i,\ell)$ swap between an agent $i\in N_H^1$ who was not $4$-\efx and the agent $\ell$ who $i$ envied the most. We argued that after the swap $D_i \le 4\cdot d_i(j_i)$, where $D_i$ is the disutility of $i$ after the swap, and $j_i\in H$ is the high paying chore of $i$ that was transferred to $\ell$. Consider a subsequent swap $(h, k)$ between $h\in N_H^1$ and $k\in N$, after which the high paying chore $j_h \in H$ of agent $h$ is (the only chore) assigned to $k$. Roughly speaking, since $i$ was $4$-\efx after the $(i,\ell)$ swap, $i$ does not $4$-\efx envy $k$'s bundle before the swap. Hence, $i$ will not envy $h$ after the $(h,k)$ swap. However, it could happen that $i$ develops $O(1)$-\efx envy towards $k$ after the $(h,k)$ swap, if the swaps are made arbitrarily. But observe that if $i$'s disutility for $j_h$ is at least that of $j_i$, then we will have that $D_i \le 4\cdot d_i(j_i) \le 4\cdot d_i(j_h)$, showing that $i$ will not $4$-\efx envy $k$, who has $j_h$ after the $(h,k)$ swap. This observation suggests that we can avoid agents who participate in a swap from becoming envious again \textit{by performing swaps in a carefully chosen order}. This order depends on the disutilities of $N_H^1$ agents for the set $H'$ of high paying chores assigned to them, i.e., $H' = \cup_{i\in N_H^1} \z'_i$. 

To determine this order, we re-allocate $H'$ to the $N_H^1$ agents by computing an fPO allocation $(\z, \q)$, where every agent in $N_H^1$ gets exactly one chore of $H'$ in $\z$, and $\q$ is the set of supporting payments. We show in \cref{lem:overpaying-fpo} that such an allocation $\z$ can be found by solving a linear program for minimum cost matching, and the payments $\q$ can be computed from the dual variables of this program. \cref{alg:const-efx} then performs chore swaps in the following order: at each time step $t$, among all the agents in $N_H^1$ who are not $4$-\efx, we pick the agent $i$ with the high paying chore with the minimum payment $\q(\z_i)$, and perform an $(i, \ell)$ swap. An involved analysis shows that this design choice ensures \cref{alg:const-efx} does not cause an agent in $N_H^1$ to re-develop $4$-\efx envy. \cref{alg:const-efx} thus terminates in at most $n$ steps.

With the above ideas, we argue that the resulting allocation $\x$ is $4$-\efx: (i) agents in $N_H^1$ are addressed via swaps, (ii) agents in $N_H^2$ remain $4$-\efx during swaps, and (iii) agents in $N_L$ are $3$-\efx since their earning is at most $\frac{3}{2}$ and every agent has an earning of at least $\frac{1}{2}$; the latter property is maintained during every swap as well. Thus, \cref{alg:const-efx} returns a $4$-\efx allocation.

\subsection{Computing a 4-EFX Allocation: Algorithm Analysis}\label{sec:efx-analysis}
We begin by providing a recap of \cref{alg:const-efx}. In Phase 1, we compute a $2$-\eftwo and \fpo allocation $(\x', \p)$ using the \cref{alg:rounding} algorithm. We normalize the disutilities so that the MPB ratio of each agent is $1$ for the payment vector $\p$. We then partition the chores of each agent $i$ as $\x'_i = S_i \cup H$, where $S_i \subseteq L$ contains low paying chores and $H_i\subseteq H$ contains high paying chores. This partitions the set of agents as $N = N_H \sqcup N_0$, where agents in $N_H$ receive one or two high paying chores and agents in $N_0$ receive none. In Phase 2, we re-allocate $H$ by computing an \efx allocation $\z'$ using \cref{alg:efx-small} with agents ordered as $N_H$ first followed by $N_0$. We then categorize agents into $N_H^1, N_H^2$ and $N_L$ depending on the number of $H$ chores they are assigned (see \cref{def:classification-efx}) in $\z'$. The following is a useful property of $\z'$. 

\begin{restatable}{lemma}{lemEFXNHTwo}\label{lem:efx-nh2}
With agents ordered as $N_H$ first followed by $N_0$, \cref{alg:efx-small} computes an EFX allocation $\z'$ of the high paying chores $H$ s.t. for each agent $i$ with $|\z'_i|\ge 2$, we have $\p(S_i) \le 1$. In other words, $N_H^2 \subseteq N_H$.
\end{restatable}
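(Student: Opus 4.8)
The plan is to read off the claim from the structural guarantee of \cref{alg:efx-small} (\cref{thm:efx-small}(ii)) after establishing the counting bound $|H| \le 2n$. I would start by noting that the re-allocation instance handed to \cref{alg:efx-small} has chore set $H$, and that $|H| \le 2|N_H| \le 2n$: in the rounded allocation $\x$ returned by \cref{alg:rounding} with $\beta=\tfrac12$, every chore of $H$ goes to some agent, agents of $N_0$ get none, and by \cref{lem:earning-ub} each agent of $N_H$ gets at most two chores of $H$, so $|H| = \sum_{i\in N_H}|H_i| \le 2|N_H|$. Thus \cref{alg:efx-small} is applicable and returns an EFX allocation $\z'$ of $H$.

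Next I would split on $|H|$. If $|H|\le n$, then \cref{alg:efx-small} runs only Phase 2 (its parameter $r$ equals $0$) and gives each agent exactly one chore, so no agent has $|\z'_i|\ge 2$ and the statement is vacuous ($N_H^2=\emptyset$). If $|H|>n$, put $r=|H|-n$; by \cref{thm:efx-small}(ii) every agent occupying a position beyond $r$ in the chosen ordering receives exactly one chore in $\z'$, so every agent $i$ with $|\z'_i|\ge 2$ sits in one of the first $r$ positions.

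The crux is then to show those first $r$ positions are occupied entirely by $N_H$-agents, which, since the ordering places $N_H$ before $N_0$, reduces to $r\le |N_H|$. This follows from the bound above: $r = |H|-n \le 2|N_H| - (|N_H|+|N_0|) = |N_H|-|N_0| \le |N_H|$. Hence every $i$ with $|\z'_i|\ge 2$ lies in $N_H$, i.e.\ $N_H^2\subseteq N_H$, and for each such $i$, \cref{lem:initial-alloc}(ii) gives $\p(S_i)\le 1$, which is the statement.

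I do not expect a genuine obstacle; the only points needing care are that the bound $|H|\le 2|N_H|$ should be secured before both invoking \cref{alg:efx-small} (which requires at most $2n$ chores) and deriving $r\le |N_H|$, and that the $|H|\le n$ case must be peeled off separately since \cref{thm:efx-small}(ii) is stated only for $m>n$.
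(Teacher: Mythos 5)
Your proposal is correct and takes essentially the same route as the paper: both split on whether $|H|\le n$, both invoke \cref{thm:efx-small}(ii) with $r=|H|-n$, and both rest on the counting bound $|H|\le 2|N_H|$ together with $n=|N_H|+|N_0|$. The only cosmetic difference is that you phrase the conclusion as $r\le |N_H|$ (the first $r$ positions are all $N_H$-agents) whereas the paper phrases the algebraically equivalent inequality $|N_0|\le 2n-|H|$ (the $N_0$-agents all sit in the last $n-r$ positions).
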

\begin{proof}
From \cref{lem:initial-alloc}, we know that for $i\in N_H$, we have $\p(S_i) \le 1$. Hence it suffices to argue that if $|\z'_i| \ge 2$, then $i\in N_H$. We will prove the contrapositive statement: if $i\in N_0$, then $|\z_i'| \le 1$. 

First note that if $|H| \le n$, then $|\z'_i| \le 1$ for all $i\in N$. If $|H| > n$, then \cref{thm:efx-small} shows that \cref{alg:efx-small} returns an EFX allocation in which agents with index greater than $r := |H| - n$ have a single chore. These are the last $2n - |H|$ agents in the order. Thus, it suffices to prove that $|N_0| \le 2n - |H|$, since the agents in $N_0$ appear last in our order. 

Clearly, $n = |N_H| + |N_0|$. Since agents in $N_H$ have exactly one or two high paying chores, we have $|H| \le 2\cdot |N_H|$. This gives $|N_0| + |H| \le |N_0| + 2\cdot |N_H| \le 2n$. This implies $|N_0| \le 2n-|H|$, which is what we aimed to show.

In conclusion, any agent $i\in N_0$ has $|\z'_i| \le 1$, thus proving the lemma.
\end{proof}

Next, we re-compute a matching $\z$ of the chores $H' = \cup_{i\in N_H^1} \z'_i$ to agents in $N_H^1$, and let $\q$ be a set of payments of chores in $H'$ such that $(\z,\q)$ is on MPB. The following lemma shows that such an allocation $(\z,\q)$ is computable in polynomial time.

\begin{restatable}{lemma}{lemOverpayingfPO}\label{lem:overpaying-fpo}
Given a chore allocation instance with $m = n$ chores, an MPB allocation $(\z,\q)$ where $|\z_i| = 1$ for each $i\in N$ can be computed in polynomial time.
\end{restatable}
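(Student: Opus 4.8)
The plan is to reduce the computation of an MPB allocation $(\z,\q)$ with $|\z_i|=1$ for all $i$ to solving a single minimum-cost perfect-matching linear program, and then extract the supporting payments from the LP's optimal dual solution. Since there are $m=n$ chores and we want each agent to receive exactly one chore, $\z$ is a perfect matching $\sigma: N \to M$ in the complete bipartite graph $N \times M$. First I would set up the LP that minimizes $\sum_{i,j} (\log d_{ij})\, x_{ij}$ over the bipartite perfect-matching polytope $\{x \ge 0 : \sum_j x_{ij} = 1\ \forall i,\ \sum_i x_{ij} = 1\ \forall j\}$; equivalently, one can minimize $\sum_{i,j} x_{ij}\log d_{ij}$, which makes the objective a sum over matched pairs of $\log d_{i\sigma(i)}$, i.e. it selects the matching minimizing $\prod_i d_{i\sigma(i)}$ (the min-Nash-product matching). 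By total unimodularity of the bipartite constraint matrix, this LP has an integral optimal solution, giving a perfect matching $\sigma$; this is computable in polynomial time (e.g. Hungarian algorithm / min-cost flow).

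Next I would write down the LP dual. Associating a variable $u_i$ with the constraint $\sum_j x_{ij}=1$ and $v_j$ with $\sum_i x_{ij}=1$, the dual is $\max \sum_i u_i + \sum_j v_j$ subject to $u_i + v_j \le \log d_{ij}$ for all $i,j$. Let $(u^*,v^*)$ be an optimal dual solution; complementary slackness gives $u^*_i + v^*_j = \log d_{ij}$ whenever $x_{ij}>0$, i.e. on the matched edges of $\sigma$. Now define payments $q_j := e^{-v^*_j}$ (up to a global scaling, which is harmless for the MPB condition). Then for every chore $j$ and agent $i$ we have $\log d_{ij} - \log q_j = \log d_{ij} + v^*_j \ge u^*_i$, with equality on the matched edge; that is, $d_{ij}/q_j \ge e^{u^*_i}$ for all $j$, and $d_{i\sigma(i)}/q_{\sigma(i)} = e^{u^*_i}$. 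Hence the MPB ratio of agent $i$ under payments $\q$ is exactly $\alpha_i = e^{u^*_i}$, it is achieved by her assigned chore $\sigma(i)$, and so $(\z,\q)$ with $\z_i = \{\sigma(i)\}$ is on MPB. All quantities are polynomial-time computable since LP solving and dual extraction are polynomial.

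I expect the only mild subtlety — not really an obstacle — to be making the logarithm manipulation rigorous when one wants rational payments: rather than literally taking logs, one can argue directly that a matching maximizing (minimizing) the product $\prod_i d_{i\sigma(i)}$ admits supporting MPB prices, or equivalently invoke the standard fact that any social-welfare-maximizing assignment in the linear Fisher/assignment model is supported by competitive prices obtained from the dual of the assignment LP. If rational output is desired, one can first perturb or scale so that the argument goes through with prices obtained from the (rational) dual of the min-cost matching LP on the transformed costs; since the paper only needs polynomial-time computability of \emph{some} MPB allocation with one chore per agent, this is sufficient. The correctness of the MPB property is then immediate from complementary slackness as above.
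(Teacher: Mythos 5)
Your proposal is correct and follows essentially the same route as the paper: both set up the min-cost assignment LP with costs $\log d_{ij}$, invoke integrality of the bipartite matching polytope for an integral optimum, and use complementary slackness of the LP dual to define payments $q_j$ as an exponential of the chore's dual variable, which yields the MPB condition with the agent's dual variable giving the MPB ratio. The only differences are cosmetic sign conventions in writing the dual, plus your added aside about rational output.
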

\begin{proof}
We show that the required allocation $\z$ can be computed via the following linear program for finding a minimum cost matching.

\begin{equation}\label{eq:matching}
\begin{aligned}
\text{min} &\sum_{i\in N} \sum_{j\in M} x_{ij} \log d_{ij} \\
\forall j\in M: &\sum_{i\in N} x_{ij} = 1 \\
\forall i\in N: &\sum_{j\in M} x_{ij} = 1 \\
\forall i\in N, j\in M: &\quad  x_{ij} \ge 0.
\end{aligned}
\end{equation}
Note that the objective is well-defined since $d_{ij} > 0$ for all $i\in N, j\in M$. Since the matching polytope is integral, there exists a integral optimal solution $\z$ with $|\z_i| = 1$ for all $i\in N$. We now show that we can compute chore payments $\q$ such that $(\z,\q)$ is on MPB by using dual variables of the above program. Let $\lambda_j$ and $\alpha_i$ be the dual variables corresponding to the constraints corresponding to chore $j$ and agent $i$ respectively. The stationarity KKT condition corresponding to the variable $x_{ij}$ implies:
\[ 
\log d_{ij} + \lambda_j + \alpha_i \ge 0.
\]
This implies that for all $i\in N$ and $j\in M$, $\frac{d_{ij}}{e^{-\lambda_j}} \ge e^{-\alpha_i}$. Moreover, the complementary slackness condition implies that the above inequality is an equality when $x_{ij} > 0$, i.e., $x_{ij} > 0 \Rightarrow \frac{d_{ij}}{e^{-\lambda_j}} = e^{-\alpha_i}$. We set the chore payments $\q$ as $q_j = e^{-\lambda_j} > 0$. The above observations then imply that $(\z,\q)$ is on MPB with $e^{-\alpha_i}$ denoting the MPB ratio of agent $i$.
\end{proof}

In Phase 3, we perform \textit{swaps} (\cref{def:swap}) involving agents in $N_H^1$ which are not $4$-\efx. In each such swap step, we pick the $4$-\efx envious agent $i\in N_H^1$ with the minimum $\q(\z_i)$. 

Let $\x$ be the allocation computed at the end of Phase 2, before any Phase 3 swaps are performed. For $t\in \Z_{\ge 0}$, we use the phrase `at time step $t$' to refer to the $t^{th}$ iteration of the while loop of \cref{alg:const-efx}, and use it interchangeably with `just before the swap at time step $t$'. Let $\x^t$ denote the allocation at time step $t$, with $\x = \x^0$.  We first prove a few basic invariants maintained by \cref{alg:const-efx}.

\begin{restatable}{lemma}{lemEFXInvariants}\label{lem:invariants-nh2}
At any time step $t$ in the run of \cref{alg:const-efx}, we have:
\begin{enumerate}
\item[(i)] For any agent $i\in N$, $\p(\x_i^t) \ge \frac{1}{2}$.
\item[(ii)] If an agent $i\in N_H^2\cup N_L$ has participated in a swap at time $t'<t$, then $i$ is \efx at $t$.
\item[(iii)] Any agent $i \in N_H^2$ is $4$-\efx. 
\item[(iv)] Any agent $i \in N_L$ is $3$-\efx. 
\end{enumerate}
\end{restatable}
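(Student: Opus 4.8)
The plan is to establish all four invariants \emph{simultaneously} by induction on the time step $t$, since they are intertwined: the earning lower bound (i) is what converts the payment upper bound $\p(S_i)\le\tfrac32$ into the $3$-\efx guarantee (iv), the ``single chore after participating'' fact (ii) is exactly what lets us dismiss an agent of $N_H^2\cup N_L$ once it gets involved in a swap, and (iii) relies on (i) and (ii) holding up to step $t$.

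\emph{Base case $t=0$.} Here $\x^0_i=S_i\cup\z_i$ for $i\in N_H^1$ and $\x^0_i=S_i\cup\z'_i$ otherwise. For (i): an agent of $N_H^1$ holds the single high-paying chore $\z_i$ with $p_{\z_i}>\tfrac12$; an agent of $N_H^2$ holds at least two chores of $H$, each paying more than $\tfrac12$; for $i\in N_L$ one first checks $N_L\subseteq N_0$ --- since $\z'$ is \efx, $N_L\neq\emptyset$ forces $N_H^2=\emptyset$ and hence $|H|\le n$, and because the agents of $N_H$ were placed first in the call to \cref{alg:efx-small} and $|N_H|\le|H|$ (the $H_i$ are disjoint and nonempty for $i\in N_H$), every agent of $N_H$ receives a chore in $\z'$; thus $\x^0_i=S_i=\x_i$ and \cref{lem:initial-alloc}(i) applies. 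Invariant (ii) is vacuous. For (iii), take $i\in N_H^2$ and $j^*=\arg\min_{j\in\z'_i}d_{ij}$; as $|\z'_i|\ge2$ we have $d_i(\z'_i)\le 2\,d_i(\z'_i\setminus\{j^*\})$, and \efx of $\z'$ gives $d_i(\z'_i\setminus\{j^*\})\le d_i(\z'_h)$ for every $h$ (each $\z'_h$ is nonempty since $N_L=\emptyset$). Using $\p(S_i)\le1$ from \cref{lem:efx-nh2}, $d_i(S_i)=\p(S_i)$ under the MPB normalization, and $d_i(\z'_h)\ge\p(\z'_h)>\tfrac12$ (MPB and $\z'_h\subseteq H$), we obtain $d_i(\x^0_i)\le 1+2\,d_i(\z'_h)<4\,d_i(\z'_h)\le 4\,d_i(\x^0_h)$. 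For (iv), $i\in N_L\subseteq N_0$ has $d_i(\x^0_i)=\p(S_i)\le\tfrac32$ by \cref{lem:initial-alloc}(iii) while $d_i(\x^0_h)\ge\p(\x^0_h)\ge\tfrac12$ by (i).

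\emph{Inductive step, easy invariants.} Suppose the invariants hold at $t$, and let $(i,\ell)$ be the swap performed at step $t$, with $i\in N_H^1$ not $4$-\efx and $\ell=\arg\min_h d_i(\x^t_h)$; only $i$ and $\ell$ change their bundles, to $\x^{t+1}_i=S_i\cup\x^t_\ell$ and $\x^{t+1}_\ell=\z_i$. Invariant (i) at $t+1$: $\p(\x^{t+1}_i)\ge\p(\x^t_\ell)\ge\tfrac12$, $\p(\x^{t+1}_\ell)=p_{\z_i}>\tfrac12$, and all other agents are unchanged. Invariant (ii): an agent of $N_H^2\cup N_L$ can only participate as the receiver (the initiator is always in $N_H^1$), the receiver's new bundle is the single chore $\z_i$ and hence \efx, and any later swap that touches such an agent again merely replaces its single chore by another single chore. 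Invariant (iv): for $i'\in N_L$ that never participated, $\x^{t+1}_{i'}=\x^0_{i'}=S_{i'}$, so $d_{i'}(\x^{t+1}_{i'})=\p(S_{i'})\le\tfrac32\le 3\,d_{i'}(\x^{t+1}_h)$ using invariant (i) already proved at $t+1$; if $i'$ did participate, (ii) gives \efx.

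\emph{The main obstacle} is invariant (iii) in the inductive step, for an agent $i'\in N_H^2$ that has not participated, so $\x^{t+1}_{i'}=\x^0_{i'}=S_{i'}\cup\z'_{i'}$. For any reference agent $h$ whose bundle has not shrunk to a single chore, $4$-\efx of $i'$ towards $h$ follows from invariant (iii) at $t$; this covers $h=i$ as well, because $\x^{t+1}_i\supseteq\x^t_\ell$ and $d_{i'}(\x^0_{i'})\le 4\,d_{i'}(\x^t_\ell)$ by (iii) at $t$. The remaining --- and delicate --- case is $h=\ell$, whose bundle is now the single chore $\z_i$. The crucial point is that $\z_i$ is the chore the matching $\z$ assigns to the $N_H^1$-agent $i$, so $\z_i$ lies in $\z'_{h_0}$ for some $h_0\in N_H^1$, and since $|\z'_{h_0}|=1$ we in fact have $\z'_{h_0}=\{\z_i\}$. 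Instantiating the \efx inequality for $\z'$ at $h=h_0$ then gives $d_{i'}(\z'_{i'})\le 2\,d_{i'}(\z'_{h_0})=2\,d_{i'}(\z_i)$; moreover $\z_i\in H$ lies outside $\x^0_{i'}$ (it is in the $\z'$-bundle of an $N_H^1$ agent, disjoint from that of $i'\in N_H^2$), so the MPB condition gives $d_{i'}(\z_i)\ge p_{\z_i}>\tfrac12$. Combining, $d_{i'}(\x^{t+1}_{i'})=\p(S_{i'})+d_{i'}(\z'_{i'})\le 1+2\,d_{i'}(\z_i)<4\,d_{i'}(\z_i)=4\,d_{i'}(\x^{t+1}_\ell)$, closing the induction. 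The reason this is the hard step is that it requires tracking precisely which chore reaches the receiver and exploiting that it originated from a \emph{singleton} $\z'$-bundle, in order to upgrade the crude $\p(S_{i'})\le1$ bound and the \efx property of $\z'$ into the required $4$-\efx guarantee.
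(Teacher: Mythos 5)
Your proof is correct and follows essentially the same strategy as the paper's: normalize to MPB ratio $1$, use the lower bound $\p(\x^t_h)\ge\tfrac12$ and the \efx property of $\z'$, and resolve the one genuinely delicate point by tracking that every bundle arising in Phase~3 contains an entire $\z'$-bundle (for $h=\ell$ this is $\{\z_i\}=\z'_{h_0}$, exactly the paper's observation ``$\x^t_h\supseteq\z'_k$ for some $k$''). The only structural difference is that you run a genuine induction on $t$ with a case split on the reference agent $h$, whereas the paper states the bundle-containment property once and derives (iii) directly at any fixed $t$; the two are mathematically equivalent. One small bonus of your write-up: you spell out why $N_L\subseteq N_0$ (so that $\p(S_i)\ge\tfrac12$ for $i\in N_L$ follows from \cref{lem:initial-alloc}(i)), a step the paper's argument for (i) leaves implicit.
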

\begin{proof} 
We first prove claim (i) by an inductive argument. For the allocation $(\x',\p)$ obtained by rounding the ER equilibrium, we have that $\p(\x'_i)\ge \frac{1}{2}$. This remains true at time step $t=0$ after the high paying chores are re-allocated in Phase 2, since each high paying chore pays at least $\frac{1}{2}$. Suppose claim (i) holds at time step $t$ before an $(i,\ell)$ swap takes place. After the swap, $\ell$ is assigned a high paying chore, hence $\p(\x^{t+1}_\ell) > \frac{1}{2}$. Moreover, since $i$ receives chores earlier assigned to $\ell$, we have $\p(\x^{t+1}_i) \ge \p(\x^t_\ell) \ge \frac{1}{2}$ using the inductive hypothesis at time $t$. Thus, claim (i) holds at every time step in the run of the algorithm.

For claim (ii), observe that an agent $i\in N_H^2 \cup N_L$ can only participate in a swap of the form $(h, i)$, where $h\in N_H^1$. Then $i$ is \efx immediately after the swap since $i$ is assigned a single chore. This remains true even after subsequent swaps that $i$ participates in, and hence $i$ remains \efx at any time step of the algorithm. 

Since the allocation of high paying chores $\z'$ is \efx, we have that $N_H^2 \neq \emptyset$ and $N_L \neq \emptyset$ cannot both be true. We first assume $N_H^2 \neq \emptyset$ and prove claim (iii). In this case, $N_L = \emptyset$. Consider an agent $i\in N_H^2$. If $i$ participated in an $(h,i)$ swap before $t$, then $i$ is \efx at $t$ due to claim (ii). Hence we assume that $i\in N_H^2$ did not participate in any swap, and thus $\x_i^t = S_i \cup \z'_i$. Let $j_0 = \arg\min_{j\in\z'_i} d_i(j)$. Since $i\in N_H^2$, $\z'_i\setminus \{j\} \neq \emptyset$. Consider any other agent $h\in N$ at time step $t$. We have that $\x_h^t \supseteq \z'_k$ for some $k\in N$ since an agent participating in a chore swap always swaps all of her high paying chores. We now show that $i$ is $4$-\efx towards $h$ as follows.
\begin{equation*}
\begin{aligned}
d_i(\x_i^t) &= d_i(S_i) + d_i(\z'_i) & \text{(since $i$ did not undergo any swap)}\\
            &= \p(S_i) + d_i(\z'_i \setminus \{j_0\}) + d_i(j_0) & \text{(using the MPB condition)}\\
            &\le \p(S_i) + 2\cdot d_i(\z'_i \setminus \{j_0\}) &\text{(since $\z'_i\setminus\{j_0\} \neq \emptyset$ and $j_0 = \arg\min_{j\in\z'_i} d_i(j))$}\\
            &\le \p(S_i) + 2\cdot d_i(\z'_k) &\text{(since $\z'$ is \efx)}\\
            &\le 1 + 2\cdot d_i(\z'_k) &\text{(since $\p(S_i) \le 1$ using \cref{lem:efx-nh2})}\\
            &\le 4\cdot d_i(\z'_k) &\text{(using $\p(\z'_k) > \frac{1}{2}$ since $\z'_k\subseteq H$)} \\
            &\le 4\cdot d_i(\x_h^t) &\text{(since $\x_h^t \supseteq \z'_k$ )}.
\end{aligned}
\end{equation*}

Finally, we assume $N_L\neq\emptyset$ and prove claim (iv). In this case, $N_H^2 = \emptyset$. Consider an agent $i\in N_L$. As before, if $i$ participated in a swap at a time before $t$, then $i$ is \efx at $t$ due to claim (ii). Hence we assume that $i\in N_L$ did not participate in any swap, and thus $\x_i^t = S_i$. Consider any other agent $h\in N$ at time step $t$. We show that $i$ is $3$-\efx towards $h$ as follows.
\begin{equation*}
\begin{aligned}
d_i(\x_i^t) &= \p(S_i) &\text{(since $\x_i^t = \x'_i = S_i$ is on MPB)}\\
            &\le \frac{3}{2} &\text{(since $\p(S_i) \le 1$ using \cref{lem:efx-nh2})} \\
            &< 3 \cdot \p(\x_h^t) &\text{(since claim (ii) shows $\p(\x^h_t) > \frac{1}{2}$)}\\
            &\le 3 \cdot d_i(\x_h^t). &\text{(using the MPB condition)}
\end{aligned}
\end{equation*}
This proves the lemma.
\end{proof}

The above lemma shows that $N_H^2$ and $N_L$ agents are $4$-\efx and we need to address the $N_H^1$ agents. Let us examine the change in disutility of agent $i\in N_H^1$ after an $(i,\ell)$ swap.

\begin{restatable}{lemma}{lemAfterSwapConst}\label{lem:after-swap-const}
Suppose $i\in N_H^1$ participates in an $(i,\ell)$ swap for the first time at time step $t$. Then, $d_i(\x_i^{t+1}) < 4\cdot d_i(\z_i)$, and  $i$ is $4$-\efx immediately after the swap. 
\end{restatable}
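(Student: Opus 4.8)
The plan is to mimic the chore-swap analysis already sketched in the overview (the displayed computations in Section~\ref{sec:efx-overview}), now with $\x_i^t = S_i \cup \z_i$ where $\z_i = \{j_i\}$ is the single high-paying chore assigned to $i$ by the minimum-cost matching $(\z,\q)$. First I would record the key facts available at time step $t$: since $i$ has not participated in any swap before $t$ (this is its first swap), we have $\x_i^t = S_i \cup \z_i$ with $\p(S_i) \le 1$ (from Lemma~\ref{lem:initial-alloc}(ii), as $i\in N_H^1\subseteq N_H$ — or more carefully, $N_H^1$ agents inherit the $\p(S_i)\le 1$ bound); that $i\in N_H^1$ is not $4$-\efx at time $t$, so $\max_{j\in \x_i^t} d_i(\x_i^t\setminus\{j\}) > 4\cdot d_i(\x_\ell^t)$ for $\ell = \arg\min_h d_i(\x_h^t)$; that $\p(\x_\ell^t) \ge \frac{1}{2}$ by Lemma~\ref{lem:invariants-nh2}(i); and that the MPB condition lets us equate $d_i(T) = \alpha_i\cdot\p(T) = \p(T)$ (after scaling $\alpha_i = 1$) for any bundle $T$ of $i$'s MPB chores. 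Note $\z_i$ is MPB for $i$ since $(\z,\q)$ is an MPB allocation — though I should double-check the payment scale: $(\z,\q)$ uses payments $\q$, not $\p$, so I would argue that $\z_i$ being a high-paying chore ($\z_i\subseteq H'\subseteq H$) still gives $d_i(\z_i)$ a useful lower bound, or alternatively carry $d_i(\z_i)$ symbolically and only use $d_i(S_i) = \p(S_i) \le 1$ and $\p(S_i) \le 2\cdot p_{j_i}$-type inequalities. Actually the cleanest route: after the swap $\x_i^{t+1} = S_i \cup \x_\ell^t$, so
\begin{align*}
d_i(\x_i^{t+1}) &= d_i(S_i) + d_i(\x_\ell^t) = \p(S_i) + d_i(\x_\ell^t) \le 1 + d_i(\x_\ell^t) \\
&\le 2\alpha_i\p(\x_\ell^t) + d_i(\x_\ell^t) \le 2 d_i(\x_\ell^t) + d_i(\x_\ell^t) = 3 d_i(\x_\ell^t),
\end{align*}
using $\p(S_i)\le 1$, $\p(\x_\ell^t)\ge\frac12$, and MPB. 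This already gives $d_i(\x_i^{t+1}) \le 3 d_i(\x_\ell^t) \le 3 d_i(\x_h^t) = 3 d_i(\x_h^{t+1})$ for every $h \neq i,\ell$, so $i$ is $3$-\efx (hence $4$-\efx) towards all agents except possibly $\ell$.

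The second step handles envy towards $\ell$, whose new bundle is $\x_\ell^{t+1} = \z_i = \{j_i\}$. Here I would use that $i$ was \emph{not} $4$-\efx towards $\ell$ before the swap, i.e. $d_i(S_i) > 4 d_i(\x_\ell^t)$ (taking the max over $j\in\x_i^t$ — since removing any single low-paying chore from $S_i\cup\{j_i\}$ still leaves at least $d_i(S_i)$ if $S_i\neq\emptyset$; and if $S_i = \emptyset$ the claim $d_i(\x_i^{t+1}) < 4 d_i(\z_i)$ is immediate since $\x_i^{t+1} = \x_\ell^t$ and $d_i(\x_\ell^t) < d_i(j_i)/4$ would contradict... actually I need to be careful about which chore realizes the max). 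The intended chain, following the analogous $3$-\efx computation in the introduction, is:
\begin{align*}
d_i(\x_i^{t+1}) &= d_i(S_i) + d_i(\x_\ell^t) < d_i(S_i) + \tfrac{1}{4}\max_{j\in\x_i^t}d_i(\x_i^t\setminus\{j\}) \\
&\le d_i(S_i) + \tfrac14 d_i(\x_i^t) = \tfrac54 d_i(S_i) + \tfrac14 d_i(j_i).
\end{align*}
Then I would bound $d_i(S_i) = \p(S_i) \le 1 \le 2 p_{j_i} < 2\cdot(\text{something})\cdot d_i(j_i)$; the clean inequality here uses $p_{j_i} > \frac12$ (since $j_i\in H$) so $\p(S_i)\le 1 < 2 p_{j_i}$, giving $d_i(S_i) = \alpha_i \p(S_i) < 2\alpha_i p_{j_i} = 2 d_i(j_i)$. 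Substituting, $d_i(\x_i^{t+1}) < \frac54\cdot 2 d_i(j_i) + \frac14 d_i(j_i) = \frac{11}{4} d_i(j_i) < 4 d_i(j_i) = 4 d_i(\z_i) = 4 d_i(\x_\ell^{t+1})$. Combined with the first step, $i$ is $4$-\efx immediately after the swap, and the bound $d_i(\x_i^{t+1}) < 4 d_i(\z_i)$ is exactly the claimed inequality.

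The main obstacle I anticipate is the bookkeeping around \emph{which} payment vector governs $\z_i$ and the fact that the max in the non-$4$-\efx hypothesis is over all $j\in\x_i^t = S_i\cup\{j_i\}$, not just over removing $j_i$: I need $\max_{j}d_i(\x_i^t\setminus\{j\}) \le d_i(\x_i^t)$, which is trivially true, so the chain above only ever needs the crude bound $\max_j d_i(\x_i^t\setminus\{j\}) \le d_i(\x_i^t)$ together with the hypothesis $\max_j d_i(\x_i^t\setminus\{j\}) > 4 d_i(\x_\ell^t)$ — these two combine to give $d_i(\x_\ell^t) < \frac14 d_i(\x_i^t)$, which is all I used. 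The other subtlety is that $\p(S_i)\le 1$ for $i\in N_H^1$: I would verify $N_H^1$ agents satisfy this either because $N_H^1\subseteq N_H$ after the re-ordering (analogous to Lemma~\ref{lem:efx-nh2}, an agent getting a high-paying chore in $\z'$ with $\p(S_i) > 1$ would have to be in $N_0$, but $N_0$ agents with $\p(S_i) > 1$ get no chore in $\z'$ under the improved construction mentioned at the end of Section~\ref{sec:efx-overview}), or by appealing directly to whichever lemma establishes the refined $\p(S_i)\le 1$ guarantee for the re-allocation. Modulo confirming that $\p(S_i) \le 1$ holds for every $i\in N_H^1$, the rest is the routine two-case MPB computation above.
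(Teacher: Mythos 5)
Your route is essentially the paper's: the same decomposition $\x_i^{t+1} = S_i\cup\x_\ell^t$, the same two-case split (envy towards $h\neq\ell$ via $\p(\x_\ell^t)\ge\tfrac12$; envy towards $\ell$ via the not-$4$-\efx hypothesis), and the same MPB bookkeeping. The payment-scale worry you raised about $(\z,\q)$ is a non-issue: the only fact needed about $\z_i = \{j_i\}$ is $d_i(j_i) > \tfrac12$, which comes from $j_i\in H$ and the MPB condition under $\p$ (every chore $j$ satisfies $d_{ij}\ge\alpha_i p_j = p_j$ after normalizing $\alpha_i = 1$, whether or not $j$ is MPB for $i$).

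The one genuine gap is exactly the one you flagged and then tried to paper over: $\p(S_i)\le 1$ does \emph{not} hold for all $i\in N_H^1$. Lemma~\ref{lem:efx-nh2} only gives $N_H^2\subseteq N_H$; an agent in $N_H^1$ can perfectly well have been in $N_0$ before the re-allocation (it received one chore in $\z'$ that it didn't have in $\x'$), in which case the applicable bound from Lemma~\ref{lem:initial-alloc}(iii) is $\p(S_i)\le\tfrac32$, not $\le 1$. The remark at the end of the overview about agents with $\p(S_i)>1$ receiving at most one chore in $\z'$ is precisely what makes them land in $N_H^1\cup N_L$ rather than $N_H^2$ --- so it cuts against, not for, the $\le 1$ claim. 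The paper uses $\p(S_i)\le\tfrac32$ throughout this proof.

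With that correction, both of your chains close with constant $4$ rather than the tighter numbers you computed: in the first chain, $\p(S_i)\le\tfrac32 \le 3\p(\x_\ell^t)$ gives $d_i(\x_i^{t+1})\le 4 d_i(\x_\ell^t)\le 4 d_i(\x_h^{t+1})$; in the second, $d_i(S_i)=\p(S_i)\le\tfrac32 < 3 d_i(j_i)$ gives $d_i(\x_i^{t+1}) < \tfrac54\cdot 3\, d_i(j_i) + \tfrac14 d_i(j_i) = 4 d_i(j_i) = 4 d_i(\z_i)$. So the lemma holds, the argument is yours with the constant repaired, and the $4$ is tight against these inequalities (which is why the theorem is $4$-\efx, not $3$-\efx, despite the $3$-\efx sketch in the introduction applying only to the simplified case where all $N_H$ agents have a single high-paying chore and $\p(S_i)\le 1$).
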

\begin{proof}
Since $i$ has not undergone a swap until time step $t$, we have $\x_i^t = S_i \cup \z_i$. By the definition of a swap, we have $\x_i^{t+1} = S_i \cup \x^t_\ell$, $\x^{t+1}_\ell = \z_i$, and $\x_h^{t+1} = \x_h^t$ for all $h\notin\{i,\ell\}$. Since $i$ is not $4$-\efx towards $\ell$ at time $t$, we know $d_i(\x^t_i) > 4\cdot d_i(\x^t_\ell)$. We prove the first part of the lemma using the above observations.
\begin{equation*}
\begin{aligned}
d_i(\x^{t+1}_i) &= d_i(S_i) + d_i(\x^t_\ell) \\
&< d_i(S_i) + \frac{d_i(\x_i^t)}{4} & \text{(using $d_i(\x^t_i) > 4\cdot d_i(\x^t_\ell)$)} \\
&= d_i(S_i) + \frac{d_i(S_i) + d_i(\z_i)}{4} & \text{(since $\x_i^t = S_i \cup \z_i$)} \\
&= \frac{5}{4}\cdot d_i(S_i) + \frac{1}{4}\cdot d_i(\z_i) \\
&= \frac{5}{4}\cdot\p(S_i) + \frac{1}{4}\cdot d_i(\z_i) & \text{(since $d_i(S_i) = \p(S_i)$ using the MPB condition)} \\
&< \frac{15}{4} \cdot d_i(\z_i) + \frac{1}{4}\cdot d_i(\z_i) & \text{(using $\p(S_i) \le \frac{3}{2}$ and $d_i(\z_i) > \frac{1}{2}$)} \\
&= 4\cdot d_i(\z_i).
\end{aligned}
\end{equation*}

By the choice of $\ell$, it holds that $d_i(\x^t_h) \ge d_i(\x^t_\ell)$ for $h \in N \setminus \{\ell\}$. We next prove that $i$ is $4$-\efx in the allocation $\x^{t+1}$ after the swap. 
\begin{itemize}[leftmargin=*]
\item $i$ is $4$-EFX towards $\ell$, since $d_i(\x^{t+1}_i) < 4\cdot d_i(\z_i)$ as argued previously and $\x^{t+1}_\ell = \z_i$.
\item $i$ is $4$-EFX towards an agent $h\in N \setminus \{\ell\}$, since:
\begin{equation*}
\begin{aligned}
d_i(\x^{t+1}_i) &= d_i(S_i) + d_i(\x^t_\ell) &\text{(since $\x_i^{t+1} = S_i \cup \x^t_\ell$)}\\
&= \p(S_i) + d_i(\x^t_\ell) &\text{(since $d_i(S_i) = \p(S_i)$ using the MPB condition)}\\
&\le \frac{3}{2} + d_i(\x^t_\ell) &\text{(using $\p(S_i) \le \frac{3}{2}$)} \\
&\le 3\cdot \p(\x^t_{\ell}) + d_i(\x^t_\ell) &\text{(since $\p(\x^t_\ell) \ge \frac{1}{2}$ by \cref{lem:invariants-nh2})} \\
&\le 4\cdot d_i(\x^t_\ell) &\text{(since the MPB condition implies $d_i(\x^t_\ell) \ge \p(\x^t_\ell)$)} \\
&\le 4\cdot d_i(\x^t_h) = 4\cdot d_i(\x^{t+1}_h). & \text{(by choice of $\ell$)}
\end{aligned}
\end{equation*} \qedhere
\end{itemize}
\end{proof}

The above lemma shows that an agent $i\in N_H^1$ is $4$-\efx immediately after the first $(i,\ell)$ swap she participates in. Next, we argue that such an agent cannot develop $4$-\efx envy again. The key idea is to choose among all $N_H^1$ agents who are not $4$-\efx, the agent $i$ with minimum $\q(\z_i)$. Let $\alpha_i$ denote the MPB ratio of $i$ in $(\z,\q)$. Note that $|\z_i| = 1$ for all $i\in N_H^1$. For an $(i,\ell)$ swap at time step $t$, we let $q_t = \q(\z_i)$ denote the payment of the high paying chore $\z_i$ transferred from $i$ to $\ell$. We now prove the following set of invariants of \cref{alg:const-efx}.

\begin{restatable}{lemma}{lemInvariantsConst}\label{lem:invariants-const}
At any time step $t$ in the while loop of \cref{alg:const-efx}, the following hold:
\begin{itemize}
\item[(i)] Every agent $i$ has participated in at most one $(i,\ell)$ swap until time $t$.
\item[(ii)] If an agent $i$ has participated in an $(i,\ell)$ swap at time $t'<t$, then $i$ is $4$-\efx at $t$.
\item[(iii)] If an agent $i\in N_H^1$ is not $4$-\efx at $t$, then $q(\z_i) \ge q_{t-1}$.
\item[(iv)] $q_t \ge q_{t-1} \ge \cdots \ge q_1 \ge q_0 := 0$.
\end{itemize}
\end{restatable}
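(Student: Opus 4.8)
The plan is to prove the four invariants \emph{simultaneously} by induction on the time step $t$, since they are mutually dependent: (iv) at a given step is immediate from (iii) at that step, (iii) relies on (ii) at the \emph{previous} step, (ii) relies on (iv) (and on (ii) at earlier steps), and (i) is a direct corollary of (ii). The base case is the moment just before the first swap, where no swap has occurred and $q_0 := 0$, so (i) and (ii) hold vacuously and (iii), (iv) are trivial. For the inductive step at time step $t$ I would establish the invariants in the order (iii) $\to$ (iv) $\to$ (i) $\to$ (ii). Throughout I normalize disutilities so that each agent's MPB ratio under $\p$ equals $1$ (hence $d_i(S_i)=\p(S_i)$, and since every $\z_i\in H$ has $p_{\z_i}>\tfrac{1}{2}$, also $d_i(\z_i)\ge p_{\z_i}>\tfrac{1}{2}$), and I let $\alpha_i$ denote agent $i$'s MPB ratio for the matching payments $\q$.

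\textbf{Invariant (iii), and its corollaries (iv), (i).} Let $j\in N_H^1$ be not $4$-\efx at time $t$, and let $(i_{t-1},\ell_{t-1})$ be the swap at step $t-1$, so $q_{t-1}=\q(\z_{i_{t-1}})$. By (ii) at step $t-1$, $j$ has not participated in any swap, so $\x_j=S_j\cup\z_j$ is unchanged. If $j$ was \emph{already} not $4$-\efx at step $t-1$, then $j$ was an eligible candidate there, and since \cref{alg:const-efx} picks the eligible agent minimizing $\q(\z_\cdot)$, we get $\q(\z_j)\ge q_{t-1}$. Otherwise $j$ was $4$-\efx at $t-1$ and the swap $(i_{t-1},\ell_{t-1})$ destroyed this; that swap only enlarges $\x_{i_{t-1}}$ (in $j$'s disutility, since its new bundle contains $\x_{\ell_{t-1}}^{t-1}$) and replaces $\x_{\ell_{t-1}}$ by the single chore $\{\z_{i_{t-1}}\}$, so the only possible violation is $\max_{c\in\x_j}d_j(\x_j\setminus c)>4\,d_j(\z_{i_{t-1}})$. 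But using $\p(S_j)\le\tfrac{3}{2}$ (\cref{lem:initial-alloc}) and $d_j(\z_j)>\tfrac{1}{2}$, a short computation gives $\max_{c\in\x_j}d_j(\x_j\setminus c)<4\,d_j(\z_j)$, so $d_j(\z_j)>d_j(\z_{i_{t-1}})$. Since $(\z,\q)$ is on MPB, $d_j(\z_j)=\alpha_j\q(\z_j)$ while $d_j(\z_{i_{t-1}})\ge\alpha_j\q(\z_{i_{t-1}})$, whence $\q(\z_j)>\q(\z_{i_{t-1}})=q_{t-1}$. In both cases $\q(\z_j)\ge q_{t-1}$, proving (iii). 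Then (iv) follows since the agent $i_t$ chosen at step $t$ is in $N_H^1$ and not $4$-\efx, so $q_t=\q(\z_{i_t})\ge q_{t-1}$ by (iii), and chaining with $q_0=0$ gives the full chain. And (i) follows because, by (ii), an agent that has once been in the first slot stays $4$-\efx forever, hence is never eligible again.

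\textbf{Invariant (ii).} Let $a$ have participated in a swap at some $t'<t$. If $a$ was in the second slot ($a=\ell_{t'}$), then $\x_a$ became a single chore, $a$ is EFX (hence $4$-\efx) and thereafter never eligible for the first slot, so $a$ is $4$-\efx at $t$. If $a$ was in the first slot, \cref{lem:after-swap-const} gives $d_a(\x_a^{t'+1})<4\,d_a(\z_a)$ and $4$-\efx-ness right after the swap; by (ii) at steps $<t$, $a$ is never re-selected into the first slot during $(t',t)$, so $\x_a$ stays equal to $\x_a^{t'+1}$ unless $a$ is later pulled into the second slot (which again leaves $a$ single-chored and $4$-\efx). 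So assume $\x_a$ is unchanged up to $t$. The only way $a$ could lose $4$-\efx-ness is some later swap $(h,k)$ at $t''\in(t',t)$ replacing $\x_k$ by $\{\z_h\}$ with $d_a(\x_a)>4\,d_a(\z_h)$, which (via $d_a(\x_a^{t'+1})<4\,d_a(\z_a)$) forces $d_a(\z_a)>d_a(\z_h)$. But $h$ was selected into the first slot at $t''>t'$, so by (iv) $\q(\z_h)=q_{t''}\ge q_{t'}=\q(\z_a)$, and MPB of $(\z,\q)$ then gives $d_a(\z_h)\ge\alpha_a\q(\z_h)\ge\alpha_a\q(\z_a)=d_a(\z_a)$, a contradiction. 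Hence $a$ remains $4$-\efx at $t$.

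\textbf{Main obstacle.} The delicate point is invariant (iii): a single swap must \emph{not} turn an agent with a small matching-payment $\q(\z_j)$ into a $4$-\efx-envious agent. This is precisely where ordering the swaps by $\q(\z_\cdot)$ earns its keep --- it converts the payment inequality $\q(\z_j)<q_{t-1}$ into the disutility inequality $d_j(\z_j)<d_j(\z_{i_{t-1}})$ through MPB of $(\z,\q)$, which contradicts the bound $\max_{c}d_j(\x_j\setminus c)<4\,d_j(\z_j)$ that the earning bounds of \cref{lem:initial-alloc} supply. Making the numerology line up ($\p(S_j)\le\tfrac{3}{2}$ against $d_j(\z_j)>\tfrac{1}{2}$) is exactly what pins the approximation factor at $4$, and I expect the bulk of the write-up to be this case analysis plus the bookkeeping of which agents' bundles can shrink (in a fixed agent's disutility) during a swap.
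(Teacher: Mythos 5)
Your proposal is, in substance, the paper's proof: simultaneous induction on $t$, Lemma~\ref{lem:after-swap-const} for the post-swap bound $d_i(\x_i^{t'+1}) < 4\,d_i(\z_i)$, the numerical chain from $\p(S_j)\le \tfrac{3}{2}$ and $d_j(\z_j)>\tfrac{1}{2}$, and the MPB condition of $(\z,\q)$ to translate between disutilities and matching payments. The place where you deviate from the paper --- and where a real gap appears --- is the within-step ordering and the stated dependency ``(iii) relies on (ii) at the \emph{previous} step.'' The paper proves (ii) before (iii) within a single inductive step precisely because (iii) at time $t$ needs (ii) at time $t$: to assert that an agent $j\in N_H^1$ who is not $4$-\efx at $t$ still has $\x_j = S_j\cup\z_j$, one must know $j$ has not participated in any swap, and it is (ii) at the \emph{current} step that rules out $j$ having participated earlier and later re-developing envy. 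Your sentence ``By (ii) at step $t-1$, $j$ has not participated in any swap'' does not follow: (ii) at $t-1$ governs $j$'s status at time $t-1$, and the swap at $t-1$ could in principle create fresh envy for an agent that had participated at $t'' < t-1$; excluding that is the content of (ii) at $t$, which your ordering proves afterward.

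Concretely, your case~1 of (iii) is fine, since $j$ not being $4$-\efx at $t-1$ together with (ii)@$t-1$ does force non-participation. But your case~2 ($j$ was $4$-\efx at $t-1$) is open: $j$ could have participated as a first-slot agent at some $t''<t-1$, in which case $\x_j^t = S_j\cup\x_{\ell_{t''}}^{t''}\ne S_j\cup\z_j$ and the $\p(S_j)\le \tfrac{3}{2}$ computation no longer applies. The fix is small but necessary: in that subcase invoke Lemma~\ref{lem:after-swap-const} directly to get $d_j(\x_j^t) = d_j(\x_j^{t''+1}) < 4\,d_j(\z_j)$, which is all the subsequent MPB argument needs; alternatively, switch to the paper's order $(i)\to(ii)\to(iii)\to(iv)$ within a step so that (ii)@$t$ is available when you prove (iii)@$t$. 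As written, the dependency analysis is not self-consistent, even though every inequality you invoke is correct.
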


\begin{proof}
We prove this by induction on $t$. Since no agent has participated in a swap before $t=1$ and $q_0 = 0$, claims (i)-(iv) are vacuously true at $t=1$. Suppose claims (i)-(iv) hold true at some time step $t \ge 1$. Consider a swap $(i,\ell)$ taking place at $t$. We prove that claims (i)-(iv) hold at time $(t+1)$ after the swap has taken place.
\begin{enumerate}
\item[(i)] Suppose $i$ has already participated in a swap at time $t'<t$, then claim (ii) of the induction hypothesis implies that $i$ is $4$-\efx at time $t$, contradicting the fact that an $(i,\ell)$ swap takes place at $t$. Thus $i$ participates in her first swap at $t$. 

\item[(ii)] We will prove that every agent $h$ who has participated in a swap at time $t' < (t+1)$ is $4$-\efx at $(t+1)$. We first consider the case of $h=i$. Note that \cref{lem:after-swap-const} implies that $i$ is $4$-\efx after the swap at $t$, i.e., $i$ is $4$-\efx at time $(t+1)$. 

Next we prove the claim for agents $h\neq i$. Suppose an agent $h\neq i$ participated in a swap of the form $(h,k)$ at time $t'<(t+1)$. Since $h\neq i$, and the $(i,\ell)$ swap takes place at time $t$, we know $t'<t$. We therefore apply claim (ii) of the induction hypothesis to obtain that $h$ is $4$-\efx at time $t$. In particular, this shows that at time $(t+1)$, agent $h$ remains $4$-\efx towards all agents $h'\neq\{i,\ell\}$ who don't participate in the $(i,\ell)$ swap at time $t$. Moreover, since $h$ is $4$-\efx towards the bundle $\x^t_\ell$ and $\x^t_{i+1} \supseteq \x^t_\ell$, $h$ remains $4$-\efx towards agent $i$ after the swap at $(t+1)$.

It remains to be shown that $h$ is $4$-\efx towards agent $\ell$ at time $(t+1)$. By claim (i) of the induction hypothesis, $h$ does not participate in swaps during times $[t'+1, t]$. Thus $\x^{t+1}_h = \x^{t'+1}_h$. Then, \cref{lem:after-swap-const} implies that $d_h(\x^{t'+1}_h) \le 4\cdot d_h(\z_h)$. Using the MPB condition for the \fpo allocation $(\z,\q)$, we get that $d_h(\z_h) = \alpha_h \q(\z_h)$. Using $\q(\z_h) = q_{t'}$, we conclude that $d_h(\x^{t+1}_h) \le 4\cdot \alpha_h q_{t'}$. With this, the following chain of inequalities shows that $h$ remains $4$-\efx towards $\ell$ at $(t+1)$.

\begin{equation*}
\begin{aligned}
d_h(\x^{t+1}_h) &\le 4\cdot \alpha_h q_{t'} &\text{(as argued above)} \\
&\le 4\cdot \alpha_h q_t &\text{(using claim (iv) of the induction hypothesis)} \\
&= 4\cdot \alpha_h \q(\z_i), & \text{(using $\q(\z_i) = q_t$)} \\
&\le 4\cdot d_h(\z_i), & \text{(using the MPB condition)} \\
&\le 4\cdot d_h(\x^{t+1}_\ell). & \text{(since $\x^{t+1}_\ell = \z_i$)} \\
\end{aligned}
\end{equation*}
This proves that claim (ii) holds at time $(t+1)$.

\item[(iii)] We prove that for an agent $h\in N_H^1$ who is not $4$-\efx at time $(t+1)$, it holds that $\q(\z_h) \ge q_{t}$. Clearly, $h \neq i$. Moreover, claim (ii) proved above shows that $h$ has not participated in a swap at any time step $t'<t+1$. Thus, $\x^{t+1}_h = S_h \cup \z_h$. Thus we have:
\begin{equation}\label{eq:claim3}
\begin{aligned}
d_h(\x^{t+1}_h) &= d_h(S_h) + d_h(\z_h) &\text{(using $\x^{t+1}_h = S_h \cup \z_h$)} \\
&= \p(S_h) + d_h(\z_h) &\text{(since $d_h(S_h) = \p(S_h)$)} \\ 
&\le \frac{3}{2} + d_h(\z_h) &\text{(using $\p(S_h) \le \frac{3}{2}$)} \\
&\leq 4 \cdot d_h(\z_h) &\text{(using $d_h(\z_h) > \frac{1}{2}$)} \\
&= 4 \cdot \alpha_h \q(\z_h). &\text{(using the MPB condition for $(\z,\q)$)}
\end{aligned}
\end{equation}

If $h$ was not $4$-\efx at $t$, then $\q(\z_h) \ge \q(\z_i)$, since \cref{alg:const-efx} chose to perform a swap involving $i$ instead of $h$. Thus $\q(\z_h) \ge \q(\z_i) = q_t$ in this case. 

On the other hand, suppose $h$ was $4$-\efx at $t$. Since $h$ is not $4$-\efx at $(t+1)$, $h$ became envious at $(t+1)$ due to the $(i,\ell)$ swap at $t$. Since $\x^{t+1}_i \supseteq \x^t_{\ell}$, $h$ does not $4$-\efx envy $i$ at $(t+1)$. This implies that $h$ $4$-\efx envies $\ell$ at time $(t+1)$. 

Thus, 
\begin{equation}
\begin{aligned}
4 \cdot \alpha_h \q(\z_h)
&\geq d_h(\x_h^{t+1}) &\text{(using \cref{eq:claim3})} \\
&> 4 \cdot d_h(\x^{t+1}_\ell) &\text{(since $h$ $4$-\efx envies $\ell$)} \\
&=  4 \cdot d_h(\z_i) &\text{(using claim (i))} \\
&= 4 \cdot \alpha_h q_t. &\text{(using the MPB condition of $(\z, \q)$)}
\end{aligned}
\end{equation}
Dividing each side by $4 \cdot \alpha_h$, we obtain $\q(\z_h) \ge q_t$, as claimed.

\item[(iv)] Consider a swap $(h,k)$ taking place at $(t+1)$. Since $h$ is not $4$-\efx at $(t+1)$, we have that $\q(\z_h) \ge q_t$ by claim (iii) proved above. Thus $q_{t+1} := \q(\z_h) \ge q_t$. With claim (iv) of the induction hypothesis at $t$, we obtain $q_{t+1}\ge q_t \ge \cdots \ge q_1\ge q_0$, as desired. \qedhere
\end{enumerate}
\end{proof}

We can now prove the main result of this section.
\thmConstEFX*
\begin{proof}
\cref{alg:const-efx} runs as long as there is an agent $i\in N_H^1$ who is not $4$-\efx. \cref{lem:invariants-const} shows that once an agent $i\in N_H^1$ participates in an $(i,\ell)$ swap, she remains $4$-\efx in the subsequent run of the algorithm. Thus, there can only be $n$ swap steps before the algorithm terminates. 

We argue that the resulting allocation is $4$-\efx. If an agent $i\in N_H^1$ is not $4$-\efx, then the algorithm would not have terminated. Thus, all agents in $N_H^1$ are $4$-\efx upon termination of the algorithm. Finally, \cref{lem:invariants-nh2} shows that the agents in $N_H^2$ are $4$-\efx and agents in $N_L$ are $3$-\efx throughout the run of the algorithm. In conclusion, given an ER equilibrium for a chore allocation instance with $m\ge 2n$, \cref{alg:const-efx} returns a $4$-\efx allocation in polynomial time.
\end{proof}

\section{Approximate-EFX and PO for Bivalued Instances}\label{sec:bivalued}

We now turn to the problem of computing (approximately-)\efx and \po allocations for \textit{bivalued} instances. Recall that in a bivalued instance $(N,M,D)$ there exist $a,b\in \R_{> 0}$ s.t. $d_{ij} \in \{a,b\}$.  Note that we can re-scale the disutilities so that $d_{ij} \in \{1,k\}$, where $k>1$. We refer to such an instance as a $\{1,k\}$-bivalued instance. The main result of this section is that:

\thmefxpomain*

To prove \cref{thm:efxpo-main}, we design and analyze two algorithms which separately handle the cases of $m > 2n$ and $m\le 2n$: \cref{alg:bivalued}, which computes a $3$-\efx and \po allocation when $m > 2n$; and \cref{alg:bivalued-small} which computes an \efx and \po allocation when $m\le 2n$. Both algorithms begin with initial allocations with certain desirable properties and perform subsequent chore transfers to achieve (approximate-)\efx and \po. \cref{alg:bivalued} begins with the $2$-\eftwo and \po allocation obtained by rounding an ER equilibria using \cref{alg:rounding}, while \cref{alg:bivalued-small} begins with the balanced allocation obtained using \cref{alg:balanced}. Before discussing our algorithms, we note that the bivalued nature of the instance allows us to prove some important properties of any competitive equilibrium $(\x,\p)$.
\begin{restatable}{lemma}{lemBivaluedPayments}\label{lem:bivalued-payments}
Let $(\x,\p)$ be a CE of a $\{1,k\}$-bivalued instance with $\rho = \min_j p_j$. Then:
\begin{itemize}
\item[(i)] For every $j\in M$, $\rho \le p_j \le \rho k$. 
\item[(ii)] Let $j\in \x_i$ be s.t. $p_j \in (\rho, \rho k)$. Then for all $j'\in \x_i$, $p_{j'} = p_j$.
\end{itemize}
\end{restatable}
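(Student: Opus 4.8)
The plan is to use the MPB condition of the competitive equilibrium together with the basic observation that in a $\{1,k\}$-bivalued instance, for any agent $i$ and chores $j,j'$ the disutility ratio $d_{ij'}/d_{ij}$ lies in $\{1/k,\,1,\,k\}$. All three claims will then fall out of elementary rearrangements of the equality $d_{ij}/p_j = \alpha_i$ that holds whenever $j$ is assigned to $i$.

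\emph{Part (i).} The lower bound $p_j \ge \rho$ is immediate from $\rho = \min_{j'} p_{j'}$. For the upper bound, I would first pick a chore $j^\star$ with $p_{j^\star} = \rho$. Since $(\x,\p)$ is a CE, every chore is fully allocated, so there is an agent $i^\star$ with $x_{i^\star j^\star} > 0$; hence $j^\star \in \mpb_{i^\star}$, which gives $\alpha_{i^\star} = d_{i^\star j^\star}/\rho$. For an arbitrary chore $j$, the definition of the minimum-pain-per-buck ratio gives $\alpha_{i^\star} \le d_{i^\star j}/p_j$; combining this with the previous equality and rearranging yields $p_j \le \rho\cdot(d_{i^\star j}/d_{i^\star j^\star})$, and since $d_{i^\star j} \le k$ and $d_{i^\star j^\star} \ge 1$ this gives $p_j \le \rho k$.

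\emph{Part (ii).} Here I would fix $j\in\x_i$ with $\rho < p_j < \rho k$ and take an arbitrary $j'\in\x_i$. Since $(\x,\p)$ is on MPB, both $j$ and $j'$ lie in $\mpb_i$, so $d_{ij}/p_j = \alpha_i = d_{ij'}/p_{j'}$, which rearranges to $p_{j'} = p_j\cdot(d_{ij'}/d_{ij})$. The ratio $d_{ij'}/d_{ij}$ is $1$, $k$, or $1/k$. If it equals $k$, then $p_{j'} = k p_j > k\rho$, contradicting part (i); if it equals $1/k$, then $p_{j'} = p_j/k < \rho$, again contradicting part (i). Hence $d_{ij'}/d_{ij} = 1$ and $p_{j'} = p_j$, as claimed.

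The argument is short and I do not anticipate a genuine obstacle. The two points requiring mild care are: (a) in part (i), the chore realizing the minimum payment must actually be allocated so that an agent with MPB ratio $d_{i^\star j^\star}/\rho$ exists — this is precisely the CE requirement that all chores are fully allocated; and (b) in part (ii), the interval $(\rho,\rho k)$ must be open, so that $p_j > \rho$ and $p_j < \rho k$ are strict, which is exactly what makes the two bad cases clash with part (i). (If instead one only had $p_j \in [\rho, \rho k]$, an agent could hold chores of two different payments, so the openness is essential.)
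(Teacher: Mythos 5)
Your proof is correct and takes essentially the same approach as the paper's: both parts rest on anchoring at an agent holding a minimum-payment chore, applying the MPB equality, and using that bivalued disutility ratios lie in $\{1/k,1,k\}$. The only cosmetic difference is that you prove the upper bound in part (i) directly, while the paper argues by contradiction; the content is identical.
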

\begin{proof}
For any $j\in M$, $p_j \ge \rho$ follows from the definition of $\rho$. Suppose for some $j_1\in M$, $p_{j_1} > \rho k$. Let $j_0\in \x_\ell$ be such that $p_{j_0} = \rho$. Then the MPB condition for agent $\ell$ implies that $\frac{d_{\ell{j_0}}}{p_{j_0}} \le \frac{d_{\ell j_1}}{p_{j_1}}$. This implies $\frac{d_{\ell {j_0}}}{d_{\ell {j_1}}} \le \frac{p_{j_0}}{p_{j_1}} < \frac{1}{k}$. However this is a contradiction since $d_{ij} \in \{1,k\}$ for all $i\in N,j\in M$. This proves (i).

For (ii), suppose $\exists j, j'\in \x_i$ s.t. $p_j \in (\rho, \rho k)$ and $p_{j'} \neq p_j$. Then the MPB condition for $i$ implies that $\frac{d_{ij}}{p_j} = \frac{d_{ij'}}{p_{j'}}$, implying that $\frac{p_j}{p_{j'}} = \frac{d_{ij}}{d_{i{j'}}}$. Since $d_{ij}, d_{ij'} \in \{1,k\}$, we know $\frac{p_j}{p_{j'}} \in \{1/k, 1, k\}$. Since $p_j\neq p_{j'}$, we have $\frac{p_j}{p_{j'}} \in \{1/k , k\}$. Thus $p_{j'} = k p_j$ or $p_{j'} = p_j/k$. Since $p_j\in (\rho, \rho k)$ and $k>1$, this implies either $p_{j'} > k\rho$ or $p_{j'} < \rho$, both of which contradict (i).
\end{proof}

\subsection{3-EFX and PO for $m > 2n$}\label{sec:bivalued-large}
We first present \cref{alg:bivalued}: a polynomial-time algorithm that computes a $3$-\efx and \fpo allocation for a bivalued instance with $m > 2n$, given its ER equilibrium $(\y,\p)$ as input. \cref{alg:bivalued} first rounds $(\y,\p)$ using \cref{alg:rounding} with the chore earning limit set as $\beta = \frac{1}{2}$. The resulting allocation $\x^0$ is already a good starting point: it is \fpo, and Lemmas~\ref{lem:earning-ub} and \ref{lem:earning-lb} with $\beta=\frac{1}{2}$ show its fairness properties.
\begin{lemma}\label{lem:bivalued-initial}
The allocation $(\x^0,\p)$ is \fpo and satisfies:
\begin{itemize}
\item[(i)] For all $i\in N$, $\p(\x^0_i) \ge \frac{1}{2}$.
\item[(ii)] For all $h\in N$, either $\p_{-1}(\x^0_h) \le 1$, or $|\x^0_h\cap \{j:p_j>\frac{1}{2}\}| = 2$ and $\p_{-2}(\x^0_h) \le \frac{1}{2}$.
\end{itemize}
\end{lemma}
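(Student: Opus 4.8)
The plan is to recognize that this lemma is simply the specialization of the already-completed analysis of \cref{alg:rounding} to the parameter choice $\beta = \frac{1}{2}$, so essentially no new work is required. First I would check applicability: since $m > 2n$ and $\beta = \frac{1}{2}$ we have $m\beta > n$, so the feasible-earning condition of \cref{thm:er-existence} holds (taking $e_i = 1$ and uniform limits $c_j = \beta$), which guarantees that the ER equilibrium $(\y,\p)$ exists; and since $\beta = \frac{1}{2} \in [\frac{1}{2}, 1)$, \cref{alg:rounding} may legitimately be run on $(\y,\p)$, producing the allocation $\x^0$.

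The \fpo claim is then immediate from \cref{lem:rounding-fpo}. For part (i), I would invoke \cref{lem:earning-lb}, which gives $\p(\x^0_i) \ge \min\{\beta, 1-\beta\}$ for every $i \in N$; with $\beta = \frac{1}{2}$ this equals $\min\{\frac{1}{2},\frac{1}{2}\} = \frac{1}{2}$. For part (ii), I would invoke \cref{lem:earning-ub}, which states that for every agent $h$, either $\p_{-1}(\x^0_h) \le 1$, or $|\x^0_h \cap H| = 2$ and $\p_{-2}(\x^0_h) \le 1 - \beta$, where $H = \{j \in M : p_j > \beta\}$; substituting $\beta = \frac{1}{2}$ turns this into exactly the stated dichotomy, since then $H = \{j : p_j > \frac{1}{2}\}$ and $1 - \beta = \frac{1}{2}$.

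I do not expect any real obstacle: the proof is a short assembly of three previously established facts. The only points meriting a moment's attention are confirming that the hypotheses of the cited lemmas are satisfied — namely $\beta = \frac{1}{2} \in [\frac{1}{2},1)$ and $m\beta \ge n$ in the regime $m > 2n$ — and observing that the high-paying chore set referenced by \cref{lem:earning-ub} coincides with $\{j : p_j > \frac{1}{2}\}$ precisely because $\beta = \frac{1}{2}$.
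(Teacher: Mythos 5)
Your proposal is correct and is exactly the paper's (implicit) argument: the paper introduces $\x^0$ by remarking that ``it is \fpo, and Lemmas~\ref{lem:earning-ub} and \ref{lem:earning-lb} with $\beta=\frac{1}{2}$ show its fairness properties,'' which is precisely your assembly of \cref{lem:rounding-fpo}, \cref{lem:earning-lb} (giving $\min\{\beta,1-\beta\}=\tfrac{1}{2}$), and \cref{lem:earning-ub} (with $H=\{j:p_j>\tfrac{1}{2}\}$ and $1-\beta=\tfrac{1}{2}$). Your sanity check that $m>2n$ and $\beta=\tfrac{1}{2}$ satisfy the hypotheses is a reasonable extra verification, though the paper takes it for granted.
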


Let $\rho = \min_j p_j$ be the minimum chore payment.
\begin{restatable}{lemma}{lemmasmallp}\label{lem:small-p}
The minimum chore payment satisfies $\rho < \frac{1}{2}$.
\end{restatable}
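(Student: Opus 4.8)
The plan is a short money-accounting argument on the ER equilibrium $(\y,\p)$ from which $(\x^0,\p)$ is obtained by \cref{alg:rounding}. Since the rounding procedure never changes chore payments, the payment vector $\p$ — and hence $\rho = \min_{j} p_j$ — is the same for $(\x^0,\p)$ and for $(\y,\p)$. Recall that $(\y,\p)$ was computed with earning requirements $e_i = 1$ for all $i\in N$ and uniform earning limit $c_j = \beta = \frac{1}{2}$ for all $j\in M$, and that $m > 2n$.

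First I would record the total-money identity implied by \cref{def:er-main}. Summing the agent condition $\sum_{j\in M} q_{ij} = e_i = 1$ over all agents gives $\sum_{i\in N}\sum_{j\in M} q_{ij} = n$. Summing the chore condition $q_j = \min\{p_j, c_j\} = \min\{p_j,\frac{1}{2}\}$ over all chores gives $\sum_{i\in N}\sum_{j\in M} q_{ij} = \sum_{j\in M} q_j = \sum_{j\in M}\min\{p_j,\frac{1}{2}\}$. Equating the two expressions yields $\sum_{j\in M}\min\{p_j,\frac{1}{2}\} = n$.

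Next I would invoke the hypothesis $m > 2n$. Each of the $m$ summands $\min\{p_j,\frac{1}{2}\}$ is at most $\frac{1}{2}$, so the sum is at most $\frac{m}{2}$; moreover, if every summand were exactly $\frac{1}{2}$ the sum would equal $\frac{m}{2} > n$, contradicting the identity just derived. Hence some chore $j^\star$ satisfies $\min\{p_{j^\star},\frac{1}{2}\} < \frac{1}{2}$, which is possible only if $p_{j^\star} < \frac{1}{2}$. Therefore $\rho = \min_{j\in M} p_j \le p_{j^\star} < \frac{1}{2}$, as desired.

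I do not anticipate any real obstacle here; the only points requiring care are that $\rho$ is unchanged by the rounding step (so the bound proved for the underlying equilibrium transfers verbatim to $(\x^0,\p)$), and that the strict slack $n < \frac{m}{2}$ in the feasible-earning condition $\sum_i e_i \le \sum_j c_j$ is precisely what forces at least one chore to be paid out strictly below its cap.
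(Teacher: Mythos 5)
Your proof is correct and follows essentially the same money-accounting argument as the paper: summing the agent-side earnings to get $n$, summing the chore-side payouts $\min\{p_j,\tfrac12\}$, and invoking $m>2n$ to force some payout strictly below $\tfrac12$. The only cosmetic difference is that you argue directly while the paper argues by contradiction (assume $\rho\ge\tfrac12$, deduce $m=2n$); the explicit observation that rounding leaves $\p$ unchanged is a nice touch but not strictly needed.
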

\begin{proof}
Let $q_j = \min\{p_j, \frac{1}{2}\}$ be the earning from chore $j\in M$. If $\rho \ge \frac{1}{2}$, then $p_j \ge \frac{1}{2}$ for all $j$, implying that $q_j = \frac{1}{2}$. Hence, the total earning from chores is $\sum_j q_j = \frac{m}{2}$. Since $\sum_j q_j = \sum_i e_i = n$, we obtain that $n = \frac{m}{2}$, which contradicts our assumption that $m > 2n$.
\end{proof}

The next lemma shows that if the largest chore payment is small, $\x^0$ is already fair. 
\begin{restatable}{lemma}{lemsmallpk}\label{lem:small-pk}
If $\rho k \le \frac{1}{2}$, then $\x^0$ is $3$-\textup{EF}.
\end{restatable}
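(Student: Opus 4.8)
The plan is to exploit the hypothesis $\rho k \le \frac12$ to show that, under it, there are \emph{no} high-paying chores at all, so that the near-balanced structure of $\x^0$ (captured by the earning bounds of \cref{lem:bivalued-initial}) immediately yields $3$-envy-freeness.

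\textbf{Step 1 (every chore is low-paying).} I would first invoke \cref{lem:bivalued-payments}(i): every chore $j$ satisfies $p_j \le \rho k$, and by hypothesis $\rho k \le \frac12$. Hence the set $H = \{j : p_j > \frac12\}$ that drives \cref{alg:rounding} is empty, so $|\x^0_h \cap H| = 0$ for every agent $h$. Feeding this into \cref{lem:bivalued-initial}(ii) rules out its second alternative (the one requiring two chores from $H$), leaving $\p_{-1}(\x^0_h) \le 1$ for all $h \in N$.

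\textbf{Step 2 (two-sided earning bounds).} Since deleting the single highest-paying chore of a bundle costs at most $\max_{j} p_j \le \rho k \le \frac12$ in payment, Step 1 gives $\p(\x^0_i) \le \p_{-1}(\x^0_i) + \rho k \le 1 + \frac12 = \frac32$ for every $i$. On the other side, \cref{lem:bivalued-initial}(i) supplies $\p(\x^0_i) \ge \frac12$ for every $i$; in particular no bundle is empty.

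\textbf{Step 3 (transfer to disutilities via MPB).} The allocation $(\x^0,\p)$ is on MPB (it is the output of \cref{alg:rounding} on an ER equilibrium; cf.\ \cref{lem:rounding-fpo}), so letting $\alpha_i$ be the MPB ratio of agent $i$ we have $d_i(\x^0_i) = \alpha_i \p(\x^0_i)$ and, because $d_{ij} \ge \alpha_i p_j$ for all $j$, $d_i(\x^0_h) \ge \alpha_i \p(\x^0_h)$. Combining with Step 2, for any pair $i,h \in N$,
\[
d_i(\x^0_i) = \alpha_i \p(\x^0_i) \le \tfrac32\alpha_i = 3 \cdot \tfrac12 \alpha_i \le 3\,\alpha_i \p(\x^0_h) \le 3\, d_i(\x^0_h),
\]
so $\x^0$ is $3$-EF. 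There is essentially no serious obstacle; the only subtlety worth flagging is that \cref{lem:bivalued-initial}(ii) is a disjunction, and one must verify its ``two high-paying chores'' branch is vacuous under $\rho k \le \frac12$, and that it is the uniform lower bound $\p(\x^0_i) \ge \frac12$ that provides the denominator in the final chain.
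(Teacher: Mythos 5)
Your proof is correct and takes essentially the same route as the paper: both establish $\p(\x^0_h)\le\frac32$ and $\p(\x^0_i)\ge\frac12$ from \cref{lem:bivalued-payments} and \cref{lem:bivalued-initial} and then transfer to disutilities via the MPB condition. The only cosmetic difference is that you rule out the second disjunct of \cref{lem:bivalued-initial}(ii) by observing $H=\emptyset$, whereas the paper retains it and bounds both branches by $\frac32$ directly using $p_j\le\frac12$; both give the same bound.
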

\begin{proof}
If $\rho k \le \frac{1}{2}$, then $p_j \le \frac{1}{2}$ for all $j\in M$ by \cref{lem:bivalued-payments}. \cref{lem:bivalued-initial} then implies that:
\begin{enumerate}[label=(\roman*)]
\item For all $i\in N$, $\p(\x^0_i) \ge \frac{1}{2}$.
\item For all $h\in N$, $\p_{-1}(\x^0_h) \le 1$, or $\p_{-2}(\x^0_h)\le \frac{1}{2}$. Thus $\p(\x^0_h) \le \max\{1+\frac{1}{2}, \frac{1}{2}+2\cdot\frac{1}{2}\} = \frac{3}{2}$.
\end{enumerate}
We therefore have that $(\x^0,\p)$ is $3$-\s{EF}, since for any $i,h\in N$, we have $\p(\x^0_h) \le \frac{3}{2} \le 3\cdot \p(\x^0_i)$.
\end{proof}

Thus, \cref{alg:bivalued} simply returns $(\x^0,\p)$ if $\rho k \le \frac{1}{2}$. Hence, we assume $\rho k > \frac{1}{2}$ in the subsequent discussion. Note that $\rho < \frac{1}{2}$ by \cref{lem:small-p}.

\begin{definition}\label{def:classification}(Classification of Chores and Agents in \cref{alg:bivalued}) \normalfont
Chores are categorized as:
\begin{itemize}
\item $L = \{j\in M: p_j = \rho\}$, i.e., low paying or $L$-chores. Note $p_j = \rho < \frac{1}{2}$ for all $j\in L$.
\item $H = \{j\in M: p_j = \rho k\}$, i.e., high paying or $H$-chores. Note $p_j = \rho k > \frac{1}{2}$ for all $j\in H$.
\item $M' = M\setminus(L\cup H) = \{j\in M: p_j \in (\rho, \rho k)\}$, or $M'$-chores. 
\end{itemize}
\cref{lem:bivalued-payments} (ii) shows that agents can either be assigned chores from $M'$ or from $L$ and $H$, but not both. With this observation, we classify agents into four categories given an allocation $(\x,\p)$:
\begin{itemize}
\item $N_L = \{ i\in N: \x_i \subseteq L\}$, i.e., agents who are only assigned $L$-chores.
\item $N_H^1 = \{i\in N: |\x_i \cap H| = 1\}$, i.e., agents who are assigned exactly one $H$-chore.
\item $N_H^2 = \{i\in N: |\x_i \cap H| = 2\}$, i.e., agents who are assigned exactly two $H$-chores.
\item $N_0 = \{i\in N: \x_i \subseteq M'\}$, i.e., agents who are only assigned $M'$-chores.
\end{itemize}
\end{definition}
Let $N_H = N_H^1 \cup N_H^2$. We begin by exploring the source of \efx-envy in $\x^0$. We prove a general lemma concerning the \efx-envy of agents in $N_0$.

\begin{algorithm}[!t]
\caption{$3$-\efx + \po for bivalued instances with $m > 2n$}\label{alg:bivalued}
\textbf{Input:} $\{1,k\}$-bivalued instance with $m > 2n$, its ER equilibrium $(\y,\p)$ with $\beta = \frac{1}{2}$\\ 
\textbf{Output:} An integral allocation $\x$   
\begin{algorithmic}[1]
\State $(\x,\p) \gets$ Run \cref{alg:rounding} with $(\y,\p)$
\State $\rho \gets \min_j p_j$
\If{$\rho k \le \frac{1}{2}$} \Return $\x$ \Comment{$\x$ is $3$-\efx by \cref{lem:small-pk}}
\EndIf
\State $L = \{j \in M: p_j = \rho\}$, $H = \{j \in M: p_j = \rho k\}$ \Comment{Low, High paying chores}
\State Classify agents as $N_L, N_H^1, N_H^2, N_0$ (See \cref{def:classification})
\Statex \textit{--- Phase 1: Address $N_H^2$ agents ---}
\While{$\exists i\in N_H^2$ not 3-\efx}
\State $\ell \gets$ agent 3-\efx-envied by $i$ \Comment{\cref{lem:efx-envy-mpb-edges} shows $\ell\in N_L$}
\If{$\p(\x_\ell) > 1$}
$S\gets j_1$ for some $j_1\in\x_\ell$ \Else{} $S\gets \emptyset$
\EndIf
\State $j \in \x_i \cap H$
\State $\x_\ell \gets \x_\ell \setminus S \cup \{j\}$
\State $\x_i \gets \x_i \cup S \setminus \{j\}$
\State $N_H^1 \gets N_H^1 \cup \{i,\ell\}$, $N_H^2 \gets N_H^2\setminus \{i\}$, $N_L \gets N_L\setminus \{\ell\}$
\EndWhile
\Statex \textit{--- Phase 2: Address $N_H^1$ agents ---}
\While{$\exists i\in N_H^1$ not 3-\efx}
\State $\ell \gets \arg\min\{\p(\x_h) : h\in N \text{ s.t. $i$ 3-\efx envies $h$}\}$ \Comment{\cref{lem:efx-envy-mpb-edges} shows $\ell\in N_L$}
\State $j \in \x_i \cap H$
\State $\x_i \gets \x_i \cup \x_\ell \setminus \{j\}$
\State $\x_\ell \gets \{j\}$
\State $N_H^1 \gets N_H^1 \cup \{\ell\}\setminus\{i\}$, $N_L \gets N_L \cup \{i\}\setminus\{\ell\}$
\EndWhile
\State \Return $\x$
\end{algorithmic}
\end{algorithm}

\begin{restatable}{lemma}{lemNZeroEFXenvy}\label{lem:n0-efx-envy}
Consider an allocation $(\x,\p)$ s.t. $\x_i = \x_i^0$ for all $i\in N_0$ and $\p(\x_h)\ge \frac{1}{2}$ for all $h\in N$. Then $\x$ is $2$-\efx for any $i\in N_0$.
\end{restatable}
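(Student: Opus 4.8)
The plan is to exploit the rigid structure that the bivalued assumption together with the earning-restricted rounding imposes on the bundle of an agent $i\in N_0$. First I would note that $\x_i=\x^0_i\subseteq M'$, so by \cref{lem:bivalued-payments}(ii) every chore in $\x_i$ carries the \emph{same} payment, say $p^\ast\in(\rho,\rho k)$; and since $(\x^0,\p)$ is an MPB allocation (rounding via \cref{alg:rounding} preserves MPB), we have $d_{ij}=\alpha_i p^\ast$ for every $j\in\x_i$, where $\alpha_i$ is $i$'s MPB ratio. In particular all chores of $\x_i$ have equal $d_i$-value, so for every $j\in\x_i$, $d_i(\x_i\setminus\{j\})=(|\x_i|-1)\,\alpha_i p^\ast=\alpha_i\,\p_{-1}(\x^0_i)$. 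On the other side, MPB minimality gives $d_{ij'}\ge\alpha_i p_{j'}$ for every chore $j'$, so summing over $\x_h$ yields $d_i(\x_h)\ge\alpha_i\,\p(\x_h)\ge\alpha_i/2$ by the hypothesis $\p(\x_h)\ge\tfrac12$; also $\x_h\neq\emptyset$ since $\p(\x_h)>0$.

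Next I would split on the dichotomy of \cref{lem:bivalued-initial}(ii) applied to $\x^0_i$. If $\p_{-1}(\x^0_i)\le 1$, then $d_i(\x_i\setminus\{j\})\le\alpha_i=2\cdot(\alpha_i/2)\le 2\,d_i(\x_h)$ and we are done. Otherwise $|\x^0_i\cap\{j:p_j>\tfrac12\}|=2$ with $\p_{-2}(\x^0_i)\le\tfrac12$; since every payment in $\x^0_i$ equals $p^\ast$, this forces $p^\ast>\tfrac12$ and $|\x^0_i|=2$ (were $|\x^0_i|\ge 3$ we would get $\p_{-2}(\x^0_i)\ge p^\ast>\tfrac12$). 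Hence $d_i(\x_i\setminus\{j\})=\alpha_i p^\ast=d_{ij}\in\{1,k\}$. If $d_{ij}=1$, conclude immediately: $d_i(\x_i\setminus\{j\})=1\le|\x_h|\le d_i(\x_h)$ because disutilities lie in $\{1,k\}$ and $\x_h\neq\emptyset$. If $d_{ij}=k$, the key observation is that $\alpha_i\rho=k\rho/p^\ast>k\rho/(\rho k)=1$ since $p^\ast<\rho k$; therefore \emph{every} chore $j'$ satisfies $d_{ij'}\ge\alpha_i p_{j'}\ge\alpha_i\rho>1$, i.e.\ $d_{ij'}=k$, so $d_i(\x_h)=k\,|\x_h|\ge k=d_i(\x_i\setminus\{j\})$. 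In every case $d_i(\x_i\setminus\{j\})\le 2\,d_i(\x_h)$, which is precisely $2$-\efx for $i$.

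I expect the main obstacle to be the second branch of the dichotomy, where $i$ holds two equal high-paying chores and the naive estimate $d_i(\x_h)\ge\alpha_i/2$ is too weak — it only yields a factor of $2p^\ast$, and $p^\ast$ has no obvious universal upper bound. The resolution is the observation that a chore of $\x_i$ can have the \emph{large} $d_i$-value $k$ only when $p^\ast$ is pushed up towards $\rho k$, and then $\alpha_i\rho>1$ forces \emph{all} chores to have $d_i$-value $k$, making every bundle ``expensive'' for $i$; this is the one genuinely non-routine step. Everything else is bookkeeping with the MPB identities $d_{ij}=\alpha_i p_j$ and $d_{ij'}\ge\alpha_i p_{j'}$, the earning bounds of \cref{lem:bivalued-initial}, and the payment structure of \cref{lem:bivalued-payments}; a minor point to handle carefully is that $\p_{-1}$ of a bundle of equal-payment chores is just its total payment minus one chore, and that $\x_h$ is nonempty, both immediate from the hypotheses.
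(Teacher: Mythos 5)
Your proof is correct, and in case~(ii) it takes a genuinely more careful route than the paper's own argument. Case~(i) matches the paper exactly: $\p_{-1}(\x^0_i)\le 1$ together with $\p(\x_h)\ge\tfrac12$ gives $2$-\pefx, hence $2$-\efx by \cref{lem:pEF1impliesEF1}. For case~(ii) the paper asserts that $d_{ij_1}=d_{ij_2}=1$ and supports this with the chain
$\frac{d_{ij}}{p_j} \leq \frac{k}{\rho} = \frac{d_{ij_1}}{\rho} < \frac{d_{ij_1}}{\rho k} < \frac{d_{ij_1}}{p_{j_1}}$,
but the inequality $\frac{d_{ij_1}}{\rho} < \frac{d_{ij_1}}{\rho k}$ is reversed for $k>1$, and more substantively the asserted conclusion can fail: if $d_{ij'}=k$ for \emph{every} chore $j'$ (so $i$ finds all chores equally painful and is assigned the highest-paying ones, which may all pay strictly less than $\rho k$ when $H=\emptyset$), then $d_{ij_1}=k$ and no MPB contradiction arises from the minimum-payment chore, since that chore also has $d_i$-value $k$. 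Your sub-case split handles exactly this situation: when $d_{ij}=1$ the conclusion is immediate from $d_i(\x_h)\ge|\x_h|\ge 1$; and when $d_{ij}=k$ you observe that $\alpha_i\rho = k\rho/p^\ast > 1$ forces $d_{ij'}=k$ for all $j'$, so $d_i(\x_h)\ge k = d_i(\x_i\setminus\{j\})$. This reaches the same endpoint as the paper (plain \efx, not just $2$-\efx, in case~(ii)) while filling what appears to be a genuine gap in the paper's argument. Your proof is therefore not only correct but arguably a sounder treatment of case~(ii).
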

\begin{proof}
Consider an agent $i\in N_0$. As per \cref{lem:bivalued-initial}, we consider two cases regarding $\x^0_i$:
\begin{itemize}
\item[(i)] $\p_{-1}(\x_i^0) \le 1$. Since \cref{lem:invariants} implies that $\p(\x_h) \ge \frac{1}{2}$ for any $h\in N$, we obtain that $\p_{-1}(\x_i^0) \le 2\cdot\p(\x_h)$. For $i \in N_0$ we have that $\p_{-X}(\x_i^0) = \p_{-1}(\x_i^0)$, which gives us that $\p_{-X}(\x_i^0) \le 2\cdot\p(\x_h)$. Thus, $i$ is $2$-\pefx and hence $2$-\efx towards any $h\in N$ by \cref{lem:pEF1impliesEF1}.
\item[(ii)] $|\x^0_i \cap \{j: p_j > \frac{1}{2}\}| = 2$ and $\p_{-2}(\x^0_i) \le \frac{1}{2}$. Let $\x^0_i \cap \{j: p_j > \frac{1}{2}\} = \{j_1,j_2\}$. By \cref{lem:bivalued-payments}, all chores in $\x^0_i$ have the same payment $\rho'\in(\rho,\rho k)$. Hence $p_{j_1} = p_{j_2} = \rho' > \frac{1}{2}$. Thus $\p_{-2}(\x^0_i) \le \frac{1}{2}$ implies that $\x^0_i \setminus \{j: p_j > \frac{1}{2}\} = \emptyset$, i.e., $\x^0_i = \{j_1, j_2\}$.

We claim now that $d_{ij_1} = d_{ij_2} = 1$. Without loss of generality, suppose otherwise that $d_{ij_1} = k$. By the definition of $\rho$ there exists a chore $j$ such that $p_j = \rho$. Then, since $p_{j_1} < \rho k$, we have that:
\[
\frac{d_{ij}}{p_j} \leq \frac{k}{\rho} = \frac{d_{ij_1}}{\rho} < \frac{d_{ij_1}}{\rho k} < \frac{d_{ij_1}}{p_{j_1}}.
\]
This implies that $j_1$ is not MPB for $i$, a contradiction, so it must be that $d_{ij_1} = 1$. Since $\p(\x_h)\ge \frac{1}{2}$ for all $h\in N$, all bundles are non-empty. Thus, for all $h \in N$, we have $\max_{j'\in\x_i} d_i(\x_i\setminus\{j'\}) = 1 \leq d_i(\x_h)$, showing that $i$ is in fact EFX towards all agents. \qedhere
\end{itemize}
\end{proof}

Since $\p(\x^0_h) \ge \frac{1}{2}$ for all $h\in N$ by \cref{lem:bivalued-initial}, the above lemma shows that $\x^0$ is $2$-\efx for agents in $N_0$. Next, we show that $\x^0$ is also $2$-\efx for agents in $N_L$.

\begin{restatable}{lemma}{lemNLinit}\label{lem:nl-init}
$\x^0$ is $2$-\efx for agents in $N_L$. 
\end{restatable}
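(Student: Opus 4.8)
The plan is to show that $(\x^0,\p)$ is $2$-\pefx for every agent in $N_L$ and then convert this into $2$-\efx using the reasoning behind \cref{lem:pEF1impliesEF1}(ii), which is available because $(\x^0,\p)$ is an MPB allocation by \cref{lem:rounding-fpo}. Throughout, fix $i\in N_L$; by definition $\x^0_i\subseteq L$, so every chore of $\x^0_i$ has payment exactly $\rho$, and $\rho<\frac{1}{2}$ by \cref{lem:small-p}.

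The first step is to rule out the exceptional branch of \cref{lem:earning-ub}. Since every chore of $\x^0_i$ has payment $\rho\le\frac{1}{2}$, the bundle $\x^0_i$ contains no chore of payment exceeding $\frac{1}{2}$, so the case ``$|\x^0_i\cap H|=2$'' of \cref{lem:earning-ub} --- applied with $\beta=\frac{1}{2}$, where $H$ there denotes the set of chores with payment $>\frac{1}{2}$ in the sense of \cref{alg:rounding} --- cannot occur, and hence $\p_{-1}(\x^0_i)\le 1$. Because all chores in $\x^0_i$ carry the same payment $\rho$, removing the highest-paying chore and removing the lowest-paying chore produce the same quantity, so $\p_{-X}(\x^0_i)=\p_{-1}(\x^0_i)\le 1$.

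Next I would combine this with the uniform lower bound $\p(\x^0_h)\ge\frac{1}{2}$ for all $h\in N$ from \cref{lem:bivalued-initial}(i), which gives $\p_{-X}(\x^0_i)\le 1\le 2\,\p(\x^0_h)$ for every $h$; that is, $i$ is $2$-\pefx towards every agent. Finally, for any $j\in\x^0_i$ and any $h\in N$, the MPB property of $(\x^0,\p)$ yields
\[
d_i(\x^0_i\setminus\{j\})=\alpha_i\,\p(\x^0_i\setminus\{j\})\le\alpha_i\,\p_{-X}(\x^0_i)\le 2\alpha_i\,\p(\x^0_h)\le 2\,d_i(\x^0_h),
\]
where $\alpha_i$ is $i$'s MPB ratio and the last inequality uses $d_{ij'}\ge\alpha_i p_{j'}$ for every chore $j'$. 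This is exactly $2$-\efx for agent $i$.

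I do not expect a genuine obstacle in this lemma; the one point requiring care is the notational clash between the symbol $H=\{j:p_j=\rho k\}$ of \cref{def:classification} and the $H=\{j:p_j>\frac{1}{2}\}$ used inside \cref{alg:rounding} and \cref{lem:earning-ub}. The argument must explicitly verify that an $N_L$-agent's bundle is disjoint from the latter set (it is, since each such chore has payment $\rho<\frac{1}{2}$), which is precisely what makes \cref{lem:earning-ub} collapse to the clean bound $\p_{-1}(\x^0_i)\le 1$ used above.
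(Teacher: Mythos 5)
Your proof is correct and follows essentially the same route as the paper's: since $\x^0_i\subseteq L$ all chores in $i$'s bundle share the payment $\rho<\frac{1}{2}$, so $\p_{-X}(\x^0_i)=\p_{-1}(\x^0_i)\le 1\le 2\,\p(\x^0_h)$, giving $2$-\pefx and hence $2$-\efx via \cref{lem:pEF1impliesEF1}. You simply make explicit two points the paper treats silently: that the second branch of \cref{lem:bivalued-initial}(ii) (the ``two chores with payment $>\frac{1}{2}$'' case) cannot occur for an $N_L$-agent, and that the symbol $H$ in \cref{alg:rounding}/\cref{lem:earning-ub} is not the same set as the $H$ of \cref{def:classification}; both remarks are accurate and harmless.
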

\begin{proof}
Consider an agent $i\in N_L$. Since $\x^0_i\subseteq L$, $\p_{-X}(\x^0_i) = \p_{-1}(\x^0_i)$. \cref{lem:bivalued-initial} implies that $\p_{-1}(\x^0_i) \le 1 \le 2\cdot \p(\x^0_h)$ for any $h\in N$. This shows $i$ is $2$-\efx towards any $h\in N$. 
\end{proof}

If $H = \emptyset$, $N = N_0 \cup N_L$. Thus $\x^0$ is $2$-\efx, and \cref{alg:bivalued} will simply return $\x^0$. Hence, we assume $H\neq \emptyset$ in the subsequent discussion. Lemmas~\ref{lem:n0-efx-envy} and \ref{lem:nl-init} show that $\x^0$ is $2$-\efx for agents in $N_0 \cup N_L$. Hence if $\x^0$ is not $3$-\efx, some agent in $N_H$ must $3$-\efx-envy another agent. Intuitively, an agent $i\in N_H$ $2$-\efx-envies another agent $\ell$ since $i$ has one or two high paying $H$-chores in addition to some low paying $L$-chores. \cref{alg:bivalued} addresses the \efx-envy of these agents by swapping some chores between agents $i$ and $\ell$, and does so in two phases.

In Phase 1, \cref{alg:bivalued} addresses agents in $N_H^2$. An agent $i \in N_H^2$ has two $H$-chores, and earns at most $\frac{1}{2}$ from $L$-chores. We show in \cref{lem:efx-envy-mpb-edges} that if $i$ $3$-\efx-envies an agent $\ell$, we must have $\ell\in N_L$. We then transfer one $H$-chore from $i$ to $\ell$, and if needed, transfer a single $L$-chore from $\ell$ to $i$ so that both agents earn at most $1$ from their $L$-chores. We show that such a swap preserves that the allocation is MPB. After the swap, both $i$ and $\ell$ are added to $N_H^1$ and removed from $N_H^2$ and $N_L$ respectively. This implies that Phase 1 terminates after at most $n/2$ swaps, after which the allocation is $3$-\efx for all agents in $N_H^2$. 

In Phase 2, \cref{alg:bivalued} addresses agents in $N_H^1$. An agent $i \in N_H^1$ has one $H$-chore, and earns at most $1$ from $L$-chores. Once again, \cref{lem:efx-envy-mpb-edges} shows that if $i$ $3$-\efx-envies an agent $\ell$, then $\ell\in N_L$. We then transfer the $H$-chore from $i$ to $\ell$, and transfer all the chores of $\ell$ to $i$. As before, we argue that such a swap preserves that the allocation is MPB. After the swap, $i$ gets added to $N_L$ and removed from $N_H^1$, while $\ell$ is added to $N_H^1$ and removed from $N_L$. Since $\ell$ is now assigned a single $H$-chore, $\ell$ does not \efx-envy any agent. This implies that Phase 2 terminates after at most $n$ swaps since the number of agents in $N_H^1$ who are not $2$-\efx strictly decreases. The resulting allocation is $3$-\efx for all agents in $N_H^1$.

Lastly, we show that throughout the algorithm, agents in $N_0$ are $3$-\efx towards all other agents (\cref{lem:n0-efx-envy}), agents in $N_L$ are $3$-\efx towards all other agents (\cref{lem:nl-envy}), and those in $N_H$ are $3$-\efx towards agents in $N_H \cup N_0$ (\cref{lem:efx-envy-mpb-edges}). Since the algorithm addresses $3$-\efx-envy from agents in $N_H$ towards those in $N_L$ in at most $3n/2$ swaps, it terminates with a $3$-\efx and \fpo allocation. 

We now formally prove the above claims. We begin by recording a lemma regarding the MPB ratio $\alpha_i$ of an agent $i\in N_L\cup N_H$. 
\begin{restatable}{lemma}{lemMPBRatios}\label{lem:mpb-ratios}
Assume $H\neq \emptyset$. Then:
\begin{itemize}
\item[(i)] For all $i\in N_L$, $\alpha_i = 1/\rho$. Moreover for every $j\in H$, $d_{ij} = k$ and $j\in \mpb_i$.
\item[(ii)] For all $i\in N_H$, $\alpha_i \in \{\frac{1}{\rho}, \frac{1}{\rho k}\}$.
\item[(iii)] For all $i\in N_H$, if $\x_i \setminus H \neq \emptyset$ then $\alpha_i = 1/\rho$.
\end{itemize}
\end{restatable}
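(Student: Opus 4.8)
The plan is to read off each MPB ratio $\alpha_i$ directly from a conveniently-priced chore in agent $i$'s bundle, using that \cref{lem:bivalued-payments} confines every chore payment to $\{\rho,\rho k\}\cup(\rho,\rho k)$ and that $d_{ij}\in\{1,k\}$ with $k>1$. I work with the rounded allocation $(\x^0,\p)$, which is on MPB and for which \cref{lem:bivalued-initial} gives $\p(\x^0_h)\ge\frac12>0$, so every bundle is nonempty; note $\alpha_i$ depends only on $\p$, so it is unchanged by the chore swaps of \cref{alg:bivalued}.

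For part (i), fix $i\in N_L$, so $\emptyset\neq\x^0_i\subseteq L$. Any $j\in\x^0_i$ satisfies $p_j=\rho$ and $j\in\mpb_i$, hence $\alpha_i=d_{ij}/\rho\in\{1/\rho,\,k/\rho\}$. To exclude $\alpha_i=k/\rho$ I use $H\neq\emptyset$: for any $j'\in H$, $\alpha_i\le d_{ij'}/p_{j'}\le k/(\rho k)=1/\rho<k/\rho$, a contradiction, so $\alpha_i=1/\rho$. Then for every $j'\in H$, $d_{ij'}/p_{j'}\ge\alpha_i=1/\rho$ forces $d_{ij'}\ge k$, hence $d_{ij'}=k$ and $d_{ij'}/p_{j'}=1/\rho=\alpha_i$, i.e.\ $j'\in\mpb_i$.

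Part (ii) is immediate: if $i\in N_H$, pick $j\in\x_i\cap H$; then $j\in\mpb_i$ and $p_j=\rho k$, so $\alpha_i=d_{ij}/(\rho k)\in\{1/(\rho k),\,k/(\rho k)\}=\{1/(\rho k),\,1/\rho\}$. For part (iii), suppose $i\in N_H$ and $\x_i\setminus H\neq\emptyset$; take $j'\in\x_i\setminus H$. First I would show $j'\in L$: if $j'\in M'$ then $p_{j'}\in(\rho,\rho k)$, and \cref{lem:bivalued-payments}(ii) would force every chore of $\x_i$---in particular the $H$-chore $j$ with $p_j=\rho k$---to have payment $p_{j'}<\rho k$, a contradiction. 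Hence $p_{j'}=\rho$ and $j'\in\mpb_i$, so $\alpha_i=d_{ij'}/\rho$; combined with $\alpha_i=d_{ij}/(\rho k)$ this gives $d_{ij}=k\,d_{ij'}$, and since $d_{ij},d_{ij'}\in\{1,k\}$ and $k>1$ the only solution is $d_{ij'}=1$, $d_{ij}=k$, whence $\alpha_i=1/\rho$.

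None of the steps is hard; the only points needing care are confirming \cref{lem:bivalued-payments} applies to $(\x^0,\p)$ and not merely to a full competitive equilibrium (it does---its proof only uses that every chore is fully allocated on MPB), and invoking part (ii) of that lemma correctly in step (iii) to exclude $M'$-chores from an $N_H$-agent's bundle. If one prefers the lemma stated for an arbitrary intermediate allocation of \cref{alg:bivalued}, the same arguments apply once one observes that $\p(\x_h)\ge\frac12$ (hence nonemptiness of bundles) is preserved by the swaps.
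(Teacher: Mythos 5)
Your proof is correct and follows essentially the same route as the paper's: read off $\alpha_i$ from an MPB chore in $i$'s bundle, and use $H\neq\emptyset$ together with the bivalued structure to pin down the ratio. The one small difference is that in part (iii) you explicitly invoke \cref{lem:bivalued-payments}(ii) to rule out $j'\in M'$ before concluding $p_{j'}=\rho$, whereas the paper leaves this implicit (it relies on the classification in \cref{def:classification}, under which $N_H$ agents hold only $L$- and $H$-chores); your extra sentence makes the argument slightly more self-contained but does not change the substance.
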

\begin{proof}
Let $j_0 \in H$ with $p_{j_0} = \rho k$. For (i), consider $i\in N_L$, and let $j\in\x_i$. Since $j\in L$, $p_j = \rho$. The MPB condition for $i$ implies $\frac{d_{ij}}{p_j} \le \frac{d_{ij_0}}{p_{j_0}}$. This gives $kd_{ij} \le d_{ij_0}$. Since $d_{ij},d_{ij_0}\in \{1,k\}$, the above inequality must be an equality and $d_{ij} = 1$ and $d_{ij_0} = k$. Thus $\alpha_i = 1/\rho$ for $i\in N_L$. Now consider any $j'\in H$. The MPB condition for $i$ implies $\alpha_i \le \frac{d_{ij'}}{p_{j'}}$. This implies $d_{ij'} \ge k$. Since $d_{ij'}\in\{1,k\}$, we have $d_{ij'}=k$ and $j'\in\mpb_i$.

For (ii), let $i\in N_H$ and $j\in\x_i\cap H$. Then $\alpha_i = \frac{d_{ij}}{p_j} \in \{\frac{1}{\rho}, \frac{1}{\rho k}\}$, since $d_{ij}\in\{1,k\}$ and $p_j = \rho k$. 

For (iii), consider $i\in N_H$ with $j_1 \in \x_i\setminus H$ and $j_2\in \x_i \cap H$. The MPB condition for $i$ implies $\frac{d_{ij_1}}{p_{j_1}} = \frac{d_{ij_2}}{p_{j_2}}$, which gives $d_{ij_1} = 1$ and $d_{ij_2} = k$. Thus $\alpha_i = 1/\rho$.
\end{proof}

We next show that \cref{alg:bivalued} maintains the following invariants.
\begin{restatable}{lemma}{lemInvariants}\label{lem:invariants} (Invariants of Alg.\ref{alg:bivalued})
Let $(\x,\p)$ be an allocation in the run of \cref{alg:bivalued}. Then:
\begin{enumerate}[label=(\roman*)]
\item $(\x,\p)$ is an MPB allocation.
\item For all $i\in N$, $\p(\x_i) \ge \frac{1}{2}$.
\item For all $i\in N_L$, $\p_{-1}(\x_i) \le 1$ during Phase 1.
\item For all $i\in N_L$, $\p(\x_i) \le \frac{4}{3} + \frac{\rho k}{3}$.
\item For all $i\in N_H^1$, $\p_{-1}(\x_i) \le 1$.
\item For all $i\in N_H^2$, $\p_{-2}(\x_i) \le \frac{1}{2}$.
\end{enumerate}
\end{restatable}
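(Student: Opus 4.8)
I would prove all six invariants simultaneously by induction on the number of chore transfers (``swaps'') carried out by \cref{alg:bivalued}, showing that a single Phase~1 or Phase~2 swap preserves each of (i)--(vi), and that they hold for the initial allocation.

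\textbf{Base case.} The starting allocation $\x^0$ is the output of \cref{alg:rounding} run with $\beta=\tfrac12$. Invariant~(i) holds since $\x^0$ is an MPB allocation: as in the proof of \cref{lem:rounding-fpo}, each integral bundle $\x^0_i$ lies in the fractional support of the ER equilibrium. For the remaining invariants I would combine \cref{lem:bivalued-initial} (which repackages Lemmas~\ref{lem:earning-ub} and~\ref{lem:earning-lb} for $\beta=\tfrac12$) with the structural fact (\cref{lem:bivalued-payments}(ii)) that an agent in $N_L\cup N_H$ holds no $M'$-chore, so the chores of $\x^0_i$ paying more than $\tfrac12$ are exactly its $H$-chores. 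Hence $|\x^0_i\cap\{j:p_j>\tfrac12\}|$ equals $0$, $1$, or $2$ according as $i\in N_L$, $N_H^1$, or $N_H^2$, and \cref{lem:bivalued-initial}(ii) gives (iii), (v), (vi) respectively, while (ii) is \cref{lem:bivalued-initial}(i). Finally (iv) is immediate, since for $i\in N_L$ we have $\p(\x^0_i)\le\p_{-1}(\x^0_i)+\rho\le 1+\rho<\tfrac32<\tfrac43+\tfrac{\rho k}{3}$, using $\rho<\tfrac12<\rho k$.

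\textbf{Inductive step --- Phase 1.} Assume (i)--(vi) hold for $(\x,\p)$ and that \cref{alg:bivalued} performs a Phase~1 swap: an agent $i\in N_H^2$ that is not $3$-\efx transfers an $H$-chore $j$ to the envied agent $\ell$, and (only when $\p(\x_\ell)>1$) receives a single $L$-chore $j_1$ from $\ell$; afterwards $i$ and $\ell$ move to $N_H^1$. I would invoke the companion structural lemma (\cref{lem:efx-envy-mpb-edges}), which certifies $\ell\in N_L$ and that $\ell$'s chores are MPB for $i$. Then (i) survives, because $j\in\mpb_\ell$ by \cref{lem:mpb-ratios}(i) and $j_1\in\mpb_i$ by \cref{lem:efx-envy-mpb-edges}. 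For (ii): $\ell$ now holds $j$ and earns $\rho k>\tfrac12$; writing $\p(\x_i)=\p_{-2}(\x_i)+2\rho k$ shows $i$ still earns at least $\rho k>\tfrac12$, whatever $j_1$ is. For (v) on the two new $N_H^1$ members: $i$'s remaining low chores total at most $\p_{-2}(\x_i)+\rho\le\tfrac12+\rho<1$ and deleting its single $H$-chore leaves precisely these, so $\p_{-1}(\x_i)\le 1$; for $\ell$, either $j_1$ was not moved and $\p_{-1}(\x_\ell\cup\{j\})=\p(\x_\ell)\le 1$, or it was and invariant~(iii) yields $\p_{-1}((\x_\ell\setminus\{j_1\})\cup\{j\})=\p_{-1}(\x_\ell)\le 1$. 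Invariants (iii), (iv), (vi) involve only agents of $N_L$ and $N_H^2$ whose bundles are untouched (the only two that change, $i$ and $\ell$, leave those sets), hence persist.

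\textbf{Inductive step --- Phase 2, and the main obstacle.} Suppose instead a Phase~2 swap occurs: $i\in N_H^1$ not $3$-\efx hands its $H$-chore $j$ to the minimally-earning envied agent $\ell$ and takes all of $\x_\ell$; afterwards $i$ joins $N_L$ and $\ell$ joins $N_H^1$. By \cref{lem:efx-envy-mpb-edges} again, $\ell\in N_L$ and $\x_\ell\subseteq\mpb_i$, so (i) is kept; (ii) follows from $\x'_i\supseteq\x_\ell$, giving $\p(\x'_i)\ge\p(\x_\ell)\ge\tfrac12$, while $\ell$ now earns $\rho k>\tfrac12$; (v) for the new $N_H^1$ member $\ell$ is trivial ($|\x'_\ell|=1$); (iii) is vacuous in Phase~2 and (vi) holds since no $N_H^2$ bundle moves. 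The heart of the step is (iv) for the new $N_L$ member $i$: because $i$ was selected it owns at least one $L$-chore (else $\p_{-X}(\x_i)=0$ and $i$ is $3$-\efx), so $\p_{-X}(\x_i)=\p(\x_i)-\rho>3\p(\x_\ell)$; together with $\p(\x_i)\le\p_{-1}(\x_i)+\rho k\le 1+\rho k$ from (v) this gives
\[
\p(\x'_i)=\p(\x_\ell)+\bigl(\p(\x_i)-\rho k\bigr)<\tfrac13\bigl(\p(\x_i)-\rho\bigr)+1\le\tfrac13(1+\rho k-\rho)+1=\tfrac43+\tfrac{\rho k}{3}-\tfrac{\rho}{3}<\tfrac43+\tfrac{\rho k}{3}.
\]
I expect the genuinely delicate point to be invariant~(i): every time an $L$-chore is pushed into an $N_H$-agent's bundle I must know it is MPB for that agent, which is exactly where the bivalued structure and the pinned-down MPB ratios ($1/\rho$ and $1/\rho k$, \cref{lem:mpb-ratios}) are essential. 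For this reason the invariant lemma and the structural lemma \cref{lem:efx-envy-mpb-edges} (``$3$-\efx-envy forces the envied agent into $N_L$ with chores MPB for the envier'') should be proven together by one simultaneous induction; disentangling that mutual dependence, and extracting the precise constant $\tfrac43+\tfrac{\rho k}{3}$ in (iv) from the ``not $3$-\efx'' hypothesis combined with (v), are the parts that require care.
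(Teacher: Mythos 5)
Your proof is correct and takes essentially the same route as the paper: verifying the invariants at the initial rounded allocation via \cref{lem:bivalued-initial} and \cref{lem:bivalued-payments}(ii), then an induction over Phase~1 and Phase~2 swaps that invokes \cref{lem:efx-envy-mpb-edges} conditionally at each step (the paper splits this into \cref{lem:phase1-inv} and \cref{lem:phase2-inv}). Your derivation of invariant~(iv) in Phase~2 via $\p_{-X}(\x_i)=\p(\x_i)-\rho>3\p(\x_\ell)$ is marginally tighter than the paper's $\p(\x_i)>3\p(\x_\ell)$, but it yields the same bound $\tfrac43+\tfrac{\rho k}{3}$.
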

\begin{proof}
We prove this using Lemmas~\ref{lem:efx-envy-mpb-edges}, \ref{lem:phase1-inv} and \ref{lem:phase2-inv} below.    
\end{proof}

We first prove that some conditions must hold if an agent in $N_H$ $3$-\efx-envies another agent.
\begin{restatable}{lemma}{lemEFXenvyMPBedges}\label{lem:efx-envy-mpb-edges}
Consider an allocation $(\x,\p)$ satisfying the invariants of \cref{lem:invariants}. Then if $i\in N_H$ $3$-\efx-envies $\ell$, then $\alpha_i = 1/\rho$, $\ell\in N_L$, and $\x_\ell \subseteq \mpb_i$.
\end{restatable}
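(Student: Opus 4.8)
The plan is to exploit the $\{1,k\}$-structure to make every relevant disutility explicit and then play the $3$-\efx-envy inequality against the earning invariants (v) and (vi) of \cref{lem:invariants}; throughout recall that we are in the case $\rho k>\tfrac12$ (so $\rho>\tfrac1{2k}$), that payments take values in $\{\rho,\rho k\}\cup(\rho,\rho k)$ by \cref{lem:bivalued-payments}(i), and that $\x_i\subseteq L\cup H$ by \cref{lem:bivalued-payments}(ii) and $i\in N_H$. \emph{Step 1 (the MPB ratio of $i$).} Since $i\in N_H$, \cref{lem:mpb-ratios}(ii) gives $\alpha_i\in\{1/\rho,\,1/(\rho k)\}$, and I would first rule out $\alpha_i=1/(\rho k)$: in that case \cref{lem:mpb-ratios}(iii) forces $\x_i\subseteq H$, and the MPB condition $d_{ij}/(\rho k)=\alpha_i=1/(\rho k)$ forces $d_{ij}=1$ for all $j\in\x_i$; as $|\x_i|=|\x_i\cap H|\le2$ this gives $\max_{j\in\x_i}d_i(\x_i\setminus\{j\})=d_i(\x_i)-1\le1$, whereas invariant (ii) gives $\x_\ell\neq\emptyset$ and every chore has disutility $\ge1$ for $i$, so $d_i(\x_\ell)\ge1$, contradicting $\max_{j\in\x_i}d_i(\x_i\setminus\{j\})>3\,d_i(\x_\ell)$. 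Hence $\alpha_i=1/\rho$, which is the first assertion.

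\emph{Step 2 (disutilities and a reformulation).} With $\alpha_i=1/\rho$, the inequality $d_{ij}/p_j\ge\alpha_i$ together with $d_{ij}\in\{1,k\}$ shows that $d_{ij}=k$ for every $j\in H\cup M'$, and that $\mpb_i=H\cup\{j\in L:d_{ij}=1\}$ (an $M'$-chore has $p_j<\rho k$, hence $d_{ij}/p_j=k/p_j>1/\rho$, so it is not MPB for $i$). In particular any chore $j'$ with $d_{ij'}=1$ must lie in $L$ and therefore in $\mpb_i$, so the conclusion ``$\ell\in N_L$ and $\x_\ell\subseteq\mpb_i$'' is \emph{equivalent} to ``$d_{ij'}=1$ for every $j'\in\x_\ell$''; thus it suffices to show $\x_\ell$ contains no chore of disutility $k$ for $i$. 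Also, since $(\x,\p)$ is MPB, every $j\in\x_i\cap L$ has $d_{ij}=1$ and every $j\in\x_i\cap H$ has $d_{ij}=k$.

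\emph{Step 3 (the envy computation).} Put $n_L=|\x_i\cap L|$ and $n_H=|\x_i\cap H|\in\{1,2\}$, so $d_i(\x_i)=n_L+kn_H$ and, maximizing over the chore removed (equivalently removing the cheapest one), $\max_{j\in\x_i}d_i(\x_i\setminus\{j\})$ equals $n_L-1+kn_H$ if $n_L\ge1$ and $k(n_H-1)$ if $n_L=0$. Assume for contradiction that some $j'\in\x_\ell$ has $d_{ij'}=k$; then $d_i(\x_\ell)\ge k$, so $3$-\efx-envy yields $\max_{j\in\x_i}d_i(\x_i\setminus\{j\})>3k$. If $i\in N_H^1$ ($n_H=1$): invariant (v) gives $\p_{-1}(\x_i)=n_L\rho\le1$, so $n_L\le1/\rho<2k$; for $n_L\ge1$ the envy inequality becomes $n_L-1+k>3k$, i.e.\ $n_L>2k+1$, contradicting $n_L<2k$, while $n_L=0$ makes the left side $0$, so $i$ is not $3$-\efx-envious at all. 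If $i\in N_H^2$ ($n_H=2$): invariant (vi) gives $\p_{-2}(\x_i)=n_L\rho\le\tfrac12$, so $n_L\le1/(2\rho)<k$; for $n_L\ge1$ the envy inequality becomes $n_L-1+2k>3k$, i.e.\ $n_L>k+1$, contradicting $n_L<k$, while $n_L=0$ gives $k=k(n_H-1)>3k$, impossible. In every case we reach a contradiction, so $\x_\ell$ contains no chore of disutility $k$ for $i$; by Step 2 this gives $\ell\in N_L$ and $\x_\ell\subseteq\mpb_i$, completing the proof.

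\emph{Main obstacle.} The delicate point is Step 3: the \efx-relevant removal is of the \emph{cheapest} chore of $\x_i$ (an $L$-chore whenever $n_L\ge1$), not of the high-paying chore as in $\p_{-1}$, and one must then verify that all three regimes ($n_H\in\{1,2\}$, and $n_L=0$ versus $n_L\ge1$) separately yield contradictions. The argument rests on the two numeric inputs $n_L\rho\le1$ (resp.\ $\le\tfrac12$) from invariants (v)/(vi) and $\rho>\tfrac1{2k}$ from the running assumption $\rho k>\tfrac12$; these turn out to be exactly tight enough to beat the factor $3$ in the EFX guarantee, and a looser form of either bound would break the step.
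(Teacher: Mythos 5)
Your proposal is correct and follows essentially the same approach as the paper. Step 1 (ruling out $\alpha_i = 1/(\rho k)$ via $\x_i\subseteq H$, $|\x_i|\le 2$, and unit disutilities) is identical to the paper's argument, as is the reduction in Step 2 to showing $\x_\ell$ contains no chore with $d_{ij'}=k$. The only cosmetic difference is in Step 3: the paper bounds the full disutility $d_i(\x_i) = \alpha_i\,\p(\x_i) \le \frac{1}{\rho}\max\{1+\rho k, \tfrac{1}{2}+2\rho k\} < 3k$ in a single chain of inequalities directly from invariants (v)/(vi) and $\rho k > \tfrac12$, whereas you count $n_L, n_H$ explicitly and split by case; both rest on exactly the same inputs and are equivalent.
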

\begin{proof}
Consider $i\in N_H$ who $3$-\efx-envies $\ell\in N$. We know from \cref{lem:mpb-ratios} that $\alpha_i \in\{\frac{1}{\rho}, \frac{1}{\rho k}\}$. Suppose $\alpha_i = \frac{1}{\rho k}$. Then $d_{ij} = 1$ for all $j\in \x_i$. By the contrapositive of \cref{lem:mpb-ratios} (iii), we get $\x_i \subseteq H$. Since $|\x_i\cap H|\le 2$, we get $|\x_i| \le 2$. Thus, $\max_{j\in\x_i} d_i(\x_i\setminus\{j\}) \le 1 \le d_i(\x_\ell)$, since $\x_\ell\neq\emptyset$ and the instance is bivalued. Thus, $i$ is \efx towards $\ell$ if $\alpha_i = \frac{1}{\rho k}$, which implies that $\alpha_i = 1/\rho$. 

If $\exists j\in \x_\ell$ such that $d_{ij} = k$, then observe that:
\begin{align*}
d_i(\x_i)& = \alpha_i\cdot\p(\x_i) \tag{using the MPB condition} \\
&\le \frac{1}{\rho}\cdot \max\{1+\rho k, \frac{1}{2} + 2\rho k\} \tag{using $\alpha_i = \frac{1}{\rho}$ and invariants (v) and (vi)} \\
&= \max\bigg\{\frac{1}{\rho} + k, \frac{1}{2\rho} + 2k\bigg\} \\
&< 3k \tag{using $\rho k > \frac{1}{2}$} \\
&\le 3 d_i(\x_\ell). \tag{since $j\in \x_\ell$}
\end{align*}
This shows that $i$ is $3$-\efx towards $\ell$. Thus it must be that for all $j\in \x_\ell$, $d_{ij} = 1$. The MPB condition for $i$ implies that $\alpha_i \le d_{ij}/p_j$, showing that $p_j \le \rho$. \cref{lem:bivalued-payments} implies that $p_j = \rho$ for all $j\in \x_\ell$. Thus $\ell\in N_L$. Moreover, for any $j\in \x_\ell$, $\alpha_i = d_{ij}/p_j$, and hence $\x_\ell\subseteq\mpb_i$.
\end{proof}

The next two lemmas establish the invariants claimed by \cref{lem:invariants}.
\begin{restatable}{lemma}{lemPhase1Inv}\label{lem:phase1-inv}
The invariants of \cref{lem:invariants} are maintained during Phase 1 of \cref{alg:bivalued}.
\end{restatable}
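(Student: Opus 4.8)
The plan is to prove \cref{lem:phase1-inv} by induction on the number of Phase~1 swaps, showing that if all six invariants of \cref{lem:invariants} hold for the current allocation then they still hold after one swap. For the base case, the allocation entering Phase~1 is $\x^0$: invariant~(i) is the MPB property noted in the proof of \cref{lem:rounding-fpo}, and invariant~(ii) is \cref{lem:bivalued-initial}(i). For the remaining four I would first use \cref{lem:bivalued-payments}(ii) to observe that each agent's bundle consists either entirely of $M'$-chores or entirely of $L$- and $H$-chores, so that within the bundle of an agent in $N_L$ (resp.\ $N_H^1$, $N_H^2$) the chores of payment exceeding $\frac12$ are precisely the $H$-chores it holds, i.e.\ zero (resp.\ one, two) of them. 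Plugging this into the dichotomy of \cref{lem:bivalued-initial}(ii): for $i\in N_L\cup N_H^1$ the ``two high chores'' branch is impossible, so $\p_{-1}(\x^0_i)\le1$, giving (iii) and (v); for $i\in N_L$ this additionally gives $\p(\x^0_i)=\p_{-1}(\x^0_i)+\rho\le 1+\rho<\frac32<\frac43+\frac{\rho k}{3}$ (using $\rho<\frac12$ by \cref{lem:small-p} and the standing assumption $\rho k>\frac12$), giving (iv); and for $i\in N_H^2$ both branches force $\p_{-2}(\x^0_i)\le\frac12$ --- directly in the second branch, and in the first by subtracting the second $H$-chore (payment $\rho k>\frac12$) from $\p_{-1}(\x^0_i)\le1$ --- giving (vi).

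For the inductive step, consider a swap triggered by an agent $i\in N_H^2$ that is not $3$-\efx, with $\ell$ the agent it $3$-\efx-envies. Since the invariants hold before the swap, \cref{lem:efx-envy-mpb-edges} applies and yields $\alpha_i=1/\rho$, $\ell\in N_L$, and $\x_\ell\subseteq\mpb_i$. The swap moves one chore $j\in\x_i\cap H$ to $\ell$ and, only when $\p(\x_\ell)>1$, also moves one ($L$-)chore $j_1\in\x_\ell$ to $i$; all other bundles are untouched. Invariant~(i) is preserved because $\p$ does not change (so every $\mpb$ set is fixed), $j\in H\subseteq\mpb_\ell$ by \cref{lem:mpb-ratios}(i) (valid since $H\neq\emptyset$), and $j_1\in\x_\ell\subseteq\mpb_i$. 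Invariant~(ii) is preserved because after the swap $\ell$ holds $j$ with $p_j=\rho k>\frac12$, and $i$ still holds its other $H$-chore with payment $\rho k>\frac12$.

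The invariants (iii), (iv), (vi) restricted to agents other than $i$ and $\ell$ are inherited unchanged, and since a Phase~1 swap only removes $\ell$ from $N_L$ and $i$ from $N_H^2$ and never adds an agent to $N_L$ or $N_H^2$, these three invariants need no further checking. The substantive case is invariant~(v) for $i$ and $\ell$, which both land in $N_H^1$. For $\ell$: its post-swap bundle is a set of $L$-chores together with the single $H$-chore $j$, so its payment excluding the highest-paying chore equals $\p(\x_\ell\setminus S)$, which is $\p(\x_\ell)\le1$ when $S=\emptyset$ and is $\p(\x_\ell)-\rho=\p_{-1}(\x_\ell)\le1$ (invariant~(iii)) when $S=\{j_1\}$. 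For $i$: its post-swap bundle is $(\x_i\cap L)\cup S$ together with its single surviving $H$-chore, so its payment excluding the highest-paying chore is $\p(\x_i\cap L)+\p(S)\le\p_{-2}(\x_i)+\rho\le\frac12+\rho<1$, where $\p_{-2}(\x_i)=\p(\x_i\cap L)$ because the two highest-paying chores of $\x_i$ are its two $H$-chores, and we use invariant~(vi). This establishes (v) and completes the induction.

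I expect the difficulty to be bookkeeping rather than conceptual: the arguments hinge on correctly identifying, within each agent's bundle, the set $\{j:p_j>\frac12\}$ with the set of $H$-chores the agent holds --- which relies on \cref{lem:bivalued-payments}(ii) excluding bundles that mix $M'$-chores with $L$- or $H$-chores --- and on carefully distinguishing the sub-cases $S=\emptyset$ and $S=\{j_1\}$ when bounding $\ell$'s post-swap earning. None of these steps is deep, but all must be executed precisely for the induction to close.
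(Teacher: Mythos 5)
Your proof is correct and follows essentially the same structure as the paper's: verify the base case from Lemma~\ref{lem:bivalued-initial} (together with Lemma~\ref{lem:bivalued-payments}(ii) to identify the high-paying chores in each bundle), then show that a Phase~1 swap preserves all six invariants, isolating invariant~(v) for the two swapping agents as the only non-trivial case. Your accounting is if anything more careful than the paper's at two points: the paper's write-up cites ``invariant~(iii)'' to get $\p_{-2}(\x_i)\le\frac12$ for $i\in N_H^2$ and ``invariant~(v)'' to get $\p_{-1}(\x_\ell)\le1$ for $\ell\in N_L$, where it clearly means invariants~(vi) and~(iii) respectively; you cite the intended invariants. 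The only place you do extra work is the base case of~(vi), where you split on the two branches of Lemma~\ref{lem:bivalued-initial}(ii) — the first branch ($\p_{-1}(\x^0_i)\le1$) also forces $\p_{-2}(\x^0_i)\le\frac12$ after subtracting the second $H$-chore — whereas the paper just asserts~(vi) follows; your version is more explicit and correct.
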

\begin{proof}
We prove the statement inductively. We first show that the invariants hold at $(\x^0,\p)$. Invariants (i), (ii), (iii), (v) and (vi) follow from \cref{lem:bivalued-initial}. For invariant (iv), note that for any $i\in N_L$, we have $\p_{-1}(\x^0_i) \le 1$. By using $\rho < \frac{1}{2}$ and $\rho k > \frac{1}{2}$, we obtain:
\[
\p(\x^0_i) \le 1 + \rho < \frac{3}{2} = \frac{4}{3} + \frac{1}{6} < \frac{4}{3} + \frac{\rho k}{3},
\]
proving invariant (iv).

Suppose the invariants hold at an allocation $(\x,\p)$ during Phase 1. Consider a Phase 1 swap involving agents $i\in N_H^2$ and $\ell\in N$. Given that \cref{alg:bivalued} performed the swap, $i$ must $3$-\efx-envy $\ell$. \cref{lem:efx-envy-mpb-edges} implies that $\ell \in N_L$ and hence $\x_\ell \in L$. As per \cref{alg:bivalued}, if $\p(\x_\ell) > 1$, then $S = \{j_1\}$ for some $j_1\in\x_\ell$, otherwise $S = \emptyset$. Let $j\in \x_i\cap H$.

Let $\x'$ be the resulting allocation. Thus $\x'_i = \x_i\setminus\{j\}\cup S$, $\x'_\ell = \x_\ell \setminus S\cup \{j\}$, and $\x'_h = \x_h$ for all $h\notin\{i,\ell\}$. We show that the invariants hold at $(\x',\p)$. Since a Phase 1 step removes agents $i$ and $\ell$ from $N_H^2$ and $N_L$ respectively, invariants  (iii), (iv), (vi) continue to hold. For the rest, observe:
\begin{enumerate}
\item[(i)] $(\x',\p)$ is on MPB. This is because \cref{lem:efx-envy-mpb-edges} implies $S\subseteq \x_\ell \subseteq \mpb_i$, showing $\x'_i\subseteq \mpb_i$. Since $\ell\in N_L$ at $(\x,\p)$ and $j\in H$, \cref{lem:mpb-ratios} shows $j\in \mpb_\ell$ and hence $\x'_\ell \subseteq \mpb_\ell$.
\item[(ii)] Since $|\x'_i\cap H| = |\x'_\ell\cap H| = 1$, we have $\p(\x'_i) \ge \rho k$ and $\p(\x'_\ell)\ge \rho k$. Invariant (ii) follows by noting that $\rho k > \frac{1}{2}$.
\item[(v)] For agent $i$, note that $\p_{-1}(\x'_i) \le \p(\x_i\setminus \{j\} \cup S) = \p_{-1}(\x_i\setminus\{j\}) + \p(S)$. Invariant (iii) implies $\p_{-1}(\x_i\setminus\{j\}) = \p_{-2}(\x_i) \le \frac{1}{2}$, and $\p(S)\le \frac{1}{2}$ by construction. Hence $\p_{-1}(\x'_i) \le 1$.

For agent $\ell$, note that $\p_{-1}(\x'_\ell) = \p(\x_\ell\setminus S) = \p(\x_\ell)-\p(S)$. If $\p(\x_\ell) \le 1$, then $S=\emptyset$, implying that $\p_{-1}(\x'_\ell) \le 1$. On the other hand suppose $\p(\x_\ell) > 1$. Since invariant (v) holds at $(\x,\p)$, we have $\p_{-1}(\x_\ell)\le 1$, which gives $\p(\x_\ell)\le 1+\rho$. With $\p(S) = p_{j_1} = \rho$, we obtain $\p_{-1}(\x'_\ell) = \p(\x_\ell)-\p(S) \le 1$.
\end{enumerate}
Since the swap does not affect any $h\notin\{i,\ell\}$, the invariants continue to hold for $h$ after the swap. By induction, we have shown that the invariants of \cref{lem:invariants} hold after any Phase 1 swap.
\end{proof}

\begin{restatable}{lemma}{lemPhase2Inv}\label{lem:phase2-inv}
The invariants of \cref{lem:invariants} are maintained during Phase 2 of \cref{alg:bivalued}. Moreover, agents in $N_H^2$ remain $3$-\efx towards other agents.
\end{restatable}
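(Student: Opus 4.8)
The plan is to prove both claims simultaneously by induction on the number of Phase~2 swaps performed. For the base case, the start of Phase~2 coincides with the end of Phase~1: \cref{lem:phase1-inv} guarantees the invariants of \cref{lem:invariants} hold there, and since the Phase~1 while-loop has terminated, every agent in $N_H^2$ is $3$-\efx. For the inductive step, let $(\x,\p)$ be an allocation reached during Phase~2 at which the invariants hold and all $N_H^2$ agents are $3$-\efx, and consider a Phase~2 swap involving $i\in N_H^1$ and the agent $\ell=\arg\min\{\p(\x_h): i \text{ is }3\text{-\efx-envious of } h\}$. By \cref{lem:efx-envy-mpb-edges}, $\alpha_i=1/\rho$, $\ell\in N_L$, and $\x_\ell\subseteq\mpb_i$; let $j\in\x_i\cap H$ be the transferred chore, so $\x'_i=(\x_i\setminus\{j\})\cup\x_\ell$, $\x'_\ell=\{j\}$, $\x'_h=\x_h$ otherwise, with $i$ moving into $N_L$ and $\ell$ into $N_H^1$. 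Since $j\in\x_i$ has payment $\rho k$, \cref{lem:bivalued-payments}(ii) rules out $M'$-chores in $\x_i$, so $\x_i=\{j\}\sqcup(\x_i\cap L)$.

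Next I would verify each invariant for $(\x',\p)$. Invariant (i): $\x_i\setminus\{j\}\subseteq\mpb_i$ and $\x_\ell\subseteq\mpb_i$ give $\x'_i\subseteq\mpb_i$; and $\ell\in N_L$ with \cref{lem:mpb-ratios}(i) gives $j\in\mpb_\ell$, so $\x'_\ell\subseteq\mpb_\ell$. Invariant (ii): $\p(\x'_\ell)=\rho k>\tfrac12$, and since $\p(\x_i)\ge p_j$, $\p(\x'_i)=\p(\x_i)-p_j+\p(\x_\ell)\ge\p(\x_\ell)\ge\tfrac12$. Invariant (iii) is scoped to Phase~1, hence vacuous; invariant (vi) is untouched since $N_H^2$ and its bundles do not change; invariant (v) holds since $\x'_\ell$ is a single chore (so $\p_{-1}(\x'_\ell)=0$) and all other $N_H^1$ bundles are unchanged. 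The crucial one is invariant (iv) for the agent $i$ entering $N_L$, and here the key is the choice of $\ell$ as the \emph{minimum-payment} agent that $i$ $3$-\efx-envies: since $\x_i\cap L\neq\emptyset$ (otherwise $\x_i=\{j\}$ and $i$ is trivially \efx), the $3$-\efx-envy gives $3\p(\x_\ell)<\p_{-X}(\x_i)=\p(\x_i)-\rho=\rho k+\p(\x_i\cap L)-\rho\le\rho k+1-\rho$, where the last step uses invariant (v) in the form $\p_{-1}(\x_i)=\p(\x_i\cap L)\le 1$. Hence $\p(\x'_i)=\p(\x_i\cap L)+\p(\x_\ell)<1+\tfrac{\rho k+1-\rho}{3}\le\tfrac43+\tfrac{\rho k}{3}$; all other $N_L$ bundles are unchanged and $\ell$ has left $N_L$.

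It then remains to show every $h\in N_H^2$ stays $3$-\efx in $\x'$. Note $h\notin\{i,\ell\}$ and $\x'_h=\x_h$. Towards any $h'\notin\{i,\ell\}$ this is immediate, and towards $i$ it follows because $\x'_i\supseteq\x_\ell$, so $d_h(\x'_i)\ge d_h(\x_\ell)\ge\tfrac13\max_{j'\in\x_h}d_h(\x_h\setminus\{j'\})$ by the inductive hypothesis. The only nontrivial case is $h$ towards the singleton bundle $\x'_\ell=\{j\}$, which I would handle by splitting on $\alpha_h\in\{1/\rho,1/\rho k\}$ (\cref{lem:mpb-ratios}(ii)). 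If $\alpha_h=1/\rho k$, then \cref{lem:mpb-ratios}(iii) forces $\x_h\subseteq H$, so $\x_h$ consists of exactly two $H$-chores of disutility $1$, $\max_{j'}d_h(\x_h\setminus\{j'\})=1\le d_{hj}$, and $h$ is in fact \efx. If $\alpha_h=1/\rho$, the two $H$-chores of $\x_h$ have disutility $k$, the $L$-chores $S_h:=\x_h\cap L$ have disutility $1$, the MPB condition for $h$ at $j\in H$ forces $d_{hj}=k$, and invariant (vi) gives $\p(S_h)=|S_h|\rho\le\tfrac12$, hence $|S_h|\le\tfrac1{2\rho}<k$ using $\rho k>\tfrac12$; therefore $\max_{j'}d_h(\x_h\setminus\{j'\})=\max\{2k+|S_h|-1,\,k\}<3k=3\,d_h(\x'_\ell)$.

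The main obstacle, I expect, is invariant (iv): the bound supplied by invariant (iv) of the hypothesis is too weak after the swap, and one must instead extract the sharper estimate $\p(\x_\ell)<\tfrac{\rho k+1-\rho}{3}$ from the fact that $\ell$ realizes the minimum payment among the agents $i$ $3$-\efx-envies, combined with $\p_{-1}(\x_i)\le 1$. A close second is the counting step $|S_h|<k$ in the $\alpha_h=1/\rho$ case of the $N_H^2$ argument, which is precisely where the standing assumption $\rho k>\tfrac12$ (equivalently, via \cref{lem:small-pk}, the case $m>2n$) enters; everything else is bookkeeping over the two agents whose bundles change.
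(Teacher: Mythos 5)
Your proof is correct, and the invariant-maintenance half follows the paper's proof closely (same case structure; your use of $\p_{-X}(\x_i)=\p(\x_i)-\rho$ rather than $\p(\x_i)$ in the invariant~(iv) calculation gives a marginally tighter constant $\tfrac43+\tfrac{\rho k-\rho}{3}$, but otherwise the verification is identical).

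Where you genuinely diverge is the second claim, that $N_H^2$ agents stay $3$-\efx. The paper's argument is historical: it observes that every $N_L$-bundle arising during Phase~2 \emph{contains} the end-of-Phase-1 bundle $\x^1_{\ell_1}$ of some original $N_L$ agent $\ell_1$ (since a Phase~2 swap hands the outgoing $N_L$ bundle to the new $N_L$ agent), that $h\in N_H^2$ did not $3$-\efx-envy $\ell_1$ at the end of Phase~1 (else Phase~1 would not have terminated), and that $\x_h$ is frozen thereafter. Combined with \cref{lem:efx-envy-mpb-edges} (which says $h$ could only $3$-\efx-envy an $N_L$ agent), that covers everything. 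Your argument instead is local and computational: you split on the three bundle types at time $t+1$, dispatch $i$ via the containment $\x'_i\supseteq\x_\ell$, and for the singleton $\x'_\ell=\{j\}$ you case on $\alpha_h\in\{1/\rho,1/\rho k\}$, getting an exact disutility count $\max_{j'}d_h(\x_h\setminus\{j'\})<3k$ from invariant~(vi) ($\p(S_h)\le\tfrac12$) and $\rho k>\tfrac12$. Both routes are valid. The paper's argument is shorter and does not need to open up the MPB ratio of $h$, but requires tracking the provenance of $N_L$ bundles across iterations. Your version is self-contained at each step and also directly handles the new singleton bundle $\x'_\ell$ without needing to appeal to \cref{lem:efx-envy-mpb-edges} again; the only small slip is that when $S_h=\emptyset$ the correct max is $k$ rather than $2k-1$, but since both are strictly below $3k$ the bound survives.
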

\begin{proof}
We prove the statement inductively. \cref{lem:phase1-inv} shows the invariants hold at the end of Phase 1. Suppose the invariants hold at an allocation $(\x,\p)$ during Phase 2. Consider a Phase 2 swap involving agents $i\in N_H^1$ and $\ell\in N$. Given that \cref{alg:bivalued} performed the swap, $i$ must $3$-\efx-envy $\ell$. \cref{lem:efx-envy-mpb-edges} implies that $\ell \in N_L$ and hence $\x_\ell \in L$. Let $j\in \x_i\cap H$.

Let $\x'$ be the resulting allocation. Thus $\x'_i = \x_i\setminus\{j\}\cup \x_\ell$, $\x'_\ell = \{j\}$, and $\x'_h = \x_h$ for all $h\notin\{i,\ell\}$. We now show that the invariants hold at $(\x',\p)$. Since we are in Phase 2, invariant (iii) does not apply, and since Phase 2 swaps do not alter the allocation of agents in $N_H^2$, invariant (vi) continues to hold. For the rest, observe:
\begin{enumerate}
\item[(i)] $(\x',\p)$ is on MPB. This is because \cref{lem:efx-envy-mpb-edges} implies $\x_\ell \subseteq \mpb_i$, showing $\x'_i\subseteq \mpb_i$. Since $\ell\in N_L$ at $(\x,\p)$ and $j\in H$, \cref{lem:mpb-ratios} shows $j\in \mpb_\ell$ and hence $\x'_\ell \subseteq \mpb_\ell$.
\item[(ii)] For agent $i$, $\p(\x'_i) \ge \p(\x_\ell) \ge \frac{1}{2}$, since invariant (ii) holds in $(\x,\p)$. For agent $\ell$, note that $\p(\x'_\ell) = p_j = \rho k > \frac{1}{2}$.

\item[(iv)] We want to show that $\p(\x'_i) \le \frac{4}{3} + \frac{\rho k}{3}$. To see this note that since $i$ $3$-\efx-envies $\ell$ in $\x$, $i$ must $3$-\s{pEF}-envy $\ell$ in $(\x,\p)$. Thus $\p(\x_i) > 3\cdot\p(\x_\ell)$. Now $\p_{-1}(\x_i) \le 1$ due to invariant (v), which shows $\p(\x_i) \le 1+\rho k$. We therefore obtain $\p(\x_\ell) < \frac{\p(\x_i)}{3} \leq \frac{1+\rho k}{3}$.

Now $\p(\x'_i) = \p(\x_i\setminus\{j\}) + \p(\x_\ell) \le 1 + \frac{1+\rho k}{3} = \frac{4}{3} + \frac{\rho k}{3}$, where we once again use $\p_{-1}(\x_i) = \p(\x_i\setminus\{j\}) = 1$. The invariant thus follows. 

\item[(v)] Note that $\ell\in N_H^1$ in $(\x',\p)$, and $\p_{-1}(\x_\ell) = 0 < 1$.
\end{enumerate}
The swap does not affect an agent $h\notin\{i,\ell\}$ and hence the invariants continue to hold for $h$ after the swap. By induction, we conclude that the invariants of \cref{lem:invariants} hold after any Phase 2 swap.

We now show that $i\in N_H^2$ cannot $3$-\efx-envy an agent $\ell\in N$. \cref{lem:efx-envy-mpb-edges} implies that $\ell\in N_L$ and hence $\x_\ell \subseteq L$. Let $\x^1$ be the allocation at the end of Phase 1. Note that the bundle $\x_\ell \subseteq L$ is obtained via a series of Phase 2 swaps initiated with some agent $\ell_1$ in $(\x^1,\p)$. Here, $\ell_1\in N_L$ at $(\x^1,\p)$. Thus $\x_\ell \supseteq \x^1_{\ell_1}$. Agent $i\in N_H^2$ did not $3$-\efx-envy $\ell_1$ in $\x^1$, otherwise \cref{alg:bivalued} would have performed a Phase 1 swap between agent $i$ and $\ell_1$. Since $\x_i = \x^1_i$ as \cref{alg:bivalued} does not alter allocation of agents in $N_H^2$ and $\x_\ell \supseteq \x^1_{\ell_1}$, $i$ will not $3$-\efx-envy $\ell$ in $\x$ either. Thus, all agents in $N_H^2$ continue to remain $3$-\efx during Phase 2.
\end{proof}

We require one final lemma showing that $N_L$ agents do not $3$-\efx-envy any other agent.
\begin{restatable}{lemma}{lemNLenvy}\label{lem:nl-envy}
At any allocation $(\x,\p)$ in the run of \cref{alg:bivalued}, $\x$ is $3$-\efx for every agent in $N_L$.
\end{restatable}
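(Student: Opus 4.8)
\noindent\textbf{Proof plan for \cref{lem:nl-envy}.}
The plan is to induct on the sequence of swaps performed by \cref{alg:bivalued}, using the invariants of \cref{lem:invariants} together with the structural facts of \cref{lem:mpb-ratios} and \cref{lem:efx-envy-mpb-edges}. First I set up the basic reductions. By \cref{lem:mpb-ratios}(i), every $i\in N_L$ has MPB ratio $\alpha_i=1/\rho$, and since $\x_i\subseteq L$ every chore in $\x_i$ pays $\rho$ and has $d_{ij}=1$; hence $\max_{j'\in\x_i}d_i(\x_i\setminus\{j'\})=|\x_i|-1=\tfrac1\rho\,\p_{-1}(\x_i)$ (trivial if $|\x_i|\le 1$). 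Because $(\x,\p)$ is MPB, $d_i(\x_h)\ge\tfrac1\rho\,\p(\x_h)\ge\tfrac1{2\rho}$ for every $h$ by invariant (ii), and the standing assumption $\rho k>\tfrac12$ of this subsection gives $\tfrac1\rho<2k$. With these, the ``easy'' targets fall out immediately: if $\x_h$ contains an $H$-chore then \cref{lem:mpb-ratios}(i) gives $d_{ij'}=k$ for that chore, and if $\x_h$ contains an $M'$-chore $j'$ then the MPB condition gives $d_{ij'}\ge p_{j'}/\rho>1$, hence $d_{ij'}=k$; either way $d_i(\x_h)\ge k$, and invariant (iv) ($\p(\x_i)\le\tfrac43+\tfrac{\rho k}3$) with $\tfrac1\rho<2k$ yields $d_i(\x_i\setminus\{j'\})\le\tfrac{4}{3\rho}+\tfrac k3-1\le 3k\le 3d_i(\x_h)$. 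During Phase~1 the remaining targets are also easy, since invariant (iii) gives $\p_{-1}(\x_i)\le 1$, so $d_i(\x_i\setminus\{j'\})\le\tfrac1\rho\le 2\cdot\tfrac1{2\rho}\le 2d_i(\x_h)$ for any $h$.

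The heart of the argument is the case where both $i$ and the envied agent $h$ lie in $N_L$ during Phase~2. I would first record that \emph{an agent's bundle is frozen as long as it stays in $N_L$}: a Phase~1 or Phase~2 swap modifies only the bundles of the participating $N_H^2$/$N_H^1$ agent and of the $N_L$ agent swapped out, and the latter leaves $N_L$ upon receiving an $H$-chore. Hence it suffices, at each swap, to re-check $3$-\efx for the agents whose bundles changed. Consider a Phase~2 swap in which $i\in N_H^1$ swaps with $\ell\in N_L$, producing $\x'$ with $\x'_i=(\x_i\setminus\{j\})\cup\x_\ell\subseteq L$. By \cref{lem:efx-envy-mpb-edges}, $\alpha_i=1/\rho$ and $\x_\ell\subseteq\mpb_i$, so the $H$-chore $j$ has $d_{ij}=k$ and $d_i(\x_\ell)=\tfrac1\rho\,\p(\x_\ell)$; all chores of $\x'_i$ have $d_i$-value $1$. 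Writing $M:=\max_{j'\in\x_i}d_i(\x_i\setminus\{j'\})=d_i(\x_i)-1$ and using invariant (v) ($\p_{-1}(\x_i)\le 1$), one gets $d_i(\x_i)=k+\tfrac1\rho\p_{-1}(\x_i)\le k+\tfrac1\rho$, so $M\le k+\tfrac1\rho-1<3k$; and since $i$ $3$-\efx-envied $\ell$, $d_i(\x_\ell)<\tfrac M3$, whence $d_i(\x'_i\setminus\{j'\})=d_i(\x'_i)-1=\big(d_i(\x_i\setminus\{j\})+d_i(\x_\ell)\big)-1<\big((M{+}1{-}k)+\tfrac M3\big)-1=\tfrac43M-k\le M$. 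Now split on the $N_L$ target $h$: if $\p(\x_h)\ge\p(\x_\ell)$, then $d_i(\x_h)\ge\tfrac1\rho\p(\x_h)\ge\tfrac1\rho\p(\x_\ell)=d_i(\x_\ell)$ and, since $d_i(\x_i\setminus\{j\})=\tfrac1\rho\p_{-1}(\x_i)\le\tfrac1\rho\le 2d_i(\x_\ell)$, we get $d_i(\x'_i)\le 3d_i(\x_\ell)\le 3d_i(\x_h)$; if $\p(\x_h)<\p(\x_\ell)$, then the minimum-payment selection rule of Phase~2 forces that $i$ did \emph{not} $3$-\efx-envy $h$, i.e.\ $M\le 3d_i(\x_h)$, so $d_i(\x'_i\setminus\{j'\})<M\le 3d_i(\x_h)$. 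Targets $h$ with an $H$- or $M'$-chore (in particular $\x'_\ell=\{j\}$, for which $d_i(\x'_\ell)=k$) are handled by the easy bound $d_i(\x_h)\ge k$.

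It remains to verify that an agent $i$ \emph{already} in $N_L$ stays $3$-\efx after a later Phase~2 swap $(i_2,\ell_2)$, where only $\x_{i_2}$ and $\x_{\ell_2}$ change. Towards $\ell_2$, whose bundle becomes the single $H$-chore with $d_i$-value $k$ (as $i\in N_L$), the invariant-(iv) bound $d_i(\x_i\setminus\{j'\})=\tfrac1\rho\p_{-1}(\x_i)\le\tfrac{4}{3\rho}+\tfrac k3-1\le 3k$ suffices. Towards $i_2$, note $\x'_{i_2}\supseteq\x_{\ell_2}$ (the pre-swap bundle of $\ell_2$), so by the inductive hypothesis $d_i(\x_i\setminus\{j'\})\le 3d_i(\x_{\ell_2})\le 3d_i(\x'_{i_2})$. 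Bundles other than $\x_{i_2},\x_{\ell_2}$ are unchanged, so those targets follow from the inductive hypothesis. The Phase~1 step is analogous, using invariant (iii) in place of (iv) and the fact that both bundles modified by a Phase~1 swap retain an $H$-chore; the base case $\x^0$ is \cref{lem:nl-init}.

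I expect the main obstacle to be precisely the two-$N_L$-agent comparison just handled: the plain MPB/payment estimate ($d_i(\x_i\setminus\{j'\})\le\tfrac1\rho(\p(\x_i)-\rho)$ versus $d_i(\x_h)\ge\tfrac1{2\rho}$) is insufficient once $\rho k$ is large, since invariant (iv) only caps $\p(\x_i)$ by $\tfrac43+\tfrac{\rho k}3$. The resolution is to avoid that estimate entirely at the moment a new $N_L$ bundle is created, exploiting (a) the minimum-payment selection rule of Phase~2, which guarantees $i$ was not $3$-\efx-envious of any $N_L$ bundle smaller than $\x_\ell$, and (b) the tightness $\x_\ell\subseteq\mpb_i$ from \cref{lem:efx-envy-mpb-edges}, which turns the MPB inequality into the equality $d_i(\x_\ell)=\tfrac1\rho\p(\x_\ell)$ for the absorbed bundle and propagates the bound $d_i(\x_i\setminus\{j'\})\le 3d_i(\x_{\ell_2})$ through subsequent swaps.
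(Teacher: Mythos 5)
Your proof is correct and takes essentially the same route as the paper: reduce the potential envy targets to $N_L$ (everything else gives $d_i(\x_h)\ge k$ via \cref{lem:mpb-ratios} and \cref{lem:efx-envy-mpb-edges}, which bounds $d_i(\x_i\setminus\{j'\})<3k$ by invariant (iv) and $\rho k>\tfrac12$), and then resolve the $N_L$-versus-$N_L$ case by exploiting the minimum-payment selection rule of Phase~2 together with $\x_\ell\subseteq\mpb_i$ and invariants (ii) and (v). The one structural difference is framing: the paper argues by contradiction from the earliest violation and asserts "it must be that in $\x'$ agent $i$ was in $N_H^1$", whereas you induct explicitly on swaps and separately check each agent whose bundle or target changed. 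Your version fills in the corner case the paper elides --- an already-in-$N_L$ agent $i$ whose target $h=i_2$ acquires a new bundle $\x'_{i_2}\supseteq\x_{\ell_2}$ during a swap $i$ does not participate in; you handle it by the inductive hypothesis on $\ell_2$ plus monotonicity, which is exactly the missing justification for the paper's assertion. Your two sub-cases ($\p(\x_h)\ge\p(\x_\ell)$ vs.\ $\p(\x_h)<\p(\x_\ell)$) are a mild reorganization of the paper's single deduction that $\p(\x'_\ell)\le\p(\x'_h)$, and lead to the same bound.
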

\begin{proof}
\cref{lem:nl-init} shows that the initial allocation $\x^0$ is $3$-\efx for agents in $N_L$. Let $\x$ be the earliest allocation in the run of \cref{alg:bivalued} in which an agent $i\in N_L$ $3$-\efx-envies another agent $h\in N$. Using $\alpha_i = 1/\rho$ from \cref{lem:mpb-ratios}, the bound on $\p(\x_i)$ from \cref{lem:invariants} (iv), and $\rho k > \frac{1}{2}$, we note:
\begin{equation}\label{eq:nl-envy}
d_i(\x_i) = \alpha_i\cdot\p(\x_i) \le \frac{1}{\rho}\cdot \bigg(\frac{4}{3}+\frac{\rho k}{3}\bigg) < 3k.
\end{equation}

Thus if $\exists j\in \x_h$ s.t. $d_{ij} = k$, then by \eqref{eq:nl-envy}, $d_i(\x_i) < 3k \le 3 d_i(\x_h)$, showing that $i$ does not $3$-\efx-envy $h$ in $\x$. Hence it must be that for all $j\in\x_h$, $d_{ij} = 1$. This also implies $\x_h \subseteq \mpb_i$, since $\alpha_i = 1/\rho = d_{ij}/p_j$ for any $j\in \x_h$. We now consider three cases based on the category of $h$.
\begin{itemize}[leftmargin=*]
\item $h\in N_0$. For $j\in \x_h$, the MPB condition of $i$ implies $\alpha_i \le d_{ij}/p_j$, implying $d_{ij} \geq p_j/\rho$. Since $h\in N_0$, we have $j\in M'$ and $p_j > \rho$. Thus $d_{ij} = k$ for $j\in\x_h$, which is a contradiction.
\item $h\in N_H$. By definition of $N_H$, $\exists j\in \x_h$ s.t. $j\in H$. Since $i\in N_L$, by \cref{lem:mpb-ratios} (i) we get $d_{ij} = k$, which is a contradiction.
\item $h\in N_L$. Since $\x^0$ is $3$-\efx for agents in $N_L$, and Phase 1 swaps only remove agents from $N_L$, it cannot be that $i$ starts $3$-\efx-envying $h\in N_L$ during Phase 1. Let $\x'$ be the allocation immediately preceding $\x$, from which \cref{alg:bivalued} performed a Phase 2 swap. It must be that in $\x'$, agent $i$ was in $N_H^1$ and was involved with a Phase 2 swap with another agent $\ell\in N_L$. Since $\x'_h = \x_h \subseteq \mpb_i$, we must have $\p(\x'_\ell) \le \p(\x'_h)$ by the choice of $\ell$ at $(\x',\p)$. Else, if $\p(\x'_\ell) > \p(\x'_h)$, then $d_i(\x'_h) = \p(\x'_h) < \p(\x'_\ell) \leq d_i(\x'_\ell)$ and \cref{alg:bivalued} would have chosen $h$ over $\ell$ for the swap in $\x'$.

Note that $\x_i = (\x'_i \setminus H) \cup \x'_\ell$. By \cref{lem:invariants} (v), we know that $\p_{-1}(\x'_i) = \p(\x'_i\setminus H) \le 1$. Thus:
\[ 
\p(\x_i) = \p(\x'_i\setminus H) + \p(\x'_\ell) \le 1 + \p(\x'_h) \le 3\p(\x_h),
\]
where the last inequality uses $\x_h = \x'_h$ and $\p(\x_h)\ge \frac{1}{2}$. Thus $i$ is actually $3$-\efx towards $h$.
\end{itemize}
Since these cases are exhaustive, we conclude that it is not possible for an agent $i\in N_L$ to $3$-\efx-envy any other agent during the course of \cref{alg:bivalued}.
\end{proof}

We are now in a position to summarize and conclude our analysis of \cref{alg:bivalued}. \begin{restatable}{theorem}{thmBivaluedER}\label{thm:bivalued-er}
Given an ER equilibrium of a bivalued instance with $m > 2n$ and chore earning limit $\beta = \frac{1}{2}$, \cref{alg:bivalued} returns a $3$-\efx and \fpo allocation in polynomial time.
\end{restatable}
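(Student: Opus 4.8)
The plan is to assemble the lemmas already established in this section into a clean accounting of (a) termination in polynomially many swaps, (b) the $3$-\efx guarantee for every agent at termination, and (c) fractional Pareto optimality. First I would handle the trivial branches: if $\rho k \le \frac{1}{2}$, \cref{lem:small-pk} gives that $\x^0$ is $3$-\efx, and \cref{lem:rounding-fpo} gives \fpo, so \cref{alg:bivalued} correctly returns $\x^0$; if $H = \emptyset$, then $N = N_0 \cup N_L$ and Lemmas~\ref{lem:n0-efx-envy} and \ref{lem:nl-init} show $\x^0$ is $2$-\efx (hence $3$-\efx) and \fpo. So assume $\rho k > \frac{1}{2}$ and $H \neq \emptyset$; by \cref{lem:small-p} we also have $\rho < \frac12$.

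Next I would argue termination. \cref{lem:invariants} (via Lemmas~\ref{lem:phase1-inv} and \ref{lem:phase2-inv}) certifies that the invariants, in particular MPB-ness and the earning bounds, are preserved by every swap. Each Phase~1 swap takes an agent from $N_H^2$ into $N_H^1$ and an agent from $N_L$ into $N_H^1$, strictly decreasing $|N_H^2|$, so Phase~1 performs at most $n/2$ swaps. Each Phase~2 swap moves the $3$-\efx-envious agent $i \in N_H^1$ out of $N_H^1$ into $N_L$ (with a single $H$-chore going to $\ell$, who then holds one $H$-chore and is \efx), and \cref{lem:phase2-inv} shows agents in $N_H^2$ never become $3$-\efx-envious during Phase~2; I would verify that the potential ``number of agents in $N_H^1$ that are not $3$-\efx'' strictly decreases, giving at most $n$ Phase~2 swaps. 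Combined with \cref{lem:rounding-runtime} for the initial rounding call, the whole algorithm runs in $\poly{n,m}$ time.

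Then I would establish the $3$-\efx guarantee at termination by a case split on the four agent classes. Agents in $N_0$: their bundles are never touched by any swap and $\p(\x_h) \ge \frac12$ holds throughout by \cref{lem:invariants}(ii), so \cref{lem:n0-efx-envy} gives $2$-\efx. Agents in $N_L$: \cref{lem:nl-envy} gives $3$-\efx at every allocation in the run. Agents in $N_H^2$: \cref{lem:phase2-inv} shows they stay $3$-\efx through Phase~2, and Phase~1 terminates only when no agent of $N_H^2$ is $3$-\efx-envious. Agents in $N_H^1$: Phase~2 terminates only when none of them is $3$-\efx-envious; I should note that the invariants are maintained so an agent placed into $N_H^1$ during the run is covered, and an agent that leaves $N_H^1$ for $N_L$ is then covered by the $N_L$ case. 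Hence every agent is $3$-\efx in the final allocation. Finally, \fpo follows because every swap preserves the MPB property (invariant (i) of \cref{lem:invariants}), so by \cref{thm:fpo} the terminal allocation $(\x,\p)$ is \fpo.

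The main obstacle I anticipate is not any single inequality—those are already packaged in Lemmas~\ref{lem:efx-envy-mpb-edges}--\ref{lem:nl-envy}—but making the termination/coverage bookkeeping airtight: one must check that the class labels $N_L, N_H^1, N_H^2, N_0$ updated by the algorithm stay consistent with the actual bundles so that each lemma applies to the class it names, and that no agent oscillates between classes in a way that stalls progress. Once that correspondence is nailed down, the theorem is just the conjunction of the cited lemmas.
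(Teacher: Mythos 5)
Your proposal is correct and follows essentially the same route as the paper's proof: reduce to the non-trivial regime $\rho k > \frac12$, $H \neq \emptyset$; dispatch $N_0$ via \cref{lem:n0-efx-envy} and invariant (ii), $N_L$ via \cref{lem:nl-envy}, $N_H^2$ and $N_H^1$ via the swap progress measures and \cref{lem:phase2-inv}; and conclude \fpo from invariant (i) and \cref{thm:fpo}. The one place you flag as ``to be verified'' --- that no agent already in $N_H^1$ becomes newly $3$-\efx-envious after a Phase-2 $(i,\ell)$ swap --- is also stated without explicit proof in the paper; it follows because such an $h$ remains $3$-\efx towards $i$ (whose new bundle contains $\x_\ell$) and towards $\ell$ (whose new bundle is a single $H$-chore of disutility $k$ for $h$, bounded below as in the proof of \cref{lem:efx-envy-mpb-edges}).
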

\begin{proof}
Let $(\x^0,\p)$ be the initial allocation obtained by using \cref{alg:rounding} on the ER equilibrium, and let $\rho = \min_j p_j$ be the minimum payment. \cref{lem:small-pk} shows that $\x^0$ is $3$-\efx if $\rho k \le \frac{1}{2}$, hence we assume otherwise. Since any allocation $(\x,\p)$ during the course of \cref{alg:bivalued} satisfies invariant (ii) of \cref{lem:invariants}, \cref{lem:n0-efx-envy} implies that $\x$ is $3$-\efx for agents in $N_0$. 

\cref{lem:nl-envy} shows that any allocation $\x$ in the course of \cref{alg:bivalued} is $3$-\efx for agents in $N_L$. Any potential \efx-envy is therefore from some agent $i\in N_H$. \cref{lem:efx-envy-mpb-edges} shows that if $i\in N_H$ is not $3$-\efx towards $\ell$, then $\ell\in N_L$. If $i\in N_H^2$, $i$ participates in a Phase 1 swap with agent $\ell$, after which $i$ and $\ell$ get removed from $N_H^2$ and $N_L$ respectively. This implies that Phase 1 terminates after at most $n/2$ swaps, and the resulting allocation is $3$-\efx for all agents in $N_H^2$. If $i\in N_H^1$, $i$ participates in a Phase 2 swap with agent $\ell$, after which $\ell$ is added to $N_H^1$ and is assigned a single chore and $\ell$ does not have \efx-envy. This implies that Phase 2 terminates after at most $n$ swaps, since the number of agents in $N_H^1$ who are not $3$-\efx strictly decreases. The resulting allocation is $3$-\efx for all agents in $N_H^1$. \cref{lem:phase2-inv} also shows that Phase 2 swaps do not cause $N_H^2$ agents to start $3$-\efx-envying any agent in $N_L$. Thus the allocation on termination of \cref{alg:bivalued} is $3$-\efx. By invariant (i) of \cref{lem:invariants}, $\x$ is also \fpo. Since there are at most $3n/2$ swaps, \cref{alg:bivalued} terminates in polynomial time.
\end{proof}

\subsection{EFX and PO for $m\le 2n$}\label{sec:bivalued-small}
We design \cref{alg:bivalued-small} for bivalued instances with $m\le 2n$. \cref{alg:bivalued-small} begins with a balanced allocation computed using \cref{alg:balanced}, and essentially runs \cref{alg:bivalued}. Since the number of chores is limited, a careful analysis shows that the guarantee of the resulting allocation can be improved to \efx and \fpo. 
\begin{restatable}{theorem}{thmBivaluedSmall}\label{thm:bivalued-small}
Given a bivalued instance with $m \le 2n$, \cref{alg:bivalued-small} returns an \efx and \fpo allocation in polynomial time.
\end{restatable}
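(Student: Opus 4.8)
The plan is to mirror the analysis of \cref{alg:bivalued} (Theorem~\ref{thm:bivalued-er}) but exploit the fact that when $m \le 2n$, every agent in a balanced starting allocation holds at most two chores, which drives the approximation factor down from $3$ to $1$. First I would observe that the initial allocation $(\x^0,\p)$ produced by \cref{alg:balanced} is balanced and \fpo, and in particular MPB; since $m\le 2n$, every agent holds at most two chores, and since $m \le n$ is the degenerate case where the allocation is already \efone (hence \efx, as each agent holds one chore), we may assume $n < m \le 2n$. Then I would re-derive the analogues of Lemmas~\ref{lem:bivalued-payments} and \ref{lem:mpb-ratios}, which hold for \emph{any} CE of a $\{1,k\}$-bivalued instance and thus carry over verbatim: every payment lies in $[\rho,\rho k]$, and an agent holding a chore with payment strictly between $\rho$ and $\rho k$ holds only chores of that payment. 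This lets me reuse the chore/agent classification of \cref{def:classification} ($L$, $H$, $M'$ chores and $N_L$, $N_H^1$, $N_H^2$, $N_0$ agents).

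The key difference is the payment bounds. In the $m>2n$ case, \cref{lem:invariants} gives $\p_{-1}(\x_i)\le 1$ for $N_H^1$ agents and similar bounds derived from the ER rounding; here, since each agent holds at most two chores, an agent $i\in N_H$ holds either one $H$-chore plus at most one $L$-chore, or two $H$-chores and nothing else. So the analogue of the invariants is: agents in $N_0$ hold at most two $M'$-chores of equal payment; agents in $N_L$ hold at most two $L$-chores; agents in $N_H^1$ hold one $H$-chore and at most one other chore (which must be an $L$-chore by \cref{lem:mpb-ratios}(iii), forcing $\alpha_i=1/\rho$); agents in $N_H^2$ hold exactly two $H$-chores. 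I would then re-run the two-phase swap procedure of \cref{alg:bivalued}: in Phase 1, an $N_H^2$ agent $i$ who EFX-envies someone envies (by the analogue of \cref{lem:efx-envy-mpb-edges}) an agent $\ell\in N_L$; transfer one $H$-chore from $i$ to $\ell$ and possibly one $L$-chore back, preserving MPB. In Phase 2, an $N_H^1$ agent $i$ EFX-envying $\ell\in N_L$ gives her $H$-chore to $\ell$ and takes $\ell$'s bundle. The crucial bookkeeping is that now $\ell$'s bundle has at most two $L$-chores, so after the swap $i$ holds its single $L$-chore plus $\le 2$ $L$-chores, i.e.\ $\le 3$ $L$-chores — and I must check that such an agent, now in $N_L$, is actually \efx (not just $3$-\efx). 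This is where the bivalued structure and the exact-EFX (rather than approximate) claim bites: I would argue, as in \cref{lem:nl-envy}, that any agent $h$ that an $N_L$ agent could envy must have a bundle consisting entirely of $d_{i\cdot}=1$ chores, and then a direct disutility count using $\rho k > 1/2$ (or the $\rho k \le 1/2$ shortcut of \cref{lem:small-pk}, which already gives $3$-EF and needs tightening) closes the gap to exact \efx.

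The main obstacle I anticipate is precisely the last point: showing the bound improves from $3$-\efx to exact \efx. In the $m>2n$ analysis, the slack came from ER-rounding bounds like $\p_{-1}(\x_i)\le 1$ combined with $\p(\x_i)\ge \tfrac12$, yielding ratio $2$ or $3$; here I instead need a \emph{structural} argument: because agents hold $O(1)$ chores of only two possible disutility values, the quantity $\max_{j\in\x_i} d_i(\x_i\setminus\{j\})$ is a small explicit integer combination of $1$ and $k$, and I must check case-by-case (on $|\x_i|$, on how many chores have $d_{ij}=k$, and on whether the envied bundle is a singleton $H$-chore or an $L$-bundle) that it never exceeds $d_i(\x_h)$. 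The cases where an agent ends up with three $L$-chores after a Phase 2 swap, or where $N_H^2$ is nonempty, will require the most care; I would handle the $N_H^2$ case by noting (as in \cref{lem:phase2-inv}) that Phase 2 swaps never create new envy from $N_H^2$ agents, and that a balanced start already makes $N_H^2$ agents \efx since they hold exactly two $H$-chores and everyone's bundle is nonempty. Termination is immediate: Phase 1 runs $\le n/2$ times and Phase 2 runs $\le n$ times, each preserving MPB, so \fpo follows from \cref{thm:fpo} and the algorithm runs in polynomial time.
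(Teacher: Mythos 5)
Your high-level plan (mirror Algorithm~\ref{alg:bivalued}'s two-phase swap analysis, starting from the balanced allocation) is the right one, but two of the specific claims you lean on are wrong, and together they undermine the ``case-by-case on small sizes'' argument you sketch for closing the gap from $3$-\efx to exact \efx.

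First, you propose to reuse the full classification of \cref{def:classification}, including $M'$-chores and $N_0$ agents, carrying over \cref{lem:bivalued-payments} ``verbatim.'' But the paper proves a strictly stronger statement for the balanced starting allocation (\cref{lem:bivalued-payments-2}): every payment is \emph{exactly} a power of $k$, and they all lie in $\{k^r, k^{r+1}\}$, so after normalization $p_j \in \{1,k\}$ for all $j$. This means $M' = \emptyset$ and $N_0 = \emptyset$ in the $m \le 2n$ case. That collapse is not cosmetic: it removes a whole case ($h \in N_0$) from the analogue of \cref{lem:nl-envy}, and it is what makes the invariant $\p(\x_i) < 1+k$ for $N_L$ agents (i.e.\ $\p_{-X}(\x_i) < k$) directly usable in the disutility comparison. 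Without noticing this, you would be carrying around a vacuous $N_0$ branch and missing the clean $\{1,k\}$ structure that drives the exact-\efx calculation.

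Second and more seriously, you repeatedly assert that agents hold $O(1)$ chores throughout, e.g.\ that an agent after a Phase~2 swap holds ``$\le 3$ $L$-chores.'' This is false. A Phase~2 swap gives agent $i$ the entire bundle $\x_\ell$ of an $N_L$ agent; by the invariant $\p(\x_\ell) < 1+k$ with unit $L$-chore payments, $\x_\ell$ can contain up to $k-1$ chores, so $|\x'_i|$ can be $\Theta(k)$, not $O(1)$. Your proposed route — enumerate $|\x_i|$ and count how many chores have $d_{ij}=k$ — therefore does not terminate in a bounded case analysis. The paper's argument (\cref{lem:nl-envy-small}) is instead purely payment-based: it uses $\p_{-X}(\x_i) = \p_{-1}(\x_i) < k$ together with $\x_h \subseteq \mpb_i$, and in the crucial $h \in N_L$ subcase it exploits the \emph{choice} of $\ell$ in Phase~2 as the \emph{least-payment} envied agent, to chain $\p_{-1}(\x_i) = \p(\x'_i\setminus H) + \p(\x'_\ell) - 1 \le \p(\x'_\ell) \le \p(\x'_h) \le \p(\x_h)$, using invariant (v) to bound $\p(\x'_i\setminus H)\le 1$. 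That tie-breaking argument — not a size bound — is what delivers exact \efx rather than $3$-\efx, and it is absent from your proposal.

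So: the skeleton is right, and your observations about the initial balanced allocation (at most two chores per agent, $N_H^2$ agents holding exactly two $H$-chores) are correct and used by the paper. But you would need to (a) prove that Algorithm~\ref{alg:balanced} produces only $\{1,k\}$ payments so that $N_0=M'=\emptyset$, and (b) replace the size-counting argument with the payment-invariant plus least-envied-agent-selection argument for the $N_L$-vs-$N_L$ envy case, since bundle sizes are not actually bounded by a constant during Phase~2.
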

The main ideas of the analysis are similar to those presented in \cref{sec:bivalued-large}, and hence we defer this section in a self-contained \cref{app:bivalued-small}. Surprisingly, we show through \cref{ex:aEFX-2-ary} shows that if we slightly generalize the class to 2-ary instances, an $\alpha$-\efx and \fpo allocation need not exist for any constant $\alpha\ge 1$. 

\section{Existence of Earning-Restricted Equilibria}\label{sec:er}
We prove \cref{thm:er-existence} in this section.
\thmerexistence*

In what follows, we assume ER instances satisfy the feasible earning condition $\sum_i e_i \le \sum_j c_j$. We prove \cref{thm:er-existence} by designing a linear complementarity problem (LCP) whose solution corresponds to an ER equilibrium. We begin with some background on LCPs.

\subsection{Linear Complementarity Problems and Lemke's Scheme}\label{sec:lemke-prelim}
A Linear Complementary Problem (LCP) is a generalization of linear programming (LP) with complementary slackness conditions: given a matrix $A\in\R^{n\times n}$ and a vector $\b\in\R^n$, the problem is:
\begin{equation}\label{eq:lcp-def}
\text{LCP($A,\b$): Find } \y\ge \boldsymbol{0} \text{ such that } A\y \le \b, \text{ and } y_i\cdot(A\y - \b)_i = 0 \text{ for all } i\in [n].  
\end{equation}
We use the shorthand notation $(A\y)_i \le b_i \: \perp \: y_i$ to represent the constraints $A\y \le b_i$, $y_i \ge 0$, and $y_i \cdot (A\y-\b)_i$. If $\b \ge \boldsymbol{0}$, then $\y = \boldsymbol{0}$ is a trivial solution to the LCP. If $\b\not\ge \boldsymbol{0}$, then the LCP may not have a solution; indeed, LCPs are general enough to capture NP-hard problems \cite{cottlepang-lcpbook}.

\paragraph{Lemke's scheme.} 
Let $\mathcal{P} = \{\y\in \R^n: A\y\le \b, \y\ge\boldsymbol{0}\}$. We assume that the polyhedron $\mathcal{P}$ is non-degenerate, i.e., exactly $n - d$ constraints hold with equality on any $d$-dimensional face of $\mathcal{P}$. With this assumption, each solution to \eqref{eq:lcp-def} corresponds to a vertex of $\mathcal{P}$ since exactly $n$ equalities must be satisfied. Lemke's scheme finds such a vertex solution by working with an \textit{augmented} LCP which adds a scalar variable $z$ to LCP($A,\b$), resulting in the following program:
\begin{equation}\label{eq:lcp-aug}
\text{Augmented LCP($A,\b$): } z\ge 0; \text{ and } A\y - z\cdot\boldsymbol{1} \le \b, \text{ and } y_i\cdot((A\y-\b)_i - z) = 0 \text{ for all } i\in[n].
\end{equation}
Note that $(\y,z=0)$ is a solution to \eqref{eq:lcp-aug} iff $\y$ is solution to \eqref{eq:lcp-def}. Let 
$\mathcal{P'} = \{(\y,z) \in \R^{n+1} : A\y - z\cdot\boldsymbol{1} \le \b, \y \ge 0, z\ge 0\}$.  Assuming $\mathcal{P'}$ is non-degenerate, a solution to \eqref{eq:lcp-aug} still satisfies $n$ constraints of $\mathcal{P'}$ with equality. Since $\mathcal{P'}$ is $(n+1)$-dimensional, this means the set of solutions $S$ is a subset of the 1-skeleton of $\mathcal{P'}$,  i.e., edges (1-dimensional faces) and vertices (0-dimensional faces). Moreover, $\y$ is a solution of \eqref{eq:lcp-def} iff $(\y,0)$ is a vertex of $\mathcal{P'}$.

The set $S$ of solutions to the augmented LCP has some important structural properties. We say that the label $i$ is present at $(\y,z)\in \mathcal{P'}$ if $y_i = 0$ or $(A\y)_i -z=b_i$. Every solution in $S$ is \textit{fully labelled} where all the labels are present. A solution $s\in S$ contains \textit{double label} $i$ if $(A\y)_i -z=b_i$ for $i\in [n]$. Since there are only two ways to relax the double label while keeping all other labels, there are two edges of $S$ incident to $s$. The above observations imply that $S$ consists of paths and cycles. Clearly, any solution $s$ to \eqref{eq:lcp-aug} with $z = 0$ contains no double labels. Relaxing $z = 0$ gives the unique edge incident to $s$ at this vertex. We note that some of the edges in $S$ are unbounded. An unbounded edge of $S$ incident to vertex $(\y^*,z^*)$ with $z^* > 0$ is called a ray. Formally, a ray $R$ has the form $R= \{ (\y^*,z^*) + \alpha\cdot(\y',z') : \alpha\ge 0 \}$, where $(\y', z') \neq 0$ solves \eqref{eq:lcp-aug} with $\b=0$. The \textit{primary ray} is the ray $\{(\boldsymbol{0}, z) : z\ge |\min_i b_i|\}$, which contains solutions with $\y=0$ and $z$ sufficiently large to satisfy \eqref{eq:lcp-aug}. All other rays are called \textit{secondary rays}. 

Starting from the primary ray, Lemke’s scheme follows a path on the 1-skeleton of $\mathcal{P'}$ with a guarantee that it never revisits a vertex. If a vertex $s$ is non-degenerate, i.e., has a \textit{unique} double label, then Lemke's scheme \textit{pivots} by relaxing one of the two constraints and travelling along the edge of $\mathcal{P'}$ to the next vertex solution. Therefore, if the vertices are non-degenerate, Lemke's scheme eventually either reaches a vertex with $z = 0$ (which is a solution of the original LCP \eqref{eq:lcp-def}) or ends up on a secondary ray. In the latter case, the algorithm fails to find a solution; in fact, the problem may not have a solution. Note that it suffices to introduce $z$ in the $(A\y)_i\le b_i$ constraint only if $b_i < 0$, without changing the role of $z$.

\subsection{Basic LCP for ER Equilibrium}
We first capture ER equilibria in an instance $(N,M,D,e,c)$ via the following LCP with variables $\p=\{p_j\}_{j\in M}$, $\q = \{q_{ij}\}_{i\in N,j\in M}$, $\r = \{r_i\}_{i\in N}$, and $\bbeta = \{\beta_j\}_{j\in M}$.

\begin{subequations}\label{lcp:1}
\begin{eqnarray}
\forall i \in N: & \ e_i \leq \sum_{j} q_{ij} & \perp  \ \ r_i  \label{eq:ag_earn_1}\\
\forall j\in M: & \ \sum_{i} q_{ij} \leq p_j - \beta_j & \perp \ \ p_j \label{eq:ch_earn_1} \\
\forall i\in N, j\in M: & \ p_j \le d_{ij} r_i & \perp \ \ q_{ij} \label{eq:mpb_1} \\
\forall j\in M: & \ p_j - \beta_j \le c_j & \perp \ \ \beta_j \label{eq:earn_res_1}
\end{eqnarray}
\end{subequations}

\textit{Notation}.
We use the following notational convention. For a constraint labelled $L$, we represent its complementarity constraint expressing the non-negativity of a variable by $L$'. For example, \eqref{eq:ag_earn_1} is the constraint $e_i\le \sum_j q_{ij}$ for agent $i$, and \eqref{eq:ag_earn_1}' is the constraint $r_i \ge 0$.

\textit{Interpretation of the LCP.} In the above LCP, $p_j$ denotes the payment of chore $j$, $q_{ij}$ denotes the earning of agent $i$ from chore $j$, $r_i$ denotes the reciprocal of the MPB ratio of agent $i$, and $\beta_j$ denotes the excess payment of chore $j$, i.e., $q_j := p_j - \beta_j$ is the total earning from chore $j$. 

Constraint \eqref{eq:ag_earn_1} imposes that each agent $i$ earns at least their earning requirement of $e_i$.  Constraint \eqref{eq:ch_earn_1} imposes that the earning $\sum_i q_{ij}$ from each chore $j$ is at most $q_j = p_j - \beta_j$. Constraint \eqref{eq:mpb_1} enforces the MPB condition. Constraint \eqref{eq:earn_res_1} enforces the earning restriction on each chore. Constraints \eqref{eq:ag_earn_1}', \eqref{eq:ch_earn_1}', \eqref{eq:mpb_1}', \eqref{eq:earn_res_1}' enforce non-negativity of the LCP variables. The constraints \eqref{eq:ag_earn_1}-\eqref{eq:earn_res_1} and \eqref{eq:ag_earn_1}'-\eqref{eq:earn_res_1}' together define LCP\eqref{lcp:1}.

We now demonstrate the correspondence between the ER equilibria and certain solutions to LCP\eqref{lcp:1}.

\begin{restatable}{lemma}{lemLcpOneErToLcp}\label{lem:lcp1-er-to-lcp}
Any ER equilibrium can be used to construct a solution to LCP\eqref{lcp:1}.   
\end{restatable}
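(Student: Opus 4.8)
The plan is to take an arbitrary ER equilibrium $(\x,\p,\q)$ of the instance $(N,M,D,e,c)$ and directly read off values for all four families of LCP variables: the payments $\p = \{p_j\}$ carry over verbatim; for each agent and chore we set the earning variable to the equilibrium earning $q_{ij} := p_j x_{ij}$ (so $\sum_i q_{ij} = q_j = \min\{p_j, c_j\}$ by \cref{def:er-main}(ii)); we set $r_i := 1/\alpha_i$, the reciprocal of the MPB ratio of agent $i$; and we set $\beta_j := p_j - q_j = \max\{0, p_j - c_j\} \ge 0$, the excess payment of chore $j$. The goal is then to verify that these values satisfy all eight constraint-families \eqref{eq:ag_earn_1}--\eqref{eq:earn_res_1} together with their non-negativity complements and, crucially, all the complementarity products.

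\medskip

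\noindent\textbf{Step 1: Non-negativity.} All of $p_j, q_{ij}, r_i$ are positive in an ER equilibrium ($p_j > 0$ by definition, $r_i = 1/\alpha_i > 0$ since disutilities are positive, $q_{ij} = p_j x_{ij} \ge 0$), and $\beta_j = \max\{0, p_j - c_j\} \ge 0$ by construction. So \eqref{eq:ag_earn_1}'--\eqref{eq:earn_res_1}' hold.

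\medskip

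\noindent\textbf{Step 2: The four primal inequalities.} For \eqref{eq:ag_earn_1}, $\sum_j q_{ij} = \sum_j p_j x_{ij} = e_i$ by \cref{def:er-main}(i), so $e_i \le \sum_j q_{ij}$ holds with equality. For \eqref{eq:ch_earn_1}, $\sum_i q_{ij} = q_j = p_j - \beta_j$ by our choice of $\beta_j$, so this holds with equality. For \eqref{eq:mpb_1}, the MPB condition gives $d_{ij}/p_j \ge \alpha_i$ for every chore $j$, i.e.\ $p_j \le d_{ij}/\alpha_i = d_{ij} r_i$. For \eqref{eq:earn_res_1}, $p_j - \beta_j = q_j = \min\{p_j, c_j\} \le c_j$.

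\medskip

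\noindent\textbf{Step 3: Complementarity.} This is where the equilibrium conditions are used most carefully, and it is the step I expect to require the most attention. (i) For $r_i \cdot (\text{slack of }\eqref{eq:ag_earn_1}) = 0$: the slack is zero since \eqref{eq:ag_earn_1} holds with equality. (ii) For $p_j \cdot (\text{slack of }\eqref{eq:ch_earn_1}) = 0$: again the slack is zero by equality in Step 2. (iii) For $q_{ij} \cdot (d_{ij} r_i - p_j) = 0$: if $q_{ij} > 0$ then $x_{ij} > 0$, so $j \in \mpb_i$ by \cref{def:er-main}(i), giving $d_{ij}/p_j = \alpha_i$, i.e.\ $d_{ij} r_i = p_j$, so the product vanishes. (iv) For $\beta_j \cdot (c_j - (p_j - \beta_j)) = 0$: we have $p_j - \beta_j = q_j$, so the second factor is $c_j - q_j = c_j - \min\{p_j,c_j\}$; this is $0$ whenever $p_j \le c_j$, and when $p_j > c_j$ we instead have $\beta_j = p_j - c_j > 0$ but then $q_j = c_j$ so the factor $c_j - q_j = 0$ regardless. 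Either way the product is $0$.

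\medskip

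Thus every constraint and complementarity condition of LCP\eqref{lcp:1} is met, so $(\p, \q, \r, \bbeta)$ is a valid solution. The one subtlety worth flagging is the role of the feasible-earning hypothesis $\sum_i e_i \le \sum_j c_j$: it is \emph{not} needed for this direction of the correspondence (it is needed only to guarantee an ER equilibrium exists in the first place, via the Lemke-scheme argument later), so I would not invoke it here. The main ``obstacle'' is really just bookkeeping: making sure that the definition $\beta_j = \max\{0, p_j - c_j\}$ correctly reconciles the two cases of \cref{def:er-main}(ii) simultaneously in both the primal inequality \eqref{eq:earn_res_1} and its complementarity condition, which the case analysis in Step 3(iv) handles.
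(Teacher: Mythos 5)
Your construction is identical to the paper's: the same variable assignments ($r_i = 1/\alpha_i$, $q_{ij} = p_j x_{ij}$, $\beta_j = \max\{0, p_j - c_j\}$) and the same case analysis on $p_j \le c_j$ versus $p_j > c_j$ to verify \eqref{eq:ch_earn_1}, \eqref{eq:earn_res_1}, and their complementarity. You spell out the complementarity products a bit more explicitly, but the argument is the same.
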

\begin{proof}
Let $(\x,\p)$ be an ER equilibrium. Let $\alpha_i$ be the MPB ratio of agent $i$ in $(\x,\p)$. Since all $d_{ij} > 0$, we have $\alpha_i > 0$ for all $i\in N$. Define $r_i = \alpha_i^{-1}$. Since $\p > 0$ in any ER equilibrium, we have that $r_i > 0$ for all $i\in N$. Let $q_{ij} = p_jx_{ij}$ be the earning of agent $i$ from chore $j$, and let $q_j = \sum_i q_{ij}$. Finally, define $\beta_j = \max\{0, p_j - c_j\}$ for each $j\in M$. We show that $(\p,\q,\r,\bbeta)$ is a solution to LCP\eqref{lcp:1} as follows:

(Constraint~\eqref{eq:ag_earn_1}) For all $i\in N$, $e_i = \sum_j q_{ij}$ since $(\x,\p)$ is an ER equilibrium (Def.~\ref{def:er-main} (i)). Also for all $i\in N$, $r_i > 0$.

(Constraints~\eqref{eq:mpb_1}) Since $\x$ is an MPB allocation, for all $i\in N, j\in M$ we have $d_{ij}/p_j \ge \alpha_i$, with equality if $x_{ij} > 0$. We then note that $\alpha_i^{-1} = r_i$ and $q_{ij} = p_j x_{ij}$.

(Constraints~\eqref{eq:ch_earn_1},~\eqref{eq:earn_res_1}) We consider two cases for each chore $j\in M$. If $p_j \le c_j$, then $q_j = p_j$ and $\beta_j = 0$. Otherwise, $p_j > c_j$, and we have $q_j = c_j$ and $\beta_j = p_j - c_j$. In both cases, the complementarity constraints~\eqref{eq:ch_earn_1} and \eqref{eq:earn_res_1} hold for each $j\in M$.
\end{proof}

\begin{restatable}{lemma}{lemLcpOneLcpToEr}\label{lem:lcp1-lcp-to-er}
Any solution $(\p,\q,\r,\bbeta)$ to LCP\eqref{lcp:1} with $\p>0$ can be used to construct an ER equilibrium $(\x,\p,\q)$.
\end{restatable}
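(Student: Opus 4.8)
The plan is to build the fractional allocation directly from the earning variables by setting $x_{ij} := q_{ij}/p_j$, which is well-defined since $\p>0$, and then to verify the three conditions of \cref{def:er-main} by simply reading off the complementarity relations of LCP\eqref{lcp:1}. First I would extract two consequences of the hypothesis $\p>0$. Since $p_j>0$, the complementarity attached to \eqref{eq:ch_earn_1} (namely $p_j\cdot(p_j-\beta_j-\sum_i q_{ij})=0$) forces $\sum_i q_{ij}=p_j-\beta_j$ for every chore $j$; write $q_j:=\sum_i q_{ij}=p_j-\beta_j$. Also $r_i>0$ for every agent $i$: if $r_i=0$, then \eqref{eq:mpb_1} would give $p_j\le d_{ij}r_i=0$, contradicting $p_j>0$. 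The complementarity attached to \eqref{eq:ag_earn_1} then forces $\sum_j q_{ij}=e_i$, so each agent earns exactly her requirement, which is the second half of \cref{def:er-main}(i).

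Next I would check that $\x$ is a legitimate partial fractional allocation: $x_{ij}\ge 0$ because $q_{ij}\ge 0$ by \eqref{eq:mpb_1}', and $\sum_i x_{ij}=q_j/p_j=(p_j-\beta_j)/p_j\le 1$ because $\beta_j\ge 0$ by \eqref{eq:earn_res_1}' (in particular $x_{ij}\le 1$). For the MPB property, set $\alpha_i:=1/r_i>0$. Constraint \eqref{eq:mpb_1} rewrites as $d_{ij}/p_j\ge 1/r_i=\alpha_i$ for all $i,j$, while the complementarity attached to \eqref{eq:mpb_1} gives $d_{ij}/p_j=\alpha_i$ whenever $q_{ij}>0$. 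Since $\sum_j q_{ij}=e_i>0$, at least one such $j$ exists, so $\alpha_i=\min_j d_{ij}/p_j$ is genuinely the MPB ratio of agent $i$, and $x_{ij}>0\Rightarrow q_{ij}>0\Rightarrow j\in\mpb_i$. This establishes the first half of \cref{def:er-main}(i).

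Finally, \cref{def:er-main}(ii) follows from a two-case split on $\beta_j$. If $\beta_j=0$, then $q_j=p_j$, hence $\sum_i x_{ij}=1$, and \eqref{eq:earn_res_1} gives $p_j\le c_j$; this is the first alternative of \cref{def:er-main}(ii). If $\beta_j>0$, then the complementarity attached to \eqref{eq:earn_res_1} forces $p_j-\beta_j=c_j$, i.e.\ $q_j=c_j$, and $\beta_j>0$ gives $q_j<p_j$, hence $c_j<p_j$ and $\sum_i x_{ij}=c_j/p_j<1$; this is the second alternative. In both cases $q_j=\min\{p_j,c_j\}$, as required, so $(\x,\p,\q)$ is an ER equilibrium.

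I do not expect a substantial obstacle here — the argument is essentially a line-by-line translation of the complementarity conditions under the single extra hypothesis $\p>0$. The one point that requires genuine care is showing that $\alpha_i=1/r_i$ is not merely a lower bound on the pain-per-buck ratios but the actual minimum; this is exactly where the positivity $e_i>0$ enters, guaranteeing that some constraint \eqref{eq:mpb_1} is tight for agent $i$.
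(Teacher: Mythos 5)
Your proof is correct and follows essentially the same approach as the paper: define $x_{ij}=q_{ij}/p_j$ and verify Definition~\ref{def:er-main} by unwinding the complementarity conditions. Your argument for $r_i>0$ is slightly more direct (using $\p>0$ immediately rather than chaining through \eqref{eq:ch_earn_1} and \eqref{eq:ag_earn_1} as the paper does), and you are a bit more explicit than the paper about verifying $\sum_i x_{ij}\le 1$ and that $1/r_i$ is truly attained as a minimum; these are sound refinements, not a different route.
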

\begin{proof}
We first argue that $\forall i\in N: r_i > 0$. Suppose $r_i = 0$ for some $i\in N$. Then constraint~\eqref{eq:mpb_1} implies $p_j = 0$ for all $j\in M$. In turn, with constraint~\eqref{eq:ch_earn_1} this implies that $q_{ij} = 0$ for all $i,j$. Then constraint \eqref{eq:ag_earn_1} cannot be satisfied for agent $i$ since $e_i > 0$, leading to the contradiction that $(\p,\q,\r,\bbeta)$ is a solution to LCP\eqref{lcp:1}. Therefore $\forall i\in N: r_i > 0$.  

Define an allocation $\x$ as $x_{ij} = q_{ij}/p_j$ and let $q_j = \sum_i q_{ij}$. We show $(\x,\p,\q)$ is an ER equilibrium by showing that it satisfies the conditions of \cref{def:er-main} as follows.

(Agents) The complementarity constraint \eqref{eq:mpb_1} implies that $(\x,\p)$ is an MPB allocation with $1/r_i$ being the MPB ratio of agent $i$. Moreover, constraint \eqref{eq:ag_earn_1} implies $e_i = \sum_j q_{ij}$ for all $i\in N$, since $r_i > 0$ for all $i\in N$. 

(Chores) Since $p_j > 0$ for all $j\in M$ by assumption, we have $q_j = p_j - \beta_j$. We consider two cases for each chore $j\in M$. If $\beta_j = 0$, then $q_j = p_j$ (from \eqref{eq:ch_earn_1}) and also $q_j \le c_j$ (from \eqref{eq:earn_res_1}). Thus $q_j = \min\{p_j,c_j\}$. Otherwise, $\beta_j > 0$ and $q_j = p_j - \beta_j < p_j$ (from \eqref{eq:ch_earn_1}) and $q_j = c_j$ (from \eqref{eq:earn_res_1}'). Thus $q_j = \min\{p_j,c_j\}$ in this case as well.
\end{proof}

\subsection{Main LCP for ER Equilibrium}
However, Lemke's scheme for LCP\eqref{lcp:1} may not converge to a solution with $\p>0$. We address 
this issue by performing a change of variables. First, we show that that chore payments can be assumed to be upper bounded by some constant $P$. 

\begin{restatable}{lemma}{lemPaymentUB}\label{lem:payment-ub}
For every ER instance $(N,M,D,e,c)$, there exists a constant $P$ such that for every ER equilibrium $(\x,\p,\q)$ there exists a scaled ER equilibrium $(\x',\p',\q)$ s.t. $\forall j\in M, p'_j \le P$.
\end{restatable}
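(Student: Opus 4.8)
The plan is to keep the earning vector $\q$ fixed and modify only the payments, exploiting the fact that the payment vector of an ER equilibrium has very few degrees of freedom. Throughout, write $D_{\min}=\min_{i,j}d_{ij}$, $D_{\max}=\max_{i,j}d_{ij}$, $E=\sum_i e_i$ (note $\sum_j q_j=E$ in any ER equilibrium, so $q_j\le E$ for all $j$), $C=\sum_j c_j$, and $R=(D_{\max}/D_{\min})^{m}\ge 1$.

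First I would record two structural facts about an ER equilibrium $(\x,\p,\q)$. \emph{(i) Within-component rigidity.} In the payment graph of $(\x,\p)$ (the bipartite graph on $N\sqcup M$ with an edge $(i,j)$ whenever $x_{ij}>0$), if an agent $i$ fractionally performs both $j$ and $k$ then the MPB condition forces $d_{ij}/p_j=d_{ik}/p_k$; chasing a path inside a connected component $K$ shows the ratio of any two payments of chores of $K$ is at most $R$. Hence $\p|_K$ equals a fixed positive vector scaled by a single positive ``level'' $t_K$, and rescaling $t_K$ merely rescales every MPB ratio of every agent of $K$ by the same factor — and since an agent of $K$ performs only chores of $K$, this is genuinely the minimum over all chores, so the within-component MPB property survives any such rescaling. \emph{(ii) Cross-component coupling.} For a component $K_r$ containing an agent $i$ (every component containing an agent also contains a chore, since every agent performs some chore), a chore $k$ performed by $i$, and any chore $j$ of another component $K_{r'}$, the MPB inequality $d_{ij}/p_j\ge d_{ik}/p_k$ gives $t_{K_{r'}}/t_{K_r}\le R^{2}(D_{\max}/D_{\min})=:R'$. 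Thus in \emph{any} ER equilibrium with this $\q$, all component levels lie within the instance-dependent factor $R'$ of one another.

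Next I would split into two cases according to whether some chore is strictly under-paid. If $q_j<c_j$ for some $j$, then Definition~\ref{def:er-main}(ii) pins $p_j=q_j\le E$; by (i) the level of $j$'s component is at most $RE$, by (ii) every level is at most $R'RE$, and by (i) again every payment is at most $R^{2}R'E$, so the given equilibrium already has all payments below $P:=R^{2}R'\max\{E,C\}$ and we output $(\x,\p,\q)$ unchanged. Otherwise every chore is saturated, equivalently $p_j\ge c_j$ for all $j$; then I would \emph{globally} rescale, setting $\lambda_0:=\max_j c_j/p_j\in(0,1]$, $p'_j:=\lambda_0 p_j$, and $x'_{ij}:=q_{ij}/p'_j=x_{ij}/\lambda_0$. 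A global rescaling multiplies every pain-per-buck ratio by the same $1/\lambda_0$, so $(\x',\p')$ is still on MPB; each $p'_j\ge c_j$, so $q'_j=\min\{p'_j,c_j\}=c_j=q_j$ and $\sum_i x'_{ij}=c_j/p'_j\le 1$, matching Definition~\ref{def:er-main}(ii); and $\sum_j q'_{ij}=\sum_j p_j x_{ij}=e_i$ is unchanged, so $(\x',\p',\q)$ is an ER equilibrium of the same instance with the same $\q$. Writing $k^{\star}=\arg\max_k c_k/p_k$, we get $p'_j=c_{k^{\star}}(p_j/p_{k^{\star}})\le C\cdot R^{2}R'\le P$ using (i) and (ii). Degenerate chores with $c_j=0$ (hence $q_j=0$, unallocated, and placing no MPB constraint in Definition~\ref{def:er-main}) I would handle at the very end: assign each such chore the payment $\min_i d_{ij}r'_i$, where $r'_i$ is the reciprocal MPB ratio of $i$ in the constructed equilibrium; this is positive, at most $(D_{\max}/D_{\min})P$ (so enlarge $P$ by this factor), keeps the chore unallocated with $q'_j=0$, and satisfies $d_{ij}/p'_j\ge\alpha'_i$ for every $i$, hence does not shrink any agent's MPB set.

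I expect the cross-component coupling bound (ii) to be the crux. A priori one might try to deflate a single over-paid component on its own, down to where one of its chores meets its cap $c_j$; but this can make a \emph{non}-MPB chore of another component suddenly cheaper for an agent of the deflated component, breaking MPB. Fact (ii) forbids this at scale — all component levels of \emph{any} such equilibrium are already within a fixed instance factor — which is exactly what lets me replace delicate per-component surgery by either doing nothing (when some chore is under-paid, which already pins every level to an $E$-bounded value) or a single harmless uniform rescaling (when every chore is saturated, so shrinking cannot break MPB at all).
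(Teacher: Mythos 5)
Your proposal is correct and takes essentially the same route as the paper: the same case split (some chore with $q_j<c_j$, hence $p_j=q_j$ bounded, versus all chores over-paid) and the same uniform rescaling by $\max_j c_j/p_j$ that keeps $\q$ fixed and preserves MPB. The component/level machinery is redundant, though: since the MPB ratio is a minimum over \emph{all} chores, any agent $i$ with $x_{ik}>0$ directly gives $p_j\le (d_{ij}/d_{ik})\,p_k\le (d_{\max}/d_{\min})\,p_k$ for every chore $j$ in one step --- this is exactly your ``cross-component coupling'' inequality, and using it alone replaces your exponential-in-$m$ constant by the paper's $P=(d_{\max}/d_{\min})\,c_{\max}$.
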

\begin{proof}
We set the constant $P$ to $\frac{d_{max}}{d_{min}}\cdot c_{max}$, where $d_{max} = \max_{i,j} d_{ij}$, $d_{min} = \min_{i,j} d_{ij} > 0$, and $c_{max} = \max_j c_j$. 
If there exists some chore $j\in \x_i$ with $p_j \le c_j$, then the MPB condition implies that for every chore $k\in M$, $\frac{d_{ij}}{p_j} \le \frac{d_{ik}}{p_k}$, showing that $p_k \le \frac{d_{max}}{d_{min}}\cdot p_j \le P$.

Therefore, suppose $p_j > c_j$ for all $j\in M$. Then we uniformly decrease the payments as $p'_j = \frac{p_j}{\min_k p_k/c_k}$. We therefore have $p'_j \ge c_j$ for every $j$, but $p'_k = c_k$ for some $k$. We update the allocation $\x'$ s.t. $x'_{ij}\cdot p'_j = x_{ij}\cdot p_j$ for all $i,j$, ensuring that the earning vector of $(\x',\p')$ stays the same as $\q$. Since payments are decreased uniformly and $x'_{ij} > 0$ iff $x_{ij} > 0$, $(\x',\p')$ is MPB. Thus $(\x',\p',\q)$ is an ER equilibrium. The MPB condition for an agent $i$ s.t. $x_{ik} > 0$ implies that $\frac{d_{ik}}{p'_k} \le \frac{d_{ij}}{p'_j}$ for any $j\in M$. Thus, for any $j\in M$, $p'_j \le \frac{d_{ij}}{d_{ik}}\cdot p'_k \le \frac{d_{max}}{d_{min}}\cdot c_{max} = P$, as desired. 
\end{proof}

The upper bound $P$ on payments implies an upper bound $R$ on the reciprocal of the MPB ratios. Let $R$ be chosen so that $R\cdot \min_{i,j} d_{ij}> P$. We replace variable $p_j$ with $P-p_j$ and $r_i$ with $R-r_i$, while keeping the complementary constraints the same. Finally, we uniformly scale the input parameters $e$ and $c$ and obtain the following LCP. 

\begin{subequations}\label{lcp:2}
\begin{eqnarray}
\forall i \in N: & \ e_i\cdot\frac{\sum_j (P-p_j-\beta_j)}{\sum_h e_h} \leq \sum_{j} q_{ij} & \perp  \ \ r_i  \label{eq:ag_earn_2}\\
\forall j\in M: & \ \sum_{i} q_{ij} \leq (P-p_j - \beta_j) & \perp \ \ p_j \label{eq:ch_earn_2} \\
\forall i\in N, j\in M: & \ P-p_j \le d_{ij} (R-r_i) & \perp \ \ q_{ij} \label{eq:mpb_2} \\
\forall j\in M: & \ P-p_j-\beta_j \le c_j\cdot\frac{\sum_k (P-p_k-\beta_k)}{\sum_i e_i} & \perp \ \ \beta_j \label{eq:earn_res_2}
\end{eqnarray}
\end{subequations}

Similar to \cref{lem:lcp1-lcp-to-er}, we first establish a correspondence between certain solutions to LCP\eqref{lcp:2} and ER equilibria. We call a solution $(\p,\q,\r,\bbeta)$ \textit{`good'} if $\forall j\in M: p_j + \beta_j < P$ and $\forall i\in N: r_i < R$.

\begin{restatable}{lemma}{lemLcpTwoLcpEr}\label{lem:lcp2-lcp-er}
Any good solution of LCP\eqref{lcp:2} can be used to construct an ER equilibrium with all payments at most $P$, and vice versa. 
\end{restatable}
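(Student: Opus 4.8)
The plan is to prove the equivalence by undoing the change of variables that turned LCP\eqref{lcp:1} into LCP\eqref{lcp:2} --- writing $\hat p_j := P - p_j$, $\hat r_i := R - r_i$, and $q_j := P - p_j - \beta_j$ --- and then essentially replaying the arguments of Lemmas~\ref{lem:lcp1-er-to-lcp} and~\ref{lem:lcp1-lcp-to-er}. The one genuinely new ingredient is the bookkeeping of the homogenizing factor $\sigma := \frac{\sum_j (P-p_j-\beta_j)}{\sum_h e_h} = \frac{\sum_j q_j}{\sum_h e_h}$, which multiplies the right-hand sides of~\eqref{eq:ag_earn_2} and~\eqref{eq:earn_res_2}.

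\emph{Forward direction.} Let $(\p,\q,\r,\bbeta)$ be a good solution, so $q_j = \hat p_j - \beta_j > 0$ for all $j$ (hence $\hat p_j > 0$) and $\hat r_i = R - r_i > 0$ for all $i$. First I would show $r_i > 0$ for every $i$, as in the proof of Lemma~\ref{lem:lcp1-lcp-to-er}: if $r_i=0$ then $\hat r_i = R$, and since $R\cdot\min_{k,\ell} d_{k\ell} > P \ge P - p_j$ constraint~\eqref{eq:mpb_2} holds strictly for all $j$, forcing $q_{ij}=0$ for all $j$ and contradicting~\eqref{eq:ag_earn_2} (as $e_i>0$, $\sum_j q_j>0$). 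Hence every~\eqref{eq:ag_earn_2} is tight: $\sum_j q_{ij} = \sigma e_i$, and summing over $i$ gives $\sum_{i,j} q_{ij} = \sigma\sum_i e_i = \sum_j q_j$. Together with~\eqref{eq:ch_earn_2} ($\sum_i q_{ij}\le q_j$) this forces $\sum_i q_{ij}=q_j$ for every $j$. Setting $x_{ij} := q_{ij}/\hat p_j$ then gives a valid partial fractional allocation with $\sum_i x_{ij}=q_j/\hat p_j\le 1$; complementarity of~\eqref{eq:mpb_2} makes $(\x,\hat\p)$ an MPB allocation with MPB ratio $1/\hat r_i$; and~\eqref{eq:earn_res_2} with its complementarity gives, for each $j$, $q_j=\min\{\hat p_j,\sigma c_j\}$, the chore being fully allocated iff $\beta_j=0$. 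Thus $(\x,\hat\p,\q)$ is an ER equilibrium of the instance $(N,M,D,\sigma e,\sigma c)$ with all payments $\le P$. Since $(\sigma e,\sigma c)$ is a uniform rescaling of $(e,c)$ and ER equilibria scale uniformly with $(e,c)$, dividing all payments by $\sigma$ produces an ER equilibrium of $(N,M,D,e,c)$; a final application of Lemma~\ref{lem:payment-ub} normalizes the payments to be at most $P$.

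\emph{Reverse direction.} Given an ER equilibrium, use Lemma~\ref{lem:payment-ub} to assume its payments $\hat\p$ satisfy $\hat p_j\le P$. Since every agent earns exactly $e_i$ and all agent earnings come from chores, $\sum_j q_j = \sum_i e_i$, so $\sigma=1$ and the scaled requirements/limits equal $e$ and $c$. Define $p_j := P-\hat p_j\ge 0$, $r_i := R-1/\alpha_i$, and $\beta_j := \max\{0,\hat p_j-c_j\}\ge 0$, where $\alpha_i$ is the MPB ratio of $i$. From $\alpha_i\ge (\min_{k,\ell}d_{k\ell})/P$ and $R\cdot\min_{k,\ell}d_{k\ell}>P$ we get $0<1/\alpha_i<R$, so $0<r_i<R$; and $\hat p_j-\beta_j=\min\{\hat p_j,c_j\}>0$ gives $p_j+\beta_j<P$, so the point is good. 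Verifying the four constraint families of LCP\eqref{lcp:2} and their complementarity conditions is then a direct translation of Definition~\ref{def:er-main}: the MPB condition yields~\eqref{eq:mpb_2}, the agent budget equalities yield~\eqref{eq:ag_earn_2}, the identity $\sum_i q_{ij}=q_j$ yields~\eqref{eq:ch_earn_2}, and $q_j=\min\{\hat p_j,c_j\}$ yields~\eqref{eq:earn_res_2}.

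\emph{Main obstacle.} The constraint-by-constraint checks are routine; the crux is handling the homogenizing factor $\sigma$. In the forward direction $\sigma$ is exactly what makes the counting identity $\sum_i\sigma e_i=\sum_j q_j$ hold and thereby forces~\eqref{eq:ch_earn_2} to be tight (without it, a good solution of the substituted LCP could a priori leave some chore payment unassigned), after which one must remember to undo the induced uniform rescaling of $(e,c)$ via scale-invariance of ER equilibria. In the reverse direction one must confirm $\sigma=1$ --- this is where $\sum_j q_j=\sum_i e_i$, valid for every ER equilibrium, is used --- and check that the earlier choices of $P$ (Lemma~\ref{lem:payment-ub}) and $R$ are strong enough to land the constructed point in the good region.
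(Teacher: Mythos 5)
Your proof is correct and follows essentially the same route as the paper's: establish $r_i>0$ for all $i$ via strictness of~\eqref{eq:mpb_2} when $r_i=0$, conclude~\eqref{eq:ag_earn_2} is tight, force~\eqref{eq:ch_earn_2} tight by the summation counting argument, then undo the change of variables to read off an ER equilibrium. The paper absorbs the homogenizing factor $Q=\sigma$ directly into the definitions ($\hat p_j:=(P-p_j)/Q$, $\hat q_{ij}:=q_{ij}/Q$, etc.), whereas you first read off an ER equilibrium of the $\sigma$-scaled instance $(N,M,D,\sigma e,\sigma c)$ and then rescale; these are the same computation in different order. One place where you are actually more careful than the paper's proof: the forward definitions in the paper do not on their face guarantee $\hat p_j\le P$ (if $Q<1$ they may exceed $P$), and your explicit invocation of Lemma~\ref{lem:payment-ub} to renormalize the payments closes that small gap cleanly, both here and in the reverse direction where you also check the constructed point lands in the good region.
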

\begin{proof}
Let $\v =(\p,\q,\r,\bbeta)$ be a good solution of LCP\eqref{lcp:2}. Let $Q := \frac{\sum_j (P-p_j-\beta_j)}{\sum_i e_i}$. Since $\v$ is good, $Q > 0$. We first show that for all $i\in N, r_i > 0$. For the sake of contradiction, let $r_i = 0$ for some $i\in N$. Since $d_{ij}R > P$, constraint \eqref{eq:mpb_2} is not tight, which implies due to complementarity that $q_{ij} = 0$ for all $j\in M$. Hence $\sum_j q_{ij} = 0$, which implies that \eqref{eq:ag_earn_2} cannot hold, as $e_i\cdot Q > 0$. Thus, $\forall i: r_i > 0$. By complementarity, this means constraints \eqref{eq:ag_earn_2} must be tight: $\forall i\in N: e_i = \sum_j q_{ij}/Q$.

We next show that constraints \eqref{eq:ch_earn_2} must be tight. Suppose for some $j\in M$, $\sum_i q_{ij} < P-p_j -\beta_j$. Using inequalities \eqref{eq:ag_earn_2} and \eqref{eq:ch_earn_2}, we have:
\[
\sum_i e_i \cdot Q = \sum_i \sum_j q_{ij} = \sum_j \sum_i q_{ij} < \sum_j (P-p_j-\beta_j) = \sum_i e_i\cdot Q,
\]
which is a contradiction. Thus for all $j\in M, \sum_i q_{ij} = P-p_j-\beta_j$.

For $i\in N$ and $j\in M$, define:
\[\hat{p}_j := \frac{P-p_j}{Q}, \quad \hat{q}_{ij} := \frac{q_{ij}}{Q}, \quad \hat{q}_j := \frac{P-p_j-\beta_j}{Q}, \quad \hat{x}_{ij} = \frac{\hat{q}_{ij}}{e_i} \]

We show $(\hat{\x},\hat{\p},\hat{\q})$ is an ER equilibrium by showing it satisfies the conditions of \cref{def:er-main}.

(Agents) Since $e_i = \sum_j q_{ij}/Q$, we have $e_i = \sum_j \hat{q}_{ij}$ for all $i\in N$. Moreover the complementarity constraint \eqref{eq:mpb_2} implies $(\hat{\x},\hat{\p})$ is an MPB allocation and $Q/(R-r_i)$ is the MPB ratio of agent $i$. 

(Chores) Since $\sum_i q_{ij} = P-p_j-\beta_j$, we have $\hat{q}_j = \sum_i\hat{q}_{ij}$ for each $j\in M$. We consider two cases for each $j\in M$. If $\beta_j = 0$, then $\hat{q}_j = \hat{p}_j$ (by definition), and also $\hat{q}_j \le c_j$ (from \eqref{eq:earn_res_2}). Thus $\hat{q}_j = \min\{\hat{p}_j,c_j\}$. Otherwise $\beta_j > 0$, and $\hat{q}_j = \hat{p}_j - \beta_j/Q < \hat{p}_j$ (by definition) and $\hat{q}_j = c_j$ (from \eqref{eq:earn_res_2}). Then $\hat{q}_j = \min\{\hat{p}_j,c_j\}$ in this case as well.

Thus a good solution of LCP\eqref{lcp:2} can be used to construct an ER equilibrium. In the other direction, the argument of \cref{lem:lcp1-er-to-lcp} with the appropriate change of variables shows that an ER equilibrium with payments at most $P$ can be used to construct a good solution to LCP\eqref{lcp:2}.
\end{proof}

We now give the augmented LCP for LCP\eqref{lcp:2} so that we can apply Lemke's scheme as discussed in \cref{sec:lemke-prelim}. As is standard practice, we add the variable $z$ in the constraints whose right hand side is negative. We thus obtain LCP\eqref{lcp:3}.

\begin{subequations}\label{lcp:3}
\begin{eqnarray}
\forall i \in N: & \ e_i\cdot\frac{\sum_j (-p_j-\beta_j)}{\sum_h e_h} - \sum_{j} q_{ij} - z \le -e_i \cdot\frac{mP}{\sum_h e_h} & \perp  \ \ r_i  \label{eq:ag_earn_3}\\
\forall j\in M: & \ p_j + \beta_j + \sum_{i} q_{ij} \leq P & \perp \ \ p_j \label{eq:ch_earn_3} \\
\forall i\in N, j\in M: & \ d_{ij}r_i - p_j \le d_{ij} R - P & \perp \ \ q_{ij} \label{eq:mpb_3} \\
\forall j\in M: & \ -p_j-\beta_j + c_j\cdot\frac{\sum_k (p_k+\beta_k)}{\sum_i e_i} -z \le -P + c_j\cdot\frac{mP}{\sum_i e_i} & \perp \ \ \beta_j \label{eq:earn_res_3} \\
& z\ge 0
\end{eqnarray}
\end{subequations}

Let $\P$ be the polyhedron defined by the constraints of LCP\eqref{lcp:3}. 
The primary ray in Lemke's algorithm will set $\p, \q, \r, \bbeta$ to zero and $z = \max\{\max_i \frac{e_i\cdot mP}{\sum_h e_h}, \max_j (c_j\cdot \frac{mP}{\sum_i e_i} - P)\}$ as the initial vertex solution. Lemke's scheme involves pivoting from \textit{non-degenerate} vertices of $\P$, i.e., those with a unique double label (see \cref{sec:lemke-prelim}). 

\begin{definition}[Non-degenerate instance]\label{def:non-degenerate} An instance $(N,M,D,e,c)$ is non-degenerate if there is no polynomial relation between the input parameters, i.e., there is no polynomial $\phi$ s.t. $\phi(D,e,c) = 0$. 
\end{definition}

We can assume our instance is non-degenerate without loss of generality, as there are standard ways of handling degeneracy in the input parameters like the lexico-minimum test \cite{cottlepang-lcpbook}.

\begin{restatable}{lemma}{lemNonDegenerate}\label{lem:non-degenerate}
For a non-degenerate instance $(N,M,D,e,c)$, every vertex $\v = (\p, \q, \r, \bbeta, z)$ encountered in Lemke's scheme which is good and satisfies $z>0$ is non-degenerate.
\end{restatable}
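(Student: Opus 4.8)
The plan is to show that at a good vertex $\v = (\p,\q,\r,\bbeta,z)$ with $z > 0$, exactly one label can be ``doubly present'', by ruling out the possibility that two distinct labels are simultaneously doubly present. Recall that at a vertex of the $(|N| + |N||M| + |M| + |M| + 1)$-dimensional polyhedron $\P$, the number of tight constraints equals the ambient dimension; a label $i$ is \emph{present} if either the variable $v_i$ is zero or its paired inequality is tight, and the label is \emph{doubly present} (a ``double label'') if \emph{both} hold. Since Lemke's scheme only requires that good vertices with $z>0$ have a \emph{unique} double label, it suffices to derive a contradiction from the assumption that two labels, say corresponding to variables $u$ and $w$ among $\{\p,\q,\r,\bbeta\}$, are each doubly present, using the non-degeneracy hypothesis (\cref{def:non-degenerate}) that the input parameters $(D,e,c)$ satisfy no polynomial relation.

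First I would set up the bookkeeping: being good means $p_j + \beta_j < P$ for all $j$ and $r_i < R$ for all $i$, so constraints \eqref{eq:ch_earn_3} are strict... wait, actually \eqref{eq:ch_earn_3} reads $p_j+\beta_j+\sum_i q_{ij}\le P$, so goodness does not immediately make it strict; rather, goodness tells us that the variables $P - p_j - \beta_j$ and $R - r_i$ (the ``real'' payments and reciprocal MPB ratios) are strictly positive, which is exactly what \cref{lem:lcp2-lcp-er} needs. The key structural consequence I would extract is that, at such a vertex, the active (tight) constraints, together with the vanishing variables, form a system of $\dim(\P)$ linear equations in the coordinates of $\v$ whose coefficients are polynomials (in fact affine-linear combinations with coefficients that are rational functions) in the entries of $D$, $e$, $c$, $P$, $R$. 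If two labels were doubly present, then the number of independent tight constraints at $\v$ would exceed $\dim(\P)$ unless there is a linear dependence among the $\dim(\P)+1$ constraint rows; such a dependence, when expanded, yields a nontrivial polynomial identity $\phi(D,e,c) = 0$ (after clearing the denominators $\sum_h e_h$, $\sum_i e_i$ and eliminating the auxiliary constants $P, R$, which are themselves fixed rational functions of $D$ and $c$ via \cref{lem:payment-ub} and the choice $R\min_{i,j}d_{ij} > P$). This contradicts non-degeneracy, so at most one label is doubly present; since $\v$ is a vertex, at least one must be, so it is unique.

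The main obstacle I anticipate is the careful accounting of \emph{which} polynomial relation emerges and verifying it is genuinely nontrivial — i.e., not identically zero as a polynomial in the symbols $D, e, c$. The subtlety is that $P$ and $R$ are not free parameters but are defined in terms of $D$ and $c$ ($P = \tfrac{d_{\max}}{d_{\min}} c_{\max}$, and $R$ any value with $R\,d_{\min} > P$), so one must check that substituting these expressions into the candidate dependence does not cause it to collapse to $0 = 0$. I would handle this by arguing that the coefficients appearing in the dependence among the MPB constraints \eqref{eq:mpb_3} involve the $d_{ij}$ in a way that cannot be cancelled by the structure of the earning constraints \eqref{eq:ag_earn_3}, \eqref{eq:ch_earn_3}, \eqref{eq:earn_res_3}, whose coefficients are $1$'s and the ratios $e_i/\sum_h e_h$, $c_j/\sum_i e_i$ — a generic-coefficient argument, formalized by treating $D,e,c$ as indeterminates and observing that the determinant of the relevant submatrix is a nonzero polynomial. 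This is essentially the standard argument used for analogous LCP formulations (e.g., \cite{GMSV,ChaudhuryGMM21}), and I would follow that template, pointing out the only new ingredient is the earning-restriction row \eqref{eq:earn_res_3}, which I would check contributes an independent monomial in the $c_j$'s.
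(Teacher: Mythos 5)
Your plan follows essentially the same approach as the paper: assume the good vertex with $z>0$ has two double labels, observe that complementarity already forces one tight constraint per variable so two double labels give $\dim(\P)+1$ tight constraints in a $\dim(\P)$-dimensional space, and extract from the resulting overdetermination a polynomial relation among the input parameters, contradicting \cref{def:non-degenerate}. The paper's concrete mechanism is to invert the subsystem indexed by the nonzero coordinates to write each nonzero variable as an affine function of $z$, then use the two double-label equalities as two constraints on $z$ alone (one solves for $z$, the other becomes the contradictory polynomial identity), whereas you phrase the same thing as a linear dependence among the active rows---the formulations are equivalent. The nontriviality concern you flag (that $P,R$ are themselves functions of $D,c$ and the derived identity must not collapse to $0=0$) is a genuine subtlety, and the paper's proof does not explicitly address it either, so raising it is a point in your favor rather than a defect relative to the reference proof.
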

\begin{proof}
For the sake of contradiction, suppose $\v = (\p, \q, \r, \bbeta, z)$ is a vertex encountered in Lemke's scheme which is good, where $z>0$, and which is degenerate. We show that the parameters of the instance have a polynomial relation, contradicting the instance being non-degenerate. 

Let $N$ be the number of variables in LCP\eqref{lcp:2}. Hence the augmented LCP has $N+1$ variables, with $z$ being the additional variable. Let the augmented LCP polyhedron be given by $\mathcal{P} = \{\y \in \R^{N},z\in \R : A\y \le \b, \y \ge 0, z\ge 0\}$, where $A\in \R^{N\times (N+1)}$, and $\b\in \R^N$. Thus, there are $N$ linear constraints given by the rows of $A$, and each such constraint may involve $(N+1)$ variables, including $z$.

Since $\v$ is a vertex of $\mathcal{P}$, exactly $(N+1)$ inequalities (out of the $2N+1$ inequalities, including the non-negativity constraints) must be tight at $\v$. Let $I$ be the set of non-zero variables of $\v$ excluding $z$, and let $|I| = N'$. By complementarity, the constraint $(A\v - \b)_i = 0$ for each such variable $i\in I$. Consider the subsystem of $A\y \le \b$ corresponding to the variables in $I$ and $z$. This can be represented as a collection of equalities given by $A'\cdot\v' = \b'$, where $A'\in\R^{N'\times (N'+1)}$, $\b\in\R^{N'}$ and $v'\in\R^{N'+1}$. Note that $\v' = (\v'', z)$ is simply the subvector of $\v$ with non-zero entries. By separating out the terms involving $z$, we can transform the above system into an equation of the form $z\cdot \boldsymbol{\gamma} + A''\cdot \v'' = \b'$, where $A''\in\R^{N'\times N'}$ and $\boldsymbol{\gamma} \in \R^{N'}$. This gives $\v'' = (A'')^{-1}\b' - z\cdot (A'')^{-1}\boldsymbol{\gamma}$, which expresses each non-zero variable in the set $I$ as a linear term in $z$ with coefficients that are polynomials in the input parameters.

Now observe that the degeneracy of $\v$ implies that $\v$ has at least two double labels (see \cref{sec:lemke-prelim}). That is, there are two variables $i, j \not\in I$ s.t. $y_i = y_j = (A\v-\b)_i = (A\v-\b)_j = 0$. We use one of these equalities to solve for $z$ by replacing each non-zero variable with its linear expression in $z$ obtained earlier. We then substitute this value of $z$ into the second equality to obtain a polynomial relation in the input parameters. This contradicts the fact that the instance is non-degenerate.
\end{proof}

\subsection{Convergence of Lemke's Scheme}
In this section, we show that Lemke's scheme converges to a good solution $(\p,\q,\r,\bbeta,z)$ with $z=0$ for LCP\eqref{lcp:3} for non-degenerate instances. A solution to LCP\eqref{lcp:3} with $z=0$ is a solution to LCP\eqref{lcp:2}. With \cref{lem:lcp2-lcp-er}, this implies the existence of ER equilibria and proves \cref{thm:er-existence}. Further, it provides an algorithm for computing an ER equilibrium.

To show convergence, we need to show that starting from the primary ray, Lemke's scheme only reaches good vertex solutions and does not reach a secondary ray. We prove the former using Lemmas~\ref{lem:conv1}, \ref{lem:conv2}, and \ref{lem:conv3}, and the latter using \cref{lem:conv4}. Recall that a solution $(\p,\q,\r,\bbeta,z)$ is good if $\forall j: p_j + \beta_j < P$ and $\forall i: r_i < R$.

\begin{restatable}{lemma}{lemConvOne}\label{lem:conv1}
Let $\v = (\p,\q,\r,\bbeta,z)$ be a vertex encountered by Lemke's scheme starting from the primary ray of LCP\eqref{lcp:3}. If $\forall j: p_j +\beta_j < P$ holds, then $\forall i: r_i < R$ also holds.
\end{restatable}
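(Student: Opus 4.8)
The plan is to observe that the claimed implication follows from a single structural constraint of the polyhedron $\P$, independent of the particular path Lemke's scheme takes. The crucial point is the choice of $R$: since $R\cdot\min_{i,j}d_{ij} > P$, the right-hand side $d_{ij}R - P$ of the MPB constraint \eqref{eq:mpb_3} — obtained from \eqref{eq:mpb_2} after the substitutions $p_j \mapsto P - p_j$ and $r_i \mapsto R - r_i$ — is strictly positive for every $i\in N$ and $j\in M$. Consequently, when forming the augmented LCP\eqref{lcp:3}, the scalar $z$ is \emph{not} added to \eqref{eq:mpb_3} (only constraints with negative right-hand side are augmented). Therefore the inequality
\[
d_{ij}r_i - p_j \le d_{ij}R - P
\]
holds verbatim at \emph{every} point of $\P$, and in particular at the vertex $\v = (\p,\q,\r,\bbeta,z)$ encountered by Lemke's scheme.

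Given this, I would argue as follows. Fix any agent $i\in N$ and pick any chore $j\in M$. The non-negativity constraint $\beta_j\ge 0$ is part of $\P$, so the hypothesis $p_j + \beta_j < P$ gives $p_j < P$, i.e.\ $p_j - P < 0$. Rearranging the displayed inequality yields $d_{ij}(r_i - R) \le p_j - P < 0$, and since $d_{ij} > 0$ we conclude $r_i < R$. As $i$ was arbitrary, this proves $r_i < R$ for all $i\in N$, which is the assertion of the lemma.

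I do not anticipate a genuine obstacle here: the lemma is essentially bookkeeping about the sign of $d_{ij}R - P$ together with $\beta_j \ge 0$. The only points to be careful about are (i) confirming that the MPB constraint family is precisely the one whose right-hand side remains positive after the change of variables, so that it is left un-augmented and hence valid (binding or slack) at $\v$; and (ii) that the hypothesis is stated for \emph{all} $j$, so the bound $p_j < P$ is available for whichever chore we pair with $i$. The reference to Lemke's scheme and the primary ray is not actually needed for this step — the conclusion holds at any point of $\P$ satisfying $p_j + \beta_j < P$ for all $j$ — but it is harmless to retain, since the lemma will be invoked exactly at such vertices in the convergence argument that follows.
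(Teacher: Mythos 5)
Your proof is correct and takes essentially the same approach as the paper: use constraint \eqref{eq:mpb_3}, rearrange to $d_{ij}(r_i - R) \le p_j - P$, then conclude from $p_j < P$ (which follows from $p_j + \beta_j < P$ and $\beta_j \ge 0$) and $d_{ij} > 0$ that $r_i < R$. Your added remarks — that \eqref{eq:mpb_3} is not augmented with $z$ because its right-hand side $d_{ij}R - P$ is positive by the choice of $R$, and that the conclusion therefore holds at any point of $\P$ satisfying the hypothesis, not just vertices on Lemke's path — are accurate and slightly more explicit than the paper's terse argument, but do not change the substance.
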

\begin{proof}
Suppose at a vertex $p_j + \beta_j < P$ for all $j$. Then for every $i$, \eqref{eq:mpb_3} implies $d_{ij}(r_i - R) \le p_j - P$. Since $p_j < P$ and $d_{ij} > 0$, we have $r_i < R$ for all $i$.
\end{proof}

\begin{restatable}{lemma}{lemConvTwo}\label{lem:conv2}
Let $\v = (\p,\q,\r,\bbeta,z)$ be a vertex encountered by Lemke's scheme starting from the primary ray of LCP\eqref{lcp:3}, and let $\v' = (\p',\q',\r',\bbeta',z')$ be the next vertex after a pivoting step. If $\forall j: p_j + \beta_j < P$ holds, then it cannot happen that $p'_j + \beta'_j = P$ holds for a strict non-empty subset of $M$. 
\end{restatable}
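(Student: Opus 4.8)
The plan is to argue by contradiction: suppose at vertex $\v$ we have $p_j + \beta_j < P$ for all $j \in M$, and after the pivot step we arrive at $\v'$ where the set $T := \{j \in M : p'_j + \beta'_j = P\}$ is nonempty but a strict subset of $M$. First I would set up notation for the pivot: a single pivot step in Lemke's scheme moves along an edge of $\P$, so $\v' = \v + \theta \cdot \d$ for some $\theta \ge 0$ and some edge direction $\d$, and exactly one label changes (one tight constraint becomes slack, one slack constraint becomes tight) plus the double label is resolved. So the only constraints that go from strict to tight (or vice versa) along the edge are the ones being exchanged. The key structural observation I would extract is the following: for $j \in T$, the constraint \eqref{eq:ch_earn_3}, namely $p_j + \beta_j + \sum_i q_{ij} \le P$, becomes tight at $\v'$, which forces $\sum_i q'_{ij} = 0$; meanwhile for $j \notin T$ we still have $p'_j + \beta'_j < P$.

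The main idea I would then pursue is a conservation/summation argument using the agent-earning constraints \eqref{eq:ag_earn_3} together with the chore constraints \eqref{eq:ch_earn_3}, in the same spirit as the tightness arguments in the proof of \cref{lem:lcp2-lcp-er}. Concretely: along the pivot edge, $r_i$ stays in $(0,R)$ by \cref{lem:conv1} applied at the relevant endpoints (or by continuity along the edge, since $p_j + \beta_j \le P$ throughout the edge implies $r_i \le R$, and we can rule out equality), so by complementarity the constraint \eqref{eq:ag_earn_3} is tight for every agent $i$ at $\v'$, giving $\sum_j q'_{ij} = e_i \cdot \frac{\sum_j(P - p'_j - \beta'_j)}{\sum_h e_h}$ after rearranging. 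Summing over $i$ and swapping the order of summation gives $\sum_j \sum_i q'_{ij} = \sum_j (P - p'_j - \beta'_j)$. But now split the left side and right side over $T$ and $M \setminus T$: over $T$ both sides contribute zero (left side because $\sum_i q'_{ij}=0$, right side because $P - p'_j - \beta'_j = 0$), and over $M \setminus T$ we would want to derive a contradiction from the fact that constraint \eqref{eq:ch_earn_3} is \emph{not} tight there — unless it is, in which case $\sum_i q'_{ij} < P - p'_j - \beta'_j$ strictly for at least one $j \notin T$ (here I would need that at the vertex $\v'$, for $j \notin T$, the chore payment variable $p'_j > 0$ is active or the constraint is genuinely slack), forcing the sum identity to fail. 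The cleaner route may be to show $T$ nonempty proper is incompatible with the structure of a single-label pivot: the edge from $\v$ changes exactly one label, so at most one chore $j$ can have \eqref{eq:ch_earn_3} change status; hence $|T \setminus \{j : \text{\eqref{eq:ch_earn_3} tight at } \v\}| \le 1$, and combined with "$p_j + \beta_j < P$ at $\v$ for all $j$" this pins down $|T| \le 1$, after which I handle the residual case $|T| = 1$ directly using the summation identity above (one chore contributing a strict deficit on one side of a balanced equation).

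The hard part will be the case analysis for which specific pair of labels is exchanged in the pivot, and in particular ruling out the degenerate possibility that multiple chores simultaneously hit $p_j + \beta_j = P$; the non-degeneracy hypothesis (\cref{def:non-degenerate}) and \cref{lem:non-degenerate} should guarantee a unique double label along the traversed path, so that at a non-degenerate vertex the pivot is well-defined and only one constraint flips, but I need to be careful that $\v'$ itself is the kind of vertex to which non-degeneracy applies (it is good or we are in the case we want to exclude). I would also need to confirm that $z$ does not interfere: $z$ appears only in constraints \eqref{eq:ag_earn_3} and \eqref{eq:earn_res_3}, not in \eqref{eq:ch_earn_3}, so the tightening of \eqref{eq:ch_earn_3} at chores in $T$ is a genuine statement about $p'_j, \beta'_j, \q'$ alone and the summation argument goes through regardless of $z'$. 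Once the single-label/single-chore reduction is in place, the final contradiction is the short arithmetic check described above, so I expect the bulk of the writing to be the pivot bookkeeping rather than any deep inequality.
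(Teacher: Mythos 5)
Your summation idea faces two genuine obstacles, and the paper sidesteps both by using a different, more local argument.

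First, tightness of constraint \eqref{eq:ag_earn_3} at $\v'$ requires $r'_i > 0$ for every $i$, but \cref{lem:conv1} only gives $r_i < R$, not $r_i > 0$. Some agents may have $r'_i = 0$, in which case \eqref{eq:ag_earn_3} is only an inequality for them. Even granting tightness everywhere, after summing over $i$ you get $\sum_i\sum_j q'_{ij} = \sum_j (P - p'_j - \beta'_j) - n z'$, and the $-nz'$ term does not vanish on $T$ or on $M\setminus T$; it is a global slack that ruins the clean decomposition you want. This is exactly the delicacy the paper has to confront in its proof of \cref{lem:conv3} (the ``$T = M$'' case), where it splits $N$ into $\{i : r_i = 0\}$ and the rest, sums the tight constraints, and then invokes the feasible-earning hypothesis to get a sign contradiction against $z > 0$. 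That argument does not transfer to $\emptyset \neq T \subsetneq M$, precisely because the $z$ term refuses to localize.

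Second, the heuristic ``one label flips per pivot $\Rightarrow$ at most one chore enters $T$'' is not sound as stated. The condition $p'_j + \beta'_j = P$ is a derived condition, not one of the LCP labels. It forces \eqref{eq:ch_earn_3} to be tight (since it implies $\sum_i q'_{ij} = 0$), but \eqref{eq:ch_earn_3} can already be tight at $\v$ with $p_j + \beta_j < P$ and $\sum_i q_{ij} > 0$. So entering $T$ need not coincide with a label change for that chore, and a single pivot direction (which moves many variables simultaneously, including $z$) can in principle drive $p_j + \beta_j$ to $P$ for several $j$ at once.

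The paper's actual proof avoids any global bookkeeping. It fixes one $g \in M_0$ and one $k \in M\setminus M_0$, and observes that along the edge $E$, complementarity forces at least one of \eqref{eq:ch_earn_3}, \eqref{eq:ch_earn_3}' and at least one of \eqref{eq:earn_res_3}, \eqref{eq:earn_res_3}' to remain tight for $g$ throughout $E$. It then performs a three-way case analysis. If \eqref{eq:earn_res_3} stays tight for $g$, then $P - p_g - \beta_g = c_g \frac{\sum_j(P - p_j - \beta_j)}{\sum_i e_i} + z$ along $E$; as the left side goes to $0$, the nonnegativity of every summand on the right forces $p'_j + \beta'_j = P$ for all $j$, contradicting $k \notin M_0$. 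If \eqref{eq:ch_earn_3} and \eqref{eq:earn_res_3}' stay tight for $g$, then $\beta_g \equiv 0$ so $p_g \to P$, some $q_{ig}>0$ gives tightness of \eqref{eq:mpb_3} for $(i,g)$, and comparing with the \eqref{eq:mpb_3} inequality for $(i,k)$ forces $p_k \to P$ as well, again contradicting $k \notin M_0$. If \eqref{eq:ch_earn_3}' and \eqref{eq:earn_res_3}' stay tight for $g$, then $p_g = \beta_g = 0$ at $\v'$, contradicting $g\in M_0$. This entirely local use of complementarity along the edge is what makes the proof close, and it is the key step you are missing.
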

\begin{proof}
Let $M_0 \subseteq M$ be the set of chores in $\v'$ for which $p'_j + \beta'_j = P$. Assume $\emptyset \neq M_0 \subsetneq M$. Hence there are chores $g$ and $k$ s.t. $g\in M_0$ and $k\in M\setminus M_0$. Thus, $p'_g + \beta'_g = P$ and $p'_k + \beta'_k < P$. Let $E$ be the edge from $\v$ to $\v'$. Along $E$, at least one of \eqref{eq:ch_earn_3} and \eqref{eq:ch_earn_3}' and at least one of \eqref{eq:earn_res_3} and \eqref{eq:earn_res_3}' have to remain tight for chore $g$, due to complementarity. We now consider three cases:

\begin{itemize}[leftmargin=*]
\item The constraint \eqref{eq:earn_res_3} remains tight along $E$ for $g$. Thus, $P -p_g - \beta_g = c_g \frac{\sum_j (P-p_j-\beta_j)}{\sum_i e_i} + z$ holds along $E$. Since $P-p_g-\beta_g$ goes to $0$ along $E$, we must have that $c_g\frac{\sum_j (P-p_j-\beta_j)}{\sum_i e_i} + z$ goes to $0$ along $E$. Since $p_j+\beta_j \le P$ for all $j\in M$ due to constraint \eqref{eq:ch_earn_3} and $z\ge 0$, it must be that $P - p'_j -\beta'_j = 0$ for all $j\in M$ at $\v'$, contradicting that $p'_k + \beta'_k < P$.
\item The constraints \eqref{eq:ch_earn_3} and \eqref{eq:earn_res_3}' remain tight along $E$ for $g$. Thus $\sum_i q_{ig} = P - p_g - \beta_g$ and $\beta_g = 0$ along $E$. Thus $p_g < P$ pivots to $p'_g = P$ along $E$ from $\v$ to $\v'$. Since $\sum_i q_{ij} = P - p_g > 0$ at $\v$ and along $E$, there is some agent $i$ s.t. $q_{ij} > 0$. By complementarity for \eqref{eq:mpb_3}, we have $d_{ig}r_i - p_g = d_{ig} R - P$. Moreover $d_{ik}r_i - p_k \le d_{ik}R-P$. This implies:
\[ R - r_i = \frac{P - p_g}{d_{ig}} \ge \frac{P - p_k}{d_{ik}}.\]
Since $p_g$ pivots to $P$ along $E$, the above inequality implies $R-r_i$ also pivots to $0$. Hence $p_k$ must pivot to $P$ along $E$, i.e., $p'_k = P$. This contradicts the fact that at $\v'$, $p'_k + \beta'_k < P$.
\item The constraints \eqref{eq:ch_earn_3}' and \eqref{eq:earn_res_3}' remain tight along $E$ for $g$. Thus, $p_g = \beta_g = 0$ along $E$. In particular at $\v'$, $p'_g + \beta'_g = 0$. This contradicts $p'_g + \beta'_g = P$. Therefore it cannot happen that both \eqref{eq:ch_earn_3}' and \eqref{eq:earn_res_3}' remain tight along $E$.
\end{itemize}
Thus, it cannot happen that at $\v'$, $p'_j + \beta'_j = P$ holds for a strict non-empty subset of chores. 
\end{proof}

\begin{restatable}{lemma}{lemConvThree}\label{lem:conv3}
Let $\v = (\p,\q,\r,\bbeta,z)$ be a vertex encountered by Lemke's scheme starting from the primary ray of LCP\eqref{lcp:3}, and let $\v' = (\p',\q',\r',\bbeta',z')$ be the next vertex after a pivoting step. If $\forall j: p_j + \beta_j < P$ holds, then $\forall j: p'_j + \beta'_j = P$ cannot hold. 
\end{restatable}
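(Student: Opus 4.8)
The plan is to argue by contradiction, reusing the edge-tracing machinery from the proof of \cref{lem:conv2} but exploiting that now there is no chore outside the ``full'' set $M$ to act as a comparison point. Suppose, towards a contradiction, that $p'_j + \beta'_j = P$ for every $j\in M$ at $\v'$. First I would record the forced structure at $\v'$: constraint \eqref{eq:ch_earn_3} gives $\sum_i q'_{ij} \le P - p'_j - \beta'_j = 0$, so $q'_{ij}=0$ for all $i,j$; consequently \eqref{eq:ch_earn_3} is in fact tight at $\v'$ for every $j$; and $T' := \sum_k(P - p'_k - \beta'_k) = 0$.

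The crux is to show $z'=0$. Fix any chore $g$ and consider the edge $E$ joining $\v$ to $\v'$. Along $E$, complementarity forces at least one of \eqref{eq:ch_earn_3}, \eqref{eq:ch_earn_3}' to stay tight for $g$, and at least one of \eqref{eq:earn_res_3}, \eqref{eq:earn_res_3}' to stay tight for $g$, giving the same three cases as in \cref{lem:conv2}. If \eqref{eq:earn_res_3} stays tight for $g$ along $E$, then $P - p_g - \beta_g = c_g\cdot\frac{\sum_k(P-p_k-\beta_k)}{\sum_i e_i} + z$ holds along $E$; evaluating at $\v'$ and using $T'=0$ together with $p'_g+\beta'_g=P$ yields $z'=0$. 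If instead \eqref{eq:ch_earn_3} and \eqref{eq:earn_res_3}' stay tight for $g$ along $E$, then $\beta_g\equiv 0$ and $\sum_i q_{ig}=P-p_g$ along $E$; since $p_g+\beta_g<P$ at $\v$ we get $\sum_i q_{ig}>0$ there, so some agent $i$ has $q_{ig}>0$ at $\v$ and hence on a sub-segment of $E$, which makes \eqref{eq:mpb_3} for $(i,g)$ tight along all of $E$ (both sides are affine and agree on that sub-segment), i.e.\ $d_{ig}(R-r_i)=P-p_g$ along $E$; letting $p_g\to P$ at $\v'$ (valid since $\beta'_g=0$) forces $r'_i = R > 0$, so by complementarity \eqref{eq:ag_earn_3} is tight at $\v'$ for agent $i$, and combining this with $\sum_j q'_{ij}=0$ and $\sum_j(p'_j+\beta'_j)=mP$ again gives $z'=0$. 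Finally, if \eqref{eq:ch_earn_3}' and \eqref{eq:earn_res_3}' stay tight for $g$ along $E$, then $p_g\equiv\beta_g\equiv 0$ along $E$, so $p'_g+\beta'_g=0\neq P$, contradicting the standing assumption. Hence $z'=0$.

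To close the argument I would use the combinatorics of the Lemke path. Since Lemke's scheme did not terminate at $\v$, we have $z>0$ there, and $\v$ is non-degenerate by \cref{lem:non-degenerate}. Because $z'=0$, the step $\v\to\v'$ is a genuine pivot (not travel along a ray), so by non-degeneracy of the augmented polyhedron it tightens exactly one new constraint relative to $\v$; as $z\ge 0$ is tight at $\v'$ (since $z'=0$) but not at $\v$ (since $z>0$), that newly tightened constraint must be $z\ge 0$ itself. Therefore every constraint tight at $\v'$ other than $z\ge 0$ is already tight at $\v$ --- in particular the non-negativity constraints \eqref{eq:mpb_3}' (i.e.\ $q_{ij}=0$) for all $i,j$ and all of \eqref{eq:ch_earn_3}. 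Combining these at $\v$ gives $p_j+\beta_j = p_j+\beta_j+\sum_i q_{ij} = P$ for every $j$, contradicting the hypothesis that $p_j+\beta_j<P$ for all $j$ at $\v$. This contradiction finishes the proof.

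I expect the main obstacle to be the middle step: the second case of the split, where one must chase an equality through \eqref{eq:mpb_3} to pin down $r'_i = R$ for a suitable agent $i$ and then feed this back through \eqref{eq:ag_earn_3} to extract $z'=0$, all while carefully justifying the affine-continuity arguments along the edge $E$ (an affine function that is positive at an endpoint is positive on a neighbouring sub-segment, and two affine functions agreeing on a sub-segment agree on all of $E$). The final polyhedral step is conceptually routine but relies on the non-degeneracy of the relevant vertices, which must be in force via \cref{lem:non-degenerate} and the standing non-degeneracy assumption of \cref{def:non-degenerate}.
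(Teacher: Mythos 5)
Your proof is correct and takes a genuinely different route from the paper's. The paper argues by a global case split over \emph{all} chores simultaneously (whether $p_j$ increases for every $j$, whether $\beta_j$ increases for every $j$, or neither), and in each case sums the agent-earning or earning-restriction equalities to produce a sign contradiction with $z>0$ at $\v$, plus a non-degeneracy argument in the mixed case. You instead fix a \emph{single} chore $g$, reuse the local three-way complementarity split from the proof of \cref{lem:conv2}, and deduce $z'=0$ at $\v'$; you then close with a polyhedral/combinatorial argument on the Lemke path: since $z\ge 0$ is the unique constraint that becomes newly tight at $\v'$, the constraints $q'_{ij}=0$ (all $i,j$) and \eqref{eq:ch_earn_3} (all $j$), which hold at $\v'$, must already be tight at $\v$, forcing $p_j+\beta_j=P$ at $\v$ and contradicting the hypothesis. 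Your route is more modular and uniform with the proof of \cref{lem:conv2}, trading the paper's aggregate summation arguments for a local-complementarity step plus a ``pivoting preserves all but one tight constraint'' step.

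One caveat worth making explicit: the final step relies on polyhedral non-degeneracy at the $z'=0$ vertex $\v'$ (so that exactly one new constraint becomes tight along the edge $E$), but \cref{lem:non-degenerate} as stated applies only to good vertices with $z>0$. This is in the spirit of the paper's standing non-degeneracy assumption, and a parallel argument (e.g., expressing the nonzero variables linearly and extracting a polynomial relation from a hypothetical extra tight constraint) would cover the $z'=0$ case; but if you write this up, you should either extend \cref{lem:non-degenerate} to that case or otherwise justify that the edge $E$ carries exactly $N$ tight constraints, so that every $\v'$-tight constraint other than $z\ge 0$ is also $\v$-tight. Without that, a degenerate $\v'$ could in principle make more than one constraint newly tight, breaking the ``all but one'' inference.
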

\begin{proof}
For sake of contradiction, suppose $\forall j\in M, p'_j + \beta'_j = P$ at $\v'$. Let $E$ be the edge from $\v$ to $\v'$. Then for each chore $j$, either $p_j$ or $\beta_j$ increases along $E$. We consider three cases.

\begin{itemize}[leftmargin=*]
\item Suppose $p_j$ increases along $E$ for each $j\in M$. By complementarity, constraint \eqref{eq:ch_earn_3} must be tight at $\v$. Thus $\forall j: \sum_i q_{ij} = P-p_j-\beta_j$, which implies:

\begin{equation}\label{eq:conv3-1}
\sum_j\sum_i q_{ij} = \sum_j (P-p_j-\beta_j).
\end{equation}

Let $N_1 = \{i\in N: r_i = 0\}$ and $N_2 = N\setminus N_1$. For $i\in N_1$, since $r_i = 0$, $p_j \ge 0$ and $d_{ij} R - P > 0$, \eqref{eq:mpb_3} is not tight. By complementarity, $q_{ij} = 0$ for all $j\in M$. Thus \eqref{eq:ag_earn_3} becomes $e_i\cdot\frac{\sum_j (P-p_j-\beta_j)}{\sum_h e_h} \le z$ for $i\in N_1$. For $i\in N_2$, since $r_i > 0$, complementarity implies \eqref{eq:ag_earn_3} must be tight. Thus $e_i\cdot\frac{\sum_j (P-p_j-\beta_j)}{\sum_h e_h} - z = \sum_j q_{ij}$ for $i\in N_2$. Using these observations, we have:
\begin{equation}\label{eq:conv3-2}
\begin{aligned}
\sum_i\sum_j q_{ij} &= \sum_{i\in N_2} \sum_j q_{ij} \\
&= \sum_{i\in N_2} \left(e_i\cdot\frac{\sum_j (P-p_j-\beta_j)}{\sum_h e_h} - z \right)\\
&= \frac{\sum_{i\in N_2} e_i}{\sum_h e_h}\cdot\sum_j (P-p_j-\beta_j) - |N_2|z.
\end{aligned}
\end{equation}

Putting \eqref{eq:conv3-1} and \eqref{eq:conv3-2} together and rearranging, we obtain:
\[ 
\frac{\sum_{i\in N_1} e_i}{\sum_h e_h}\cdot\sum_j (P-p_j-\beta_j) = - |N_2|z.
\]
Since $p_j+\beta_j < P$ for all $j\in M$, we have $\sum_j (P-p_j-\beta_j) > 0$. Moreover $z>0$ at $\v$. Since either $N_1\neq\emptyset$ or $N_2\neq \emptyset$, the above equality cannot hold.  

\item Suppose $\beta_j$ increases along $E$ for each $j\in M$. By complementarity, constraint \eqref{eq:earn_res_3} must be tight. Hence $P-p_j-\beta_j -z= c_j\cdot\frac{\sum_k (P-p_k-\beta_k)}{\sum_i e_i}$ for all $j\in M$. Summing over all $j$ and rearranging gives:
\[ 
\sum_j (P-p_j-\beta_j)\cdot\left(\frac{\sum_j c_j}{\sum_i e_i} - 1\right) = -mz.
\]
Since $p_j+\beta_j < P$ for all $j\in M$, we have $\sum_j (P-p_j-\beta_j) > 0$. Since $\sum_i e_i \le \sum_j c_j$, hence the left side of the above equation is non-negative. Since $z>0$ at $\v$, the right side is negative, and hence the above equality cannot hold.  

\item There are two chores $j$ and $k$ such that $p_j = 0$ and $\beta_k = 0$ hold all along the edge $E$. It cannot be that $j = k$, since $p'_j + \beta'_j = P$. Since $p_k + \beta_k$ increases to $P$ while $\beta_k = 0$ along $E$, it must mean that $p_k$ increases to $P$ along $E$. By complementarity, \eqref{eq:ch_earn_3} is tight along $E$. Thus, $\sum_i q_{ik} = P - p_k > 0$ along $E$, since $p_k < P$ along $E$. Thus, there is some agent $i\in N$ for which $q_{ik} > 0$ along $E$. By complementarity, \eqref{eq:mpb_3} implies that $d_{ik}r_i-p_k = d_{ik}R-P$ holds along $E$. Moreover the constraint \eqref{eq:mpb_3} also implies $d_{ij}r_i-p_j \le d_{ij}R-P$ holds along $E$. This implies that:
\[ 
R- r_i = \frac{P-p_k}{d_{ik}} \ge \frac{P-p_j}{d_{ij}}
\]
holds along the edge $E$. However, since $p_j = 0$ along $E$, we have $p_k \le P \cdot(1-d_{ik}/d_{ij}) < P$, where $d_{ik}\neq d_{ij}$ follows from the non-degeneracy of the instance. Thus, $p_k$ always remains strictly below $P$ along the edge $E$. Therefore, it cannot happen that $p'_k = P$ at $\v'$, which is a contradiction.
\end{itemize}
Since these cases are exhaustive, the lemma holds.
\end{proof}

\begin{restatable}{lemma}{lemConvFour}\label{lem:conv4}
Lemke's scheme starting from the primary ray of LCP\eqref{lcp:3} does not reach a secondary ray.
\end{restatable}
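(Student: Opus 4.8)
The plan is to argue by contradiction, combining the three preceding lemmas with an analysis of the recession directions of the LCP\eqref{lcp:3} polyhedron. First I would establish the invariant that \emph{every} vertex $\v = (\p,\q,\r,\bbeta,z)$ visited by Lemke's scheme is good: the base vertex of the primary ray has $\p=\q=\r=\bbeta=\mathbf 0$ and is trivially good, and whenever the current vertex $\v$ satisfies $p_j+\beta_j<P$ for all $j$, \cref{lem:conv2} and \cref{lem:conv3} together with constraint~\eqref{eq:ch_earn_3} force the next vertex $\v'$ to also satisfy $p'_j+\beta'_j<P$ for all $j$, after which \cref{lem:conv1} gives $r'_i<R$ for all $i$. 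In particular, the base vertex of any ray reachable by Lemke's scheme is good.

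Now suppose for contradiction that Lemke's scheme reaches a secondary ray $\{\v^* + \alpha\cdot\mathbf{d} : \alpha\ge 0\}$ with $z^*>0$, where $\mathbf{d}=(\p',\q',\r',\bbeta',z')\ne\mathbf 0$ satisfies $\mathbf{d}\ge\mathbf 0$, $z'\ge 0$, and solves the homogeneous version of LCP\eqref{lcp:3} (right-hand sides set to $0$, complementarity retained), as recalled in \cref{sec:lemke-prelim}. The key observation is that the homogeneous form of the chore-earning constraint~\eqref{eq:ch_earn_3} reads $p'_j+\beta'_j+\sum_i q'_{ij}\le 0$; since each summand is nonnegative, this forces $\p'=\bbeta'=\q'=\mathbf 0$. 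Substituting $\p'=\mathbf 0$ into the homogeneous form of~\eqref{eq:mpb_3} gives $d_{ij}r'_i\le 0$, so $\r'=\mathbf 0$ as well (as $d_{ij}>0$). Since $\mathbf{d}\ne\mathbf 0$ this leaves $z'>0$, so the ray direction is $(\mathbf 0,z')$, parallel to the primary ray.

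It then remains to show that such a ray is in fact the primary ray, contradicting the fact that Lemke's scheme starts at the primary ray's base vertex and never revisits a vertex. For this I would exploit complementarity at $\v^*$ along the ray: in each row carrying the variable $z$, namely~\eqref{eq:ag_earn_3} and~\eqref{eq:earn_res_3}, the slack varies linearly in $\alpha$ with nonzero slope, so the complementary variable vanishes at $\v^*$, giving $\r^*=\mathbf 0$ and $\bbeta^*=\mathbf 0$. If some $q^*_{ij}>0$, then~\eqref{eq:mpb_3} is tight, which with $r^*_i=0$ forces $p^*_j=P-d_{ij}R<0$ (using $R\cdot\min_{i,j}d_{ij}>P$), contradicting $p^*_j\ge 0$, so $\q^*=\mathbf 0$. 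Finally, if some $p^*_j>0$, then~\eqref{eq:ch_earn_3} is tight, which with $\bbeta^*=\q^*=\mathbf 0$ forces $p^*_j=P$, contradicting that $\v^*$ is good; hence $\p^*=\mathbf 0$ and $\v^*$ is the primary ray's base vertex --- the desired contradiction. Consequently Lemke's scheme never reaches a secondary ray, and since its path is finite and non-revisiting it must terminate at a vertex with $z=0$; this vertex is good by the invariant, so \cref{lem:lcp2-lcp-er} turns it into an ER equilibrium, proving \cref{thm:er-existence}.

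I expect the main obstacle to be this last step: carefully tracking which complementarity conditions survive along a ray in an \emph{augmented} LCP where $z$ is attached only to rows with negative right-hand side, and verifying that goodness really is available at the base of every ray Lemke's scheme can reach --- since it is precisely goodness (``$p^*_j<P$'') that rules out the genuine secondary rays sitting at vertices with some $p^*_j=P$. The homogeneous-constraint sign argument eliminating $\p',\bbeta',\q',\r'$ is short and should present no difficulty.
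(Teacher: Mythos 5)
Your proof is correct and follows essentially the same route as the paper's: use Lemmas~\ref{lem:conv1}--\ref{lem:conv3} to maintain goodness at every visited vertex, show any ray direction must be $(\mathbf{0},z')$, and conclude the base vertex must be that of the primary ray. Your elimination of the ray-direction components is slightly cleaner than the paper's (you use $\mathbf{d}\ge\mathbf 0$ from the recession cone and the homogeneous form of~\eqref{eq:ch_earn_3} in one step, rather than arguing both signs separately), but the decomposition and key lemmas are identical.
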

\begin{proof}
Suppose Lemke's scheme starting from the primary ray of LCP\eqref{lcp:3} reaches a vertex $\v^0 = (\p^0,\q^0,\r^0,\bbeta^0,z^0)$ and then pivots to a secondary ray given by $\mathcal{R} = \{\v^0 + \alpha\cdot\v' : \alpha \ge 0\}$, where $\v' = (\p',\q',\r',\bbeta',z')$ with $z'>0$. We first show that $\v' = \textbf{0}$ by arguing that if this is not the case then some constraint of LCP\eqref{lcp:3} will be violated at some point on the secondary ray.

If $p'_j < 0$ for some $j\in M$ then eventually the non-negativity constraint \eqref{eq:ch_earn_3}' will be violated. On the other hand if $p'_j > 0$ then eventually the constraint \eqref{eq:ch_earn_3} will be violated. Thus $\p' = 0$. By similar arguments considering constraints \eqref{eq:mpb_3}' and \eqref{eq:ch_earn_3} for $\q'$, \eqref{eq:ag_earn_3}' and \eqref{eq:mpb_3} for $\r'$, and \eqref{eq:earn_res_3}' and \eqref{eq:ch_earn_3} for $\bbeta'$, we can conclude that $\q' = 0$, $\r' = 0$ and $\bbeta' = 0$. If $z' < 0$ then the $z\ge 0$ constraint will be violated eventually. Suppose $z'>0$. Then \eqref{eq:ag_earn_3} becomes strict, implying that $r_i = r^0_i = 0$ for each $i\in N$. Since $d_{ij}R-P>0$ for all $i,j$, \eqref{eq:mpb_3} is strict. By complementarity, $q_{ij} = q^0_{ij} = 0$ for all $i\in N,j\in M$. Similarly, since $z'>0$ \eqref{eq:earn_res_3} eventually becomes strict and by complementarity $\beta_j = \beta^0_j = 0$ for all $j\in M$. Since $\v^0$ is a vertex encountered in Lemke's scheme starting from the primary ray, \cref{lem:conv2} and \cref{lem:conv3} imply that $\forall j: p^0_j + \beta^0_j < P$. This means that \eqref{eq:ch_earn_3} is strict, and hence by complementarity $p_j = p^0_j = 0$ for all $j\in M$. Thus, $\p^0 = 0$, $\q^0 = 0$, $\r^0 = 0$ and $\bbeta^0 = 0$. We therefore have $\mathcal{R} = \{(0,0,0,0,z^0) + \alpha\cdot(0,0,0,0,z') : \alpha \ge 0\}$ where $z^0 > 0$ and $z' > 0$. However this is the same as the primary ray, thus showing $\mathcal{R}$ cannot be a secondary ray.
\end{proof}

We conclude the above discussion in the following lemma.
\begin{restatable}{lemma}{lemLCPConv}\label{lem:lcp-conv}
Lemke's scheme starting from the primary ray of LCP\eqref{lcp:3} converges to a good solution $(\p,\q,\r,\bbeta,z)$ where $z=0$.
\end{restatable}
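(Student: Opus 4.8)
The plan is to show that the complementary-pivot path traced by Lemke's scheme from the primary ray of LCP\eqref{lcp:3} (i) stays within the set of \emph{good} vertices, (ii) is well-defined at every step, and (iii) terminates at a vertex with $z=0$, which by \cref{lem:lcp2-lcp-er} then yields an ER equilibrium with all payments at most $P$. The substantive work of keeping the path inside the good region and excluding secondary rays is already carried out in \cref{lem:conv1,lem:conv2,lem:conv3,lem:conv4}, so this final lemma is essentially an assembly argument on top of them.

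First I would establish, by induction along the pivot path, the invariant that every vertex $\v = (\p,\q,\r,\bbeta,z)$ encountered satisfies $p_j+\beta_j < P$ for all $j\in M$ (and hence, by \cref{lem:conv1}, $r_i < R$ for all $i\in N$, i.e., $\v$ is good). For the base case, the primary ray terminates at the vertex with $\p=\q=\r=\bbeta=\mathbf 0$ and $z=\max\{\max_i \frac{e_i mP}{\sum_h e_h},\, \max_j(c_j\frac{mP}{\sum_i e_i}-P)\}$; there $p_j+\beta_j = 0 < P$ for every $j$, so the invariant holds (and note $z>0$ at this vertex since $e_i>0$). For the inductive step, suppose $\v$ satisfies the invariant and Lemke's scheme pivots from $\v$ to the next vertex $\v'$. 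Then \cref{lem:conv2} rules out $p'_j+\beta'_j = P$ holding on a strict non-empty subset of $M$, and \cref{lem:conv3} rules out it holding on all of $M$; since $p'_j+\beta'_j\le P$ always holds by constraint \eqref{eq:ch_earn_3}, this forces $p'_j+\beta'_j < P$ for all $j$, so $\v'$ again satisfies the invariant. Thus every vertex on the path is good.

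Next I would argue that the path is well-defined and must terminate. Every vertex on the path, other than a possible final one with $z=0$, has $z>0$, and such vertices are good by the invariant; hence by \cref{lem:non-degenerate} they are non-degenerate, so each has a unique double label and the pivot out of it is uniquely determined. By the standard property of the complementary-pivot path on a (locally) non-degenerate polyhedron, the path never revisits a vertex, so it is finite and ends either at a vertex with $z=0$ or on a secondary ray. \cref{lem:conv4} shows it cannot end on a secondary ray. Therefore Lemke's scheme terminates at a vertex $\v=(\p,\q,\r,\bbeta,z)$ with $z=0$, and by the invariant this $\v$ is good, which is exactly the claim.

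The hard part of the overall development is not this assembly but the three lemmas it invokes: showing that a pivot cannot push any $p_j+\beta_j$ up to $P$ (\cref{lem:conv2,lem:conv3}) and that the path cannot escape to a secondary ray (\cref{lem:conv4}), each of which requires a careful case analysis of which complementarity constraints remain tight along a pivot edge. The one subtlety worth flagging in the assembly itself is that \cref{lem:non-degenerate} guarantees non-degeneracy only for good vertices with $z>0$ — but that is precisely the set of vertices visited strictly before termination, so the pivoting rule is unambiguous all along the path, and the (possibly degenerate) endpoint with $z=0$ is simply where the scheme stops.
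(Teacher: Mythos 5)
Your proposal is correct and follows essentially the same route as the paper's proof: both establish by induction (via \cref{lem:conv2,lem:conv3}, with \cref{lem:conv1} supplying $r_i<R$) that every vertex on the path is good, invoke \cref{lem:non-degenerate} to make pivoting well-defined at good vertices with $z>0$, and use \cref{lem:conv4} to rule out secondary rays, concluding termination at a good vertex with $z=0$. You spell out the base case at the end of the primary ray and the finiteness/no-revisit argument a bit more explicitly than the paper does, but the decomposition and key lemmas are identical.
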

\begin{proof}
\cref{lem:conv4} shows that no secondary rays are encountered. \cref{lem:conv2} and \cref{lem:conv3} together show that if at a vertex $\v$ it holds that $\forall j: p_j+\beta_j < P$, then the same holds at the next vertex $\v'$ after pivoting. Since the primary ray sets all $p_j = \beta_j = 0$, this is true initially. Hence $\forall j: p_j+\beta_j < P$ at every vertex encountered by Lemke's scheme. Finally \cref{lem:conv1} shows that at every such vertex $\forall i: r_i < R$ also holds, hence such a vertex is good. \cref{lem:non-degenerate} shows that every good vertex with $z>0$ is non-degenerate. Hence pivoting to the next step is always possible and Lemke's scheme eventually reaches a good solution with $z=0$.
\end{proof}

\cref{lem:lcp2-lcp-er} and \cref{lem:lcp-conv} thus prove \cref{thm:er-existence}: the existence of an ER equilibrium under the feasible earning condition, and also show that Lemke's scheme can be used to compute it.

\subsection{Computing ER Equilibrium in Polynomial Time for Constantly Many Agents}\label{sec:er-const}
In this section, we prove a positive result regarding the computation of ER equilibria when the number of agents is a constant. 

\thmConstER*

Assuming $\sum_i e_i \le \sum_j c_j$, \cref{thm:er-existence} guarantees the existence of an ER equilibrium. For such an instance, our algorithm effectively explores the space of all competitive allocations by enumerating the set of all \textit{consumption graphs} \cite{branzei2019choresceei} of an instance. The consumption graph $G_\z$ of an allocation $\z$ is a bipartite graph $G_\z = (N, M, E)$ where $(i,j)\in E$ iff $z_{ij} > 0$.

\begin{definition}[Rich family of graphs, \cite{branzei2019choresceei}] A collection of bipartite graphs $\mathcal{G}$ is said to be \textit{rich} for a given instance $(N,M,V)$ if for any fPO utility vector $\mathbf{u}$, there is a feasible allocation $\z$ with $\mathbf{u}(\z) = \mathbf{u}$ such that the consumption graph $G_\z$ is in the collection $\mathcal{G}$.
\end{definition}
Thus, a rich family of graphs contains the consumption graphs of every fPO utility vector for the instance.  A rich family of graphs $\mathcal{G}$ can be constructed in polynomial time for every instance with constant $n$.
\begin{proposition}[\cite{branzei2019choresceei}] For constant number of agents $n$, a rich family of graphs $\mathcal{G}$ can be constructed in time $O(m^{\frac{n(n-1)}{2}+1})$ and has at most $(2m+1)^{\frac{n(n-1)}{2}+1}$ elements.
\end{proposition}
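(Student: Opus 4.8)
The plan is to translate the geometric description of fractional Pareto optima into a combinatorial one. The starting point is the characterization underlying the rich-family notion: an allocation $\z$ is \fpo if and only if there is a price vector $\p>0$ such that $(\z,\p)$ is an \mpb allocation (\cref{def:mpb}). One direction is \cref{thm:fpo}; the converse is the standard consequence of LP duality for polytopal feasible sets (a point on the Pareto frontier of the feasible disutility polytope maximizes some strictly positive weighted welfare, and the per-chore separability of that objective yields the supporting prices). Hence every \fpo utility vector $\mathbf{u}$ is realized by some \mpb allocation $(\z,\p)$ with \mpb ratios $\boldsymbol\alpha=(\alpha_i)_{i\in N}$; in any \mpb allocation the prices are forced by the ratios, $p_j=\min_{i} d_{ij}/\alpha_i$, so the consumption graph satisfies $G_\z\subseteq G^*(\boldsymbol\alpha)$, where $G^*(\boldsymbol\alpha)=\{(i,j): d_{ij}/\alpha_i=\min_{i'} d_{i'j}/\alpha_{i'}\}$ is the \emph{potential graph} of $\boldsymbol\alpha$. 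Applying the cycle-cancelling argument of \cref{lem:acyclic} (circulate agent earnings around a cycle of $G_\z$: every agent's total earning, hence --- on \mpb --- disutility, and every chore's total allocation are preserved, while an edge is deleted), I may further assume $G_\z$ is a forest and still has $G_\z\subseteq G^*(\boldsymbol\alpha)$.

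Next I would count potential graphs. Membership $(i,j)\in G^*(\boldsymbol\alpha)$ is decided by the signs of $\alpha_i d_{i'j}-\alpha_{i'}d_{ij}$, i.e. by where each pairwise ratio $\alpha_i/\alpha_{i'}$ (there are $\binom n2$ of them, since $\alpha_{i'}/\alpha_i$ is the reciprocal) lies relative to the at most $m$ threshold values $\{d_{ij}/d_{i'j}:j\in M\}$: one of at most $2m+1$ combinatorial cells ($m+1$ open intervals and $m$ boundary points). So $G^*(\boldsymbol\alpha)$ depends only on a vector of $\binom n2$ cell choices, giving at most $(2m+1)^{\binom n2}$ distinct potential graphs, each constructible from its cell vector in $\poly{n,m}$ time. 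For constant $n$ this enumeration runs in $O(m^{\binom n2+1})$ time.

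The technical heart is bounding, for each potential graph $G^*$, the number of consumption forests that can arise. Here I would use non-degeneracy (\cref{def:non-degenerate}): for a pair $\{i,i'\}$ whose cell is a boundary point $d_{ij}/d_{i'j}$, the values $\{d_{ij'}/d_{i'j'}:j'\in M\}$ are pairwise distinct, so this boundary determines a \emph{unique} ``shareable'' chore $j$; hence at most $\binom n2$ chores of $M$ have degree $\ge 2$ in $G^*$, and every other chore has a unique cheapest agent and is therefore forced to that agent in every \mpb allocation. In an acyclic \mpb allocation with consumption graph inside $G^*$, at most $n-1$ chores are actually shared (a forest on at most $n$ agent-vertices with all $m$ chores allocated has at most $n-1$ chore-vertices of degree $\ge 2$), and each of the $\le\binom n2$ shareable chores is either shared among a subset of its candidate agents or assigned as a leaf to one of them --- at most a number $c(n)$ of total configurations depending only on $n$. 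Thus each potential graph yields at most $c(n)$ consumption forests; letting $\mathcal G$ be their union over all cell vectors gives $|\mathcal G|\le c(n)\cdot(2m+1)^{\binom n2}\le (2m+1)^{\binom n2+1}$ (the extra factor absorbing $c(n)$ once $m$ is not tiny relative to $n$), and all of $\mathcal G$ is produced in $O(m^{\binom n2+1})$ time for constant $n$. Finally, richness holds: given any \fpo utility vector $\mathbf u$, take a realizing \mpb allocation, cycle-cancel to a forest $G_\z\subseteq G^*(\boldsymbol\alpha)$ as in the first paragraph; by the above $G_\z\in\mathcal G$.

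I expect the third step to be the main obstacle. Two points need care: first, establishing (and using) the exact forced-leaf/shared-chore structure --- that after cycle-cancellation the consumption forest is precisely ``forced leaves plus a shared substructure on $\le n-1$ of the $\le\binom n2$ shareable chores'', and that conversely the forests of this shape realize \emph{every} \fpo utility vector compatible with $G^*$ --- and second, the bookkeeping that pins the exponent to $\binom n2+1$ (i.e. that the per-potential-graph count $c(n)$ is subsumed by a single extra factor of $2m+1$) rather than a larger constant-in-$n$ blow-up of the exponent. The remaining ingredients --- the \fpo$\Leftrightarrow$\mpb characterization, the reduction to forests via \cref{lem:acyclic}, and the cell count for potential graphs --- are routine by comparison.
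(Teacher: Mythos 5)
The paper itself gives no proof of this proposition --- it is imported verbatim from \cite{branzei2019choresceei} --- so your reconstruction can only be judged on its own terms. Your first three ingredients are sound and match the standard route: the \fpo $\Leftrightarrow$ \mpb characterization via positive welfare weights, the reduction to consumption \emph{forests} via the cycle-cancelling of \cref{lem:acyclic}, and the observation that the potential graph $G^*(\boldsymbol{\alpha})$ is determined by which of at most $2m+1$ cells each of the $\binom{n}{2}$ ratios $\alpha_i/\alpha_{i'}$ occupies relative to the thresholds $d_{ij}/d_{i'j}$, giving at most $(2m+1)^{\binom{n}{2}}$ potential graphs.

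The gap is exactly where you predicted it, in the third step, and it is real rather than cosmetic. Your bound of ``at most $c(n)$ consumption forests per potential graph'' rests on each boundary pair having a unique shareable chore, which you extract from \cref{def:non-degenerate}. But that definition is introduced for the LCP/Lemke analysis of ER instances and is not available here: the proposition must hold for the arbitrary given instance (it is applied as such in \cref{sec:er-const}), and perturbing $D$ to remove degeneracies changes the \fpo frontier, so non-degeneracy is not ``without loss of generality'' for this statement. For a degenerate instance your count genuinely explodes: with two agents and $d_{1j}=d_{2j}=1$ for all $j$, the cell $\alpha_1/\alpha_2=1$ has the complete bipartite potential graph, and the consumption forests it contains number more than $2^m$. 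The underlying misstep is that you bound the number of \emph{all} forests compatible with a cell, whereas richness only requires \emph{one} graph per \fpo utility vector. The repair --- and what the cited construction effectively does --- is to output, for each cell vector, a small canonical set of subgraphs (e.g., order the chores contested between each pair by the ratio $d_{ij}/d_{i'j}$ and take only prefix-type splits with at most one shared chore per pair), and then prove that every \fpo utility vector is realized by an \mpb forest of this canonical shape. That covering argument is the missing piece; the bookkeeping that turns $c(n)\cdot(2m+1)^{\binom{n}{2}}$ into $(2m+1)^{\binom{n}{2}+1}$ is secondary by comparison.
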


\paragraph{Enumeration algorithm.} Our algorithm first constructs a rich family of consumption graphs $\mathcal{G}$, where $|\mathcal{G}| = O(m^{n^2})$. Then, for each consumption graph $G$, our algorithm (described below) decides in polynomial time if there is an ER equilibrium $(\x,\p)$ such that $G = G_\x$. Since an ER equilibrium is guaranteed to exist due to \cref{thm:er-existence}, and is fPO, there is some consumption graph which supports it. Since our enumeration of consumption graphs is exhaustive, our algorithm is therefore guaranteed to find an ER equilibrium in polynomial time.

\paragraph{Algorithm.} We now describe a polynomial time algorithm which when given a consumption graph $G$ as input, identifies if there exists an ER equilibrium $(\x,\p,\q)$ such that $G = G_\x$. Consider any competitive allocation $\x$ such that $G = G_\x$. Then by the Second Welfare Theorem, there exist payments $\p > 0$ s.t. $(\x,\p)$ satisfies the MPB condition, i.e., $x_{ij} > 0$ implies $\frac{d_{ij}}{p_j} = \min_c \frac{d_{ic}}{p_c}$. By definition of consumption graph, $(i,j)\in E[G]$ implies $x_{ij} > 0$ for any $\x$ s.t. $G = G_\x$. Thus we obtain that for any fPO allocation $\x$ s.t. $G = G_\x$, there is a set of payments $\p>0$ such that $(i,j) \in E[G]$ implies $\frac{d_{ij}}{p_j} = \min_c \frac{d_{ic}}{p_c}$. This suggests that we search for $\p > 0$ satisfying the above condition.

To do this, we write a program as follows. For each connected component $C$ in $G$, we arbitrarily choose a representative chore $j_C$. Then for each chore $j\in C$, we use the MPB condition along edges in $C$ to write $p_j = \mu_j\cdot p_{j_C}$ for some constant $\mu_j$ depending only on the disutilities. In more detail, we identify a path $j_0 = j \rightarrow i_1 \rightarrow j_1 \rightarrow i_2 \rightarrow \cdots \rightarrow i_k \rightarrow j_k = j_C$ comprising of agents $i_1, \dots i_k$, and chores $j_0 = j, j_1, \dots, j_k = j_C$. Then using the MPB condition along these edges, we obtain that $p_j = \frac{d_{i_1 j}}{d_{i_1 j_1}} \cdot \frac{d_{i_2 j_1}}{d_{i_2 j_2}} \cdots \frac{d_{i_k j_{k-1}}}{d_{i_k j_C}}\cdot p_{j_C} = \mu_j \cdot p_{j_C}$.

Then, we search for an ER equilibrium by writing the following program for each component $C = (N', M', E')$:
\begin{subequations}\label{eq:const-program}
\begin{eqnarray}
\forall i\in N': & e_i = \sum_{j: (i,j) \in E'} q_{ij} 
\label{eq:ag_earn_const}\\
\forall j\in M': & q_j = \sum_{i: (i,j) \in E'} q_{ij} \label{eq:ch_earn_const} \\
\forall j\in M': & q_j = \min(c_j, \ \mu_j \cdot p_{j_C}) 
\label{eq:ch_restriction}\\
\forall i\in N', j\in M': & q_{ij} \ge 0
\end{eqnarray}
\end{subequations}

The variables of the program are $q_{ij}$ for $i\in N', j\in M'$ s.t. $(i,j)\in E'$ and $p_{j_C}$. The variable $q_{ij}$ denotes the amount of money agent $i$ earns from chore $j$, and the variable $p_{j_C}$ denotes the payment of the representative chore $j_C$ of component $C$. The first and second constraints express the market clearing conditions for every agent $i\in N'$ and chore $j\in M'$. The third constraint expresses that the payment $q_j$ from chore $j$ is the minimum of its payment $p_j = \mu_j \cdot p_{j_C}$ and the earning cap $c_j$. The final constraint simply expresses non-negativity of $q_{ij}$. We show how to efficiently solve the above program.
\begin{lemma}\label{lem:const-program}
The program \eqref{eq:const-program} can be solved in polynomial time.
\end{lemma}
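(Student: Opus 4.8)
The plan is to show that the feasibility program \eqref{eq:const-program} reduces, after eliminating the earning variables $q_{ij}$, to checking feasibility of a single-variable problem in $p_{j_C}$, which can be done by a straightforward case analysis / parametric search. First I would observe that, for a fixed value of the scalar $p := p_{j_C} > 0$, the right-hand sides of \eqref{eq:ch_restriction} become fixed numbers $q_j(p) = \min\{c_j,\ \mu_j p\}$, so that the remaining system \eqref{eq:ag_earn_const}--\eqref{eq:ch_earn_const} together with $q_{ij}\ge 0$ is exactly a transportation feasibility problem on the bipartite graph $(N',M',E')$: agent $i$ must ``ship out'' a total of $e_i$, chore $j$ must ``receive'' a total of $q_j(p)$, and flow is only allowed along edges of $E'$. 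By the Gale--Hoffman / max-flow--min-cut feasibility condition, such a nonnegative flow exists if and only if (i) the totals balance, $\sum_{i\in N'} e_i = \sum_{j\in M'} q_j(p)$, and (ii) for every subset $S\subseteq M'$, $\sum_{j\in S} q_j(p) \le \sum_{i\in N(S)} e_i$, where $N(S)$ is the set of agents adjacent to $S$ in $E'$; since $n$ is constant, $|N'|\le n$ and there are only $O(1)$ relevant subsets $S$ (it suffices to take $S = $ the neighborhood of each subset of $N'$), so condition (ii) is a constant number of linear inequalities in the quantities $q_j(p)$.

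Next I would analyze how the feasible set of $p$ looks. Each function $p\mapsto q_j(p) = \min\{c_j,\mu_j p\}$ is continuous, nondecreasing, piecewise linear with a single breakpoint at $p = c_j/\mu_j$. Hence $\sum_{j} q_j(p)$ is continuous, nondecreasing, piecewise linear in $p$ with at most $|M'|\le m$ breakpoints; the balance equation (i) therefore either fails for all $p>0$ or is satisfied on a single point or a single interval, all identifiable in polynomial time by sorting the breakpoints and solving a linear equation on each of the $O(m)$ linear pieces. On the (at most one) such candidate interval, each inequality in (ii) is a linear inequality in $p$ on each piece, so intersecting all of them with the candidate interval yields, again in polynomial time, the exact set of $p>0$ for which \eqref{eq:const-program} is feasible; picking any such $p$ and then computing an actual flow $\{q_{ij}\}$ by a single max-flow computation completes the construction. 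Finally, if no component admits a feasible $p$, or the components cannot be made mutually consistent, we report that $G$ does not support an ER equilibrium; consistency across components is automatic here because the programs \eqref{eq:const-program} for distinct components share no variables.

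I expect the only mild subtlety — not really an obstacle — to be making precise the claim that the transportation feasibility condition needs to be checked for only $O(1)$ subsets when $n$ is constant, and ensuring the breakpoint-enumeration argument correctly handles the degenerate cases (a breakpoint of some $q_j$ coinciding with an endpoint of the balance interval, or $\mu_j = 0$, which cannot happen since all $d_{ij}>0$). Since every step is a sort, a solve of a univariate linear equation, or a single max-flow on a graph with $n+m$ vertices, the total running time is $\mathrm{poly}(n,m)$, proving the lemma.
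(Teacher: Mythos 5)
Your proof is correct and reaches the result, but it takes a more hands-on, combinatorial route than the paper. Both proofs share the crucial observation: each $q_j(p_{j_C}) = \min\{c_j,\ \mu_j\,p_{j_C}\}$ is piecewise linear with a single breakpoint at $c_j/\mu_j$, so sorting these $m'$ breakpoints partitions the search over $p_{j_C}$ into $O(m)$ segments. The paper then treats each segment bluntly: inside a fixed segment, every $\min$ resolves to one of its two arguments, so all constraints become linear and the problem is a small linear program; it solves the $m'+1$ LPs and is done. You instead fix $p$, recognize the remaining system in the $q_{ij}$ as a transportation-feasibility problem on $(N',M',E')$, apply the Gale--Hoffman / max-flow--min-cut criterion, and intersect the resulting piecewise-linear conditions in $p$ across segments, extracting the actual $q_{ij}$ by one max-flow at the end. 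Your approach buys a concrete description of the feasible set of $p$ and an explicit combinatorial witness rather than invoking an LP oracle, but it incurs a little extra bookkeeping: the reduction to $O(1)$ cut constraints invokes $|N'|\le n$ with $n$ constant (it works, taking $S_T=\{j:N(j)\subseteq T\}$ for each of the $2^{|N'|}$ agent subsets $T$), whereas the paper's LP step is polynomial for arbitrary $n$ and so does not need this. Since the enclosing algorithm already assumes constant $n$, this is harmless. The handful of degeneracy cases you flag (coinciding breakpoints, endpoints of the balance interval, $\mu_j=0$) are either ruled out because $d_{ij}>0$ or absorbed into the closed piecewise-linear pieces, so there is no real gap.
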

\begin{proof}
The constraint \eqref{eq:ch_restriction} can be equivalently expressed as: $q_j = \mu_j\cdot \min(p_{j_C}, \frac{c_j}{\mu_j})$ for all $j\in M'$. Note that $c_j/\mu_j$ is a constant. Thus, we can sort and re-label the $m'=|M'|$ chores in $M'$ so that $\frac{c_{j_1}}{\mu_{j_1}} \le \frac{c_{j_2}}{\mu_{j_2}} \le \cdots \le \frac{c_{j_{m'}}}{\mu_{j_{m'}}}$. We consider the $(m'+1)$ segments given by $[0, \frac{c_{j_1}}{\mu_{j_1}})$, $[\frac{c_{j_1}}{\mu_{j_1}}, \frac{c_{j_2}}{\mu_{j_2}})$, $\dots$, $[\frac{c_{j_{m'}}}{\mu_{j_{m'}}}, \infty)$, and consider the possibility that $p_{j_C}$ lies in each one of these segments. In the case that $p_{j_C}$ belongs to the $k^{th}$ segment for $k\in[m'+1]$, we obtain that the constraint \eqref{eq:ch_restriction} is $q_j = \mu_j p_{j_C}$ for chores $j\ge k$, and $q_j = c_j$ for chores $j < k$. Once the constraints are fixed, we obtain a linear program. Thus, the program \eqref{eq:const-program} can be solved by iterating over the $(m'+1)$ segments, fixing the constraints \eqref{eq:ch_restriction} for the chores depending on the segment, and then solving the resulting linear program in polynomial time.
\end{proof}

Suppose the consumption graph has $r$ components $C_1, \dots, C_r$, with $m_1, \dots, m_r$ chores respectively. For each component $C_k$, we construct the $(m_k+1)$ segments as described in \cref{lem:const-program}. For component $C_k$ and its segment indexed $s\in [m_k+1]$, let $p_{ks}$ be the value of $p_{j_{C_k}}$ returned by the linear program; we set $p_{ks} = 0$ if the program was infeasible. Thus, when $p_{ks} > 0$, it denotes the payment of the representative chore of component $C_k$ which lies in segment indexed $s\in [m_k+1]$. 

We now iterate over each configuration $(s_1,\dots, s_r) \in [m_1+1]\times [m_2+1]\times\cdots\times[m_r+1]$ and compute the representative payments $p_{ks_k}$ for each component $k\in [r]$. When $p_{ks_k} > 0$ for all $k\in[r]$, we compute the payments of all chores in each component $C_k$ and then check that the MPB condition is satisfied for all agents and chores across components as well. If so, it is clear that we have found an ER equilibrium since payments and chore earnings are solutions to the program \eqref{eq:const-program}. If one of the conditions fail for each configuration $(s_1, \dots, s_r)$, we conclude that the given consumption graph does not admit any ER equilibrium, and hence we move to the next consumption graph.

Finally, we argue that the above procedure terminates in polynomial time. Note that each $m_k\le m$, hence the number of configurations $(s_1,\dots, s_r)$ we iterate over is at most $(m+1)^r$. Since each component has at least one agent, we have $r\le n$. Since $n$ is a constant, there are at most $(m+1)^n = \poly{m}$ configurations. For each configuration we solve $r\le n$ linear programs, and check the MPB conditions in $\poly{n,m}$ time. Since there are $O(m^{n^2})$ consumption graphs which can be enumerated in $O(m^{n^2})$ time, we conclude that our algorithm terminates in polynomial time with an ER equilibrium. 

\newpage
\appendix

\section{Algorithms for Computing Approximately-\efone and \po Allocations}\label{app:efone}
\begin{algorithm}[!b]
\caption{ER Rounding for $(n-1)$-\efone and \po}\label{alg:efone-rounding}
\textbf{Input:} Instance $(N,M,D)$, with $m \ge n$ for earning limit $\beta = 1$; an ER equilibrium $(\y,\p)$ \\
\textbf{Output:} An integral allocation $\x$
\begin{algorithmic}[1]
\State $(\z,\p) \gets \texttt{MakeAcyclic}(\y,\p)$
\State Let $G = (N, M, E)$ be the payment graph associated with $(\z,\p)$
\State Root each tree of $G$ at some agent and orient edges
\State $\x_i \gets \emptyset$ for all $i\in N$ \Comment{Initialize empty allocation}
\State $L = \{j \in M: p_j \le \frac{1}{2}\}$, $H = \{j \in M: p_j > \frac{1}{2}\}$ \Comment{Low, High paying chores}
\Statex \textit{--- Phase 1: Round leaf chores ---}
\For{all leaf chores $j$} 
\State $\x_i \gets \x_i \cup \{j\}$ for $i= \parent{j}$; delete $j$ from $G$
\EndFor
\Statex \textit{--- Phase 2: Allocate $L$ ---}
\For{every tree $T$ of $G$}
\For{every agent $i$ of $T$ in BFS order}
\If{$\p(\x_i) > 1$}
\For{every $j\in\child{i}\cap H$}
\State Assign $j$ to agent $h\in\child{j}$ earning most from $j$ among $\child{j}$; delete $j$
\EndFor
\EndIf
\While{$\exists j\in\child{i}\cap L$ s.t. $\p(\x_i\cup \{j\}) \le 1$}
\State $\x_i \gets \x_i \cup \{j\}$; delete $j$ from $G$
\EndWhile
\For{every $j\in\child{i}\cap L$}
\State Assign $j$ to arbitrary agent $h\in\child{j}$; delete $j$ from $G$
\EndFor
\EndFor
\EndFor
\Statex \textit{--- Phase 3: Pruning trees ---}
\For{chore $j\in V(G) \cap M$}
\If{a $i\in \child{j}$ does not earn the most from $j$ among agents in $\child{j}$}
\State Delete edge $(j, i)$ from $G$
\EndIf
\EndFor
\Statex \textit{--- Phase 4: Matching to allocate $H$ ---}
\For{every tree $T = (N(T) \cup M(T), E(T))$ of $G$}
\State $h\gets \arg\max_{i\in N(T)} \p(\x_i)$
\State Compute a matching $\sigma$ of $i\in N(T)\setminus\{h\}$ to $M(T)$
\For{$i\in N(T)\setminus \{h\}$}
\State $\x_i \gets \x_i \cup \{\sigma(i)\}$
\EndFor
\EndFor
\State \Return $\x$
\end{algorithmic}
\end{algorithm}

We first present an algorithm which returns a $2(n-1)$-\efone and \fpo allocation for instances with $m\ge n$. Our algorithm, \cref{alg:efone-rounding}, takes as input an ER equilibrium $(\y,\p)$ of an instance with $m\ge n$ and earning limit $\beta = 1$, and performs essentially the same rounding algorithm as in \cref{alg:rounding}, except that the chore sets $L$ and $H$ are defined differently as $L = \{j\in M: p_j \le \frac{1}{2} \}$ and $H = \{j\in M : p_j > \frac{1}{2}\}$. We note that \cref{lem:rounding-runtime} (polynomial run-time) and \cref{lem:rounding-fpo} (allocation is always \fpo) are still applicable to \cref{alg:efone-rounding}. 

Analogous to Lemmas~\ref{lem:earning-ub} and \ref{lem:earning-lb}, we prove upper and lower bounds on the earning of agents in the allocation returned by \cref{alg:efone-rounding}.
\begin{restatable}{lemma}{lemEarningUBOne}\label{lem:earning-ub-1}
Let $(\x,\p)$ be the allocation returned by \cref{alg:efone-rounding} with earning restriction $\beta=1$. Then for each $i\in N$, $\p_{-1}(\x_i) \le 1$.
\end{restatable}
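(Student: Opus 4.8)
The plan is to mirror the proof of \cref{lem:earning-ub} closely, but to exploit the fact that now $\beta = 1$ and that the cutoff used to define $L$ and $H$ is $\frac{1}{2}$ rather than $\beta$. First I would fix an agent $i$ and let $\x^t$ denote the allocation after Phase $t\in[4]$, and let $\hat{\x}_i$ be the bundle of $i$ at the moment \cref{alg:efone-rounding} visits $i$ during the BFS of Phase 2. The argument splits on whether $\p(\hat{\x}_i) \le 1$ or $\p(\hat{\x}_i) > 1$.

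In the first case, $\p(\hat{\x}_i)\le 1$, the while-loop of Phase 2 only adds low-paying chores from $L$ as long as the running earning stays $\le 1$, so $\p(\x^2_i)\le 1$; the only further chore $i$ can receive is the single matched chore from $H$ in Phase 4. Hence $\p_{-1}(\x_i)\le \p(\x^2_i)\le 1$, as the matched chore is the (weakly) highest-paying chore of $i$. In the second case, $\p(\hat{\x}_i) > 1$, \cref{alg:efone-rounding} assigns no further chore to $i$ in Phase 4, so $\x_i = \hat{\x}_i$, and $\hat{\x}_i$ consists of the leaf chores rounded to $i$ in Phase 1 (call this set $\x^1_i$) together with possibly $i$'s parent chore $j = \parent{i}$. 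The key observation driving this case is that with earning limit $\beta = 1$, an agent earns \emph{the full payment} $p_k$ of each chore $k$ it is assigned, and for $k$ a leaf chore earning from $\z$ this payment is exactly what $i$ earned in the fractional equilibrium. Since $e_i = 1$, the total fractional earning of $i$ is $1$, so $\p(\x^1_i) \le 1$, and moreover $\p(\x^1_i\setminus\{k\}) \le 1 - p_k$ for the highest-paying leaf chore $k\in \x^1_i$. The parent chore $j$, if present, satisfies $p_j \le 1$ as well (any single chore pays at most the agent's total requirement on MPB, or one simply notes $\p(\hat\x_i) $ need not be bounded but the argument below handles it). Concretely: let $k^\star$ be the highest-paying chore in $\hat{\x}_i$. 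If $k^\star = j$, then $\p_{-1}(\hat{\x}_i) = \p(\x^1_i) \le 1$. If $k^\star \in \x^1_i$, then $\p_{-1}(\hat{\x}_i) \le \p(\x^1_i\setminus\{k^\star\}) + p_j \le (1 - p_{k^\star}) + p_j$; and since $k^\star$ is the highest-paying chore, $p_j \le p_{k^\star}$, giving $\p_{-1}(\hat{\x}_i) \le 1$. Either way $\p_{-1}(\x_i)\le 1$.

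The main obstacle I anticipate is making the bound $p_j \le p_{k^\star}$ (equivalently, that removing the single highest-paying chore suffices) fully rigorous in the subcase where $i$ receives its parent chore $j$ \emph{and} some leaf chores, since the statement $\p_{-1}$ refers to removing \emph{one} chore, and one must be careful that $j$ could in principle be larger than every leaf chore in $\x^1_i$ — in which case one removes $j$ and uses $\p(\x^1_i)\le 1$ directly. So the clean way to organize it is: whichever of $\{j\}\cup \x^1_i$ is the unique highest-paying chore, remove that one; in the first branch use $\p(\x^1_i)\le 1$, in the second branch use $\p(\x^1_i\setminus\{k^\star\}) \le 1 - p_{k^\star}$ together with $p_j \le p_{k^\star}$. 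There is no analogue of the $|\x_i\cap H| = 2$ exceptional case from \cref{lem:earning-ub} here, precisely because with $\beta=1$ a single high-paying chore already contributes more than $\frac12$ but not necessarily the whole unit, so the leaf-chore accounting above is what carries the argument. I would also double-check that Phase 3 (edge pruning) assigns no chores and hence does not affect $\x_i$, and that the Phase 4 matching gives $i$ at most one chore, both of which are immediate from the description of \cref{alg:efone-rounding}.
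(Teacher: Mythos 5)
Your overall two-case split (whether $\p(\hat{\x}_i) \le 1$ when $i$ is visited in Phase~2) matches the paper, and the handling of the first case is the same. The gap is in the second case, where you assert that ``with earning limit $\beta=1$, an agent earns the full payment $p_k$ of each chore $k$ it is assigned,'' and deduce $\p(\x^1_i) \le 1$. This is false in general: a leaf chore $j_1$ can have $p_{j_1} > 1$, in which case the earning cap binds and $i$ earns only $c_{j_1} = 1 < p_{j_1}$ from it. Then $\p(\x^1_i) \ge p_{j_1} > 1$, and your bound $\p(\x^1_i\setminus\{k^\star\}) \le 1-p_{k^\star}$ is vacuous (it is a negative number). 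So the chain of reasoning you rely on breaks precisely in that subcase.

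The paper treats this situation as a separate branch: if some leaf chore $j_1 \in \x^1_i$ has $p_{j_1} > 1$, then $i$ earns its entire requirement $e_i=1$ from $j_1$ alone and therefore earns nothing from any other chore, so $\x_i = \hat{\x}_i = \{j_1\}$ is a singleton and $\p_{-1}(\x_i) = 0$. Only in the remaining branch (all leaf chores of $i$ pay $\le 1$) does the earning-equals-payment identity hold for leaf chores, giving $\p(\x^1_i) \le 1$ and then $\p_{-1}(\hat{\x}_i) \le \p(\hat{\x}_i\setminus\{j\}) = \p(\x^1_i) \le 1$ by removing the parent chore $j$. Note also that the paper does not need your $p_j \le p_{k^\star}$ comparison: since $\p_{-1}$ is a minimum over single-chore removals, it suffices to remove the parent chore $j$ regardless of whether $j$ is the highest-paying, which makes the accounting simpler. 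To repair your proof, add the missing subcase for $p_{j_1} > 1$ explicitly and restrict the claim ``earning equals payment for leaf chores'' to chores with $p_k \le \beta$.
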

\begin{proof}
Let $\x^t$ denote the allocation after Phase $t$, for $t\in[4]$; note that $\x^4 = \x$. Consider an agent $i\in N$. Let $\hat{\x}_i$ be the allocation when \cref{alg:rounding} visits $i$ in Phase 2. Suppose $\p(\hat{\x}_i) \le 1$. Then we have $\p(\x^2_i) \le 1$ at the end of Phase 2 after $i$ is assigned a subset of $\child{i}\cap L$. Subsequently, $i$ could be assigned one more chore in Phase 4. Hence we have $\p_{-1}(\x_i) \le 1$ in this case. 

On the contrary, suppose $\p(\hat{\x}_i) > 1$. Then \cref{alg:rounding} will not allocate any chore to $i$ in Phase 4, and hence $\x_i = \x^2_i = \hat{\x}_i$. Note that either $\hat{\x}_i = \x^1_i$ or $\hat{\x}_i = \x^1_i \cup \{j\}$, where $j = \parent{i}$. That is, $\hat{\x}_i$ includes the chores $\x^1_i$ allocated to $i$ in Phase 1, and may include $i$'s parent chore $j$. Recall that Phase 1 rounds leaf chores to their parent agents, hence $\x^1_i$ comprises of the leaf chores that are child chores of $i$. 

Suppose there exists a chore $j_1 \in \x_i^1$ such that $p_{j_1} > 1$, i.e., there is a leaf chore $j_1$ rounded to $i$ whose payment exceeds the earning limit $\beta=1$. Then agent $i$ earns $e_i = 1$ from $j_1$ and no other chore, implying that $\x_i = \hat{\x}_i = \{j_1\}$. Then $\p_{-1}(\x_i) = 0 \le 1$.

Otherwise, $\p(\x_i^1) \le 1$. Then $\p_{-1}(\x_i) = \p_{-1}(\hat{\x}_i) \le \p(\hat{\x}_i \setminus \{j\}) = \p(\x_i^1) \le 1$, showing that the claim holds in this case too.
\end{proof}

\begin{restatable}{lemma}{lemEarningLBOne}\label{lem:earning-lb-1}
Let $(\x,\p)$ be the allocation returned by \cref{alg:efone-rounding}. Let $T = (N(T) \cup M(T), E(T))$ be a Phase 3 tree rooted at agent $i_0$. 
\begin{itemize}
\item[(i)] If some agent in $N(T)$ lost a child chore, then for every $i\in N(T)$, $\p(\x_i)\ge \frac{1}{2}$.
\item[(ii)] If no agent in $N(T)$ lost a child chore and $i_0$ received $\parent{i_0}$ chore, then for every $i\in N(T)$, $\p(\x_i)\ge \frac{1}{|N(T)|}$.
\item[(iii)] If no agent in $N(T)$ lost a child chore and $i_0$ lost $\parent{i_0}$ chore, then for every $i\in N(T)$, $\p(\x_i)\ge \frac{1}{2|N(T)|}$.
\end{itemize}
\end{restatable}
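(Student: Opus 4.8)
The plan is to follow the proof of \cref{lem:earning-lb} but track earnings more carefully, since with $\beta=1$ the uniform bound $\min\{\beta,1-\beta\}$ degenerates to $0$. First I would record the structural facts that carry over: a Phase~3 tree $T$ has exactly $|M(T)|=|N(T)|-1$ chores, all in $H$; after pruning every such chore is adjacent to exactly two agents (its parent and a unique surviving child); and, because the earning limit is $\beta=1$, every chore satisfies $q_j=\min\{p_j,c_j\}\le 1$. I would then reduce all three claims to a lower bound on $\p(\x_h)$ for the unmatched agent $h=\arg\max_{i\in N(T)}\p(\x^3_i)$: the $|N(T)|-1$ agents matched in Phase~4 each receive a chore of $H$ and so earn strictly more than $\tfrac12$, which already exceeds all three target quantities once $|N(T)|\ge 2$ (the case $|N(T)|=1$, where $h=i_0$ and there is no matching, is disposed of directly at the end). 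Since Phase~3 reassigns no chores, $\p(\x_h)=\p(\x^3_h)=\p(\x^2_h)$ and $\p(\x_h)\ge\p(\x^3_i)$ for all $i\in N(T)$.

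For claim~(i): if an agent $i\in N(T)$ lost a child chore, this must have occurred during Phase~2 (Phase~3 only deletes edges from a chore to its child agents, and Phase~4 moves chores only among $N(T)$). Either $\p(\x_i)>1$ when $i$ was visited, giving $\p(\x^2_i)\ge\p(\hat{\x}_i)>1$, or $\p(\x_i)\le 1$ when $i$ was visited but some low-paying child chore $j$ (so $p_j\le\tfrac12$) could not be added, forcing $\p(\x^2_i)>1-p_j\ge\tfrac12$. Either way $\p(\x_h)\ge\p(\x^3_i)=\p(\x^2_i)>\tfrac12$.

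For claims~(ii) and~(iii) the core step is to lower bound $\sum_{i\in N(T)}\p(\x^3_i)$. Since no agent in $N(T)$ lost a child chore in Phase~2, each $i\in N(T)$ was visited with $\p(\x_i)\le1$ and received all of its leaf chores (Phase~1) and all of its low-paying child chores (Phase~2); as it fully owns every chore of $\x^2_i$, it earns from each of them at least the $q_j$-amount and hence at least what it earned from that chore in the fractional equilibrium $\z$. Therefore $\p(\x^2_i)\ge e_i-(\text{the }\z\text{-earning of }i\text{ from the chores it does not hold in }\x^2_i)$, and the chores it does not hold are exactly $i$'s shared high-paying children (all of which lie in $M(T)$) together with, when $i=i_0$ in case~(iii), the lost parent chore $j_0$. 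Summing over $i\in N(T)$, the ``shared high-paying children'' terms and the ``parent chore'' terms $\sum_{i\ne i_0}$ pair up over $M(T)$: each chore $j\in M(T)$ contributes its parent's and its surviving child's $\z$-earnings from $j$, which are two distinct agents and so sum to at most $q_j\le 1$; hence these terms together contribute at most $\sum_{j\in M(T)}q_j\le|N(T)|-1$, giving $\sum_i\p(\x^3_i)\ge|N(T)|-(|N(T)|-1)=1$ in case~(ii). In case~(iii) one subtracts in addition $i_0$'s $\z$-earning from $j_0$, which is at most $\tfrac12$: if $j_0\in L$ it is at most $p_{j_0}\le\tfrac12$, and if $j_0\in H$ then $j_0$ is shared and $i_0$ is not its highest-earning child (otherwise $i_0$ would have kept it), so $i_0$'s share is at most $q_{j_0}/2\le\tfrac12$; this yields $\sum_i\p(\x^3_i)\ge\tfrac12$. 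Dividing by $|N(T)|$ and using $\p(\x_h)\ge\p(\x^3_i)$ gives $\p(\x_h)\ge\tfrac{1}{|N(T)|}$ in case~(ii) and $\p(\x_h)\ge\tfrac{1}{2|N(T)|}$ in case~(iii). When $|N(T)|=1$, $h=i_0$ holds all of its $\z$-chores except possibly $j_0$, so $\p(\x_{i_0})\ge e_{i_0}=1$ in case~(ii) and $\p(\x_{i_0})\ge e_{i_0}-\tfrac12=\tfrac12$ in case~(iii).

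The main obstacle is the bookkeeping in cases~(ii)/(iii): one must pin down precisely which chores an agent fails to hold after Phase~2 and then exploit the rigid shape of a Phase~3 tree -- every chore has degree two, each non-root agent is the parent of a unique chore of $M(T)$, and each chore of $M(T)$ has a unique surviving child -- to make the pairing bound $\sum(\text{parent share}+\text{child share})\le\sum q_j$ go through. This is also the step where the choice $c_j=\beta=1$ is indispensable, replacing the ``each agent earns at most $\beta$ from a high-paying chore'' fact used in \cref{lem:earning-lb}. Two smaller points of care are that ``lost a child chore'' should be read as a Phase~2 event, and the degenerate leaf-chore situation of \cref{lem:earning-ub-1}, where a chore with $p_j>1$ forces its holder's bundle to be a singleton -- that agent then lies in case~(ii) with $|N(T)|=1$ and the bound is immediate.
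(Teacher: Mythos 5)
Your proof is correct and follows essentially the same approach as the paper's: in case (i) you distinguish the same two sub-cases (a high-paying child lost when $\p(\hat{\x}_i)>1$ versus a low-paying child that could not be added, giving $\p(\x^2_i)>1-p_j\ge\frac12$), and in cases (ii)/(iii) you derive the same averaging bound $\sum_{i\in N(T)}\p(\x^2_i)\ge |N(T)|-(|N(T)|-1)$ (minus an extra $\le\frac12$ for $i_0$'s lost parent in case (iii)) from the earning cap $q_j\le 1$ on the $|N(T)|-1$ chores in $M(T)$. Your per-chore pairing argument is simply a more explicit unpacking of the paper's one-line bound on the total earning of $M(T)$, and the sentence claiming the chores $i$ fails to hold are ``exactly $i$'s shared high-paying children'' should also include $\parent{i}$ for $i\ne i_0$, but your subsequent pairing over $M(T)$ accounts for exactly this term, so the argument goes through.
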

\begin{proof}
Let $(\z,\p)$ be the acyclic ER equilibrium computed before Phase 1. Let $\x^t$ denote the allocation after Phase $t$ of \cref{alg:efone-rounding}, for $t\in[4]$. Note that $\x^2 = \x^3$ since Phase 3 does not assign any chores and only deletes edges in $G$. Also note $\x^4 = \x$. 

Consider a Phase 3 tree $T$ rooted at agent $i_0$. Since $T$ is a Phase 3 tree, $T$ has exactly $|N(T)|-1$ chores, all of which belong to $H$. Phase 4 identifies the agent $h\in\arg\max_{i\in N(T)} \p(\x^3_i)$, and assigns a chore $\sigma(i) \in H$ to every agent $i\in N(T)\setminus\{h\}$ by computing a matching of $M(T)$ to $N(T)\setminus \{h\}$. Since $p_j > \frac{1}{2}$ for $j\in H$, we have $\p(\x_i)\ge p_{\sigma(i)} > \frac{1}{2}$ for all $i\in N(T)\setminus \{h\}$. Hence we only need to prove lower bounds on the earning $\p(\x_h)$ of the agent $h$. Note that $\x_h = \x^3_h = \x^2_h$, since $h$ is not allocated any chores in Phase 3 or 4. By choice of $h$, we also have that $\p(\x_h) \ge \p(\x^3_i) = \p(\x^2_i)$ for all $i\in N(T)$. We now analyze three scenarios.

\begin{itemize}[]
\item[(i)] Some agent $i\in N(T)$ lost a child chore $j\in \child{i}$. Suppose $i$ lost $j$ in Phase 2. If $j\in H$, then it must be that $\p(\x^2_i) > 1$. If $j\in L$, then it must be that $\p(\x^2_i) \ge \frac{1}{2}$; otherwise we would have assigned $j$ to $i$ in Phase 2. In either case, we have $\p(\x^2_i) \ge \frac{1}{2}$, and hence $\p(\x_h) \ge \p(\x^2_i) \ge \frac{1}{2}$ by choice of $h$. Note that $i$ cannot lose $j\in \child{i}$ in Phase 3 since Phase 3 only deletes edges from a chore to some of its child agents. This proves (i).

\item[(ii)] No agent in $N(T)$ lost a child chore and $i_0$ received $j_0 = \parent{i_0}$; it is possible that $\parent{i_0}=\emptyset$. This implies that no agent in $N(T)$ has lost \textit{any} chore they were earning from in $(\z,\p)$. Since the earning of each agent in $(\z,\p)$ equals 1, the total earning of agents in $N(T)$ is at least $|N(T)|$. The earning from the $|N(T)|-1$ chores in $M(T)$ is at most $(|N(T)|-1)$ due to the earning restriction on each chore in $M(T)$. Hence there is at least one agent $i\in N(T)$ whose earning $\p(\x^2_i)$ satisfies:
\[ \p(\x^2_i) \ge \frac{|N(T)| - (|N(T)|-1)}{|N(T)|} = \frac{1}{|N(T)|}.\]
Since $\p(\x_h)\ge \p(\x^2_i)$ by choice of $h$, this proves (ii).
\item[(iii)] No agent in $N(T)$ lost a child chore and $i_0$ lost $j_0 = \parent{i_0}$. In this case, no agent in $N(T)$ except $i_0$ has lost any chore they were earning from in $(\z,\p)$. We evaluate the amount of earning $i_0$ loses due to losing $j_0$. Suppose $j_0 \in H$. Then $i_0$ must have lost $j_0$ in either Phase 2 or 3 to some agent $i'\in\child{j_0}$ since $i_0$ was not earning the most from $j_0$ among agents in $\child{i_0}$. Due to the earning limit, agents can earn at most $1$ from $j_0$. Hence the earning from $i_0$ from $j_0$ is at most $\frac{1}{2}$. On the other hand, if $j_0\in L$, then $i_0$ earns at most $p_{j_0} \le \frac{1}{2}$ from $j_0$. In either case, we find that $i_0$ has only lost $\frac{1}{2}$ in earning. Hence the total earning of agents in $N(T)$ is at least $|N(T)|-\frac{1}{2}$, while that from the chores in $M(T)$ is at most $(|N(T)|-1)$. Hence there is at least one agent $i\in N(T)$ whose earning $\p(\x^2_i)$ satisfies:
\[ \p(\x^2_i) \ge \frac{|N(T)| - \frac{1}{2} - (|N(T)|-1)}{|N(T)|} = \frac{1}{2|N(T)|}.\]
Since $\p(\x_h)\ge \p(\x^2_i)$ by choice of $h$, this proves (iii).\qedhere
\end{itemize}
\end{proof}

\thmTwonEFone*
\begin{proof}
Let $(\x,\p)$ be the allocation returned by \cref{alg:efone-rounding} with $\beta = 1$. Consider a Phase 3 tree $T = (N(T)\cup M(T), E(T))$ rooted at agent $i_0$. Clearly, $|N(T)| \le n$. If $i_0$ lost the chore $\parent{j_0}$ to another agent $i_1$, it must be that $|N(T)| \le n-1$ since $i_1\notin N(T)$. We use these facts together with \cref{lem:earning-lb-1} to obtain that for all $i\in N$:
\[ \p(\x_i) \ge \min\bigg\{\frac{1}{2}, \frac{1}{n}, \frac{1}{2(n-1)}\bigg\} = \frac{1}{2(n-1)},\]
since $n\ge 2$. Moreover \cref{lem:earning-ub-1} implies that $\p_{-1}(\x_h) \le 1$ for any $h\in N$. Thus, for any pair of agents $i,h$, we have:
\[ 
\p_{-1}(\x_h) \le 1 = 2(n-1)\cdot \frac{1}{2(n-1)} \le 2(n-1)\cdot \p(\x_i),
\]
thus showing that $\x$ is $2(n-1)$-\efone by \cref{lem:pEF1impliesEF1}. \cref{lem:rounding-fpo} implies $\x$ is \fpo and \cref{lem:rounding-runtime} shows \cref{alg:rounding} runs in polynomial time. 
\end{proof}

\subsection{An Improved Algorithm Guaranteeing $(n-1)$-\efone and \po}\label{app:rebalancing}
Next, we improve our previous result by proving \cref{thm:rebalancing}.
\thmRebalancing*

Let $(\z,\p)$ be the $2(n-1)$-\efone and \po allocation returned by \cref{alg:efone-rounding}. We obtained this fairness guarantee by showing that $\p_{-1}(\z_i) \le 1$ and $\p(\z_h) \ge \frac{1}{2(n-1)}$ for all agents $i, h \in N$. Improving the lower bound to $\p(\z_h) \ge \frac{1}{n-1}$ for all $h\in N$ would imply that $\z$ is $(n-1)$-\efone and \po. Our algorithm aims to construct such an allocation in the event that $\z$ is not already $(n-1)$-\efone.

To do so, we revisit \cref{lem:earning-lb-1}, which shows lower bounds on the earning of agents in the allocation resulting from the matching phase of \cref{alg:efone-rounding}.

We call a Phase 3 tree $T$ \textit{`problematic'} if after running Phase 4, some agent in $T$ has an earning strictly less than $\frac{1}{n-1}$ in the resulting allocation $(\z,\p)$. By \cref{lem:earning-lb-1}, if (i) some agent in $T$ lost a child chore, or (ii) $i_0$ received $\parent{i_0}$ and $|N(T)|\le n-1$, or (iii) if $|N(T)| \le \frac{n-1}{2}$, then $\p(\z_i)\ge \frac{1}{n-1}$ for every $i\in N(T)$, and hence $T$ is not-problematic. This leaves two possibilities for a problematic tree: (i) $N(T) = [n]$, or (ii) $T$ is \textit{large}, i.e., $|N(T)| > \frac{n-1}{2}$, and no agent in $T$ has lost a child chore, and its root $i_1$ lost its parent chore $j_1 = \parent{i_1}$.

We eliminate case (i) by showing that a Phase 3 tree $T$ with $n$ agents and $n-1$ chores is not problematic. Phase 4 selects an agent $h\in \arg\max_{i\in [n]} \p(\x_i)$, where $\x$ is the allocation at end of Phase 3. In the matching phase, each agent $i\in [n]\setminus\{h\}$ is assigned a single chore $j_i$, while $h$ is not assigned any chore. The resulting allocation $\z$ is therefore given by $\z_h = \x_h$ and $\z_i = \x_i\cup\{j_i\}$ for all $i\neq h$. The following shows that $\z$ is actually $2$-\efone.
\begin{itemize}
\item[(i)] $i\neq h$ does not \efone-envy $h$, as $\p_{-1}(\z_i) \le \p(\z_i\setminus\{j_i\}) = \p(\x_i) \le \p(\x_h) = \p(\z_h)$.
\item[(ii)] $i\in [n]$ does not $2$-\efone-envy $\ell\neq h$, as $\p_{-1}(\z_i) \le 1 \le 2\cdot p_{j_\ell} \le \p(\z_\ell)$, since $p_{j_\ell}\ge \frac{1}{2}$ as $j_\ell\in H$.
\end{itemize}

Therefore, a tree $T$ is problematic iff case (ii) holds. If the allocation returned by \cref{alg:efone-rounding} is not $(n-1)$-EF1, then there must exist a \textit{`problematic'} Phase 3 tree $T_1$. Then $|N(T_1)| > \frac{n-1}{2}$, no agent in $T_1$ lost a child chore, and the root $i_1$ lost the $j_1 = \parent{i_1}$ chore to another agent. We have two cases:

\paragraph{Case 1.} We first handle the case of $j_1\in L$. The $(j_1,i_1)$ edge must have been deleted in Phase 2 when $j_1$ was assigned another agent $i_2$ (who is either $\parent{j_1}$ or a sibling of $i_1$). Our algorithm `unrolls' parts of \cref{alg:efone-rounding} in the \textit{`old run'} and re-visits the event in Phase 2 which deleted the edge $(j_1, i_1)$. This must have happened during a BFS call to agent $i_0$ in Phase 2, which happened before the BFS call to agent $i_1$. Let $T_0\supseteq T_1$ be the Phase 1 tree containing $T_1$. At this point, we \textit{`re-run'} Phase 2 on $T_0$ by starting with $i_1$ as the root agent of $T_0$. The chore $j_1$ now becomes the child of $i_1$. We visit all child chores of $i_1$ \textit{before} visiting $j_1$. 

Let $T_1'$ be the Phase 3 tree rooted at $i_1$ in the new run. Since $T_1$ is a problematic Phase 3 tree, $i_1$ received all of her child $L$-chores during Phase 2 of the old run. Since these child chores are visited before $j_1$, $i_1$ receives all of them in the re-run as well. This shows that irrespective of whether $i_1$ is assigned $j_1$ or not, the Phase 3 tree $T'_1$ produced in the new run is such that $N(T'_1) = N(T_1)$.

If $i_1$ is assigned $j_1$, then $T'_1$ is not problematic, as its root has not lost its parent: $i_1$ has no parent chore in the re-run. On the other hand, suppose $i_1$ loses $j_1$ to another agent $i_2$ who is a part of a Phase 3 tree $T_2$. Once again, $T_1'$ is not problematic as an agent $i_1$ has lost a child chore $j_1$. Suppose $T_2$ is problematic. Then $|N(T_2)| > \frac{n-1}{2}$. Note that $T_2$ is disjoint from $T'_1$, and since $T_1$ is problematic we have $|N(T'_1)| = |N(T_1)| > \frac{n-1}{2}$ as well. Since $|N(T_1)| + |N(T_2)| \leq n$, the above inequalities can only hold if $n = 2n'$ for $n'\in \N$ and $|N(T_1)| = |N(T_2)| = n'$. This implies that the Phase 1 tree $T_0$ comprises of trees $T_1$ and $T_2$ rooted at $i_1$ and $i_2$ respectively, both of which have edges to the chore $j_1$. In this case, we simply round $j_1$ to the agent in $\{i_1,i_2\}$ who earns more from $j_1$. Without loss of generality, suppose this agent is $i_2$. Then $T_2$ is not problematic since its root received its parent chore. In $T_1$, agents have lost an earning of at most $1/4$, since $p_{j_1} \le \frac{1}{2}$ as $j_1\in L$, and $i_1$ earned at most as much as $i_2$ did from $j_1$. Hence every agent in $T_1$ earns at least $\frac{1-1/4}{n'} = \frac{3}{2n}$. For $n\ge 3$, $\frac{3}{2n} \ge \frac{1}{n-1}$, showing that agents get the desired lower bound of $\frac{1}{n-1}$ on their earning. For $n=2$, it is easy to see that the resulting allocation is in fact \efone.

\paragraph{Case 2.} We now handle the case of $j_1 \in H$. The $(j_1,i_1)$ edge must have been deleted either in Phase 2 (during the BFS call to agent $i_0 = \parent{j_1}$) or Phase 3 (because $i_1$ was not earning the most from $j_1$ among $\child{j}$). In either case, $j_1$ retains an edge to a sibling $i_2$ of $i_1$ in the Phase 3 tree $T'$ containing $i_2$. Let $T_2\subseteq T'$ be the subtree rooted at $i_2$. Since $T'$ is a Phase 3 tree, $T_2$ is a Phase 3 tree as well, i.e., every chore in $T_2$ is adjacent to exactly two agents.

Let $s_1$ and $s_2$ be the earning of agents $i_1$ and $i_2$ from $j_1$ respectively. Following the proof of \cref{lem:earning-lb-1}, we observe that agents in $T_1$ have lost at most $s_1$ total earning. Hence there must exist some agent in $T_1$ who earns at least $\frac{1-s_1}{|N(T_1)|}$ from the chores assigned integrally thus far. If $\frac{1-s_1}{|N(T_1)|} \ge \frac{1}{n-1}$, then the allocation $\z$ must have already been $(n-1)$-\efone, hence we assume $\frac{1-s_1}{|N(T_1)|} < \frac{1}{n-1}$. This gives: $s_1 > 1-\frac{|N(T_1)|}{n-1}$. Since the earning from each chore is at most 1, we have $s_1+s_2\le 1$. Thus we obtain $s_2 < \frac{|N(T_1)|}{n-1}$. 

Our algorithm now `unrolls' parts of \cref{alg:efone-rounding} by re-visiting the event which deleted the edge $(j_1,i_1)$. Instead, the edge $(j_1,i_1)$ is re-introduced and the edge $(j_1,i_2)$ is deleted. This results in a larger tree $T''$ which contains $T_1$, and the Phase 3 tree $T_2$. After Phase 4, the earning of every agent in $T_2$ is at least:
\begin{align*}
\frac{1-s_2}{|N(T_2)|} &> \frac{1-|N(T_1)|/(n-1)}{|N(T_2)|} \tag{using $s_2 < \frac{|N(T_1)|}{n-1}$ dervied earlier}\\
&= \frac{1}{n-1}\cdot \left(\frac{n-1-|N(T_1)|}{|N(T_2)|}\right) \\
&\ge \frac{1}{n-1},  
\end{align*}
where the final inequality used the fact that $|N(T_1)| + |N(T_2)| \le n-1$, since $i_0 = \parent{j_1} \notin T_1\cup T_2$. Thus $T_2$ is not problematic. If the larger tree $T''$ is problematic, we recurse and repeat our algorithm with $T''$ instead, i.e., set $T_1\gets T''$. Since $N(T'')\supseteq N(T_1)\cup\{i_0\}$, every recursive step increases the size of $T_1$. Eventually it must happen that $T_1$ is non-problematic, or its root agent has no grand-parent agent. In the latter case, this tree $T_1$ must be non-problematic, since its root has not lost its parent root, and $|N(T_1)| \le n-1$ since $T_2\subsetneq T_1$. The algorithm therefore terminates with at most $n$ recursive calls.

\section{Appendix to Section~\ref{sec:bivalued}}\label{app:bivalued-small}
\cref{alg:bivalued-small} essentially follows the same template as \cref{alg:bivalued}, except that it begins with a balanced allocation $(\x^0,\p)$ computed using \cref{alg:balanced}. When $m \le n$, $\x^0$ is \efx and \po since every agent gets at most one chore. Hence we assume $m > n$. Since the allocation is balanced, we know $1 \le |\x^0_i| \le 2$ for all $i\in N$. 

For bivalued instances, we can scale the payments to ensure that for all $j\in M$, $p_j\in\{1,k\}$.
\begin{restatable}{lemma}{bivalued-payments-2}\label{lem:bivalued-payments-2}
Let $\p$ be the payment vector at the end of \cref{alg:balanced} on a $\{1,k\}$-bivalued instance. Then there exists $r \in \Z_{\geq 0}$ such that $p_j \in \{k^r, k^{r+1}\}$ for all $j \in M$.
\end{restatable}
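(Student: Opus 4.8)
The plan is to track the payment vector through the execution of \cref{alg:balanced} and show inductively that it always has the claimed two-level structure $\{k^r, k^{r+1}\}$. First I would establish the base case: \cref{alg:balanced} initializes $p_j \gets d_{hj}$ for all $j \in M$, and since the instance is $\{1,k\}$-bivalued we have $d_{hj} \in \{1, k\} = \{k^0, k^1\}$, so the claim holds with $r = 0$. Next I would observe that chore transfer steps do not change the payment vector, so the only steps that matter are the payment raise steps, where every chore $j \in C_h$ has its payment multiplied by $\gamma = \min_{i \in N \setminus C_h,\, j \in C_h \cap M} \frac{\alpha_i}{d_{ij}/p_j}$.

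The key step is to show that a payment raise preserves the two-level structure, possibly incrementing $r$. I would argue as follows. Suppose before the raise all payments lie in $\{k^r, k^{r+1}\}$. The allocation is always MPB (this is maintained by \cref{alg:balanced}, as noted in the proof of \cref{thm:balanced}), so for each agent $i$ the MPB ratio $\alpha_i = \min_j d_{ij}/p_j$ satisfies: since $d_{ij} \in \{1,k\}$ and $p_j \in \{k^r, k^{r+1}\}$, the ratio $d_{ij}/p_j$ takes values in $\{k^{-r-1}, k^{-r}, k^{-r+1}\}$, so $\alpha_i \in \{k^{-r-1}, k^{-r}\}$ (it cannot be $k^{-r+1}$ unless that is the only option, but one checks $\alpha_i \le k^{-r}$ always since some chore is assigned to $i$ — actually I should be careful here and instead directly analyze the value of $\gamma$). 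The cleanest route: I would show $\gamma = k$. Indeed, $\gamma \ge 1$ always (payments only increase), and I claim $\gamma \le k$: the agent $i \notin C_h$ attaining the minimum is MPB on some chore $j' \notin C_h$, and there is a chore $j \in C_h$ with an MPB-path back to $h$; comparing the quantized ratios $d_{ij}/p_j$ and $\alpha_i$, both of which are integer powers of $k$ differing by the structure, forces $\frac{\alpha_i}{d_{ij}/p_j} \in \{1, k\}$, and since $j$ is being raised (so currently $j \notin \mpb_i$) the ratio is strictly greater than $1$, hence equals $k$. Multiplying the $C_h$-payments by $k$ sends $k^r \mapsto k^{r+1}$ and $k^{r+1} \mapsto k^{r+2}$; combined with the unchanged payments outside $C_h$ in $\{k^r, k^{r+1}\}$, the new payment vector lies in $\{k^r, k^{r+1}, k^{r+2}\}$. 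The final point is to rule out that all three levels coexist: I would argue that $C_h$ always contains all chores currently at payment level $k^r$ (the lowest level), so after the raise no chore remains at level $k^r$, leaving payments in $\{k^{r+1}, k^{r+2}\}$ — the two-level structure with $r$ incremented. This last claim follows because $h$ was originally given every chore with $p_j = d_{hj}$, and reachability in the MPB graph together with the MPB condition propagates to capture exactly the minimally-paying chores.

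The main obstacle I anticipate is precisely this final step — proving that the raise never produces a genuinely three-level payment vector. The argument that $C_h$ contains every chore at the current minimum payment level needs care: it requires showing that any chore $j$ with $p_j = k^r$ is reachable from $h$ in the augmented MPB graph, which in turn uses that $h$'s MPB ratio is $k^{-r-1}$ (since $h$ holds low-paying chores) making such $j$ satisfy $d_{hj}/p_j = 1/k^r \ge$ (something forcing $j \in \mpb_h$). If this reachability claim proves delicate, an alternative is to weaken the statement's bookkeeping: show by induction only that payments lie in $\{k^r, k^{r+1}\}$ for a value $r$ that is non-decreasing in the run, handling the possibility of transient three-level states by noting that $\gamma = k$ is applied only to a set $C_h$ that, by the choice of the initial allocation and the monotonicity of $C_h$ under raises, necessarily includes the bottom level. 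I would also double-check the edge case where $C_h = \emptyset \cap M$ or where $\gamma$'s minimization is over an empty set (which cannot happen while the algorithm has not terminated, since then $\ell \notin C_h$ guarantees a witnessing pair).
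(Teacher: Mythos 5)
Your proposal has a genuine gap, and you flag it yourself. You need two claims that you do not establish: (a) the raise coefficient $\gamma$ is exactly $k$, and (b) the raised set $C_h$ contains every chore at the current minimum payment level, so a single raise never produces a three-level payment vector. Claim (a) does not follow just from $\gamma$ being a ratio of two powers of $k$ that exceeds $1$: the minimizing pair $(i,j)$ could have $\alpha_i$ and $d_{ij}/p_j$ separated by $k^2$ (for instance $\alpha_i = k^{-r-1}$ from a unit-disutility chore at the high level, and $d_{ij}/p_j = k^{1-r}$ from a $k$-disutility chore at the low level), so nothing in the local structure forces $\gamma = k$. Claim (b) is the reachability statement you describe as ``delicate,'' and it is not obvious that a chore at the bottom payment level must be reachable from $h$ in the MPB graph at every moment of the run; you do not carry out the argument and merely restate it when you give your ``alternative'' route.

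The paper avoids both difficulties with a different decomposition. Step one is a one-line induction showing \emph{all} payments remain powers of $k$ throughout the run: the initial payments are $d_{hj} \in \{1,k\}$, and each raise multiplies a subset of payments by $\gamma$, which is a ratio of two numbers that are themselves powers of $k$, so $\gamma$ is a power of $k$. Crucially, this needs only that $\gamma$ is \emph{some} power of $k$, not that $\gamma = k$. Step two then derives the two-consecutive-levels property at the end by a global MPB argument rather than by tracking the raises: if $p_{j_1} = k^r$, $p_{j_2} = k^{r+s}$ with $s \ge 2$, and $j_1 \in \x_i$, then $d_{ij_2}/p_{j_2} \le k/k^{r+s} < 1/k^r \le d_{ij_1}/p_{j_1}$, contradicting that $j_1$ is MPB for $i$. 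This uses only the invariant (already established in the proof of \cref{thm:balanced}) that the allocation is always MPB, and it says nothing about how the levels evolve raise-by-raise, so no reachability claim about $C_h$ is needed. You should restructure along these lines: prove the weaker ``all payments are powers of $k$'' invariant through the run, and then obtain the consecutive-levels conclusion in one shot from the MPB condition at termination.
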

\begin{proof}
We show that all chore payments in the run of \cref{alg:balanced} are powers of $k$. This is true initially since all chores are allocated to agent $h$ and pay $1$ or $k$. If all chore payments are powers of $k$ and the two possible disutility values differ by a factor of $k$, then the payment raise coefficient $\beta$ must also be a power of $k$. Thus, all chore payments continue to be powers of $k$ after the payment raise. Thus, it must be that all chore payments remain a power of $k$ throughout the run of Algorithm~\ref{alg:balanced}. 

Now suppose for sake of contradiction that there exist chores $j_1$ and $j_2$ such that $p_{j_1} = k^r$ and $p_{j_2} = k^{r+s}$, where $s > 1$. Since we have seen that Algorithm~\ref{alg:balanced} maintains a CE, it must be that $j_1$ is MPB for the agent $i$ it is allocated to. However, we have that $\alpha_{ij_2} = \frac{d_{ij_2}}{p_{j_2}} \leq \frac{k}{k^{r+s}} < \frac{1}{k^r} \leq \frac{d_{ij_1}}{p_{j_1}} = \alpha_{ij_1}$. Thus, $j_1$ cannot be MPB for agent $i$, and we have a contradiction. It must then be that in fact $s = 1$, showing the result.
\end{proof}

We classify the chores as low paying, $L = \{j:p_j = 1\}$, and high paying $H = \{j: p_j = k\}$. As in \cref{def:classification}, we define classify agents into sets $N_L$, $N_H^1$, and $N_H^2$ depending on whether they have only $L$-chores, a single $H$-chore, or two $H$-chores. We first note that $\x^0$ is \efx for agents in $N_L$, since $\p_{-X}(\x_i) = \p_{-1}(\x_i) \le 1 \le \p(\x_h)$ for any $i\in N_L$ and $h\in N$. Thus, if $H=\emptyset$, $N = N_L$ and \cref{alg:bivalued-small} simply returns $\x^0$. We therefore assume $H \neq \emptyset$ subsequently. With this assumption, the following statement regarding the MPB ratios of agents analogous to \cref{lem:mpb-ratios} holds.

\begin{restatable}{lemma}{lemMPBRatiosSmall}\label{lem:mpb-ratios-small}
Assume $H\neq \emptyset$. Then:
\begin{itemize}
\item[(i)] For all $i\in N_L$, $\alpha_i = 1$. Moreover for every $j\in H$, $d_{ij} = k$ and $j\in \mpb_i$.
\item[(ii)] For all $i\in N_H$, $\alpha_i \in \{1, 1/k\}$.
\item[(iii)] For all $i\in N_H$, if $\x_i \setminus H \neq \emptyset$ then $\alpha_i = 1$.
\end{itemize}
\end{restatable}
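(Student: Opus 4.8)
The statement is a direct analog of \cref{lem:mpb-ratios} but adapted to the scaled setting of \cref{lem:bivalued-payments-2}, where we may assume (after dividing all payments by $k^r$) that every chore payment lies in $\{1, k\}$, i.e., $p_j = 1$ for $j \in L$ and $p_j = k$ for $j \in H$. The plan is to carry out essentially the same MPB-ratio bookkeeping as in the proof of \cref{lem:mpb-ratios}, but with the concrete payment values $1$ and $k$ in place of $\rho$ and $\rho k$. I would fix some witness chore $j_0 \in H$ (which exists since $H \neq \emptyset$), so $p_{j_0} = k$.

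For part (i), take $i \in N_L$ and any $j \in \x_i$; since $j \in L$ we have $p_j = 1$. The MPB condition for $i$ gives $\alpha_i = d_{ij}/p_j = d_{ij} \le d_{ij_0}/p_{j_0} = d_{ij_0}/k$, hence $k\,d_{ij} \le d_{ij_0}$. Because $d_{ij}, d_{ij_0} \in \{1,k\}$ and $k > 1$, this forces $d_{ij} = 1$ and $d_{ij_0} = k$, so $\alpha_i = 1$. For the "moreover" part, take any $j' \in H$; the MPB condition $\alpha_i \le d_{ij'}/p_{j'} = d_{ij'}/k$ combined with $\alpha_i = 1$ gives $d_{ij'} \ge k$, so $d_{ij'} = k$ and then $d_{ij'}/p_{j'} = 1 = \alpha_i$, i.e., $j' \in \mpb_i$. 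For part (ii), take $i \in N_H$ and $j \in \x_i \cap H$ (nonempty by definition of $N_H$); then $\alpha_i = d_{ij}/p_j = d_{ij}/k \in \{1/k, 1\}$ since $d_{ij} \in \{1,k\}$. For part (iii), take $i \in N_H$ with $j_1 \in \x_i \setminus H$ and $j_2 \in \x_i \cap H$; then $p_{j_1} = 1$, $p_{j_2} = k$, and the MPB condition forces $d_{ij_1}/1 = d_{ij_2}/k$, i.e., $d_{ij_2} = k\,d_{ij_1}$, which (since both values are in $\{1,k\}$) gives $d_{ij_1} = 1$, $d_{ij_2} = k$, and hence $\alpha_i = d_{ij_1}/p_{j_1} = 1$.

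There is essentially no obstacle here — the lemma is a routine specialization of \cref{lem:mpb-ratios} to payments $\{1,k\}$, and the only thing to be careful about is invoking \cref{lem:bivalued-payments-2} correctly to justify that after rescaling the payments are exactly $1$ and $k$ (rather than $k^r$ and $k^{r+1}$), and that the classification $L = \{j : p_j = 1\}$, $H = \{j : p_j = k\}$ used in \cref{alg:bivalued-small} is consistent with this rescaling. If one prefers to avoid the rescaling, the same argument goes through verbatim with $p_j \in \{k^r, k^{r+1}\}$ and $\alpha_i \in \{k^{-r}, k^{-r-1}\}$; I would state the lemma in whichever normalization \cref{alg:bivalued-small} actually adopts. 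The only mild subtlety is the repeated use of the fact that for $a, b \in \{1,k\}$ with $k>1$, an inequality $k a \le b$ (or an equality $b = ka$) pins down $a=1, b=k$ — this is used three times and is immediate.
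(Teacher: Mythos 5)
Your proof is correct and follows essentially the same approach as the paper's proof of the analogous \cref{lem:mpb-ratios} (the paper states \cref{lem:mpb-ratios-small} without proof, implicitly deferring to that analogy), specializing $\rho$ to $1$ so that $p_j\in\{1,k\}$. You are also right to flag the mild normalization point coming from \cref{lem:bivalued-payments-2} (payments are a priori in $\{k^r,k^{r+1}\}$ and get rescaled to $\{1,k\}$ as used in \cref{alg:bivalued-small}); nothing else needs attention.
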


\begin{algorithm}[!t]
\caption{\efx + \po for bivalued instances with $m\le 2n$}\label{alg:bivalued-small}
\textbf{Input:} $\{1,k\}$-bivalued instance $(N,M,D)$ with $m\le 2n$\\ 
\textbf{Output:} An integral allocation $\x$
\begin{algorithmic}[1]
\State $(\x,\p) \gets$ \cref{alg:balanced} on $(N,M,D)$ \Comment{For all $j\in M$, $p_j\in\{1,k\}$}
\If{$m \le n$} \Return $\x$
\EndIf
\State $L = \{j \in M: p_j = 1\}$, $H = \{j \in M: p_j = k\}$ \Comment{Low, High paying chores}
\State Classify agents as $N_L, N_H^1, N_H^2$ as before
\Statex \textit{--- Phase 1: Address $N_H^2$ agents ---}
\While{$\exists i\in N_H^2$ not \efx}
\State $\ell \gets$ agent \efx-envied by $i$ \Comment{\cref{lem:efx-envy-mpb-edges} shows $\ell\in N_L$}
\If{$\p(\x_\ell) > 1$}
$S\gets j_1$ for some $j_1\in\x_\ell$ \Else{} $S\gets \emptyset$
\EndIf
\State $j \in \x_i \cap H$
\State $\x_\ell \gets \x_\ell \setminus S \cup \{j\}$
\State $\x_i \gets \x_i \cup S \setminus \{j\}$
\State $N_H^1 \gets N_H^1 \cup \{i,\ell\}$, $N_H^2 \gets N_H^2\setminus \{i\}$, $N_L \gets N_L\setminus \{\ell\}$
\EndWhile
\Statex \textit{--- Phase 2: Address $N_H^1$ agents ---}
\While{$\exists i\in N_H^1$ not \efx}
\State $\ell \gets \arg\min\{\p(\x_h) : h\in N \text{ s.t. $i$ \efx envies $h$}\}$ \Comment{\cref{lem:efx-envy-mpb-edges} shows $\ell\in N_L$}
\State $j \in \x_i \cap H$
\State $\x_i \gets \x_i \cup \x_\ell \setminus \{j\}$
\State $\x_\ell \gets \{j\}$
\State $N_H^1 \gets N_H^1 \cup \{\ell\}\setminus\{i\}$, $N_L \gets N_L \cup \{i\}\setminus\{\ell\}$
\EndWhile
\State \Return $\x$
\end{algorithmic}
\end{algorithm}

If $\x^0$ is not \efx, some agent in $N_H$ must \efx-envy another $\ell$ agent. \cref{alg:bivalued} addresses the \efx-envy of agents in $N_H$ by swapping some chores between agents $i$ and $\ell$ by performing the same swap steps defined in \cref{alg:bivalued}. The only point of difference is that \cref{alg:bivalued-small} performs a swap if $i$ \efx-envies $\ell$, whereas \cref{alg:bivalued} performs it if $i$ 3-\efx-envies $\ell$. Since there is a limited number of chores, \cref{alg:bivalued-small} can ensure agents in $N_H$ do not have too much cost: agents in $N_H^2$ have exactly two $H$ chores and no other chores, while agents in $N_H^1$ have exactly one $H$ chore and at most one $L$ chore.

We now prove the above claims formally. 
\begin{restatable}{lemma}{lemInvariantsSmall}\label{lem:invariants-small} (Invariants of Alg.\ref{alg:bivalued-small})
Let $(\x,\p)$ be an allocation in the run of \cref{alg:bivalued-small}. Then:
\begin{enumerate}[label=(\roman*)]
\item $(\x,\p)$ is an MPB allocation.
\item For all $i\in N$, $\p(\x_i) \ge 1$.
\item For all $i\in N_L$, $\p_{-1}(\x_i) \le 1$ during Phase 1.
\item For all $i\in N_L$, $\p(\x_i) < 1 + k$.
\item For all $i\in N_H^1$, $|\x_i\setminus H| \le |\x_i\cap H| = 1$.
\item For all $i\in N_H^2$, $|\x_i| = |\x_i\cap H| = 2$.
\end{enumerate}
\end{restatable}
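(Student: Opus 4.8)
\textbf{Proof Plan for \cref{lem:invariants-small}.}

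The plan is to mirror the structure of the proof of \cref{lem:invariants} for \cref{alg:bivalued}: prove all six invariants by induction on the swap steps, first checking they hold at the starting allocation $(\x^0,\p)$, then showing each Phase 1 swap and each Phase 2 swap preserves them, exactly in the spirit of Lemmas~\ref{lem:phase1-inv} and~\ref{lem:phase2-inv}. First I would establish the base case. The allocation $(\x^0,\p)$ comes from \cref{alg:balanced}, so by \cref{thm:balanced} it is \fpo and in particular MPB, giving invariant (i); since it is balanced with $m>n$, every agent has one or two chores, so $|\x^0_i|\ge 1$, and since $p_j\ge 1$ for all $j$ (using \cref{lem:bivalued-payments-2} and rescaling so the minimum payment is $1$), we get $\p(\x^0_i)\ge 1$, which is invariant (ii). Invariant (iii) follows because an agent in $N_L$ with two chores has $\p_{-1}(\x^0_i)=1$ (both its chores are $L$-chores of payment $1$). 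Invariant (iv): an $N_L$ agent has at most two $L$-chores, so $\p(\x^0_i)\le 2 < 1+k$ since $k>1$. Invariants (v) and (vi) hold at $\x^0$ because balancedness forces $|\x^0_i|\le 2$: an agent with exactly one $H$-chore has at most one other chore (necessarily in $L$), and an agent with two $H$-chores has no room for a third chore.

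Next I would handle the inductive step for Phase 1 swaps. Suppose $(\x,\p)$ satisfies all invariants and a Phase 1 swap is performed between $i\in N_H^2$ and $\ell$. Since $i$ is not \efx towards $\ell$, \cref{lem:efx-envy-mpb-edges} (which applies here since its hypotheses are the invariants of \cref{lem:invariants}, and the $\{1,k\}$-payment structure via \cref{lem:mpb-ratios-small} plays the role of \cref{lem:mpb-ratios}) gives $\ell\in N_L$, $\alpha_i=1$, and $\x_\ell\subseteq\mpb_i$. The swap moves one $H$-chore $j$ from $i$ to $\ell$ and, if $\p(\x_\ell)>1$, moves one $L$-chore $S=\{j_1\}$ back from $\ell$ to $i$. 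For invariant (i): $S\subseteq\x_\ell\subseteq\mpb_i$ so $\x'_i\subseteq\mpb_i$, and $j\in\mpb_\ell$ by \cref{lem:mpb-ratios-small}(i), so the allocation stays MPB. For (ii): after the swap both $i$ and $\ell$ hold an $H$-chore, so their earnings are at least $k\ge 1$. Invariants (iii),(iv),(vi) survive automatically since $i$ and $\ell$ are removed from $N_H^2$ and $N_L$; I must only check the \emph{new} $N_H^1$ membership conditions of invariant (v) for $i$ and $\ell$. Since $i$ started in $N_H^2$ with $|\x_i|=|\x_i\cap H|=2$ by invariant (vi), after removing $j$ and adding at most one $L$-chore, $\x'_i$ has exactly one $H$-chore and at most one $L$-chore; similarly $\ell$ started in $N_L$ with $\p_{-1}(\x_\ell)\le 1$ (invariant (iii)), which forces $\ell$ to have at most one $L$-chore remaining after the swap together with the single new $H$-chore $j$. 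This is where the bounded-number-of-chores structure differs from \cref{alg:bivalued}: instead of payment bounds $\p_{-1}\le 1$ I get the sharper \emph{cardinality} bounds. The Phase 2 step is analogous: for a swap between $i\in N_H^1$ and $\ell\in N_L$, agent $i$ receives all of $\x_\ell$ (a set of $L$-chores) and loses its $H$-chore, landing in $N_L$; invariant (v) for $i$ vacates, invariant (iv) for $i$ needs $\p(\x'_i)<1+k$, which I would derive exactly as in \cref{lem:phase2-inv}: $i$ being \efx-envious of $\ell$ means $\p(\x_i)>\p(\x_\ell)$, wait — actually here the cleaner route is that $i$ had one $H$-chore and at most one $L$-chore so $\p_{-1}(\x_i)\le 1$, hence the $L$-chores of $\ell$ that $i$ inherits satisfy $\p(\x_\ell)<\p(\x_i)\le 1+k$... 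I would instead argue via invariant (v) giving $|\x_i\setminus H|\le 1$ and \efx-envy giving $d_i(\x_i\setminus\{j\})\le d_i(\x_\ell)$ is violated, so $\p(\x_\ell)\le\p_{-1}(\x_i)\le 1$ cannot hold, hence... the bound $\p(\x'_i) < 1+k$ follows from $\p(\x_i\setminus\{j\})\le 1$ and $\p(\x_\ell) < k$ (since $\ell\in N_L$, its chores are $L$-chores, and $\p(\x_\ell)$ is bounded by the earlier envy analysis), and $\ell$ now has a single $H$-chore so (v) holds for $\ell$ trivially. Finally, as in \cref{lem:phase2-inv}, I would add the observation that Phase 2 swaps do not cause $N_H^2$ agents to become \efx-envious, by the same argument tracing $\x_\ell$ back to a starting $N_L$ bundle in the post-Phase-1 allocation.

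The main obstacle I anticipate is \emph{invariant (iv)} — getting the correct upper bound $\p(\x_i)<1+k$ on $N_L$ agents' earnings and propagating it through Phase 2 swaps. In \cref{alg:bivalued} the analogous invariant was $\p(\x_i)\le\tfrac{4}{3}+\tfrac{\rho k}{3}$, derived delicately from the $3$-\efx-envy threshold; here with the exact \efx threshold and the cardinality bounds the constant should simplify, but I need to be careful that an $N_L$ agent acquiring a large bundle $\x_\ell$ during a Phase 2 swap does not overshoot. The key leverage is that $\ell$'s bundle, being all $L$-chores and coming from a chain of swaps originating at a balanced-allocation $N_L$ agent, has total payment strictly less than $k$ (otherwise the \efx-envy triggering the swap could not have held given $\p_{-1}(\x_i)\le 1$ for the $N_H^1$ agent), so the inherited bundle plus the retained $\le 1$ payment stays below $1+k$. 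Everything else is routine bookkeeping once \cref{lem:efx-envy-mpb-edges} and \cref{lem:mpb-ratios-small} are invoked as black boxes, exactly paralleling Lemmas~\ref{lem:phase1-inv} and~\ref{lem:phase2-inv}.
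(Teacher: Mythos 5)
Your plan mirrors the paper's proof: verify the six invariants at the balanced allocation $\x^0$, then propagate them inductively through Phase~1 and Phase~2 swaps, invoking Lemmas~\ref{lem:mpb-ratios-small} and \ref{lem:efx-envy-mpb-edges-small}, exactly as the paper does via Lemmas~\ref{lem:phase1-inv-small} and \ref{lem:phase2-inv-small}. Your base case, MPB preservation, earning lower bound, and cardinality bookkeeping for (v)--(vi) all match.

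The one place your derivation wobbles --- and it is exactly the spot you yourself flagged as ``the main obstacle'' --- is invariant~(iv) under a Phase~2 swap. Your intermediate attempts (first ``$\p(\x_i)>\p(\x_\ell)$,'' then ``$\p(\x_\ell)\le\p_{-1}(\x_i)\le 1$ cannot hold'') are not the right inequalities, and you conflate $\p_{-1}$ (remove the \emph{highest}-paying chore) with $\p_{-X}$ (remove the \emph{lowest}-paying chore). The clean step, and the one the paper uses, is: since $i\in N_H^1$ \efx-envies $\ell$, we have $\p_{-X}(\x_i)>\p(\x_\ell)$; by invariant~(v), $\x_i$ has exactly one $H$-chore (payment $k$) and at most one $L$-chore (payment $1$), so removing the lowest-paying chore leaves the $H$-chore and $\p_{-X}(\x_i)=k$, giving $\p(\x_\ell)<k$. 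Then $\p(\x'_i)=\p(\x_i\setminus H)+\p(\x_\ell)\le 1+\p(\x_\ell)<1+k$. You do arrive at the correct endpoint ``$\p(\x_\ell)<k$'' and the correct bound on $\p(\x'_i)$, so this is a presentation defect rather than a logical gap; but it is the one step a written-out proof must get crisp, and $\p_{-1}$ should not appear in it at all.
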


We prove the above lemma using Lemmas~\ref{lem:phase1-inv-small} and \ref{lem:phase2-inv-small} below. Like \cref{lem:efx-envy-mpb-edges}, we show that an agent in $N_H = N_H^1\cup N_H^2$ can only \efx-envy another agent in $N_L$.

\begin{restatable}{lemma}{lemEFXenvyMPBedgesSmall}\label{lem:efx-envy-mpb-edges-small}
Consider an allocation $(\x,\p)$ satisfying the invariants of \cref{lem:invariants-small}. If $i\in N_H$ \efx-envies $\ell$, then $\alpha_i = 1$, $\ell\in N_L$, and $\x_\ell \subseteq \mpb_i$.
\end{restatable}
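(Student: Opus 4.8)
The plan is to mirror the proof of \cref{lem:efx-envy-mpb-edges} from the $m>2n$ case, substituting the structural facts from \cref{lem:invariants-small} and \cref{lem:mpb-ratios-small} for their analogues. Concretely, suppose $i\in N_H$ \efx-envies $\ell$, meaning $\max_{j\in\x_i} d_i(\x_i\setminus\{j\}) > d_i(\x_\ell)$. By \cref{lem:mpb-ratios-small}(ii), $\alpha_i\in\{1,1/k\}$. First I would rule out $\alpha_i=1/k$: in that case $d_{ij}=1$ for every $j\in\x_i$, so the contrapositive of \cref{lem:mpb-ratios-small}(iii) forces $\x_i\subseteq H$, and then invariant (vi) (or (v)) of \cref{lem:invariants-small} gives $|\x_i|\le 2$. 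Hence $\max_{j\in\x_i} d_i(\x_i\setminus\{j\})\le 1\le d_i(\x_\ell)$ since $\x_\ell\neq\emptyset$ (invariant (ii)) and the instance is bivalued with $d_{ij'}\ge 1$ for all $j'$; this contradicts the envy. So $\alpha_i=1$.

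Next I would show $\ell$ cannot be assigned any chore $j$ with $d_{ij}=k$. If such $j\in\x_\ell$ existed, I would bound $d_i(\x_i)$ from above using the MPB identity $d_i(\x_i)=\alpha_i\cdot\p(\x_i)=\p(\x_i)$ together with the payment bounds from \cref{lem:invariants-small}: for $i\in N_H^1$, invariant (v) gives $\p(\x_i)\le k+1$ (one $H$-chore worth $k$, at most one $L$-chore worth $1$), and for $i\in N_H^2$, invariant (vi) gives $\p(\x_i)=2k$. In either case $\p(\x_i)\le\max\{k+1,2k\}$. Since $j\in\x_\ell$ has $d_{ij}=k$, and bundles are nonempty, we get $d_i(\x_i)\le\max\{k+1,2k\}$, and I would check this is at most $k+d_i(\x_\ell\setminus\{j\})\le\dots$; more directly, for \efx we need $\max_{j'\in\x_i}d_i(\x_i\setminus\{j'\})\le d_i(\x_\ell)$, and $d_i(\x_\ell)\ge k$ while $\max_{j'\in\x_i}d_i(\x_i\setminus\{j'\})\le\p(\x_i)-1\le\max\{k,2k-1\}$. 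Here the comparison $\max\{k,2k-1\}$ versus $k$ is the one subtle point: for $i\in N_H^2$ we'd have $2k-1\ge k$ when $k\ge 1$, so this crude bound does not immediately close. I expect the resolution is that invariant (vi) says $|\x_i\cap H|=2$ and $\x_i$ has \emph{no other chores}, so $\max_{j'\in\x_i}d_i(\x_i\setminus\{j'\})=d_i(j'')$ for the remaining $H$-chore $j''$, which equals $k$ when $\alpha_i=1$; and if $j''$ has $d_{ij''}=1$ then $\alpha_i=1/k$, already excluded. So for $i\in N_H^2$ with $\alpha_i=1$, $\max_{j'\in\x_i}d_i(\x_i\setminus\{j'\})=k\le d_i(\x_\ell)$, no envy. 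For $i\in N_H^1$ with $\alpha_i=1$: $\x_i$ is one $H$-chore (disutility $k$) plus at most one $L$-chore (disutility $1$), so $\max_{j'\in\x_i}d_i(\x_i\setminus\{j'\})=\max\{k,1\}=k\le d_i(\x_\ell)$, again no envy. Thus every $j\in\x_\ell$ must satisfy $d_{ij}=1$.

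Finally, from $d_{ij}=1$ for all $j\in\x_\ell$ and the MPB inequality for $i$, $\alpha_i=1\le d_{ij}/p_j=1/p_j$, so $p_j\le 1$, and since $p_j\in\{1,k\}$ with $k>1$ (by \cref{lem:bivalued-payments-2} the payments are $\{k^r,k^{r+1}\}$, which after the relabeling in this subsection are $\{1,k\}$), we get $p_j=1$, i.e. $j\in L$. Hence $\x_\ell\subseteq L$, so $\ell\in N_L$, and since $\alpha_i=1=d_{ij}/p_j$ for each $j\in\x_\ell$, we conclude $\x_\ell\subseteq\mpb_i$. The main obstacle, as flagged above, is making the disutility-upper-bound argument tight enough in the $N_H^2$ case — the naive bound $\p(\x_i)=2k$ combined with removing the \emph{lowest} disutility chore is not quite sharp, so one must use the stronger structural content of invariant (vi) (that $N_H^2$ agents hold \emph{exactly} two $H$-chores and nothing else) together with the fact that both $H$-chores have disutility $k$ whenever $\alpha_i=1$. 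Everything else is a direct adaptation of the $m>2n$ proof.
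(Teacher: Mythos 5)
Your proof is correct and follows essentially the same approach as the paper's own proof. The paper rules out $\alpha_i=1/k$ exactly as you do, and for the case $\alpha_i=1$ with some $j\in\x_\ell$ having $d_{ij}=k$, the paper invokes invariants (v) and (vi) directly to conclude $\max_{j'\in\x_i}d_i(\x_i\setminus\{j'\})\le k\le d_{ij}\le d_i(\x_\ell)$ --- which is precisely the structural argument you settle on after flagging the ``subtle point.'' Your initial worry about the naive bound $\p(\x_i)-1\le 2k-1$ for $i\in N_H^2$ is a detour: once $\alpha_i=1$ is established, every chore in $\x_i\cap H$ has disutility exactly $k$ by the MPB condition ($d_{ij}=p_j=k$), so invariant (vi) ($\x_i$ is exactly two $H$-chores) gives $\max_{j'\in\x_i}d_i(\x_i\setminus\{j'\})=k$ immediately, which is what both you and the paper use. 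The final step recovering $\ell\in N_L$ and $\x_\ell\subseteq\mpb_i$ from $d_{ij}=1$ for all $j\in\x_\ell$ is identical.
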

\begin{proof}
Consider $i\in N_H$ who \efx-envies $\ell\in N$. We know from \cref{lem:mpb-ratios-small} that $\alpha_i \in\{1, 1/k\}$. Suppose $\alpha_i = 1/k$. Then $d_{ij} = 1$ for all $j\in \x_i$. By the contrapositive of \cref{lem:mpb-ratios} (iii), we get $\x_i \subseteq H$. Since $|\x_i\cap H|\le 2$, we get $|\x_i| \le 2$. Thus,
\[
\max_{j\in\x_i} d_i(\x_i\setminus\{j\}) \le 1 \le d_i(\x_\ell),
\]
since $\x_\ell\neq\emptyset$ and the instance is bivalued. Thus $i$ is \efx towards $\ell$ if $\alpha_i = 1/k$, which implies that $\alpha_i = 1$. 

Suppose $\exists j\in \x_\ell$ such that $d_{ij} = k$. Then invariants (v) and (vi) imply that $\max_{j'\in \x_i} d_i(\x_i\setminus\{j'\}) \le k \le d_{ij} \le d_i(\x_\ell)$, showing that $i$ is \efx towards $\ell$. Thus it must be that for all $j\in \x_\ell$, $d_{ij} = 1$. The MPB condition for $i$ implies that $\alpha_i \le d_{ij}/p_j$, showing that $p_j \le 1$, and hence $p_j = 1$ for all $j\in \x_\ell$. Thus $\ell\in N_L$. Moreover, for any $j\in \x_\ell$, $\alpha_i = d_{ij}/p_j$, and hence $\x_\ell\subseteq\mpb_i$.
\end{proof}

The next two lemmas establish the invariants of \cref{lem:invariants-small}.
\begin{restatable}{lemma}{lemPhase1InvSmall}\label{lem:phase1-inv-small}
The invariants of \cref{lem:invariants-small} are maintained during Phase 1 of \cref{alg:bivalued-small}.
\end{restatable}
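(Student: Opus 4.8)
The plan is to mirror the structure of the proof of \cref{lem:phase1-inv}, proceeding by induction on the number of Phase~1 swaps performed by \cref{alg:bivalued-small}, and in the context of this lemma we work under the standing assumptions $n < m \le 2n$ of \cref{sec:bivalued-small}.

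\emph{Base case.} First I would verify that all six invariants hold at the starting allocation $(\x^0,\p)$ produced by \cref{alg:balanced}. By \cref{thm:balanced}, $\x^0$ is balanced and \fpo, and since $(\x^0,\p)$ is an MPB allocation throughout \cref{alg:balanced}, invariant~(i) holds. By \cref{lem:bivalued-payments-2} we may rescale so that $p_j\in\{1,k\}$; together with $n<m$ and balancedness this yields $1\le|\x^0_i|\le 2$ for every agent $i$. Invariant~(ii) then follows because each agent holds at least one chore of payment $\ge 1$. For $i\in N_L$, $\x^0_i\subseteq L$ consists of at most two unit-payment chores, so $\p_{-1}(\x^0_i)\le 1$ (invariant~(iii)) and $\p(\x^0_i)\le 2<1+k$ (invariant~(iv)). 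Invariants~(v) and~(vi) are immediate from balancedness and the definitions of $N_H^1$ and $N_H^2$: an agent with exactly one $H$-chore has at most one further chore, and an agent with two $H$-chores has no further chore.

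\emph{Inductive step.} Assume the invariants hold at an allocation $(\x,\p)$ reached during Phase~1 and consider the next Phase~1 swap, initiated by some $i\in N_H^2$ that is not \efx towards an agent $\ell$. The key structural observation I would establish first is that a Phase~1 swap modifies only the initiating $N_H^2$ agent (who then leaves $N_H^2$) and one $N_L$ agent (who then leaves $N_L$); hence every agent currently classified as $N_H^2$ and every agent currently classified as $N_L$ still holds its original bundle. Thus $\x_i=\x^0_i$ is exactly two $H$-chores, and $\x_\ell=\x^0_\ell\subseteq L$ with $|\x_\ell|\le 2$. By \cref{lem:efx-envy-mpb-edges-small}, $\alpha_i=1$, $\ell\in N_L$, and $\x_\ell\subseteq\mpb_i$. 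Writing $\x'$ for the post-swap allocation, $\x'_i=\x_i\setminus\{j\}\cup S$ and $\x'_\ell=\x_\ell\setminus S\cup\{j\}$ with $j\in\x_i\cap H$, where $S=\{j_1\}$ for some $j_1\in\x_\ell$ if $\p(\x_\ell)>1$ and $S=\emptyset$ otherwise. Since both $i$ and $\ell$ are moved into $N_H^1$, invariants~(iii), (iv), (vi) are unaffected for all remaining agents; I then check the rest. For~(i): $\x'_i\subseteq\mpb_i$ because $S\subseteq\x_\ell\subseteq\mpb_i$ and dropping $j$ cannot break MPB, while $\x'_\ell\subseteq\mpb_\ell$ because $j\in H$ and, by \cref{lem:mpb-ratios-small}(i) applied to $\ell\in N_L$, $j\in\mpb_\ell$. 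For~(ii): $|\x'_i\cap H|=|\x'_\ell\cap H|=1$, so each of $i$ and $\ell$ earns at least $k>1$. For~(v): agent $i$ now holds one $H$-chore and $|S|\le 1$ other chore; agent $\ell$ now holds one $H$-chore, and since $\x_\ell\subseteq L$ gives $\p(\x_\ell)=|\x_\ell|$, the set $S$ is nonempty exactly when $|\x_\ell|=2$, so $|\x_\ell\setminus S|=1$ in both cases. All agents outside $\{i,\ell\}$ are untouched, so every invariant persists, completing the induction. A symmetric remark to \cref{lem:phase2-inv} (that Phase~1 swaps do not disturb agents in $N_0$ — here absent since every chore pays $1$ or $k$) is not needed.

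\emph{Main obstacle.} The only genuinely delicate point is the bookkeeping that justifies treating $\x_i$ and $\x_\ell$ as the initial bundles $\x^0_i,\x^0_\ell$: one must argue carefully that any agent still in $N_H^2$ or $N_L$ at the time of a swap has never been modified by an earlier Phase~1 swap, since this is precisely what pins down the cardinalities needed for invariants~(v) and~(vi). Everything else is a routine verification parallel to \cref{lem:phase1-inv}.
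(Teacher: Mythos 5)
Your proposal is correct and follows essentially the same route as the paper: induction on Phase~1 swaps, establishing the base case from balancedness (with invariant~(iv) bounded by $\p(\x^0_i)\le 2<1+k$) and verifying invariants~(i), (ii), (v) in the inductive step, noting that (iii), (iv), (vi) are unaffected because $i$ and $\ell$ leave $N_H^2$ and $N_L$.

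One small remark: the ``main obstacle'' you flag --- carefully justifying that any agent still in $N_H^2$ or $N_L$ holds its original bundle $\x^0_i$ --- is not actually needed, and the paper does not bother with it. The inductive hypothesis already pins down exactly what you need: invariant~(vi) gives $|\x_i|=|\x_i\cap H|=2$, and invariants~(ii) and~(iii) give $1\le|\x_\ell|\le 2$ with $\x_\ell\subseteq L$. These cardinality facts are what drive the verification of~(v) after the swap, and they come for free from the hypothesis rather than from tracing back to $\x^0$. Your ``original bundle'' observation is true but makes the argument appear more delicate than it is.
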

\begin{proof}
We prove the statement inductively. We first show that the invariants hold at $(\x^0,\p)$. Invariants (i), (ii), (iii), (v) and (vi) follow from the fact that $\x^0$ is a balanced allocation. For (iv), note that for any $i\in N_L$, we have $\p_{-1}(\x^0_i) \le 1$. Thus $\p(\x^0_i) \le 2 < 1+k$. 

Suppose the invariants hold at an allocation $(\x,\p)$ during Phase 1. Consider a Phase 1 swap involving agents $i\in N_H^2$ and $\ell\in N$. Given that \cref{alg:bivalued} performed the swap, $i$ must \efx-envy $\ell$. \cref{lem:efx-envy-mpb-edges} implies that $\ell \in N_L$ and hence $\x_\ell \in L$. As per \cref{alg:bivalued}, if $\p(\x_\ell) > 1$, then $S = \{j_1\}$ for some $j_1\in\x_\ell$, otherwise $S = \emptyset$. Let $j\in \x_i\cap H$.

Let $\x'$ be the resulting allocation. Thus $\x'_i = \x_i\setminus\{j\}\cup S$, $\x'_\ell = \x_\ell \setminus S\cup \{j\}$, and $\x'_h = \x_h$ for all $h\notin\{i,\ell\}$. We show that the invariants hold at $(\x',\p)$. Since a Phase 1 step removes agents $i$ and $\ell$ from $N_H^2$ and $N_L$ respectively, invariants  (iii), (iv), (vi) continue to hold. For the rest, observe:
\begin{enumerate}
\item[(i)] $(\x',\p)$ is on MPB. This is because \cref{lem:efx-envy-mpb-edges-small} implies $S\subseteq \x_\ell \subseteq \mpb_i$, showing $\x'_i\subseteq \mpb_i$. Since $\ell\in N_L$ at $(\x,\p)$ and $j\in H$, \cref{lem:mpb-ratios} shows $j\in \mpb_\ell$ and hence $\x'_\ell \subseteq \mpb_\ell$.
\item[(ii)] Follows from $\x'_i \neq \emptyset$ and $\x'_\ell \neq \emptyset$.
\item[(v)] For agent $i$, note that $\x'_i$ contains exactly one $H$-chore and perhaps one $L$-chore $j_1$. Hence  $|\x'_i\setminus H| \le 1 = |\x'_i \cap H|$, proving invariant (v).  

For agent $\ell$, note  that $\x'_i$ contains exactly one $H$-chore $j$, hence $|\x'_i\cap H| = 1$. Since invariant (iii) implies $|\x_i\setminus H| \le 2$, $|\x'_i\setminus H| \le 1$ after the potential transfer of $j_1$, thus proving (v). 
\end{enumerate}
The swap does not affect an agent $h\notin\{i,\ell\}$ and hence the invariants continue to hold for $h$ after the swap. By induction, we have shown that the invariants of \cref{lem:invariants-small} hold after any Phase 1 swap.
\end{proof}

\begin{restatable}{lemma}{lemPhase2InvSmall}\label{lem:phase2-inv-small}
The invariants of \cref{lem:invariants-small} are maintained during Phase 2 of \cref{alg:bivalued-small}. Moreover, agents in $N_H^2$ remain \efx towards other agents.
\end{restatable}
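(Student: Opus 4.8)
The plan is to prove both assertions by induction on the Phase~2 swaps, closely transcribing the proof of \cref{lem:phase2-inv} but with the sharper conclusion ``\efx'' in place of ``$3$-\efx'' and the explicit payment values $\{1,k\}$ (from \cref{lem:bivalued-payments-2}) in place of $\{\rho,\rho k\}$. The base case is handled by \cref{lem:phase1-inv-small}: the invariants hold at the allocation $(\x^1,\p)$ produced at the end of Phase~1, and moreover every agent of $N_H^2$ is \efx there, since the Phase~1 while loop exits precisely when no $N_H^2$ agent is non-\efx. So assume the invariants hold at some allocation $(\x,\p)$ reached during Phase~2 and consider the next swap, involving $i\in N_H^1$ that \efx-envies some agent $\ell$, together with the chore $j\in\x_i\cap H$ transferred. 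First I would apply \cref{lem:efx-envy-mpb-edges-small} to get $\alpha_i=1$, $\ell\in N_L$, and $\x_\ell\subseteq\mpb_i$; the new allocation is $\x'_i=\x_i\setminus\{j\}\cup\x_\ell$, $\x'_\ell=\{j\}$, with all other bundles unchanged.

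Next I would check each invariant at $(\x',\p)$. Invariant~(i) follows from $\x_\ell\subseteq\mpb_i$ (so $\x'_i\subseteq\mpb_i$) and from $\ell\in N_L$, $j\in H$ giving $j\in\mpb_\ell$ via \cref{lem:mpb-ratios-small}; invariant~(ii) from $\p(\x'_i)\ge\p(\x_\ell)\ge 1$ and $\p(\x'_\ell)=p_j=k\ge 1$; invariant~(iii) does not apply in Phase~2; invariant~(vi) is untouched since Phase~2 swaps never modify $N_H^2$ bundles (here $i\in N_H^1$ and $\ell\in N_L$). For invariants (iv)--(v) the key structural point is that, by invariant~(v) at $(\x,\p)$, $\x_i$ holds exactly one $H$-chore $j$ and at most one $L$-chore; and since an agent holding a single chore is trivially \efx (all bundles are nonempty by invariant~(ii)), the envious agent $i$ must in fact hold exactly one $L$-chore $j_L$. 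Hence $\x'_i=\{j_L\}\cup\x_\ell\subseteq L$, so $i$ enters $N_L$, while $\x'_\ell=\{j\}$ puts $\ell$ into $N_H^1$ with $|\x'_\ell\setminus H|=0\le 1=|\x'_\ell\cap H|$, giving invariant~(v) for $\ell$ (and it no longer concerns $i$). For invariant~(iv) on $i$: since $(\x,\p)$ is MPB with $\alpha_i=1$, \efx-envy of $\ell$ implies $i$ \pefx-envies $\ell$, i.e. $\p_{-X}(\x_i)>\p(\x_\ell)$; but removing the cheaper chore $j_L$ from $\x_i$ leaves $\{j\}$, so $\p_{-X}(\x_i)=p_j=k$, whence $\p(\x_\ell)<k$ and $\p(\x'_i)=1+\p(\x_\ell)<1+k$. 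Agents outside $\{i,\ell\}$ are unaffected, closing the induction on the invariants.

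For the ``moreover'' claim I would argue, as in \cref{lem:phase2-inv}, that no $i\in N_H^2$ can \efx-envy any agent during Phase~2. Suppose otherwise that $i\in N_H^2$ \efx-envies $\ell$ at some Phase~2 allocation; \cref{lem:efx-envy-mpb-edges-small} forces $\ell\in N_L$, so $\x_\ell\subseteq L$. The bundle $\x_\ell$ was assembled through a chain of Phase~2 swaps originating from some agent $\ell_1$ that was in $N_L$ at $(\x^1,\p)$, so $\x_\ell\supseteq\x^1_{\ell_1}$; also $\x_i=\x^1_i$, since $i$ never participated in a Phase~1 or Phase~2 swap (those would have moved it out of $N_H^2$). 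Because $i$ is \efx at $(\x^1,\p)$ it does not \efx-envy $\ell_1$ there, and a superset bundle only reduces envy, so $\max_{j'\in\x_i}d_i(\x_i\setminus\{j'\})=\max_{j'\in\x^1_i}d_i(\x^1_i\setminus\{j'\})\le d_i(\x^1_{\ell_1})\le d_i(\x_\ell)$, contradicting the assumed envy.

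The main obstacle I anticipate is the bookkeeping behind invariant~(iv): one must rule out the singleton case for the envious $i\in N_H^1$ (using invariant~(ii) that bundles are nonempty), pin down its shape as $\{j,j_L\}$ with $p_j=k$ and $p_{j_L}=1$, and then correctly read off $\p_{-X}(\x_i)=k$ to convert the \efx-envy into the payment bound. Everything else is a direct, if careful, transcription of the $m>2n$ argument.
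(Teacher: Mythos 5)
Your proof is correct and follows essentially the same approach as the paper's: induct over Phase~2 swaps using \cref{lem:efx-envy-mpb-edges-small} to locate $\ell\in N_L$, verify each invariant of \cref{lem:invariants-small} for the new allocation, and trace $\x_\ell$ back through the chain of Phase~2 swaps to an original $N_L$ agent $\ell_1$ to show $N_H^2$ agents never become \efx-envious. One small improvement in your exposition is the explicit observation that the envious $i\in N_H^1$ cannot hold a singleton bundle (since all bundles are nonempty by invariant~(ii), a singleton is trivially \efx), which cleanly justifies the identity $\p_{-X}(\x_i)=k$ that the paper uses more implicitly in its invariant~(iv) argument.
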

\begin{proof}
We prove the statement inductively. \cref{lem:phase1-inv-small} shows the invariants hold at the end of Phase 1. Suppose the invariants hold at an allocation $(\x,\p)$ during Phase 2. Consider a Phase 2 swap involving agents $i\in N_H^1$ and $\ell\in N$. Given that \cref{alg:bivalued-small} performed the swap, $i$ must \efx-envy $\ell$. \cref{lem:efx-envy-mpb-edges-small} implies that $\ell \in N_L$ and hence $\x_\ell \in L$. Let $j\in \x_i\cap H$.

Let $\x'$ be the resulting allocation. Thus $\x'_i = \x_i\setminus\{j\}\cup \x_\ell$, $\x'_\ell = \{j\}$, and $\x'_h = \x_h$ for all $h\notin\{i,\ell\}$. We now show that the invariants hold at $(\x',\p)$. Since we are in Phase 2, invariant (iii) does not apply, and since Phase 2 swaps do not alter the allocation of agents in $N_H^2$, invariant (vi) continues to hold. For the rest, observe:
\begin{enumerate}
\item[(i)] $(\x',\p)$ is on MPB. This is because \cref{lem:efx-envy-mpb-edges-small} implies $\x_\ell \subseteq \mpb_i$, showing $\x'_i\subseteq \mpb_i$. Since $\ell\in N_L$ at $(\x,\p)$ and $j\in H$, \cref{lem:mpb-ratios-small} shows $j\in \mpb_\ell$ and hence $\x'_\ell \subseteq \mpb_\ell$.
\item[(ii)] Follows from $\x'_i \neq \emptyset$ and $\x'_\ell \neq \emptyset$.
\item[(iv)] We want to show $\p(\x'_i) \le 1+ k$. To see this note that since $i$ \efx-envies $\ell$ in $\x$, $i$ must \pefx-envy $\ell$ in $(\x,\p)$. Using invariant (v), this means that $\p_{-X}(\x_i) = k > \p(\x_\ell)$.

Now $\p(\x'_i) = \p(\x_i\setminus\{j\}) + \p(\x_\ell) < 1 + k$, where we used $\p(\x_i\setminus\{j\}) = 1$ since invariant (v) shows $|\x_i\setminus H| \le 1$. 
\item[(v)] Note that $\ell\in N_H^1$ in $(\x',\p)$, and $|\x'_\ell \setminus H| = 0 < 1 = |\x'_\ell \cap H|$.
\end{enumerate}
The swap does not affect an agent $h\notin\{i,\ell\}$ and hence the invariants continue to hold for $h$ after the swap. By induction, we conclude that the invariants of \cref{lem:invariants} hold after any Phase 2 swap.

We now show that $i\in N_H^2$ cannot \efx-envy an agent $\ell\in N$. \cref{lem:efx-envy-mpb-edges} implies that $\ell\in N_L$ and hence $\x_\ell \subseteq L$. Let $\x^1$ be the allocation at the end of Phase 1. Note that the bundle $\x_\ell \subseteq L$ is obtained via a series of Phase 2 swaps initiated with some agent $\ell_1$ in $(\x^1,\p)$. Here, $\ell_1\in N_L$ at $(\x^1,\p)$. Thus $\x_\ell \supseteq \x^1_{\ell_1}$. Agent $i\in N_H^2$ did not \efx-envy $\ell_1$ in $\x^1$, otherwise \cref{alg:bivalued-small} would have performed a Phase 1 swap between agent $i$ and $\ell_1$. Since $\x_i = \x^1_i$ as \cref{alg:bivalued-small} does not alter allocation of agents in $N_H^2$ and $\x_\ell \supseteq \x^1_{\ell_1}$, $i$ will not \efx-envy $\ell$ in $\x$ either. Thus, all agents in $N_H^2$ continue to remain \efx during Phase 2.
\end{proof}

We need one final lemma showing that $N_L$ agents do not \efx-envy any other agent.
\begin{restatable}{lemma}{lemNLenvySmall}\label{lem:nl-envy-small}
At any allocation $(\x,\p)$ in the run of \cref{alg:bivalued-small}, $\x$ is \efx for every agent in $N_L$.
\end{restatable}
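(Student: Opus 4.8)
The plan is to adapt the proof of \cref{lem:nl-envy} to the exact-EFX setting, exploiting that \cref{lem:bivalued-payments-2} forces all chore payments into $\{1,k\}$ (so there are no $M'$-chores and no $N_0$ agents, one fewer case than in \cref{lem:nl-envy}), together with the tighter invariants of \cref{lem:invariants-small}. The excerpt already records that the initial balanced allocation $\x^0$ is EFX for every $i\in N_L$. Since a Phase 1 swap only ever moves an agent \emph{out} of $N_L$ and leaves the bundles of the remaining $N_L$ agents untouched, every $N_L$ agent is EFX throughout Phase 1. So it suffices to assume for contradiction that $\x$ is the earliest allocation reached in Phase 2 at which some $i\in N_L$ EFX-envies some agent $h$, and to let $\x'$ be the allocation immediately before $\x$, produced by a Phase 2 swap $(i',\ell')$.

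Next I would run the bivalued reductions. By \cref{lem:mpb-ratios-small}(i), $\alpha_i=1$ and every chore of $\x_i$ has disutility $1$ for $i$, so $\max_{j\in\x_i} d_i(\x_i\setminus\{j\}) = d_i(\x_i)-1 = \p(\x_i)-1$, which is strictly less than $k$ by invariant (iv) of \cref{lem:invariants-small}. Hence if $\x_h$ owned any chore $j$ with $d_{ij}=k$ we would get $\max_{j'\in\x_i} d_i(\x_i\setminus\{j'\}) < k \le d_i(\x_h)$, contradicting the envy; therefore every chore of $\x_h$ has disutility $1$ for $i$, the MPB condition forces its payment to be $1$, so $\x_h\subseteq L$ and $\x_h\subseteq\mpb_i$ (in particular $d_i(\x_h)=\p(\x_h)$). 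This rules out $h\in N_H$ outright, since such an $h$ owns an $H$-chore. It also pins down the swap: $i\ne\ell'$ because $\ell'$ lands in $N_H^1$, so if $i\ne i'$ then $\x_i=\x'_i$ and $h$'s bundle must have changed, i.e. $h\in\{i',\ell'\}$; but $\x_{\ell'}$ is a single $H$-chore and hence not contained in $L$, so in that case $h=i'$.

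It remains to contradict $h\in N_L$ in the two surviving cases. If $i\ne i'$ and $h=i'$: $\x_i=\x'_i$ and $i\in N_L$ at $\x'$, so by the choice of $\x$ agent $i$ is EFX at $\x'$, giving $\max_{j\in\x_i}d_i(\x_i\setminus\{j\})\le d_i(\x'_{\ell'})$; and $\x_h=(\x'_{i'}\setminus H)\cup\x'_{\ell'}\subseteq\mpb_i$, so $d_i(\x_h)=d_i(\x'_{i'}\setminus H)+d_i(\x'_{\ell'})\ge d_i(\x'_{\ell'})\ge\max_{j\in\x_i}d_i(\x_i\setminus\{j\})$, contradicting the envy. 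If $i=i'$: then $i\in N_H^1$ at $\x'$ and EFX-envied $\ell=\ell'$, so \cref{lem:efx-envy-mpb-edges-small} gives $\ell\in N_L$ and $\x'_\ell\subseteq\mpb_i$; since the algorithm picks the envied agent of minimum payment and $\x_h=\x'_h\subseteq\mpb_i$, we must have $\p(\x'_\ell)\le\p(\x'_h)$ (otherwise $d_i(\x'_h)=\p(\x'_h)<\p(\x'_\ell)=d_i(\x'_\ell)$ would make $i$ EFX-envy $h$ too, at strictly smaller payment); finally $\x_i=(\x'_i\setminus H)\cup\x'_\ell$ with $d_i(\x'_i\setminus H)\le 1$ by invariant (v) together with \cref{lem:mpb-ratios-small}(iii), so $d_i(\x_i)=d_i(\x'_i\setminus H)+\p(\x'_\ell)\le 1+\p(\x'_h)=1+d_i(\x_h)$, whence $\max_{j\in\x_i}d_i(\x_i\setminus\{j\})=d_i(\x_i)-1\le d_i(\x_h)$ — again contradicting the envy.

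I expect the delicate part to be the $h\in N_L$ analysis: correctly identifying which agent performs the triggering Phase 2 swap (the $i=i'$ versus $h=i'$ dichotomy, where the observation $\x_{\ell'}\not\subseteq L$ does the work), and the minimality/choice-of-$\ell$ argument in the $i=i'$ case, both of which rely on several invariants of \cref{lem:invariants-small} holding simultaneously together with \cref{lem:efx-envy-mpb-edges-small}. Everything else is the bivalued bookkeeping already present in \cref{lem:nl-envy}.
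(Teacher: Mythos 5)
Your proof is correct and follows essentially the same route as the paper's: note $\x^0$ is EFX for $N_L$, take the earliest offending allocation $\x$, use $\alpha_i=1$ and invariant~(iv) of \cref{lem:invariants-small} to force $d_{ij}=1$ for all $j\in\x_h$ (hence $\x_h\subseteq\mpb_i$), dismiss $h\in N_H$, and in the $h\in N_L$ case use the minimum-payment choice of $\ell$ plus invariant~(v) to show $i$ is actually EFX toward $h$. The one place you are more explicit than the paper is in disposing of the sub-case where $i$ is \emph{not} the swapper at the triggering Phase~2 step: you observe $h\ne\ell'$ (since $\x_{\ell'}\not\subseteq L$) and hence $h=i'$, and then that $\x_h\supseteq\x'_{\ell'}$ together with $i$'s EFX-ness at $\x'$ already precludes the envy; the paper's proof simply asserts that $i$ was in $N_H^1$ at $\x'$ and participated in the swap, leaving that reduction implicit.
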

\begin{proof}
We know that the initial allocation $\x^0$ is \efx for agents in $N_L$. Let $\x$ be the earliest allocation in the run of \cref{alg:bivalued-small} in which an agent $i\in N_L$ \efx-envies another agent $h\in N$. Using $\alpha_i = 1$ from \cref{lem:mpb-ratios-small}, the bound on $\p(\x_i)$ from \cref{lem:invariants-small} (iv), we note:
\begin{equation}\label{eq:nl-envy-small}
\max_{j'\in \x_i}d_i(\x_i\setminus\{j'\}) = \alpha_i\cdot\p_{-X}(\x_i) < (1+k)-1 = k.
\end{equation}

Thus if $\exists j\in \x_h$ s.t. $d_{ij} = k$, then by \eqref{eq:nl-envy-small}, $d_i(\x_i) < k \le d_i(\x_h)$, showing that $i$ does not \efx-envy $h$ in $\x$. Hence it must be that for all $j\in\x_h$, $d_{ij} = 1$. This also implies $\x_h \subseteq \mpb_i$, since $\alpha_i = 1 = d_{ij}/p_j$ for any $j\in \x_h$. We now consider two cases based on the category of $h$.
\begin{itemize}[leftmargin=*]
\item $h\in N_H$. By definition of $N_H$, $\exists j\in \x_h$ s.t. $j\in H$. Since $i\in N_L$, by \cref{lem:mpb-ratios-small} (i) we get $d_{ij} = k$, which is a contradiction.
\item $h\in N_L$. Since $\x^0$ is \efx for agents in $N_L$, and Phase 1 swaps only remove agents from $N_L$, it cannot be that $i$ starts \efx-envying $h\in N_L$ during Phase 1. Let $\x'$ be the preceding allocation, at which \cref{alg:bivalued-small} performed a Phase 2 swap. Since $\x$ is the earliest allocation in which $i$ \efx-envies $\ell$, it must in $\x'$, agent $i$ was in $N_H^1$ and was involved with a Phase 2 swap with another agent $\ell\in N_L$. Since $\x_h\subseteq \mpb_i$ and \cref{alg:bivalued-small} did \textit{not} perform a swap between agents $i$ and $h$ in the allocation $\x'$, we must have $\p(\x'_\ell) \le \p(\x'_h)$ by the choice of $\ell$ at $(\x',\p)$. 

Note that $\x_i = (\x'_i \setminus H) \cup \x'_\ell$. By \cref{lem:invariants} (i), we know $\p_{-1}(\x'_i) = \p(\x'_i\setminus H) \le 1$. Thus we observe:
\begin{align*}
\p_{-X}(\x_i) &= \p_{-1}(\x_i) \tag{since $i\in N_L$}\\
&=\p(\x'_i\setminus H) + \p(\x'_\ell)- 1 \tag{since $\x_i = (\x'_i \setminus H) \cup \x'_\ell$}\\
&\le \p(\x'_\ell) \tag{since $\p(\x'_i\setminus H) \le 1$}\\
&\le \p(\x'_h) \tag{by choice of $\ell$ at $(\x',\p')$}\\
&\le \p(\x_h). \tag{since $\x_h = \x'_h$}
\end{align*}
Thus, $i$ is \efx towards $h$.
\end{itemize}
We conclude that it is not possible for an agent $i\in N_L$ to \efx-envy any other agent during the course of \cref{alg:bivalued-small}.
\end{proof}

We are now in a position to summarize and conclude our analysis of \cref{alg:bivalued-small}.

\thmBivaluedSmall*
\begin{proof}
Let $(\x^0,\p)$ be the initial balanced allocation obtained by using \cref{alg:balanced}. If $m\le n$ or $H = \emptyset$, then $\x^0$ is \efx, hence we assume otherwise. 

\cref{lem:nl-envy-small} shows that any allocation $\x$ in the course of \cref{alg:bivalued-small} is \efx for agents in $N_L$. Any potential \efx-envy is, therefore, from some agent $i\in N_H$. \cref{lem:efx-envy-mpb-edges-small} shows that if $i\in N_H$ is not \efx towards $\ell$, then $\ell\in N_L$. If $i\in N_H^2$, $i$ participates in a Phase 1 swap with agent $\ell$, after which $i$ and $\ell$ get removed from $N_H^2$ and $N_L$ respectively. This implies that Phase 1 terminates after at most $n/2$ swaps, and the resulting allocation is \efx for all agents in $N_H^2$. If $i\in N_H^1$, $i$ participates in a Phase 2 swap with agent $\ell$, after which $\ell$ is added to $N_H^1$ and is assigned a single chore and $\ell$ does not \efx-envy. This implies that Phase 2 terminates after at most $n$ swaps since the number of agents in $N_H^1$ who are not \efx strictly decreases. The resulting allocation is \efx for all agents in $N_H^1$. \cref{lem:phase2-inv-small} also shows that Phase 2 swaps do not cause $N_H^2$ agents to start \efx-envying any agent in $N_L$. Thus the allocation on termination of \cref{alg:bivalued-small} is \efx. By invariant (i) of \cref{lem:invariants-small}, $\x$ is also \fpo. Since there are at most $3n/2$ swaps and \cref{alg:balanced} takes polynomial time, \cref{alg:bivalued-small} terminates in polynomial time.
\end{proof}

\section{Examples}\label{app:examples}
\begin{restatable}{example}{exPropApx}\label{ex:prop1-apx}
A \s{Prop} and \po allocation need not be $\alpha$-\s{EF}$k$ for any $\alpha, k \geq 1$.

\normalfont Consider an instance with three agents $a$, $b$, and $c$ and three types of chores, each with $s > k$ many copies. The disutility of each agent for each chore type is given below.

\begin{center}
\begin{tabular}{ |c||c|c|c| } 
 \hline
  & Type 1 & Type 2 & Type 3 \\
 \hline
 $a$ & $1$ & $t$ & $3t$ \\ 
 \hline
 $b$ & $1$ & $t$ & $3t$ \\ 
 \hline
 $c$ & $t$ & $t$ & $1$ \\
 \hline
\end{tabular}
\end{center}

We claim that for $t > \frac{\alpha \cdot s}{s-k}$, the allocation $\x$ in which agent $a$ receives all type 1 chores, $b$ receives all type 2 chores, and $c$ receives all type 3 chores is \s{Prop} + \po but fails to be $\alpha$-\s{EF}$k$. We first note that as $\alpha \geq 1$ and $\frac{s}{s-k} > 1$, we have $t > 1$. It is then easily verified that $\x$ is \s{Prop}. Additionally, since $\x$ is social welfare maximizing, it is necessarily \fpo. We now show that agent $b$ $\alpha$-\s{EF}$k$-envies agent $a$. We have that
\begin{equation*}
\min_{S\subseteq\x_b, |S|\le k} d_b(\x_b \setminus S) = t(s-k) > \frac{\alpha \cdot s}{s-k} \cdot (s-k) = \alpha \cdot s = \alpha \cdot d_b(\x_a),
\end{equation*}
showing the result.
\end{restatable}

\begin{restatable}{example}{exCERounding}\label{ex:ce-rounding} A competitive equilibrium from equal earning (CEEE) need not admit a rounding which is $\alpha$-EF$k$. 

\normalfont We construct a CEEE and show that it admits no rounded $\alpha$-EF$k$ allocation. We consider three agents $a$, $b$, and $c$ with identical disutility functions. There exists one shared chore $j$ among the agents such that $a$, $b$, and $c$ earn $\frac{1}{2}$, $1 - \frac{1}{5\alpha}$, and $1 - \frac{1}{5\alpha}$ from $j$, respectively. Each agent $i$ is integrally allocated a set of chores $S_i$ such that:
\begin{itemize}
    \item agent $a$ earns $\frac{1}{2}$ from $S_a$.
    \item agent $b$ earns $\frac{1}{5\alpha}$ from $S_b$.
    \item agent $c$ earns $\frac{1}{5\alpha}$ from $S_c$.
\end{itemize}
Specifically, we note that $S_a$ consists of $2k$ identical chores which each pay $\frac{1}{4k}$. In conjunction with their earning from $j$, we see that each agent earns $1$, showing equal earnings. We now show that $\alpha$-EF$k$-envy between agents persists regardless of whom the single shared chore $j$ is rounded to in the integral allocation $\x$. We have that:
\begin{equation*}
    \min_{S \subseteq S_a, |S|\le k} d_a(S_a \setminus S) = \alpha_a \cdot \p_{-k}(S_a) = \alpha_a \cdot \frac{1}{4} > \alpha_a \cdot \frac{1}{5} = \alpha_a \cdot \alpha \cdot \p(S_b) = \alpha \cdot d_a(S_b).
\end{equation*}
Note that for $i \in \{a, b, c\}$, $S_i \subseteq \x_i$. Thus, we have that:
\begin{equation*}
    \min_{S \subseteq \x_a, |S|\le k} d_a(\x_a \setminus S) \geq \min_{S \subseteq S_a, |S|\le k} d_a(S_a \setminus S) > \alpha \cdot d_a(S_b).
\end{equation*}
It follows then that if $\x_b = S_b$, agent $a$ will $\alpha$-\s{EF}$k$-envy agent $b$. An analogous argument shows that if $\x_c = S_c$, agent $a$ will $\alpha$-\s{EF}$k$-envy agent $c$. Since $j$ can only be rounded to one agent, it must be that either $\x_b = S_b$ or $\x_c = S_c$, so agent $a$ must $\alpha$-\s{EF}$k$-envy some agent. 
\end{restatable}

\begin{example}\label{ex:ef1-tight}
There exists an ER equilibrium for which no rounding is $(n-1-\delta)$-\efone.

\normalfont We construct an ER equilibrium with $n = 2k + 1$ agents $i_1, \ldots, i_{2k+1}$, $2k - 1$ shared chores $j_1, \ldots, j_{2k-1}$, and uniform chore earning limit $\beta = 1$. Note that agents may have other chores which are not shared with other agents.

We describe the structure of the payment graph $G$. $G$ is a forest with two trees. The first tree $T_1$ consists of the lone agent $i_{2k+1}$ and $\frac{1}{\varepsilon}$ many small, $\varepsilon$-paying chores which are integrally allocated to $i_{2k+1}$. The second tree $T_2$ contains agents $i_1$ to $i_{2k}$ and all of the $2k - 1$ shared chores. We note that each agent $i \in T_2$ earns $\frac{1}{2k}$ from a set of chores $S_i$ which is integrally allocated to $i$, so we focus our attention on the edges that are incident to the $2k - 1$ shared chores.

Let $T_2$ be rooted at chore $j_{2k-1}$ so that $j_{2k-1}$ has two agents $i_{2k-1}$ and $i_{2k}$ as children, each of whom earn $\frac{1}{2}$ from $j_{2k-1}$. Then, $i_{2k-1}$ and $i_{2k}$ each have $k-1$ children chores, with $i_{2k-1}$ having children $j_1, \ldots, j_{k-1}$ and $i_{2k}$ having children $j_k, \ldots, j_{2k-2}$. Both $i_{2k-1}$ and $i_{2k}$ earn $\frac{1}{2k}$ from each of their children. Finally, for $r \in \{1, \ldots, 2k-2\}$, chore $j_r$ has one child agent $i_r$ who earns $1 - \frac{1}{2k}$ from $j_r$. We verify that each agent meets their earning requirement in the following:
\begin{itemize}
    \item For $1 \leq r \leq 2k-2$, $e_{i_r} = \frac{1}{2k} + \Big( 1 - \frac{1}{2k} \Big) = 1$.
    \item For $r \in \{2k-1, 2k\}$, $e_{i_r} = \frac{1}{2k} + \frac{1}{2} + (k-1) \cdot \frac{1}{2k} = 1$.
    \item For $r = 2k+1$, $e_{i_r} = \frac{1}{\varepsilon} \cdot \varepsilon = 1$.
\end{itemize}
Additionally, each shared chore meets the earning limit:
\begin{itemize}
    \item For $1 \leq r \leq 2k-2$, $q_r = \frac{1}{2k} + \Big( 1 - \frac{1}{2k} \Big) = 1$.
    \item For $r = 2k-1$, $q_r = \frac{1}{2} + \frac{1}{2} = 1$.
\end{itemize}
It is trivial to have the chores in a set $S_i$ satisfy the earning limit by increasing the number of chores in $S_i$ and thus decreasing the individual earning from each chore. We now show that there is no rounding of the ER equilibrium that is better than $(n-1)$-\efone. We may assume that the payment of any chore $j$ is equal to its payout to the agents, i.e., for all $j \in M$, $p_j = q_j$. In any rounding there must be some agent $h$ in $T_2$ who receives no shared chore and thus earns only $\frac{1}{2k}$: this is because there are $2k$ agents in $T_2$ but only $2k-1$ shared chores. Suppose for the sake of contradiction that a rounded allocation $\x$ is $(n-1-\delta)$-\efone for some $\delta > 0$. Note that the bundle of agent $i_{2k+1}$ is the same in any rounding, as $i_{2k+1}$ does not share chores with any agent. Then, letting 
$\varepsilon < \frac{\delta}{n-1}$ 
and letting the disutility function of $i_{2k+1}$ be such that for all $S \subseteq M$, $d_{i_{2k+1}}(S) = \p(S)$ (so $\alpha_{i_{2k+1}} = 1$ and all chores are MPB for $i_{2k+1}$), we have that 
\begin{align*}
\p_{-1}(\x_{i_{2k+1}}) &= \min_{j \in \x_{i_{2k+1}}} d_{i_{2k+1}}(\x_{i_{2k+1}} \setminus \{j\}) \\
&\leq (n-1-\delta) \cdot d_{i_{2k+1}}(\x_h) \\
&= (n-1-\delta) \cdot \p(\x_h),
\end{align*}
where the first and last equalities stem from our definition of $d_{i_{2k+1}}(\cdot)$ and the inequality stems from the fact that $\x$ is $(n-1-\delta)$-\efone. Using $\p_{-1}(\x_{i_{2k+1}}) \leq (n-1-\delta) \cdot \p(\x_h)$, we have that $1 - \varepsilon \le (n-1-\delta) \cdot \frac{1}{2k} = (n-1-\delta) \cdot \frac{1}{n-1}$ and equivalently that $\varepsilon \ge \frac{\delta}{n-1}$, a contradiction. Thus, such an ER equilibrium has no rounding which is $(n-1-\delta)$-\efone for any $\delta > 0$.
\end{example}

\begin{example}\label{ex:ef2-tight}
There exists an ER equilibrium for which no rounding is $(2-\delta)$-\eftwo.

\normalfont We modify the ER equilibrium from Example~\ref{ex:ef1-tight} with $n = 2k + 1$ agents. Setting $\beta = \frac{1}{2}$, we aim to change the agent earnings for each chore so that each agent still receives their earning requirement $1$ but the total earning from any chore is at most $\frac{1}{2}$. The changes are as follows:
 \begin{itemize}
     \item each agent $i \in T_2$ earns $\frac{1}{2} + \frac{1}{4k}$ from their set of integrally allocated chores $S_i$,
     \item agents $i_{2k-1}$ and $i_{2k}$ each earn $\frac{1}{4}$ from their parent chore $j_{2k-1}$,
     \item agents $i_{2k-1}$ and $i_{2k}$ earn $\frac{1}{4k}$ from each of their $k-1$ children,
     \item for $r \in \{1, \ldots, 2k-2\}$, agent $i_r$ earns $\frac{1}{2} - \frac{1}{4k}$ from their parent chore.
 \end{itemize}
As in Example~\ref{ex:ef1-tight}, it can be verified that each agent meets her earning requirement and each chore satisfies the earning limit.
Then, also as in Example~\ref{ex:ef1-tight}, it must be that some agent in $T_2$ receives only their integrally allocated chores. That is, there exists some $i \in T_2$ such that $\x_i = S_i$ and $d_{i_{2k+1}}(\x_i) = \p(\x_i) = \frac{1}{2} + \frac{1}{4k}$. Recall that agent $i_{2k+1}$ is only allocated $\frac{1}{\varepsilon}$ many small, $\varepsilon$-paying chores and $\min_{S \subseteq \x_{i_{2k+1}}, |S|\le 2} d_{i_{2k+1}}(\x_{i_{2k+1}} \setminus S) = p_{-2}(\x_{i_{2k+1}})$. Then, for any $\delta > 0$, we may choose sufficiently large $k$ and sufficiently small $\varepsilon$ so that 
\begin{align*}
\p_{-2}(\x_{i_{2k+1}}) &\geq \varepsilon \cdot \Big( \frac{1}{\varepsilon} - 2 \Big) \\
&= 1 - 2\varepsilon > (2 - \delta) \cdot \Big( \frac{1}{2} + \frac{1}{4k} \Big) \\
&= (2 - \delta) \cdot \p(\x_i) \\
&= (2 - \delta) \cdot d_{i_{2k+1}}(\x_i)
\end{align*}
and thus agent $i_{2k+1}$ $(2-\delta)$-\eftwo-envies agent $i$.
\end{example}

\begin{example}\label{ex:aEFX-2-ary} 
For any $\alpha\ge 1$, there exists a 2-ary instances for which there is no allocation that is $\alpha$-\efx and \fpo.

\normalfont Consider the following 2-ary instance with two agents $\{a,b\}$, four chores $\{j_1,j_2,j_3,j_4\}$ with disutilities given by:
\begin{center}
\begin{tabular}{|c||c|c|c|c|}
\hline
& $j_1$ & $j_2$ & $j_3$ & $j_4$ \\
\hline
$a$ & 1 & 1 & $3\alpha$ & $3\alpha$ \\
\hline
$b$ & 1 & 1 & $3\alpha^2 + 3\alpha$ & $3\alpha^2 + 3\alpha$ \\
\hline
\end{tabular}
\end{center}

We show that this instance does not admit an allocation which is both $\alpha$-\efx and \fpo. Suppose that $\x$ is an $\alpha$-\efx allocation. Given that $\alpha \geq 1$, we see that neither agent may receive both $j_3$ and $j_4$ under $\x$, as for $i \in \{a, b\}$ we have 
\begin{equation*}
    \max_{j \in \{j_3, j_4\}} d_i(\{j_3, j_4\} \setminus \{j\}) \geq 3 \alpha > 2 \alpha = \alpha \cdot d_i(\{j_1, j_2\}). 
\end{equation*}
Thus, we assume without loss of generality that $j_3 \in \x_a$ and $j_4 \in \x_b$. We next argue that $j_1, j_2 \in \x_a$. Suppose for sake of contradiction (again w.l.o.g.) that $j_2 \in \x_b$. We have
\begin{align*}
\max_{j \in \x_b} d_b(\x_b \setminus \{j\}) &\geq d_b(\{j_2, j_4\} \setminus \{j_2\}) \\
&= 3\alpha^2 + 3\alpha \\
&> \alpha \cdot (3\alpha + 1) \\
&= \alpha \cdot d_b(\{j_1, j_3\}) \\
&\geq \alpha \cdot d_b(\x_a),  
\end{align*}
so agent $b$ would $\alpha$-\efx-envy agent $a$. Thus, it must be that $\x_a = \{j_1, j_2, j_3\}$ and $\x_b = \{j_4\}$. We now show however that $\x$ is not \fpo as it is dominated by the fractional allocation $\y$ where $\y_a = \{j_1, j_3, \frac{1}{3\alpha}j_4\}$ and $\y_b = \{j_2, \frac{3\alpha-1}{3\alpha}j_4\}$. Indeed, we have that $d_a(\y_a) = 3\alpha + 2 = d_a(\x_a)$ and $d_b(\y_b) = 3\alpha^2 + 2\alpha < 3\alpha^2 + 3\alpha = d_b(\x_b)$. Agent $a$'s disutility remains the same under $\y$ while agent $b$'s disutility strictly decreases under $\y$, so $\y$ dominates $\x$ and $\x$ not \fpo. Thus, the given instance admits no $\alpha$-\efx and \fpo allocation. 
\end{example}

\bibliographystyle{plainnat}
\bibliography{references}

\end{document}